\newcommand{\rcite}[1]{ref.~\cite{#1}}
\newcommand{\rcites}[1]{refs.~\cite{#1}}
\definecolor{mygreen}{rgb}{0,0.4,0}
\definecolor{myblue}{rgb}{0,0.0,0.4}
\definecolor{refrcolor}{rgb}{0,0.4,0}
\definecolor{cgreen}{rgb}{0,0.7,0}
\definecolor{ecolor}{rgb}{.52,.03,.06}
\definecolor{bgcolor}{rgb}{.96,.95,.80}
\definecolor{bgcolordark}{rgb}{.80,.80,.67}
\definecolor{faint}{rgb}{.80,.80,.80}
\newtcolorbox{myproofbox}[1][]{
  enhanced,
  breakable,
  borderline west={2pt}{0pt}{faint},
  notitle,
  before skip=10pt,
  after skip=10pt,
  colback=white, 
  colframe=white,
  frame hidden,
  boxrule=0pt, 
  boxsep=0pt,
  sharp corners,
  left=8pt, right=0pt, top=1pt, bottom=0pt,
  fontupper=\small,
  #1
}
\renewenvironment{proof}[1][\proofname]{%
  \begin{myproofbox}\par\pushQED{\qed}\normalfont%
  \topsep6\p@\@plus6\p@\relax%
  \trivlist%
  \item[\hskip\labelsep\itshape#1\@addpunct{.}]%
  }%
  {\popQED\endtrivlist\end{myproofbox}\@endpefalse%
  % the next line avoids indentation in the paragraph following, but does not interfere with section headings.
  \if@noskipsec\leavevmode\fi\noindent\ignorespacesafterend}
\def\mr@ignsp#1 {\ifx\:#1\@empty\else #1\expandafter\mr@ignsp\fi}%
\newcommand{\multiref}[1]{\begingroup%\let\protect\string%
\xdef\mr@no@sparg{\expandafter\mr@ignsp#1 \: }%
\def\mr@comma{}%
\@for\mr@refs:=\mr@no@sparg\do{\mr@comma\def\mr@comma{,}\ref{\mr@refs}}%
\endgroup}
\renewcommand{\eqref}[1]{(\multiref{#1})}
\newcommand{\namedref}[2]{\hyperref[#2]{#1~\ref*{#2}}}%
\newcommand{\namedreff}[2]{\hyperref[#2]{#1\,\ref*{#2}}}%
\newcommand{\secref}{\namedreff{\S}}
\newcommand{\appref}{\@ifstar{\namedref{Appendix}}{\namedref{Appendix}}}
\newcommand{\tabref}{\@ifstar{\namedref{Table}}{\namedref{Table}}}
\newcommand{\figref}{\@ifstar{\namedref{Figure}}{\namedref{Figure}}}
\newcommand{\propref}{\namedref{Prop.}}
\newcommand{\thmref}{\namedref{Thm.}}
\newcommand{\rmkref}{\namedref{Rmk.}}
\newcommand{\defref}{\namedref{Def.}}
\providecommand{\href}[2]{#2}
\theoremstyle{plain}
\newtheorem{theorem}{Theorem}[section]
\newtheorem*{theorem*}{Theorem}
\newtheorem{proposition}[theorem]{Proposition}
\newtheorem{rmk}[theorem]{Remark}
\newtheorem{defn}[theorem]{Definition}
\newenvironment{sketch}{%
  \renewcommand{\proofname}{Sketch of the proof}\proof}{\endproof}
\definecolor{mygreen}{rgb}{0,0.4,0}
\definecolor{myblue}{rgb}{0,0.0,0.4}
\definecolor{refrcolor}{rgb}{0,0.4,0}
\definecolor{cgreen}{rgb}{0,0.7,0}
\definecolor{ecolor}{rgb}{.52,.03,.06}
\definecolor{bgcolor}{rgb}{.96,.95,.80}
\definecolor{bgcolordark}{rgb}{.80,.80,.67}
\definecolor{faint}{rgb}{.80,.80,.80}
\let\Re\undefined\DeclareMathOperator{\Re}{Re}
\let\Im\undefined\DeclareMathOperator{\Im}{Im}
\DeclareMathOperator{\End}{End}
\DeclareMathOperator{\Mat}{Mat}
\DeclareMathOperator{\Sel}{S}
\DeclareMathOperator{\SelE}{S^E}
\newcommand{\der}{\mathrm{d}}
\newcommand{\diff}[2][.]{\mathinner{\der#2\if #1.\else^{#1}\fi}}
\newcommand{\grp}[1]{\mathrm{#1}}
\newcommand{\defeq}{\mathrel{:=}}
\newcommand{\mat}[1]{\begin{pmatrix}#1\end{pmatrix}}
\newcommand{\smax}{s_{\mathrm{max}}}
\providecommand*{\shuffle}{%
  \mathbin{\mathpalette\shuffle@{}}%
}
\newcommand*{\shuffle@}[2]{%
  % #1: math style
  % #2: unused
  \sbox0{$#1\vcenter{}$}%
  \kern .15\ht0 % side bearing
  \rlap{\vrule height .25\ht0 depth 0pt width 2.5\ht0}%
  \raise.1\ht0\hbox to 2.5\ht0{%
    \vrule height 1.75\ht0 depth -.1\ht0 width .17\ht0 %
    \hfill
    \vrule height 1.75\ht0 depth -.1\ht0 width .17\ht0 %
    \hfill
    \vrule height 1.75\ht0 depth -.1\ht0 width .17\ht0 %
  }%
  \kern .15\ht0 % side bearing
}
\newcommand{\SI}[1]{\Sel[#1]}
\NewDocumentCommand{\SIE}{m m}
{
\SelE\!\Big[\begin{smallmatrix}
 \SI_print:n {#1} \\
 \SI_print:n {#2}
 \end{smallmatrix}\Big]
}
\providecommand{\hypersetup}[1]{}
\providecommand{\texorpdfstring}[2]{#1}
\let\@keywords\@empty
\let\@subject\@empty
\providecommand{\keywords}[1]{\gdef\@keywords{#1}}
\providecommand{\subject}[1]{\gdef\@subject{#1}}
\def\thetitle{\@title}
\def\theauthor{\@author}
\def\thesubject{\@subject}
\def\thedate{\@date}
\def\thekeywords{\@keywords}
\numberwithin{equation}{section}
\newcommand{\eqn}[1]{eq.~\eqref{#1}}
\newif\ifnote 
\let\qed\relax\newcommand{\qed}
\hfill\ensuremath{\Box}}
\newcommand{\pd}{\partial}
\newcommand{\ap}{\alpha'}
\newcommand{\zC}{\mathbb C}
\newcommand{\zR}{\mathbb R}
\newcommand{\cH}{\mathcal{H}}
\newcommand{\cL}{\mathcal{L}}
\newcommand{\cO}{\mathcal{O}}
\newcommand{\PC}{\mathbb{P}^1_{\mathbb{C}}}
\newcommand{\zb}{\overline{z}}
\newcommand{\wb}{\overline{w}}
\newcommand{\yb}{\overline{y}}
\newcommand{\ub}{\overline{u}}
\newcommand{\tb}{\overline{t}}
\newcommand{\SC}{\mathscr{S}}
\newcommand{\cF}{\mathscr{F}}
\newcommand{\hcF}{\hat{\mathscr{F}}}
\newcommand{\Fhyper}[4]{{_2F_1}\!\!\left[\!\!\begin{array}{c}#1,\ #2\\#3\end{array}\!\!\Big|#4\right]}
\newcommand{\mC}{\mathbb{C}}
\newcommand{\hF}{\hat{F}}
\newcommand{\hf}{\hat{f}}
\newcommand{\llangle}{\langle\!\langle}
\newcommand{\rrangle}{\rangle\!\rangle}
\title[Closed-string amplitude recursions from the Deligne associator]{\textbf{
    Closed-string amplitude recursions \texorpdfstring{\\}{} from the Deligne associator
    }}
\author[Konstantin Baune,
        Johannes Broedel,
        Federico Zerbini]{Konstantin Baune\textsuperscript{*},
        Johannes Broedel\textsuperscript{*},
        Federico Zerbini\textsuperscript{\textdagger}}
\address{\textsuperscript{*} Institute for Theoretical Physics, ETH Zurich, Wolfgang-Pauli-Str.~27, 8093 Zurich, Switzerland}
\address{\textsuperscript{\textdagger} Departamento de Matem\'aticas Fundamentales, UNED, Calle de Juan del Rosal~10, 28040 Madrid, Spain}
\email{baunek@ethz.ch, jbroedel@ethz.ch, f.zerbini@mat.uned.es}
\date{\today}
\begin{document}
%\begin{flushright}
%  \verb!Preprint1!\\
%  \verb!Preprint2!
%\end{flushright}
%
\begin{abstract}
	Inspired by earlier results on recursions for open-string tree-level amplitudes, and by a result of Brown and Dupont relating open- and closed-string tree-level amplitudes via single-valued periods, we identify a recursive relation for closed-string tree-level amplitudes. We achieve this by showing that closed-string analogues of Selberg integrals satisfy the Knizhnik--Zamolodchikov equation for a suitable matrix representation of the free Lie algebra on two generators, and by identifying the limits at $z\,{=}\,1$ and $z\,{=}\,0$, which are related by the Deligne associator, with $N$-point and $(N{-}1)$-point closed-string amplitudes, respectively.
\end{abstract}

\maketitle

%%%%%%%%%%%%%%%%%%%%%%%%%%%%%%%%%%%%%%%%%%%%%%%%%%%%%%%%%%%%%%%%%%%%%%%%%%%%%%%%
% table of contents
\setcounter{tocdepth}{1}
\setcounter{secnumdepth}{3}
% Add bold to \section titles in ToC and remove . after numbers
%\makeatletter
%\renewcommand{\tocsection}[3]{%
%  \indentlabel{\@ifnotempty{#2}{\bfseries\ignorespaces#1 #2\quad}}\bfseries#3}
%\makeatother
\tableofcontents

%%%%%%%%%%%%%%%%%%%%%%%%%%%%%%%%%%%%%%%%%%%%%%%%%%%%%%%%%%%%%%%%%%%%%%%%%%%%%%
%%%%%%%%%%%%%%%%%%%%%%%%%%%%%%%%%%%%%%%%%%%%%%%%%%%%%%%%%%%%%%%%%%%%%%%%%%%%%%
%%%%%%%%%%%%%%%%%%%%%%%%%%%%%%%%%%%%%%%%%%%%%%%%%%%%%%%%%%%%%%%%%%%%%%%%%%%%%%
%%%%%%%%%%%%%%%%%%%%%%%%%%%%%%%%%%%%%%%%%%%%%%%%%%%%%%%%%%%%%%%%%%%%%%%%%%%%%%

\section{Introduction}
\label{sec:introduction}

Developing techniques for the calculation of perturbative scattering amplitudes in string theory has led to various advancements in mathematical physics: while on the one hand providing a laboratory for testing techniques and exploring features of corresponding scattering amplitudes in quantum field theory, string scattering amplitudes on the other hand link well-defined physics questions to problems in number theory and algebraic geometry.

String theory scattering amplitudes are calculated as correlators in a conformal field theory on a surface, the so-called worldsheet. Accordingly, contributions to string scattering are ordered by the genus of the worldsheet. String states come in open and closed versions: open-string states are inserted on the boundary of the worldsheet, while closed-string states are inserted in the bulk. Correspondingly, the leading order of open-string scattering is calculated on a disc, while the leading order of closed-string scattering refers to a sphere. In this article, we are going to deal with open- and closed-string scattering amplitudes on genus-zero oriented worldsheets exclusively.

Leaving out several technicalities, the calculation of string scattering amplitudes boils down to evaluating integrals of Selberg type: these are integrals over configuration spaces of points on a given surface, which encompass all configurations of insertions of string states on the respective worldsheet. We will refer to the different classes of integrals arising in the calculation of genus-zero string scattering as open and closed Selberg integrals.
The physical momenta of the string states, as well as the inverse string tension $\alpha'$, whose limit $\alpha'\to 0$ recasts quantum field theory amplitudes, enter the Selberg integrals through kinematic parameters $s_{ij}$ known as \emph{Mandelstam variables}. 

Various features of string scattering amplitudes become more apparent after expressing them in terms of open and closed Selberg integrals.
\begin{itemize}
	\item \textit{MZVs and the sv-map}: since the Mandelstam variables $s_{ij}$ are proportional to the inverse string tension $\alpha'$, the Taylor expansion of string theory amplitudes at $s_{ij}\,{=}\,0$ can be viewed as a deformation of quantum field theory amplitudes, the coefficients of the series beyond leading order being ``string theory corrections''. These coefficients are of remarkable interest to number theorists. In the case of genus-zero open-string amplitudes, they were shown in \cite{Broedel:2013aza} to belong to the ring $\mathcal Z$ generated over $\mathbb Q$ by the \emph{multiple zeta values} (MZVs) $\zeta(n_1,\ldots ,n_r)$, namely the (conjecturally transcendental) real numbers defined for integers $n_1,\ldots ,n_{r-1}\geq 1$, $n_r\geq 2$ by the nested series
    \begin{equation}\label{eqdefMZVs}
        \zeta(n_1,\ldots ,n_r)=\sum_{0<k_1<\cdots <k_r}\frac{1}{k_1^{n_1}\cdots k_r^{n_r}},
    \end{equation}
    which can be seen as the special values at $z=1$ of the (one-variable) \emph{multiple polylogarithms}
    \begin{equation}\label{eqdefMPLs}
        \mathrm{Li}_{n_1,\ldots ,n_r}(z)=\sum_{0<k_1<\cdots <k_r}\frac{z^{k_r}}{k_1^{n_1}\cdots k_r^{n_r}}.
    \end{equation}
    It was also conjectured by Stieberger in \cite{Stieberger:2013wea} that, in the case of genus-zero closed string amplitudes, the Taylor coefficients should also be MZVs, but in fact of a very special kind, called \emph{single-valued MZVs}. These numbers, first studied by Brown in \cite{Brown:2013gia}, are the special values at $z=1$ of a single-valued version\footnote{Eq.~\eqref{eqdefMPLs} defines a holomorphic function of $z$ in the unit disc, which can be analytically continued to the punctured complex plane $\mathbb C\smallsetminus \{0,1\}$ thanks to a suitable integral representation. This analytic continuation is not unique, as winding around the points $0,1$ changes the value of the multiple polylogarithms, hence they are (holomorphic) \emph{multivalued functions}. Single-valued but non-holomorphic (only real analytic) analogues of these functions can be constructed as special linear combinations of products of holomorphic and antiholomorphic multiple polylogarithms, generalizing the formula $\log|z|^2=\log(z)+\log(\bar z)$.} constructed in \cite{BrownSVMPL} of the multiple polylogarithms \eqref{eqdefMPLs}. There is a canonical map ${\rm sv}$ which sends a multiple zeta value $\zeta(n_1,\ldots ,n_r)$ to its corresponding single-valued analogue $\zeta^{\rm sv}(n_1,\ldots ,n_r)$, which we call here\footnote{This is not standard terminology in mathematics, where sometimes a related but different map takes the same name \cite{Brown:2013gia}.} the \emph{sv-map}, which is (assuming standard conjectures) a ring morphism from~$\mathcal Z$ to the ring~$\mathcal Z^{\rm sv}$ generated over $\mathbb Q$ by the single-valued MZVs. For example, one has ${\rm sv}(\zeta(2n))=\zeta^{\rm sv}(2n)=0$ and ${\rm sv}(\zeta(2n+1))=\zeta^{\rm sv}(2n+1)=2\zeta(2n+1)$.
    The stronger version of Stieberger's conjecture claimed that one can map the small $\alpha'$ expansion of
    genus-zero open-string amplitudes to the corresponding expansion for closed strings by applying, coefficient by coefficient, the sv-map. A sketch of a proof was given in \cite{Schlotterer:2018zce}, while a complete proof, due to Brown--Dupont, appeared in \rcite{Brown:2019wna}. A more elementary proof of the weaker version of the conjecture was given in \cite{VanZerb}.
	\item \textit{Recursions}: for open-string amplitudes at genus zero there exists a recursive algorithm for calculating the $L$-point amplitude from a known expression for the $(L{-}1)$-point amplitude. The algorithm is based on constructing a vector $\hF_L(x)$ parametrized by an unintegrated point $x\in\,]0,1[\,$ out of a class of Selberg integrals, which we refer to as ``open Selberg integrals'', such that $\hF_L(x)$ extends to a multivalued function of~$x\in\mathbb C\smallsetminus\{0,1\}$ and satisfies the Knizhnik--Zamolodchikov (KZ) equation: 
	\begin{equation}\label{eqholKZintro}
		\frac{\partial}{\partial x}\hF_L(x)=\left(\frac{e_0}{x}+\frac{e_1}{x-1}\right)\hF_L(x)\,,
	\end{equation}
	where $e_0,e_1$ denote suitable matrix representations with entries linear in the parameters~$s_{ij}$. The regularized boundary values $\hF_L(0)$ and $\hF_L(1)$ of such a solution can be shown to be related by the \emph{Drinfeld associator} $\Phi(e_0,e_1)$, which is a generating series (in the non-commuting variables $e_0,e_1$) of all MZVs, via
	\begin{equation*}
		\hF_L(1)=\Phi(e_0,e_1)\,\hF_L(0).
	\end{equation*}
    After identifying the open Selberg integrals appearing in $\hF_L(1)$ and $\hF_L(0)$ with $L$-point and $(L{-}1)$-point genus-zero open-string amplitudes, respectively, by taking a further ``physical limit'', the above equation allows to determine the former from the latter, thus leading to a recursive algorithmic procedure to compute $L$-point open-string amplitudes for any $L$.
	This algorithm was first suggested by Terasoma \cite{Terasoma}, who was interested in proving that the coefficients of the Taylor expansions of a class of Selberg integrals are MZVs. Its application to the computation of open string amplitudes was presented in \rcite{Broedel:2013aza} and refined in \rcite{Kaderli}. The recursive process for open-string amplitudes is pictured in the first line of \figref{fig:flowchart} below.
\end{itemize}

Combining knowledge of the open-string recursive algorithm with the result of Brown--Dupont from \cite{Brown:2019wna}, it follows that closed-string amplitudes should also satisfy a recursion in the number~$L$ of string insertions, where the role played by the Drinfeld associator in the open case should be taken by the \emph{Deligne associator} $\Phi^\mathrm{sv}(e_0,e_1)$ \cite{Brown:2013gia},  namely a generating series of all single-valued MZVs, but this result was never explicitly spelled out in the literature. This expectation was already hinted by Stieberger in \rcite{Stieberger:2013wea}.

In this article, we take a different route to demonstrate that closed-string amplitudes indeed satisfy a recursion in the number~$L$ of string insertions involving the Deligne associator, by providing a ``single-valued version'' of the above algorithm: by spelling out the theory of single-valued solutions to the KZ equation, we mimic the open-string amplitude recursion of \rcite{Broedel:2013aza}, working out the mathematical formalism described by the second line in \figref{fig:flowchart}. While the sv-map and the open-string procedure will serve as a guideline, our result is independent from \cite{Brown:2019wna}, as we only work on the ``closed-string side'' and combine elementary tools with known facts in twisted cohomology theory.

More specifically, we construct out of a class of single-valued analogues of open Selberg integrals, which we call ``closed Selberg integrals'', a vector of interpolating real-analytic single-valued functions $\hcF_L(z)$ of an auxiliary point $z\in\mathbb{C}\smallsetminus\{0,1\}$ which satisfy the (holomorphic) KZ equation \eqref{eqholKZintro} and is the image of $\hF_L$ under the sv-map.
Similarly to the holomorphic case, the regularized boundary values $\hcF_L(0),\,\hcF_L(1)$ are related by (see \thmref{prop:equationclosed})
	\begin{equation}\label{eqmainintro}
		\hcF_L(1)=\Phi^\mathrm{sv}(e_0,e_1)\,\hcF_L(0)
	\end{equation}
where the Deligne associator $\Phi^\mathrm{sv}(e_0,e_1)$ can be obtained by applying coefficient-wise the sv-map to the Drinfeld associator $\Phi(e_0,e_1)$. This is the main step towards showing in \thmref{thm:closedrec} that, by taking appropriate physical limits of the Mandelstam variables $s_{ij}$ involved, eq.~\eqref{eqmainintro} relates $L$-point to $(L{-}1)$-point closed string amplitudes. For low multiplicity, we provide explicit formulas for this recursive relation. As a corollary of our construction, we obtain yet another proof of the weaker version of Stieberger's conjecture, namely that the coefficients of genus-zero closed-string amplitudes are single-valued MZVs.

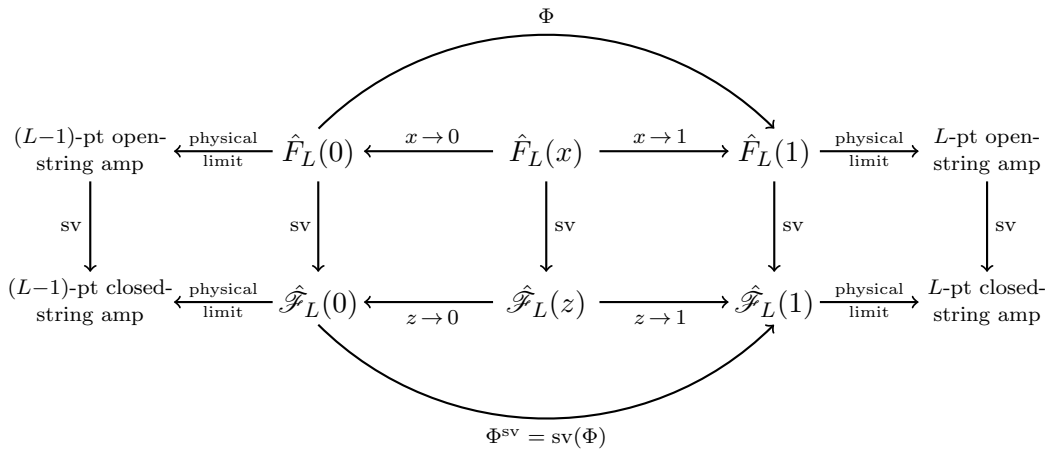
\begin{figure}[ht]
\centering
\begin{tikzpicture}
    % labels for interpolating functions and their limits
	\node at (0,1) {$\hF_L(x)$};
	\node at (-3,1) {$\hF_L(0)$};
	\node at (3,1) {$\hF_L(1)$};
	\node at (0,-1) {$\hcF_L(z)$};
	\node at (-3,-1) {$\hcF_L(0)$};
	\node at (3,-1) {$\hcF_L(1)$};
	%
    % arrows for open-string recursion
	\draw[thick,->] (-0.7,1) -- (-2.4,1);
	\draw[thick,->] (0.7,1) -- (2.4,1);
	\draw[thick,->] (-0.7,-1) -- (-2.4,-1);
	\draw[thick,->] (0.7,-1) -- (2.4,-1);
	%
    % arrows for closed-string recursion
    \draw[thick,->] (3.6,-1) -- (4.9,-1);
    \draw[thick,->] (3.6,1) -- (4.9,1);
    \draw[thick,->] (-3.6,-1) -- (-4.9,-1);
    \draw[thick,->] (-3.6,1) -- (-4.9,1);
    %
    % curved arrows for Drinfeld and Deligne associator
	\draw[thick,<-] (3,1.3) arc (45:135:4.24);
	\draw[thick,->] (-3,-1.3) arc (225:315:4.24);
    %
    % labels for Drinfeld and Deligne associator
	\node at (0,2.8) {\scriptsize$\Phi$};
	\node at (0,-2.8) {\scriptsize$\Phi^{\mathrm{sv}}=\mathrm{sv}(\Phi)$};
    %
    % labels for limits of auxiliary variables
	\node at (-1.5,1.2) {\scriptsize$x\,{\to}\,0$};
	\node at (1.5,1.2) {\scriptsize$x\,{\to}\,1$};
	\node at (-1.5,-1.2) {\scriptsize$z\,{\to}\,0$};
	\node at (1.5,-1.2) {\scriptsize$z\,{\to}\,1$};
	%
    % arrows for sv maps
	\draw[thick,->] (0,0.6) -- (0,-0.6);
	\draw[thick,->] (-3,0.6) -- (-3,-0.6);
	\draw[thick,->] (3,0.6) -- (3,-0.6);
    \draw[thick,->] (5.8,0.6) -- (5.8,-0.6);
    \draw[thick,->] (-6,0.6) -- (-6,-0.6);
	%
    % sv map labels
    \node at (-3.23,0) {\scriptsize sv};
	\node at (0.25,0) {\scriptsize sv};
	\node at (3.25,0) {\scriptsize sv};
    \node at (6.05,0) {\scriptsize sv};
    \node at (-6.25,0) {\scriptsize sv};
	%
    % labels for physical limits
    \node at (4.25,-0.97) {\scriptsize$\substack{\text{physical}\smallskip\\\text{limit}}$};
    \node at (4.25,1.03) {\scriptsize$\substack{\text{physical}\smallskip\\\text{limit}}$};
    \node at (-4.25,-0.97) {\scriptsize$\substack{\text{physical}\smallskip\\\text{limit}}$};
    \node at (-4.25,1.03) {\scriptsize $\substack{\text{physical}\smallskip\\\text{limit}}$};
	%
    % labels for string amplitudes in physical limit
    \node at (5.8,-1) {\scriptsize $\begin{array}{c}L\text{-pt closed-}\\\text{string amp}\end{array}$};
    \node at (5.8,1) {\scriptsize $\begin{array}{c}L\text{-pt open-}\\\text{string amp}\end{array}$};
    \node at (-6,-1) {\scriptsize $\begin{array}{c}(L{-}1)\text{-pt closed-}\\\text{string amp}\end{array}$};
    \node at (-6,1) {\scriptsize $\begin{array}{c}(L{-}1)\text{-pt open-}\\\text{string amp}\end{array}$};
\end{tikzpicture}
\caption{Relation of the algorithms described in this article.}
\label{fig:flowchart}
\end{figure}

\subsection*{Structure of the article} We will start in \secref{sec:kze} by recalling some known facts about the Knizhnik--Zamolodchikov equation, its holomorphic solutions and its relation with multiple polylogarithms and MZVs. Furthermore, we extend these results to single-valued solutions, culminating in \thmref{thmKZvectorspaceSV}, which will be a key ingredient to demonstrate the closed-string recursion. In \secref{sec:stringamps}, we will review the general structure of string scattering amplitudes at genus zero. \secref{sec:Selberg} is devoted to defining open and closed Selberg integrals, which will be the building blocks of the recursive procedures, and we will also discuss their limiting behavior. The recursions for string amplitudes are then shown in~\secref{sec:openrec} and~\secref{sec:closedrec}: first, the open-string amplitude recursion from \rcites{Broedel:2013aza,BK,Kaderli} is reviewed in~\secref{sec:openrec} with some additional details not given in these references, then a corresponding recursion for closed-string amplitudes is shown in~\secref{sec:closedrec}, including the main result of this article in \thmref{thm:closedrec}. For both of these sections, the general formalism of the recursions is discussed, with the two simplest non-trivial calculations shown explicitly. Two appendices supplements this article showing additional lengthy details on the calculations of certain limit vectors for the recursion.

\subsection*{Open questions}
The current article settles the question of a recursion for closed-string tree-level amplitudes using the Deligne associator connecting two regularized boundary values. Given that for one-loop open-string scattering a recursive algorithm employing a genus-one analogue of the Drinfeld associator already exists \cite{BK,Mafra:2019xms}, the question for a similar construction for genus-one closed-string amplitudes based on a genus-one analogue of the Deligne associator is to be tackled next. 
Since there is no analogue of the sv-map at genus one in the mathematical literature, we hope that the approach of this article, which bypasses the use of the sv-map at genus zero, may provide a successful strategy to solve this open question.  
\medskip

For the genus-zero recursions treated in this article, we have been assuming massless external states. It would be interesting to know in which way the recursive algorithms would survive when the external string states are promoted to massive ones. This might be considered as a deformation, whose relation to the so-called deformed string amplitudes \cite{Cheung:2023adk} and string amplitudes in curved backgrounds \cite{Alday:2023mvu,Alday:2024ksp} should be settled in the future.

%%%%%%%%%%%%%%%%%%%%%%%%%%%%%%%%%%%%%%%%%%%%%%%%%%%%%%%%%%%%%%%%%%%%%%%%%%%%%%
%%%%%%%%%%%%%%%%%%%%%%%%%%%%%%%%%%%%%%%%%%%%%%%%%%%%%%%%%%%%%%%%%%%%%%%%%%%%%%
\subsection*{Acknowledgments}
We are grateful to Egor Im, Beat Nairz and Stephan Stieberger for discussions on related topics. The work of K.B.~and J.B.~is partially supported by the Swiss National Science Foundation through the NCCR SwissMAP. The research of F.Z.~was partly supported by the Royal Society, under the grant URF\textbackslash R1\textbackslash 201473, as well as by the Spanish Ministry of Science, Innovation and Universities under the 2023 grant ``Proyecto the generación de conocimiento'' PID2023-152822NB-I00. The authors thank the Munich Institute for Astro-, Particle and BioPhysics (MIAPbP), which is funded by the Deutsche Forschungsgemeinschaft (DFG, German Research Foundation) under Germany’s Excellence Strategy - EXC-2094 - 390783311, for the hospitality during the workshop ``Special functions: from geometry to fundamental interactions''.
K.B.~thanks the Max Planck Institut für Physik München for hospitality during the final stages of the project.

%%%%%%%%%%%%%%%%%%%%%%%%%%%%%%%%%%%%%%%%%%%%%%%%%%%%%%%%%%%%%%%%%%%%%%%%%%%%%%
%%%%%%%%%%%%%%%%%%%%%%%%%%%%%%%%%%%%%%%%%%%%%%%%%%%%%%%%%%%%%%%%%%%%%%%%%%%%%%
%%%%%%%%%%%%%%%%%%%%%%%%%%%%%%%%%%%%%%%%%%%%%%%%%%%%%%%%%%%%%%%%%%%%%%%%%%%%%%
%%%%%%%%%%%%%%%%%%%%%%%%%%%%%%%%%%%%%%%%%%%%%%%%%%%%%%%%%%%%%%%%%%%%%%%%%%%%%%

\section{The KZ equation}\label{sec:kze}

\begin{defn}
Let $U$ be an open connected subset of $\mathbb{C}\smallsetminus \{0,1\}$, $V$ be a complex vector space, and $e_0,e_1\in \End(V)$. We say that a real-analytic function $F:U\to V$ satisfies the \emph{Knizhnik--Zamolodchikov (KZ) equation} if
\begin{equation}\label{KZeq}
\frac{\partial}{\partial z}F(z)\,=\,\bigg(\frac{e_0}{z}+\frac{e_1}{z-1}\bigg)\cdot F(z)\,.
\end{equation}
\end{defn}
Let us denote by $e_0,e_1$ also a pair of formal non-commutative variables, let $\langle e_0,e_1\rangle_{\mathbb C}$ be the complex vector space spanned by $e_0,e_1$ and let $\mathbb{C}\llangle e_0,e_1\rrangle$ be the $\mathbb{C}$-algebra of formal power series in $e_0,e_1$, with product induced by concatenation of monomials in $e_0,e_1$. Then $\mathbb{C}\llangle e_0,e_1\rrangle$ is isomorphic to the completion of the tensor algebra $T(\langle e_0,e_1\rangle_{\mathbb C})$ with respect to the canonical grading. Through this isomorphism, one can endow $\mathbb{C}\llangle e_0,e_1\rrangle$ with the structure of a (cocommutative) Hopf algebra, with (completed) coproduct $\Delta$ induced by $e_i\mapsto e_i\otimes 1 + 1\otimes e_i$ ($i=0,1$), counit which sends a series to its constant term, and antipode~$a$ induced by $a(w)\coloneq(-1)^{|w|}\tilde{w}$, where if~$w$ is a monomial in $e_0,e_1$,~$\tilde{w}$ is the monomial obtained by reversing the order\footnote{For instance, $\widetilde{e_0e_0e_1e_0}=e_0e_1e_0e_0$.} of the terms of~$w$, and~$|w|$ is the degree of~$w$.
Notice that $\mathbb{C}\llangle e_0,e_1\rrangle$ is in particular a complex vector space, and $e_0,e_1$ can be viewed as elements of $\End(\mathbb{C}\llangle e_0,e_1\rrangle)$, acting by left concatenation on $\mathbb{C}\llangle e_0,e_1\rrangle$; the KZ equation in this setting is known as the \emph{universal KZ equation}.

We also define $\{e_0,e_1\}^\times$ to be the non-commutative monoid which is freely generated by $e_0, e_1$, namely the set of all words formed using non-commutative letters $e_0$ and $e_1$, including the empty word, endowed with the concatenation product.

\subsection{Multiple polylogarithms}

A proof of the following result can be found for instance in \cite{Brownhyperlogs}.

\begin{theorem}\label{thmBrown}
Let $U$ be simply connected and containing the open interval $]0,1[$, and let $\log$ denote the principal branch of the logarithm on $U$. Then there is a unique holomorphic solution~$L$ of the universal KZ equation such that $\lim_{z\to 0}\exp (-e_0\log(z))L(z)=1$.
\end{theorem}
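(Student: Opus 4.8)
\emph{Overview and reduction.} I would split the statement into its uniqueness part, which is soft, and its existence part, which carries all the content. Expand any solution as $L(z)=\sum_{w\in\{e_0,e_1\}^\times}L_w(z)\,w$ with scalar holomorphic functions $L_w$ on $U$; since $e_0,e_1$ act by left concatenation, the universal KZ equation becomes the triangular hierarchy
\begin{equation*}
L_\emptyset'(z)=0,\qquad L_{e_0w}'(z)=\tfrac1z\,L_w(z),\qquad L_{e_1w}'(z)=\tfrac1{z-1}\,L_w(z),
\end{equation*}
over all words $w$. The $\emptyset$-component of the hierarchy, and of $\exp(-e_0\log z)L(z)$, is just $L_\emptyset$, so $L_\emptyset$ is constant and the normalization forces $L_\emptyset\equiv1$; in particular $L(z)$ is invertible in $\mathbb{C}\llangle e_0,e_1\rrangle$ for each $z$. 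Given two normalized holomorphic solutions $L,\tilde L$, a direct computation using left-action shows $\tfrac{\partial}{\partial z}\big(\tilde L^{-1}L\big)=0$, so $C:=\tilde L^{-1}L$ is a constant series; applying $\exp(-e_0\log z)$ on the left and letting $z\to0$ yields $C=1$, hence $L=\tilde L$. This settles uniqueness without reference to existence.

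\emph{Existence: bare solutions and reduction.} On the simply connected domain $U$, which avoids $0$ and $1$, the forms $\der z/z$ and $\der z/(z-1)$ are holomorphic, so the hierarchy can be integrated one word at a time, each $L_w$ determined up to a single additive constant. Hence holomorphic solutions of the KZ equation exist on $U$ and form a torsor under right multiplication by $\mathbb{C}\llangle e_0,e_1\rrangle^\times$. It then suffices to show that for one such solution $L$ the limit $\Phi:=\lim_{z\to0}\exp(-e_0\log z)L(z)$ exists and is invertible: then $L\,\Phi^{-1}$ is a solution with the required normalization, and it is the unique one by the argument above.

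\emph{Existence: asymptotic analysis near $z=0$.} The substantive step is to prove, by induction on $|w|$, that the integration constants can be (and in fact must be) chosen so that each $L_w$ admits, as $z\to0$ within $U$, an expansion $L_w(z)=P_w(\log z)+O\!\big(z\,(\log z)^{N_w}\big)$ with $P_w$ a polynomial satisfying $P_\emptyset=1$, $P_{e_1u}=0$, $P_{e_0^{\,n}}(X)=X^n/n!$, and $P_{e_0u}'=P_u$ in general. The analytic inputs are elementary: $\int\tfrac1z z^j(\log z)^k\,\der z$ equals $z^j\cdot(\text{polynomial in }\log z)$ for $j\ge1$ and $\tfrac{(\log z)^{k+1}}{k+1}$ for $j=0$, while $\int\tfrac1{z-1}z^j(\log z)^k\,\der z$, obtained by expanding $\tfrac1{z-1}=-\sum_{m\ge0}z^m$ near $0$, is a convergent-from-$0$ sum of terms $z^{j+m+1}\cdot(\text{polynomial in }\log z)$; in both cases the only undetermined datum is the value of $L_w$ at the regularized point $z=0$. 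Writing the $u$-component of $\exp(-e_0\log z)L(z)$ as $\sum_{n\ge0}\tfrac{(-\log z)^n}{n!}L_{u_{\ge n}}(z)$, where $u_{\ge n}$ is $u$ with its first $n$ letters removed and the sum stops once those letters cease to be $e_0$, one checks that its $\log$-polynomial part $R_u(X)$ satisfies $R_u'=0$ by the inductive relations among the $P_w$, hence equals the constant $R_u(0)=L_u(0)$; imposing $R_u\equiv0$ forces that constant to vanish. This simultaneously pins down every integration constant and shows $\Phi$ exists; its $\emptyset$-component is $1$, so $\Phi$ is invertible, finishing the proof.

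\emph{Main obstacle.} The torsor argument disposes of uniqueness at once, so the real work is the inductive asymptotic analysis at the singular point $z=0$: one must verify that the logarithmic divergences propagating through the hierarchy are exactly those absorbed by the prefactor $\exp(e_0\log z)$, so that the regularized iterated integrals are consistently defined on all of $U$ (not merely near $\,]0,1[\,$) and the normalization is attainable. This is precisely the classical construction of the KZ solution as a generating series of tangentially based, shuffle-regularized multiple polylogarithms; the remainder is bookkeeping.
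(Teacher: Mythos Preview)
The paper does not prove this theorem; it merely cites the result from the literature. Your argument is essentially the standard construction: decouple the universal KZ equation into the triangular hierarchy $L_{e_iw}'=\omega_i\,L_w$ indexed by words, dispose of uniqueness via $\partial_z(\tilde L^{-1}L)=0$ together with the normalization, and establish existence by an inductive asymptotic analysis near $z=0$ showing that the logarithmic divergences of the $L_w$ are exactly those absorbed by $\exp(-e_0\log z)$. The sketch is correct.

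Two minor remarks. First, what you write as ``$L_u(0)$'' should rather be $P_u(0)$, the constant term of the log-polynomial part of $L_u$, since $L_u(z)$ itself typically diverges as $z\to0$; it is this constant, not a value of $L_u$, that coincides with the integration constant you are free to fix. Second, once you impose $P_u(0)=0$ for all $u\neq\emptyset$ in the inductive step you have constructed the normalized solution directly, so the $L\,\Phi^{-1}$ manoeuvre announced in your second paragraph becomes redundant; the two strategies you juxtapose are really the same one.
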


By definition, $\lim_{z\to 0}\exp (-e_0\log(z))L(z)=1$. We call the left-hand side the \emph{regularized value of $L$ at $0$}, and denote it by $L(0)$. Similarly, $\lim_{z\to 1}\exp (-e_1\log(1-z))L(z)$ is finite (by the theory of differential equations with regular singularities); we call it the \emph{regularized value of~$L$ at~$1$}, and we denote it by $L(1)$. 

\begin{defn}\label{defn:Drinfeld}
The element $\Phi\coloneq L(1)\, L(0)^{-1}$ ($=L(1)$) of $\mathbb{C}\llangle e_0,e_1\rrangle$ is called \emph{Drinfeld associator}.
\end{defn}

\begin{rmk}
	One can consider the formal expansions $L(z)=\sum_w L_w(z)\,w$ and $\Phi=\sum_w \zeta_w\,w$ in all possible non-commutative words $w\in\{e_0,e_1\}^\times$. The functions $L_w$ are (regularized) iterated integrals\footnote{The iterated integral over a path $\gamma:[0,1]\to M$ on a manifold $M$ of~$r$ differential one-forms $\omega_1,\ldots ,\omega_r$ is the integral $\int_{0\leq t_r\leq \cdots \leq t_1\leq 1}\gamma^*\omega_1(t_1)\wedge\cdots \wedge\gamma^*\omega_r(t_r)=\int_0^1\gamma^*\omega_1(t_1)\int_0^{t_1}\gamma^*\omega_2(t_2)\cdots \int_0^{t_{r-1}}\gamma^*\omega_r(t_r)$.} from $0$ to $z$ of differential forms $dz/z$ and $dz/(z-1)$, which correspond to appearances in the word~$w$ of the letters~$e_0$ and~$e_1$, respectively, and can be written (in the unit disk) in terms of the multiple polylogarithms \eqref{eqdefMPLs}, the dictionary being
    \begin{equation*}
        L_{e_0^{n_r-1}e_1\cdots e_0^{n_1-1}e_1}(z)=\mathrm{Li}_{n_1,\ldots ,n_r}(z).
    \end{equation*}
    The coefficients $\zeta_w\in\mathbb{R}$, i.e.~the (regularized) special values at~$1$ of the functions $L_w$, are integral representations of the MZVs of equation \eqref{eqdefMZVs}, as one can check that
\begin{equation}\label{eqidMZVs}
   \zeta_{e_0^{n_r-1}e_1\cdots e_0^{n_1-1}e_1}=\zeta(n_1,\ldots ,n_r).
\end{equation}
\end{rmk}

\subsection{Multivalued holomorphic solutions of the KZ equation}\label{sec:holsols}

The following result is essentially a consequence of \thmref{thmBrown}, but we include a proof for completeness.

\begin{theorem}\label{thmKZvectorspace}
	Let $V$ be a complex vector space, let $e_0,\,e_1\in \End(V)$, and suppose that the generating series $L(z)=\sum_w L_w(z)\,w$ and $\Phi=\sum_w \zeta_w\,w$, where $w\in\{e_0,e_1\}^\times$, also define elements of $\End(V)$. Then:
\begin{itemize}
\item[(i)] If $F:U\to V$ is a holomorphic solution of the KZ equation \eqref{KZeq}, with $U$ as in the statement of \thmref{thmBrown}, 
then there exists a constant $C\in V$ such that $F(z)=L(z)\cdot C$. 
\item[(ii)] The regularized values at zero $F(0)\coloneq\lim_{z\to 0}\exp (-e_0\log(z))F(z)$ and at one $F(1)\coloneq\lim_{z\to 1}\exp (-e_1\log(1-z))F(z)$ are finite and satisfy
\begin{equation*}
F(1)\,=\,\Phi\cdot F(0)\,.
\end{equation*}
\end{itemize}
\end{theorem}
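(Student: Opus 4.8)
The plan is to deduce both statements from Theorem~\ref{thmBrown} by the standard trick of comparing an arbitrary solution against the fundamental solution $L(z)$. For part~(i), suppose $F:U\to V$ solves \eqref{KZeq}. Since $L(z)$ is a solution of the universal KZ equation with values in $\End(\mathbb{C}\llangle e_0,e_1\rrangle)$, applying the given representation $\mathbb{C}\llangle e_0,e_1\rrangle \to \End(V)$ turns it into an $\End(V)$-valued solution, which by a small abuse of notation we still call $L(z)$. The key point is that $L(z)$ is invertible: its leading asymptotics $L(z)\sim \exp(e_0\log z)$ as $z\to 0$ show that $L(z)$ is invertible near $0$, and since $L$ satisfies a linear ODE with the only singularities in $\mathbb{C}\smallsetminus\{0,1\}$, invertibility propagates throughout the simply connected domain $U$ (equivalently, $\det L(z)$ is a nowhere-vanishing holomorphic function because it solves a scalar ODE $\tfrac{d}{dz}\det L = \tr\!\big(\tfrac{e_0}{z}+\tfrac{e_1}{z-1}\big)\det L$). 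First I would form $C(z)\coloneq L(z)^{-1}F(z)$ and compute $\tfrac{\partial}{\partial z}C(z) = -L(z)^{-1}L'(z)L(z)^{-1}F(z) + L(z)^{-1}F'(z) = -L(z)^{-1}A(z)F(z)+L(z)^{-1}A(z)F(z)=0$, where $A(z)=\tfrac{e_0}{z}+\tfrac{e_1}{z-1}$. Hence $C(z)\equiv C$ is constant on the connected set $U$, giving $F(z)=L(z)\cdot C$.

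For part~(ii), I would read off the regularized values directly from the factorization $F(z)=L(z)\cdot C$. Near $z=0$ we have $\exp(-e_0\log z)F(z) = \big(\exp(-e_0\log z)L(z)\big)\cdot C \to L(0)\cdot C = C$, using Theorem~\ref{thmBrown} and its sequel defining $L(0)$; in particular the limit exists and $F(0)=C$. Near $z=1$ the analogous computation gives $\exp(-e_1\log(1-z))F(z) = \big(\exp(-e_1\log(1-z))L(z)\big)\cdot C \to L(1)\cdot C$, which exists by the regular-singularities theory quoted after Theorem~\ref{thmBrown}, so $F(1)=L(1)\cdot C = L(1)L(0)^{-1}\cdot F(0)=\Phi\cdot F(0)$ by Definition~\ref{defn:Drinfeld}. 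One subtlety to state carefully is that $L(0)=1$ (the identity endomorphism of $V$), which is exactly the normalization in Theorem~\ref{thmBrown} transported through the representation, so that $F(0)=C$ rather than merely $F(0)=L(0)\cdot C$; this is what makes the constant $C$ literally equal to the regularized boundary value at $0$.

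The main obstacle I anticipate is not conceptual but a matter of rigor in the passage from the universal setting to $\End(V)$: one must check that substituting the matrices $e_0,e_1\in\End(V)$ into the formal series $L(z)=\sum_w L_w(z)w$ yields a genuinely convergent $\End(V)$-valued function that still solves \eqref{KZeq}, and likewise that $\Phi=\sum_w \zeta_w w$ converges in $\End(V)$ — this is precisely the hypothesis built into the statement of Theorem~\ref{thmKZvectorspace} (``also define elements of $\End(V)$''), so in the proof one simply invokes it, but one should remark that term-by-term differentiation is justified by local uniform convergence and that the functional equation $\tfrac{\partial}{\partial z}L(z)=A(z)L(z)$ passes to the representation because it holds coefficient-wise in the tensor algebra. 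The remaining steps — invertibility of $L(z)$, the constancy argument, and extracting the two regularized limits — are then routine.
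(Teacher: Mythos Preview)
Your proposal is correct and follows essentially the same approach as the paper: form $L(z)^{-1}F(z)$, show its derivative vanishes so it equals a constant $C$, and then read off both regularized limits from $F(z)=L(z)\cdot C$ using $L(0)=1$ and $L(1)=\Phi$. The only cosmetic difference is that the paper sidesteps your determinant/Wronskian argument for invertibility by noting that $L(z)$, as a formal series in $\mathbb{C}\llangle e_0,e_1\rrangle$ with constant term~$1$, is automatically invertible there, and then lets this formal inverse act on~$V$; your determinant argument tacitly assumes $V$ is finite-dimensional, so the paper's formulation is slightly more general but otherwise identical.
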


\begin{proof}
	\textit{(i)} Consider $L(z)^{-1}\cdot F(z)$, where the inverse is understood w.r.t.~concatenation in $\mathbb{C}\llangle e_0,e_1\rrangle$ (any formal series with non-zero constant term is invertible). This is a holomorphic function from~$U$ to~$V$, whose derivative vanishes, because 
\begin{align*}
\frac{\partial}{\partial z}\big(L(z)^{-1}\cdot F(z)\big)&\,=\, -L(z)^{-1}\bigg(\frac{\partial}{\partial z}L(z)\bigg)L(z)^{-1}\cdot F(z)\,+\,L(z)^{-1}\cdot \frac{\partial}{\partial z}F(z)\\
&\,=\, -L(z)^{-1}\bigg(\frac{e_0}{z}+\frac{e_1}{z-1}\bigg)L(z)L(z)^{-1}\cdot F(z)\,+\,L(z)^{-1}\bigg(\frac{e_0}{z}+\frac{e_1}{z-1}\bigg)\cdot F(z)\\
&\,=\, 0,
\end{align*}
where the first identity follows by representing $L(z)$ as a matrix and using the rule for the derivative of an inverse matrix, and the second identity follows from \eqref{KZeq}. This implies that $L(z)^{-1}\cdot F(z)$ is a constant $C\in V$, i.e.~that $F(z)=L(z)\cdot C$, thus proving~\textit{(i)}. 

\textit{(ii)} Notice that the regularized limits which define $F(1)$ and $F(0)$ exist and are finite, as a consequence of \textit{(i)} and of the existence and finiteness of the same limits for $L(z)$. Moreover, \textit{(i)} yields $F(1)=L(1)\cdot C=\Phi L(0)\cdot C=\Phi\cdot F(0)$, which proves \textit{(ii)} because $L(0)=\mathrm{id}$.
\end{proof}

\begin{rmk}\label{remMPLS}
	The functions $L_w$ can be seen as multivalued functions on the punctured complex plane $\mathbb{C}\smallsetminus\{0,1\}$, i.e.~they are functions on a universal cover $\widetilde{\mathbb{C}\smallsetminus\{0,1\}}$ of $\mathbb{C}\smallsetminus\{0,1\}$, and they provide a (multivalued) analytic continuation of the multiple polylogarithms \eqref{eqdefMPLs}. We define the $\mathbb C$-algebra of multiple polylogarithms $\mathcal{H}\coloneq\mathbb{C}[(L_w)_{w\in\{e_0,e_1\}^\times}]$ as the subalgebra of the holomorphic functions on $\widetilde{\mathbb{C}\smallsetminus\{0,1\}}$ generated by multiple polylogarithms.
\end{rmk}

\begin{rmk}\label{rmk:Terasoma}
	Terasoma considers in \cite{Terasoma} the special representation where $V=\Mat_{n\times 1}(R_{\mathbb C})$, with $R_{\mathbb C}:=R\otimes_{\mathbb Q}\mathbb C$ and $R=\hat\oplus_d R_d$ is a complete graded ring\footnote{One should think of a ring of power series in the degree-one generators. We recall that the notation $\hat\oplus$ stands for the completed direct sum of the graded pieces, which allows to take infinite linear combinations of monomials of different degrees, whereas the usual direct sum only allows to take finite sums.} generated over $\mathbb Q$ by finitely many degree-one elements with homogeneous relations. Furthermore, he assumes that $e_0, e_1\in \Mat_{n\times n}(R_1)\subset \End(V)$ (they act on $V$ via matrix-vector multiplication), namely they induce a \emph{homogeneous rational representation} $\mathbb C\llangle e_0,e_1\rrangle\to \Mat_{n\times n}(R_{\mathbb C})$. In this setting, it is proven (see \cite[Corollary 2.2]{Terasoma}) that the entries of the matrix $\Phi$ representing the Drinfeld associator belong to $\hat\oplus_d \mathcal Z_d\otimes_{\mathbb Q} R_d$, where $\mathcal Z_d$ denotes the $\mathbb Q$-vector space spanned by weight-$d$ MZVs, namely all $\zeta_w$ with $|w|=d$. As we will see, the main results of the present article are obtained within the special setting considered in ref.~\cite{Terasoma}.
\end{rmk}

%%%%%%%%%%%%%%%%%%%%%%%%%%%%%%%%%%%%%%%%%%%%%%%%%%%%%%%%%%%%%%%%%%%%%%%%%%%%%%
%%%%%%%%%%%%%%%%%%%%%%%%%%%%%%%%%%%%%%%%%%%%%%%%%%%%%%%%%%%%%%%%%%%%%%%%%%%%%%
%%%%%%%%%%%%%%%%%%%%%%%%%%%%%%%%%%%%%%%%%%%%%%%%%%%%%%%%%%%%%%%%%%%%%%%%%%%%%%
%%%%%%%%%%%%%%%%%%%%%%%%%%%%%%%%%%%%%%%%%%%%%%%%%%%%%%%%%%%%%%%%%%%%%%%%%%%%%%

\subsection{Single-valued multiple polylogarithms}\label{ssec:svmpls}

\begin{proposition}[Brown, \cite{BrownSVMPL}]
(i) There exists a unique automorphism $\psi$ of $\mathbb{C}\llangle e_0,e_1\rrangle$ such that $\psi(e_0)=e_0$ and $\psi(e_1)$ satisfies
\begin{equation*}\label{AlphabetSV}
\psi(\tilde{\Phi}\,e_1\,\tilde{\Phi}^{-1})\,=\,\Phi^{-1}\,e_1\,\Phi,
\end{equation*}
where $\tilde{\Phi}$ is the image of $\Phi$ via the endomorphism of $\mathbb{C}\llangle e_0,e_1\rrangle$ induced by $w\mapsto \tilde w$.\\
(ii) The real-analytic function $\cL :U\to \mathbb{C}\llangle e_0,e_1\rrangle$ defined by
\begin{equation*}
\cL (z)\coloneq L(z)\psi(\widetilde{\overline{L(z)}}),
\end{equation*}
uniquely extends to a real-analytic (single-valued)\footnote{The function $L(z)$ does not extend to a single-valued function on the domain $\mathbb{C}\smallsetminus\{0,1\}$ and it is thus a multivalued function. Conversely, it is usually emphasized that $\mathcal{L}(z)$ is a \emph{single-valued function} on $\mathbb{C}\smallsetminus\{0,1\}$.} function on the whole punctured complex plane $\mathbb{C}\smallsetminus\{0,1\}$.
\end{proposition}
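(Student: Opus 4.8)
The plan is to treat the two parts separately: part~(i) is purely algebraic, while part~(ii) reduces to a monodromy computation that has been arranged precisely so as to match the identity defining $\psi$. For part~(i), the key fact is that for any $g\in\mathbb{C}\llangle e_0,e_1\rrangle$ with constant term $1$, the assignment $e_0\mapsto e_0$, $e_1\mapsto g\,e_1\,g^{-1}$ extends to a unique continuous $\mathbb{C}$-algebra endomorphism of $\mathbb{C}\llangle e_0,e_1\rrangle$: indeed $\mathbb{C}\llangle e_0,e_1\rrangle$ is the completed free associative algebra on $e_0,e_1$, and both $e_0$ and $g\,e_1\,g^{-1}=e_1+(\text{terms of degree}\ge 2)$ lie in its augmentation ideal, so every power-series substitution converges in the grading topology; moreover this endomorphism induces the identity on the associated graded, hence is an automorphism. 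Applying this with $g=\tilde\Phi$ produces an automorphism $\alpha$ with $\alpha(e_0)=e_0$, $\alpha(e_1)=\tilde\Phi\,e_1\,\tilde\Phi^{-1}$, and with $g=\Phi^{-1}$ an automorphism $\beta$ with $\beta(e_0)=e_0$, $\beta(e_1)=\Phi^{-1}\,e_1\,\Phi$ (here one uses that $\Phi$, hence also $\tilde\Phi$ and $\Phi^{-1}$, has constant term $\zeta_\emptyset=1$). The condition on $\psi$ is $\psi(\alpha(e_1))=\beta(e_1)$, while on $e_0$ both $\psi\circ\alpha$ and $\beta$ act as the identity; since a continuous algebra homomorphism of $\mathbb{C}\llangle e_0,e_1\rrangle$ is determined by its values on $e_0,e_1$, this is equivalent to $\psi\circ\alpha=\beta$, that is, $\psi=\beta\circ\alpha^{-1}$. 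This proves existence and uniqueness at once, exhibits $\psi$ as a composite of automorphisms, and gives $\psi(e_0)=e_0$.

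For part~(ii) the strategy is to compute the monodromy of $\cL(z)=L(z)\,\psi(\widetilde{\overline{L(z)}})$ around the two punctures and show it is trivial. Now $\pi_1(\mathbb{C}\smallsetminus\{0,1\})$ is free on a loop $\gamma_0$ around $0$ and a loop $\gamma_1$ around $1$; moreover $L$ extends (as a multivalued holomorphic function) to all of $\mathbb{C}\smallsetminus\{0,1\}$, as recalled above, so $\cL$ extends there as a multivalued real-analytic function (being the product of the holomorphic factor $L(z)$ and the anti-holomorphic factor $\psi(\widetilde{\overline{L(z)}})$). Hence triviality of the monodromy around $\gamma_0$ and $\gamma_1$ will provide a single-valued real-analytic extension of $\cL$ to $\mathbb{C}\smallsetminus\{0,1\}$, unique by the identity theorem for real-analytic functions on the connected set $\mathbb{C}\smallsetminus\{0,1\}$. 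The input for the monodromy computation is the standard description of the monodromy of $L$ (see \cite{BrownSVMPL}): with a suitable choice of orientations, $\gamma_\sigma^* L=L\cdot M_\sigma$, where $M_0=\exp(2\pi\iunit e_0)$ and $M_1=\Phi^{-1}\exp(2\pi\iunit e_1)\,\Phi$.

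From this I read off the monodromy of the operations applied to $L$. Complex conjugation of the coefficient functions gives $\gamma_\sigma^*\overline{L}=\overline{L}\cdot\overline{M_\sigma}$; since the coefficients $\zeta_w$ of $\Phi$ are real (so $\overline{\Phi}=\Phi$) and $\overline{\exp(2\pi\iunit e_i)}=\exp(-2\pi\iunit e_i)$, one gets $\overline{M_\sigma}=M_\sigma^{-1}$ for $\sigma=0,1$. Word-reversal $w\mapsto\tilde w$ is a continuous anti-automorphism fixing $e_0,e_1$ and commuting with pullback in $z$, so $\gamma_\sigma^*\widetilde{\overline{L}}=(\widetilde{M_\sigma})^{-1}\cdot\widetilde{\overline{L}}$ — a right factor has become a left factor — and applying the $z$-independent automorphism $\psi$ gives $\gamma_\sigma^*\psi(\widetilde{\overline{L}})=\psi(\widetilde{M_\sigma})^{-1}\,\psi(\widetilde{\overline{L}})$. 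Therefore
\begin{equation*}
\gamma_\sigma^*\cL=(\gamma_\sigma^* L)\cdot\bigl(\gamma_\sigma^*\psi(\widetilde{\overline{L}})\bigr)=L\cdot\bigl(M_\sigma\,\psi(\widetilde{M_\sigma})^{-1}\bigr)\cdot\psi(\widetilde{\overline{L}}),
\end{equation*}
so $\cL$ is single-valued if and only if $M_\sigma=\psi(\widetilde{M_\sigma})$ for $\sigma=0,1$. For $\sigma=0$ this holds because $\widetilde{M_0}=M_0$ and $\psi(\exp(2\pi\iunit e_0))=\exp(2\pi\iunit\,\psi(e_0))=\exp(2\pi\iunit e_0)$. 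For $\sigma=1$ one computes $\widetilde{M_1}=\tilde\Phi\,\exp(2\pi\iunit e_1)\,\tilde\Phi^{-1}=\exp\bigl(2\pi\iunit\,\tilde\Phi\,e_1\,\tilde\Phi^{-1}\bigr)$, whence $\psi(\widetilde{M_1})=\exp\bigl(2\pi\iunit\,\psi(\tilde\Phi\,e_1\,\tilde\Phi^{-1})\bigr)=\exp\bigl(2\pi\iunit\,\Phi^{-1}e_1\Phi\bigr)=M_1$, the middle equality being exactly the identity defining $\psi$ in part~(i).

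The point requiring the most care is the bookkeeping of sides in the monodromy computation: one must track through complex conjugation, word-reversal and the application of $\psi$ the fact that the monodromy of $L$ sits on the \emph{right} of $L$, so that after reversal the monodromy of $\psi(\widetilde{\overline{L}})$ sits on the \emph{left} — which is exactly what lets the two monodromy factors in $L(z)\,\psi(\widetilde{\overline{L(z)}})$ meet in the middle and cancel. The remaining ingredients are elementary: the reality of the MZVs $\zeta_w$, the invariance of $\exp(2\pi\iunit e_i)$ under word-reversal, and the fact that all series occurring have constant term $1$ so that every substitution converges.
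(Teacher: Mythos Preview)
The paper does not supply its own proof of this proposition: it is stated as a result of Brown with a citation to \cite{BrownSVMPL}, and the text moves directly on to the next remark. So there is no in-paper argument to compare against.

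That said, your proof is correct and is essentially the argument one finds in Brown's work. Part~(i) is handled cleanly: the observation that the defining condition on $\psi$ is equivalent to $\psi\circ\alpha=\beta$ for the two ``conjugation'' automorphisms $\alpha,\beta$ (built from $g=\tilde\Phi$ and $g=\Phi^{-1}$) gives existence and uniqueness in one stroke, and avoids the degree-by-degree recursion that is sometimes written out. Part~(ii) is the standard monodromy computation; you have correctly tracked that complex conjugation inverts $M_\sigma$ (using reality of the $\zeta_w$), that word-reversal turns the right monodromy factor into a left factor, and that the defining identity for $\psi$ is precisely what makes $\psi(\widetilde{M_1})=M_1$. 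The uniqueness of the real-analytic extension follows, as you note, from connectedness of $\mathbb{C}\smallsetminus\{0,1\}$.
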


\begin{rmk}\label{rmk:230803}
As in the holomorphic case considered in the previous section, one has the formal expansion $\cL (z)=\sum_{w\in\{e_0,e_1\}^\times}\cL _w(z)\,w$. The coefficients $\cL _w(z)$ define real-analytic (single-valued) functions on $\mathbb{C}\smallsetminus\{0,1\}$, known as \emph{single-valued multiple polylogarithms}, given by linear combinations of products $L_{u}\overline{L_v}$ of holomorphic and antiholomorphic multiple polylogarithms. We call \emph{sv-map} the map
\begin{equation}\label{eqdefsvmap}
    {\rm sv}:L_w\to \cL_w.
\end{equation}
The name sv-map stems from the fact that it maps the multivalued functions~$L_w$ to the single-valued functions~$\cL_w$, and one can prove that it is a morphism of rings from the ring $\mathcal H$ defined in \rmkref{remMPLS} to the ring $\mathcal H^{\rm sv}\coloneq\mathbb{C}[(\cL_w)_{w\in\{e_0,e_1\}^\times}]$ generated by the single-valued multiple polylogarithms.
\end{rmk}

The following result can be immediately deduced from Thm.~0.1 of~\cite{BrownSVMPL}.
\begin{proposition}[Brown,~\cite{BrownSVMPL}]\phantomsection\label{BrownsvHyperlog}
The series $\cL (z)$ is the unique $\mathbb{C}\llangle e_0,e_1\rrangle$-valued real-analytic solution on $\mathbb{C}\smallsetminus\{0,1\}$ to the universal KZ equation such that $\lim_{z\to 0}\exp(-e_0\log|z|^2)\cL (z)$ $=1$ as $z\to 0$ and whose coefficients $\cL _w(z)$ are linear combinations of products $L_{u}(z)\overline{L_v(z)}$ of holomorphic and antiholomorphic multiple polylogarithms. It satisfies an antiholomorphic version of the KZ equation,
\begin{equation*}\label{AntiHoloSV}
\frac{\partial}{\partial \zb}\cL (z)=\cL (z)\cdot\bigg(\frac{e_0}{\zb}+\frac{\psi(e_1)}{\zb-1}\bigg).
\end{equation*}
\end{proposition}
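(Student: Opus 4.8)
The plan is to deduce both claims from Brown's theorem (Thm.~0.1 of \cite{BrownSVMPL}), which characterizes $\cL(z)$ as the unique single-valued primitive of the KZ connection with prescribed unipotent behaviour at $0$ and with coefficients that are polynomials in holomorphic and antiholomorphic multiple polylogarithms. First I would recall that $\cL(z) = L(z)\,\psi(\widetilde{\overline{L(z)}})$ by construction, so that its coefficients $\cL_w(z)$ are manifestly finite $\mathbb{C}$-linear combinations of products $L_u(z)\overline{L_v(z)}$; the nontrivial content of Brown's construction is that the automorphism $\psi$ fixing $e_0$ and twisting $e_1$ as in part~(i) is exactly what is needed to kill the monodromy around both $0$ and $1$, turning this a priori multivalued expression into a single-valued real-analytic function on $\mathbb{C}\smallsetminus\{0,1\}$.

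For the holomorphic KZ equation, I would differentiate $\cL(z) = L(z)\,\psi(\widetilde{\overline{L(z)}})$ with respect to $z$: since $\psi(\widetilde{\overline{L(z)}})$ is antiholomorphic in $z$, the $\partial/\partial z$ derivative acts only on the first factor, giving $\frac{\partial}{\partial z}\cL(z) = \left(\frac{\partial}{\partial z}L(z)\right)\psi(\widetilde{\overline{L(z)}}) = \left(\frac{e_0}{z}+\frac{e_1}{z-1}\right)L(z)\,\psi(\widetilde{\overline{L(z)}}) = \left(\frac{e_0}{z}+\frac{e_1}{z-1}\right)\cL(z)$, using that $L$ solves the universal KZ equation. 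Hence $\cL$ is a $\mathbb{C}\llangle e_0,e_1\rrangle$-valued real-analytic solution of the universal KZ equation on $\mathbb{C}\smallsetminus\{0,1\}$. The regularized behaviour $\lim_{z\to 0}\exp(-e_0\log|z|^2)\cL(z)=1$ should follow from writing $\exp(-e_0\log|z|^2) = \exp(-e_0\log z)\exp(-e_0\log\bar z)$ (valid since $e_0$ commutes with itself), pulling $\exp(-e_0\log\bar z)$ through to act on $\psi(\widetilde{\overline{L(z)}})$, and using the normalizations $\lim_{z\to0}\exp(-e_0\log z)L(z)=1$ together with the fact that $\psi$ fixes $e_0$, so that the antiholomorphic factor also tends to $1$. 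Uniqueness among solutions with polynomial-in-$L_u\overline{L_v}$ coefficients is then precisely the statement of Brown's theorem, which I would simply invoke.

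For the antiholomorphic equation in part of Proposition~\ref{BrownsvHyperlog}, I would now differentiate $\cL(z)=L(z)\,\psi(\widetilde{\overline{L(z)}})$ with respect to $\zb$: this time only the second factor contributes. From $\frac{\partial}{\partial z}L(z)=\left(\frac{e_0}{z}+\frac{e_1}{z-1}\right)L(z)$ one gets $\frac{\partial}{\partial \zb}\overline{L(z)}=\left(\frac{e_0}{\zb}+\frac{e_1}{\zb-1}\right)\overline{L(z)}$ (the $e_i$ are the same real operators), and applying the tilde-reversal and then the algebra automorphism $\psi$ — which sends $e_0\mapsto e_0$ and $e_1\mapsto\psi(e_1)$, and which interacts with reversal in the controlled way built into part~(i) — yields $\frac{\partial}{\partial\zb}\psi(\widetilde{\overline{L(z)}}) = \psi(\widetilde{\overline{L(z)}})\cdot\left(\frac{e_0}{\zb}+\frac{\psi(e_1)}{\zb-1}\right)$, where the connection term has moved to the \emph{right} because reversal turns left-concatenation into right-concatenation. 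Multiplying by $L(z)$ on the left gives $\frac{\partial}{\partial\zb}\cL(z)=\cL(z)\cdot\left(\frac{e_0}{\zb}+\frac{\psi(e_1)}{\zb-1}\right)$, as claimed. The main obstacle is bookkeeping: carefully tracking how $\widetilde{(\cdot)}$ converts the left KZ action into a right action and checking that $\psi(e_1)$ — rather than $e_1$ — is the operator that survives, which is exactly where the defining relation for $\psi$ in part~(i) is used; everything else is routine manipulation of formal power series and invocation of \cite{BrownSVMPL}.
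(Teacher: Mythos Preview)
The paper does not give a proof of this proposition: it simply states that the result ``can be immediately deduced from Thm.~0.1 of~\cite{BrownSVMPL}''. Your elaboration is in the same spirit and essentially correct---the holomorphic KZ equation and the antiholomorphic one follow from differentiating the factorization $\cL(z)=L(z)\,\psi(\widetilde{\overline{L(z)}})$ exactly as you describe, and uniqueness is Brown's theorem.

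Two minor points are worth flagging. First, in the computation of the regularized limit at $0$, you cannot literally ``pull $\exp(-e_0\log\bar z)$ through'' $L(z)$, since $e_0$ does not commute with $L(z)$. The clean argument is to use the local form $L(z)=z^{e_0}H(z)$ with $H$ holomorphic near $0$ and $H(0)=1$; applying complex conjugation, then the reversal $\widetilde{(\cdot)}$, then $\psi$ (which fixes $e_0$) gives $\psi(\widetilde{\overline{L(z)}})=\psi(\widetilde{\overline{H(z)}})\,\bar z^{e_0}$, so that $|z|^{-2e_0}\cL(z)=\bar z^{-e_0}H(z)\psi(\widetilde{\overline{H(z)}})\bar z^{e_0}$, and this conjugation of a function tending to $1$ by $\bar z^{e_0}$ only introduces powers of $\log\bar z$ which are dominated by the vanishing $O(z)+O(\bar z)$ terms. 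Second, the derivation of the antiholomorphic equation uses only that $\psi$ is an algebra automorphism with $\psi(e_0)=e_0$; the defining relation for $\psi(e_1)$ in part~(i) is not needed there---its role is to guarantee single-valuedness (trivial monodromy at $1$), not to produce the differential equation.
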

Similarly to the previous section, one can consider the regularized values $\cL(0)$, given by $\lim_{z\to 0}\exp(-e_0\log|z|^2)\cL (z)$ ($=1$), and $\cL(1)$, given by $\lim_{z\to 1}\exp(-e_1\log|1-z|^2)\cL(z)$, which are finite by the theory of linear differential equations with regular singularities.

\begin{defn}\label{defn:Deligne}
The element $\Phi^{\rm sv}\coloneq\mathcal L(1)\,\mathcal L(0)^{-1}$ ($=\mathcal L(1)$) of $\mathbb{C}\llangle e_0,e_1\rrangle$ is called \emph{Deligne associator}. The coefficients $\zeta_{w}^{\rm sv}\in\mathbb{R}$ of its formal expansion $\Phi^{\rm sv}=\sum_{w\in\{e_0,e_1\}^\times} \zeta_w^{\rm sv}\,w$ define the \emph{single-valued MZVs} mentioned in \secref{sec:introduction}, and inspired by \eqn{eqidMZVs} one can also define 
\begin{equation*}
    \zeta^{\rm sv}(n_1,\ldots ,n_r)\coloneq\zeta^{\rm sv}_{e_0^{n_r-1}e_1\cdots e_0^{n_1-1}e_1}.
\end{equation*}
\end{defn}
\begin{rmk}
Single-valued MZVs are the special values of $\cL_w$ at $z\,{=}\,1$, and one can prove\footnote{The coefficients of the linear combinations of products $L_{u}(z)\overline{L_v(z)}$ which define the single-valued multiple polylogarithms $\mathcal L_w(z)$ can be shown to belong to the ring $\mathcal Z$ of MZVs \cite{Schnetz:2013hqa}. It follows that $\mathcal L_w(1)\in\mathcal Z$.} that they are contained in the ring $\mathcal Z:=\mathbb Q[(\zeta_w)_{w\in\{e_0,e_1\}^\times}]$ generated by MZVs. They can be formally obtained from MZVs by specializing the sv-map \eqref{eqdefsvmap} to the value $z\,{=}\,1$, namely
\begin{equation*}
    \zeta_w^{\rm sv}=\cL_w(1)=\mathrm{sv}(L_w(1))=\mathrm{sv}(\zeta_w).
\end{equation*}
At the level of numbers, it is only a conjecture that this map defines a ring morphism from $\mathcal Z$ to the ring $\mathcal Z^{\rm sv}:=\mathbb Q[(\zeta^{\rm sv}_w)_{w\in\{e_0,e_1\}^\times}]\subset \mathcal Z$ generated by single-valued MZVs. It is however known to be a morphism on the corresponding rings of motivic MZVs \cite{Brown:2013gia}.
\end{rmk}

\subsection{Single-valued real-analytic solutions of the KZ equation}

We are now ready to prove a single-valued analogue of \thmref{thmKZvectorspace}, which will be later exploited to prove the main result of this article. 
\begin{theorem}\label{thmKZvectorspaceSV}
Let $V$ be a complex vector space, let $e_0,\,e_1\in \End(V)$, and suppose that the generating series $\cL(z)=\sum_{w\in\{e_0,e_1\}^\times} \cL_w(z)\,w$ and $\Phi^{\rm sv}=\sum_{w\in\{e_0,e_1\}^\times} \zeta^{\rm sv}_w\,w$ also define elements of $\End(V)$. Then:
\begin{itemize}
	\item[(i)] If $\cF:\zC\smallsetminus\{0,1\}\to V$ is a real-analytic solution of the KZ equation \eqref{KZeq} then there exists a purely antiholomorphic function $C\in \overline{{\rm Hol}(\zC\smallsetminus\{0,1\},V)}$ such that $\cF(z)=\cL(z)\cdot C(z)$.
\item[(ii)] If $V\,{=}\,\Mat_{n\times 1}(A)$, with $A$ a complete graded $\mathbb C$-algebra, and $\cF\in \Mat_{n\times 1}(A\hat\otimes_{\zC} \cH\overline{\cH})$, where $\cH$ is the $\mathbb C$-algebra generated by the (holomorphic) multiple polylogarithms and $\overline{\cH}$ is the $\mathbb C$-algebra generated by their (antiholomorphic) complex conjugates\footnote{In other words, we require that the coefficient of each graded component of each entry of  $\cF(z)$ is a linear combination of products $L_u(z)\overline{L_v(z)}$ of holomorphic and antiholomorphic multiple polylogarithms.}, then $C(z){=}C{\in} V$ is constant. In this case, the regularized values $\cF(0){\coloneq}\lim_{z\to 0}\exp (-e_0\log|z|^2)\cF(z)$ and $\cF(1)\coloneq\lim_{z\to 1}\exp (-e_1\log|1-z|^2)\cF(z)$ are finite and satisfy
\begin{equation*}
\cF(1)\,=\,\Phi^{\rm sv}\cdot \cF(0)\,.
\end{equation*}
\end{itemize}
\end{theorem}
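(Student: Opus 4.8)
The plan is to mimic the holomorphic argument of \thmref{thmKZvectorspace}, replacing the holomorphic solution $L(z)$ by the single-valued solution $\cL(z)$ and carefully tracking the loss of holomorphicity. For part \textit{(i)}: starting from a real-analytic solution $\cF$ of the KZ equation \eqref{KZeq}, I would consider the function $\cL(z)^{-1}\cdot\cF(z)$, where the inverse is taken with respect to concatenation in $\mC\llangle e_0,e_1\rrangle$ (legitimate since $\cL(z)$ has constant term $1$). Differentiating with respect to $z$ and using that $\cL$ itself satisfies the universal KZ equation (\propref{BrownsvHyperlog}) together with \eqref{KZeq}, the same cancellation as in the holomorphic case shows $\tdel[z]\big(\cL(z)^{-1}\cdot\cF(z)\big)=0$. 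Hence $C(z)\coloneq\cL(z)^{-1}\cdot\cF(z)$ is annihilated by $\tdel[z]$, i.e.\ it is antiholomorphic; since $\cF$ and $\cL$ are both $V$-valued real-analytic on $\mC\smallsetminus\{0,1\}$ (and $\cL^{-1}$ inherits real-analyticity), we get $C\in\overline{{\rm Hol}(\mC\smallsetminus\{0,1\},V)}$ and $\cF(z)=\cL(z)\cdot C(z)$, proving \textit{(i)}.

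For part \textit{(ii)}, the new input is the rigidity that forces $C(z)$ to be \emph{constant}. The hypothesis $\cF\in\Mat_{n\times 1}(A\hat\otimes_{\mC}\cH\overline{\cH})$ means each graded coefficient of each entry of $\cF(z)$ is a finite $\mC$-linear combination of products $L_u(z)\overline{L_v(z)}$. Writing $C(z)=\cL(z)^{-1}\cdot\cF(z)$ and expanding $\cL(z)^{-1}$ — whose coefficients, by \rmkref{rmk:230803}, again lie in $\cH\overline{\cH}$ — we see that each coefficient of $C(z)$ lies in $\cH\overline{\cH}$. On the other hand, \textit{(i)} says each such coefficient is antiholomorphic on $\mC\smallsetminus\{0,1\}$, hence lies in $\overline{\cH}$. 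The key algebraic fact I would invoke is that $\cH\overline{\cH}\cap\overline{{\rm Hol}(\mC\smallsetminus\{0,1\})}=\overline{\mC}=\mC$: the only elements of the ring generated by products of holomorphic and antiholomorphic multiple polylogarithms that are purely antiholomorphic are constants. (This follows because $\cH$ is a free polynomial algebra on the $L_w$ over $\mC$ — the $L_w$ are algebraically independent over the holomorphic functions away from lower-weight relations — so a product $\sum_i L_{u_i}\overline{L_{v_i}}$ with nonconstant holomorphic part cannot be holomorphic in $\zb$ alone; one extracts the holomorphic dependence by, say, applying $\tdel[z]$ repeatedly, or by a degree/monodromy argument around $z=0,1$.) Therefore $C(z)=C\in V$ is constant.

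Once $C$ is constant, the rest is the verbatim analogue of \thmref{thmKZvectorspace}\textit{(ii)}: the regularized limits $\cF(0)=\lim_{z\to0}\exp(-e_0\log|z|^2)\cF(z)$ and $\cF(1)=\lim_{z\to1}\exp(-e_1\log|1-z|^2)\cF(z)$ exist and are finite because the corresponding regularized limits exist for $\cL(z)$ (stated before \defref{defn:Deligne}) and $C$ is a constant vector; and from $\cF(z)=\cL(z)\cdot C$ one gets $\cF(1)=\cL(1)\cdot C=\Phi^{\rm sv}\cL(0)\cdot C=\Phi^{\rm sv}\cdot\cF(0)$, using $\cL(0)=\id$.

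\textbf{Main obstacle.} The routine parts are the differentiation cancellation in \textit{(i)} and the limit manipulation at the end. The genuinely delicate point is the rigidity step in \textit{(ii)}: establishing that a $\cH\overline{\cH}$-valued function which is antiholomorphic must be constant, i.e.\ controlling $\cH\overline{\cH}\cap\overline{{\rm Hol}}$. This requires either invoking algebraic independence properties of multiple polylogarithms (and their behaviour under $\tdel[z]$) or a monodromy/unipotency argument; one must also be a little careful that "coefficient of a graded component" is a finite sum so that these pointwise algebraic arguments apply term by term in the grading. Handling the completed tensor product $A\hat\otimes_{\mC}\cH\overline{\cH}$ and the grading bookkeeping cleanly is where I expect most of the write-up effort to go.
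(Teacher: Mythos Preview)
Your proposal is correct and follows essentially the same route as the paper's proof: part \textit{(i)} via the cancellation in $\tdel[z]\big(\cL(z)^{-1}\cF(z)\big)$, part \textit{(ii)} via the rigidity fact that single-valued antiholomorphic elements of $\cH\overline{\cH}$ are constant, followed by the same regularized-limit manipulation as in \thmref{thmKZvectorspace}. The paper dispatches your ``main obstacle'' not by a direct algebraic-independence or monodromy argument but simply by citing \cite[Prop.~2.1]{BrownSVMPL}, so you need not reprove it.
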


\begin{proof}
	\textit{(i)} By the same argument of the proof of \thmref{thmKZvectorspace}, one finds $\frac{\partial}{\partial z}(\cL (z)^{-1}\cdot \mathcal{F}(z))=0$, which implies that there exists an antiholomorphic function $C:\mathbb{C}\smallsetminus \{0,1\}\to V$ such that $\mathcal{F}(z)=\cL (z)\cdot C(z)$, thus proving \textit{(i)}.

	\textit{(ii)} Because $A$ is a complete graded $\mathbb C$-algebra, we can view $C$ as an element of $\Mat_{n\times 1}(A\hat\otimes_{\zC} \overline{{\rm Hol}(\zC\smallsetminus\{0,1\})})$. By \rmkref{rmk:230803}, we know that $\cL \in \zC\llangle e_0,e_1\rrangle\hat\otimes_\zC \cH\overline{\cH}$; this implies that $C\in \Mat_{n\times 1}(A\hat\otimes_{\zC} \overline{{\rm Hol}(\zC\smallsetminus\{0,1\})}\cap \cH\overline{\cH})=V$, where the last equality follows from the fact that the only single-valued antiholomorphic elements of $\cH\overline{\cH}$ are the constants (this follows from \cite[Prop.~2.1]{BrownSVMPL}). We can then apply the same argument of the proof of \thmref{thmKZvectorspace} and deduce that $\cF(0)$ and $\cF(1)$ exist, are finite, and satisfy $\cF(1)=\Phi^{\rm sv}\cdot \cF(0)$, using also that, by \propref{BrownsvHyperlog}, $\cL(0)={\rm id}$.
\end{proof}

\begin{rmk}
A straightforward adaptation of the argument of \cite{Terasoma} implies that, if we are in the setting of \rmkref{rmk:Terasoma}, the entries of the matrix~$\Phi^{\rm sv}$ representing the Deligne associator belong to $\hat\oplus_d \mathcal Z^{\rm sv}_d\otimes_{\mathbb Q} R_d$, where $\mathcal Z^{\rm sv}_d$ denotes the $\mathbb Q$-vector space spanned by weight-$d$ single-valued MZVs.
\end{rmk}

%%%%%%%%%%%%%%%%%%%%%%%%%%%%%%%%%%%%%%%%%%%%%%%%%%%%%%%%%%%%%%%%%%%%%%%%%%%%%%
%%%%%%%%%%%%%%%%%%%%%%%%%%%%%%%%%%%%%%%%%%%%%%%%%%%%%%%%%%%%%%%%%%%%%%%%%%%%%%
%%%%%%%%%%%%%%%%%%%%%%%%%%%%%%%%%%%%%%%%%%%%%%%%%%%%%%%%%%%%%%%%%%%%%%%%%%%%%%
%%%%%%%%%%%%%%%%%%%%%%%%%%%%%%%%%%%%%%%%%%%%%%%%%%%%%%%%%%%%%%%%%%%%%%%%%%%%%%

\section{Genus-zero string scattering amplitudes}\label{sec:stringamps}
In this section we will review the structure of tree-level open- and closed-string scattering amplitudes, for which we will formulate the recursion relations in \secref{sec:openrec} and \secref{sec:closedrec}. Furthermore, we will review the result of ref.~\cite{Brown:2019wna}, which relates open- and closed-string amplitudes at genus zero via the sv-map.

Scattering amplitudes in string theories are calculated as worldsheet correlators in a conformal field theory. Hereby each of the~$N$ external string states of momentum $k_i,\, i\,{\in}\,\lbrace 1,...,N\rbrace$, is represented by a vertex operator $V(x_i)$ inserted at a position~$x_i$ on the corresponding surface and those vertex operators are connected by the string Green's functions within the conformal field theory framework. The kinematical data for a scattering amplitude consists of the momenta~$k_i$ of the external string states. In this work we are considering massless string states ($k_i^2=0$) exclusively, which leads to the definition of Mandelstam variables as 
\begin{equation*}
   s_{ij}=s_{ji}=\ap(k_i+k_j)^2=2 \ap k_i{\cdot} k_j,
\end{equation*}
with the inverse string tension $\ap$.
Given that the square refers to the usual Minkowskian scalar product, Mandelstam variables are invariant under Lorentz transformations of the system of momentum vectors $k_i$. Using the Lorentz invariance, one considers the scattering amplitude in a frame where all external momenta add up to zero, $k_1+\cdots +k_N=0$, which translates into the momentum conservation equation for Mandelstam variables
\begin{equation}\label{eqn:MandelstamMomentumConservation}
	\sum_{i=1}^N s_{ij}=0\,,\quad\text{for } j\in\lbrace 1,\ldots,N\rbrace\,.
\end{equation}
Strings are available in open (stretching between two endpoints) and closed (a tiny loop) versions. Correspondingly, open-string vertex operators are inserted at the boundaries of a worldsheet, while closed-string vertex operators are inserted in the bulk of the worldsheet. In this article, we will exclusively consider the leading order contribution to string scattering amplitudes, which corresponds to the simplest topologies: a disc for the open string and a sphere for the closed string.  A string scattering amplitude is obtained by integrating over all possible configurations of positions $x_i$ for the vertex operators $V(x_i)$ on the worldsheet. 

Different open-string vertex operators correspond to different string states. All dependence on the type of external state can be captured by a scattering amplitude in quantum field theory $A_\mathrm{QFT}$, whereas the string-theory character of the scattering amplitude is expressed in terms of the string correction $\mathrm{SC}(\ap)$, so that the full amplitude reads
\begin{equation*}
	A_{\mathrm{QFT}}\times\mathrm{SC}(\ap)\,,
\end{equation*}
where both objects also depend on the momenta $k_i$.
The string correction is a function of the Mandelstam variables exclusively and is itself frequently referred to as \textit{string scattering amplitude} or more frequently as \textit{string integral}.

\subsection*{Open-string scattering amplitudes}
Vertex operators for open string theories are inserted in the worldsheet boundary in a certain succession, referred to as color-ordering. Using conformal symmetry on the worldsheet, the boundary can be mapped to $\zR\cup\infty$ and the integral over the configuration space becomes an iterated integral on the real line. One also needs to fix three insertions\footnote{Albeit conformal symmetry having been used already to move all insertion points to the real line, the integral would be divergent because of overcounting the configuration space. This can be remedied by fixing three of the positions $x_i$, which in the physics literature is often seen as ``dividing by the volume of $\grp{SL}(2,\zR)$''.}, usually at $0,1,\infty$, and eventually genus-zero open-string integrals take the form
\begin{equation}
	\label{eqn:SL2fixed}
	I_N^\sigma=Z_\mathrm{id}(\sigma)=\int_{0\leq x_1\leq\cdots\leq x_n\leq1} 
	\frac{\prod_{1\leq i,j\leq n}(x_j\,{-}\,x_i)^{s_{ij}}}{\prod_{j=0}^{n}\left(x_{\sigma(j+1)}\,{-}\,x_{\sigma(j)}\right)} d x_1 \cdots d x_n\,,
\end{equation}
where $n=N{-}3$, $x_0\,{=}\,0$, $x_{n+1}\,{=}\,1$ and $\sigma$ is a permutation of $\lbrace 0,\ldots, n+1 \rbrace$.
The numerator term in the integrand, containing the dependence on the Mandelstam variables, is usually referred to as the \textit{Koba--Nielsen factor} \cite{Koba:1969rw,Koba:1969kh}, while the denominator is called \emph{Parke--Taylor factor} \cite{Parke:1986gb}. The above formula is a so-called \emph{$Z$-integral} \cite{Mafra:2016mcc}, calculated using the pure-spinor formalism in \cite{Mafra:2011nv,Mafra:2011nw}. The general formula for $Z$-integrals $Z_\rho(\sigma)$ depends on two permutations $\rho$ and $\sigma$, where the permutation $\rho$ acts on the set of indices of the integration variables (in \eqn{eqn:SL2fixed} we chose the case where the permutation $\rho$ is simply the identity). The above expression $I_N^\sigma$ is also the type of open-string integral considered in \rcite{Brown:2019wna}.

The point fixed to $\infty$ does not explicitly show up in \eqn{eqn:SL2fixed}: it vanishes in the Koba--Nielsen factor by means of momentum conservation (see~\eqn{eqn:MandelstamMomentumConservation}). Accordingly, Mandelstam variables incorporating the momentum of the particle associated with the vertex operator inserted at~$\infty$ do not make an appearance either, which renders the remaining Mandelstam variables free parameters, only constrained by ensuring convergence of the integral in \eqn{eqn:SL2fixed} from where an analytic continuation to other regions can be performed.

In order to express full color-ordered amplitudes, certain combinations of the integrals $I_N^\sigma$ are used, which define so-called \emph{$F$-integrals} as \cite{Mafra:2011nv,Mafra:2011nw,Broedel:2013aza}
\begin{equation}\label{eqn:F}
    F_N^\sigma=(-1)^{n}\int_{0\leq x_1\leq \cdots\leq x_n\leq1}\prod_{0\leq i<j\leq n+1}(x_j\,{-}\,x_i)^{s_{ij}}\prod_{k=1}^{n}\sum_{j=1}^{k-1}\frac{s_{\sigma(j)\sigma(k)}}{x_{\sigma(j)}\,{-}\,x_{\sigma(k)}}dx_k,
\end{equation}
with the permutation $\sigma\,{\in}\,\mathfrak{S}_{n}$ acting on the indices $1,\ldots,n$ of the positions $x_i$ and Mandelstam variables $s_{ij}$. These integrals are holomorphic in the Mandelstam variables at the origins $s_{ij}$, with the coefficients of their Taylor expansion given by MZVs \cite{Broedel:2013aza}.
Using these integrals, the full $N$-point color-ordered disk amplitude is given by
\begin{equation*}
    A_N(1,\ldots,N)=\sum_{\sigma\in\mathfrak{S}_{n}}A_\mathrm{YM}\big(\sigma(1),\ldots,\sigma(n),N{-}2,N{-}1,N\big)\,F_N^\sigma,
\end{equation*}
with $A_\mathrm{YM}$ being the corresponding tree-level amplitudes in Yang--Mills theory. The functions $F_N^\sigma$ are linearly independent and form an $n!=(N{-}3)$-dimensional basis of the function space spanned by the $(N{-}1)!$ $Z$-integrals of \eqn{eqn:SL2fixed} \cite{Bern:2008qj,Broedel:2013tta}.

For the purpose of the open-string recursion from \cite{Broedel:2013aza}, which we will review in \secref{sec:openrec}, we will later introduce a deformation of the integrals $F_N^\sigma$ which includes the insertion of an auxiliary parameter $x$, and call this function $\hF_N^\sigma(x)$.

\subsection*{Closed-string scattering amplitudes} 
The calculation of closed-string scattering amplitudes is similar to that of open-string scattering amplitudes, the main difference being that vertex operators $V(z_i)$ for closed-string states are now inserted at complex positions $z_i$ on the Riemann sphere $\PC$, which lacks the ordering prescription present in the open-string case. Upon again addressing the $\mathrm{SL}(2,\zC)$ symmetry by fixing three insertions at $0,1,\infty$, the point at infinity again decouples and one ends up with the closed-string analogue of \eqn{eqn:SL2fixed}:
\begin{equation}\label{eqn:closedamp}
    \mathscr{I}_N^\sigma=J_\mathrm{id}(\sigma)=\pi^{-n} \int_{(\PC)^n} \frac{\prod_{0 \leq i<j \leq n+1}\left|z_j-z_i\right|^{2 s_{i j}}}{\prod_{i=0}^n\left(z_{i+1}\,{-}\,z_i\right) \prod_{i=0}^n\left(\bar{z}_{\sigma(i+1)}\,{-}\,\bar{z}_{\sigma(i)}\right)} d^2 z_1 \cdots d^2 z_n\,,
\end{equation}
where $d^2z=d\Re(z)\wedge d\Im(z)$. Closed-string integrals of this kind are usually referred to as \emph{$J$-integrals} (often before fixing the three points) \cite{Stieberger:2014hba,Mizera:2017cqs}. $J$-integrals $J_\mathrm{\rho}(\sigma)$ in general depend on two permutations, one acting on the indices of the holomorphic part of the integrand, the other acting on the antiholomorphic part ($\sigma$ in the above formula). In \eqn{eqn:closedamp}, we again chose one of these permutations to be the identity to match with the conventions in \cite{Brown:2019wna}.

As in the open-string case, it will be useful to consider certain combinations of the integrals~$\mathscr{I}_N^\sigma$, for which we analogously define
\begin{equation}\label{eqn:cF}
    \cF_N^\sigma=\frac{(-1)^{n}}{\pi^{n}}\int_{(\PC)^n}\frac{1}{\prod_{i=0}^n(\zb_{i+1}\,{-}\,\zb_i)}\prod_{0\leq i<j\leq n+1}|z_j\,{-}\,z_i|^{2s_{ij}}\prod_{k=1}^{n}\sum_{j=1}^{k-1}\frac{s_{\sigma(j)\sigma(k)}}{z_{\sigma(j)}\,{-}\,z_{\sigma(k)}}d^2z_k,
\end{equation}
which is a function of the kinematics of the scattering process through the Mandelstam variables~$s_{ij}$. Just like $F_N^\sigma$, the complex integral $\cF_N^\sigma$ is holomorphic at the origins $s_{ij}=0$, but the coefficients of its Taylor expansion are svMZVs \cite{Schlotterer:2012ny,Stieberger:2013wea,Stieberger:2014hba,Fan:2017uqy,Schlotterer:2018zce,Brown:2019wna,VanZerb}.
In order to formulate the recursion for closed-string amplitudes in \secref{sec:closedrec}, we will there define a deformation of these integrals $\cF_N^\sigma$, parametrized by an auxiliary point $z$ and denote the deformed function as $\hcF_N^\sigma(z)$.

\subsection*{Open vs closed genus-zero amplitudes}
The sv-map on MZVs mentioned in the introduction can be extended to more general \emph{period integrals}\footnote{A period integral on an algebraic variety~$X$ is the pairing, prescribed by Grothendieck's algebraic de Rham theorem, of an algebraic de Rham cohomology class~$\eta$ with a singular homology class~$\gamma$ of the manifold $X(\mathbb C)$ given by considering the complex points of~$X$. If~$X$ is affine then one can think of $\eta$ as a global algebraically defined differential form.} $\int_\gamma\eta$ on certain kinds of algebraic varieties~$X$ in which there is a natural way to map a cycle $\gamma$ in the (relative) singular homology of $X(\mathbb C)$ to an algebraic de Rham cohomology class $\nu_\gamma$ \cite{Brown:2018omk}. It was also proven in \rcite{Brown:2018omk} that, under certain technical assumptions on the convergence of the integrals involved, the image of $\int_\gamma\eta$ under the sv-map can be computed as $\int_{X(\mathbb C)}\nu_\gamma\wedge \overline{\eta}$. 

An extension of this machinery to the case of ``twisted periods'', namely period integrals $\int_\gamma u\,\eta$, the ``twist''~$u$ being a multivalued function as in \rmkref{rmk:twistedderham} below, was applied in \rcite{Brown:2019wna} to prove that the closed-string $J$-integrals~$\mathscr{I}_N^\sigma$ from \eqref{eqn:closedamp} can be obtained from the open string $Z$-integrals~$I_N^{\sigma}$ from \eqref{eqn:SL2fixed} via the sv-map as follows. If we denote
\begin{equation}\label{eqn:PT}
  u\coloneq \prod_{1\leq i<j\leq n}(x_j\,{-}\,x_i)^{s_{ij}},\quad \eta_{\sigma}\coloneq\frac{dx_1\cdots dx_n}{\prod_{j=0}^n(x_{\sigma(j+1)}\,{-}\,x_{\sigma(j)})},
\end{equation}
and we denote by $\gamma$ the cycle\footnote{Notice that this cycle, which is viewed here as a half-dimensional cycle in the moduli space $\mathfrak{M}_{0,N}(\mathbb C)$ of complex algebraic curves, can be also be seen as a top-dimensional connected component of the moduli space $\mathfrak{M}_{0,N}(\mathbb R)$ of real algebraic curves.} $0\leq x_1\leq\cdots\leq x_n\leq1$ of the manifold $(\PC)^n$, then the open-string integral \eqref{eqn:SL2fixed} can be written as a twisted period as
\begin{equation*}
    I_N^{\sigma}=\int_{\gamma}u\,\eta_\sigma.
\end{equation*}
It was proven in \rcite{Brown:2019wna} that the cycle $\gamma$ is naturally mapped to the differential form 
\begin{equation}\label{defnugam}
    \nu_{\gamma}\coloneq \frac{(-1)^n}{(2\pi i)^n}\frac{dx_1\cdots dx_n}{\prod_{i=0}^n(x_{i+1}\,{-}\,x_i)},
\end{equation}
which in turn implies that
\begin{equation*}
    \mathrm{sv}(I_N^{\sigma})=\int_{(\PC)^n}|u|^2\,\nu_{\gamma}\,\eta_\sigma=\mathscr{I}_N^\sigma,
\end{equation*}
namely that the closed string integrals $\mathscr{I}_N^\sigma$ are the image of the open string integrals $I_N^{\sigma}$ under the (twisted) sv-map. As a corollary, it was possible to re-derive and assign a deep geometrical meaning to the so-called KLT relations \cite{Kawai:1985xq} between open and closed genus-zero string integrals.

Furthermore, it was also shown in \rcite{Brown:2019wna} that, upon Laurent expanding both the open string $Z$-integrals~$I_N^{\sigma}$ and the closed-string $J$-integrals~$\mathscr{I}_N^\sigma$, whose coefficients are given by (classical) period integrals over $\gamma$ and $(\PC)^n$, respectively, one can apply the same prescription \eqref{defnugam} to find that the image of $I_N^{\sigma}$ under the sv-map is $\mathscr{I}_N^\sigma$ also when the sv-map in the classical period setting is applied coefficient-wise to the Laurent expansion of $I_N^{\sigma}$, thus settling Stieberger's conjecture \cite{Stieberger:2013wea}.

%%%%%%%%%%%%%%%%%%%%%%%%%%%%%%%%%%%%%%%%%%%%%%%%%%%%%%%%%%%%%%%%%%%%%%%%%%%%%%
%%%%%%%%%%%%%%%%%%%%%%%%%%%%%%%%%%%%%%%%%%%%%%%%%%%%%%%%%%%%%%%%%%%%%%%%%%%%%%
%%%%%%%%%%%%%%%%%%%%%%%%%%%%%%%%%%%%%%%%%%%%%%%%%%%%%%%%%%%%%%%%%%%%%%%%%%%%%%
%%%%%%%%%%%%%%%%%%%%%%%%%%%%%%%%%%%%%%%%%%%%%%%%%%%%%%%%%%%%%%%%%%%%%%%%%%%%%%

\section{Selberg integrals}\label{sec:Selberg}

Classical Selberg integrals are functions of three complex variables, defined as multiple integrals, which generalize the Euler beta function. They are named after Atle Selberg, who found an explicit formula for them in terms of Gamma functions~\cite{Selberg:411367}:
\begin{align*}
    S_L(\alpha,\beta,\gamma)&\coloneq\int_{[0,1]^L}\prod_{1\leq i<j\leq L}|t_i\,{-}\,t_j|^{2\gamma}\prod_{i=1}^Lt_i^{\alpha-1}(1\,{-}\,t_i)^{\beta-1}dt_i\notag\\
    &=\prod_{j=0}^{L-1}\frac{\Gamma(\alpha+j\beta)\Gamma(\beta+j\gamma)\Gamma(1+(j+1)\gamma)}{\Gamma(\alpha+\beta+(n+j-1)\gamma)\Gamma(1+\gamma)}.
\end{align*}
Generalizations thereof are related to the theory of generalized hypergeometric functions, where the three variables $\alpha,\beta,\gamma$ are replaced by three sets of variables $\alpha_i,\beta_i,\gamma_{ij}$ and the integration domain is replaced by a simplex $\Delta$ (possibly of dimension lower than $L$). These functions were subsequently introduced and studied independently by Gelfand \cite{Gelfand} and Aomoto \cite{Aomoto87}, using tools from twisted homology theory. The integrals investigated by Aomoto are of the form \cite{Aomoto87}
\begin{equation}\label{eqn:Aomoto}
    A_{p,L}(x_1,\ldots,x_p)=\int_\Delta \prod_{\substack{i,j=1\\i<j}}^L(x_i\,{-}\,x_j)^{\lambda_{ij}}\prod_{k=p+1}^Ldx_k,
\end{equation}
depending on $1<p\leq L$ points $x_1\ldots,x_p$, with suitable coefficients $\lambda_{ij}$.
We refer to \rcite{forrester2007importance} for history and application of integrals of the Selberg type. 

In this section we will review the notions about Selberg integrals necessary to deduce recursions on string amplitudes in \secref{sec:openrec} and \secref{sec:closedrec} below. In \secref{sec:openSelberg} we will introduce a class of generalized Selberg integrals, which we refer to as open Selberg integrals and which will be related to open-string amplitudes. Similarly, in \secref{sec:closedSelberg} we will consider real-analytic single-valued analogues of the open Selberg integrals, which we will refer to as closed Selberg integrals, and which will be related to closed string amplitudes later on. 
Both open and closed Selberg integrals are functions of one variable: in \secref{sec:LimitopenSelberg} and \secref{sec:LimitclosedSelberg} we consider the limits where this variable approaches the two special points zero and one. In \secref{sec:StringSelberg} we will discuss more closely the momentum configuration relevant for the physical interpretation of the Selberg integrals in the limits.

\subsection{The class of open Selberg integrals}\label{sec:openSelberg}

We consider now a special subclass of generalized Selberg integrals, which was also studied in \rcite{Terasoma} and later in \cite{BK,Kaderli}.
\begin{defn}\label{defn:classicselb}
    Let $L\,{\geq}\, 4$, $0\,{=}\,x_1\,{<}\,x_3\,{<}\,x_2\,{=}\,1$ and for every $k\,{\in}\,\{4,\ldots,L\}$ let $i_k\,{\in}\,\{1,\ldots,L\}\smallsetminus\{k\}$. Then we define
    \begin{align}
    	\label{eqn:Selbergzero}
    	S[i_4,\dots,i_L](x_3)\coloneq\int_{0\leq x_L\leq\ldots\leq x_4\leq  x_3}\prod_{\substack{i,j=1\\i<j}}^L|x_{ij}|^{s_{ij}}\prod_{k=4}^{L}\frac{dx_k}{x_{ki_k}}\,,
    \end{align}
    where $x_{ij}\coloneq x_i-x_j$ and provided that the exponents $s_{ij}\in\mathbb C$ are such that the integral on the right-hand side is absolutely convergent\footnote{The region of convergence can be explicitly deduced using for instance the result in the Appendix of ref.~\cite{VanZerb}, but here we content ourselves to mention that $\mathrm{Re}(s_{ij})>0$ is a sufficient condition for the absolute convergence.}. We also denote by $S[i_4,\dots,i_L](x_3)$ the holomorphic multivalued function of $x_3\in\mathbb C\smallsetminus \{0,1\}$ obtained by analytically continuing the iterated integral~\eqref{eqn:Selbergzero}, and we call it an \emph{open Selberg integral}\footnote{As mentioned earlier, many kinds of integrals have been called Selberg integrals in the literature. Since those in~\eqref{eqn:Selbergzero} are the only kind considered in this article, no confusion will arise by assigning them this name. We remark that they can also be called Aomoto--Gelfand hypergeometric function, following the terminology of \cite{VanZerb}.}.
\end{defn}
\noindent Note that since $|x_{12}|=1$, open Selberg integrals do not depend on $s_{12}$.

For the following, it will be useful to rewrite open Selberg integrals using the substitution $w_k=x_k/x_3$ for $k=4,\ldots,L$
\begin{align}
    \label{eqn:opensub}
    S&[i_4,\dots,i_L](x_3)\notag\\
    &=\int_{0\leq w_L\leq\ldots\leq w_4\leq1}x_3^{s_{13}}(1\,{-}\,x_3)^{s_{23}}\prod_{\substack{i,j=4\\i<j}}^Lx_3^{s_{ij}}w_{ij}^{s_{ij}}\prod_{k=4}^Lx_3^{s_{1k}}w_k^{s_{1k}}(1\,{-}\,x_3w_k)^{s_{2k}}x_3^{s_{3k}}(1\,{-}\,w_k)^{s_{3k}}\frac{dw_k}{w_{ki_k}}\notag\\
    &=x_3^{\smax}(1\,{-}\,x_3)^{s_{23}}\int_{0\leq w_L\leq\ldots\leq w_4\leq1}\prod_{\substack{i,j=4\\i<j}}^Lw_{ij}^{s_{ij}}\prod_{k=4}^Lw_k^{s_{1k}}(1\,{-}\,x_3w_k)^{s_{2k}}(1\,{-}\,w_k)^{s_{3k}}\frac{dw_k}{w_{ki_k}},
\end{align}
where we set (and which will be used throughout the rest of the article)
\begin{equation}
\label{eqn:subw}
	w_1\coloneq 0,\quad w_2\coloneq 1/x_3,\quad w_3\coloneq 1
\end{equation}
and defined\footnote{This notation is motivated by the fact that $\smax$ will be the maximal eigenvalue of the matrix~$e_0$ for the KZ equation satisfied by an appropriate vector of open Selberg integrals (see \secref{sec:openrec}, \secref{sec:closedrec} and \cite[Lemma 5.5]{Terasoma}).}
\begin{equation}\label{eqn:smax}
    \smax\coloneq\sum_{\substack{1\leq i<j\leq L\\i,j\neq2}}s_{ij}\,.
\end{equation}

\begin{rmk}
    The open Selberg integrals defined in \eqref{eqn:Selbergzero} are related to open-string tree-level amplitudes as in \eqn{eqn:SL2fixed} as follows: once we keep one of the points in \eqn{eqn:SL2fixed} unintegrated, that is, we set $n=L{-}3$, the integral in \eqn{eqn:SL2fixed} is precisely of the type considered here: it is an open Selberg integral.
\end{rmk}

\begin{rmk}
    The above definition of open Selberg integrals can be generalized for multiple unintegrated points as in \eqn{eqn:Aomoto} or as defined in \cite{BK}, but for our purpose one unintegrated point $x_3$ will be sufficient. In Aomoto's language of \eqn{eqn:Aomoto}, we are dealing here with three unintegrated points $x_1,\,x_2,\,x_3$, two of which are fixed, namely $x_1=0,\, x_2=1$, so that $x_3$ is the free parameter of the function $S[i_4,\ldots,i_L](x_3)$.
\end{rmk} 

\begin{rmk}\label{rmk:twistedderham}
Aomoto's original study \cite{Aomoto87} of Selberg integrals uses tools from ``twisted de Rham cohomology'', i.e.~cohomology with coefficients in a vector bundle equipped with a flat connection~$\nabla$. In the case of open Selberg integrals, the ``twist'' is given (for fixed $x_3\in\mathbb C \smallsetminus\{0,1\}$ and fixed values of the parameters $s_{ij}\in\mathbb C\smallsetminus\mathbb Z$) by the multivalued function
\begin{equation}\label{eq:twistKNfactor}
    u\defeq \prod_{k=4}^L x_k^{s_{1k}}(1\,{-}\,x_k)^{s_{2k}}\,\prod_{3\leq i<j\leq L}^L x_{ij}^{s_{ij}}
\end{equation}
on $\mathrm{Conf}_{L-3}(\mathbb P^1_{\mathbb C}\smallsetminus \{0,1,x_3,\infty\})$, the configuration space of $L{-}3$ distinct points on $\mathbb P^1_{\mathbb C}\smallsetminus \{0,1,x_3,\infty\}$. The function~$u$ can be seen as the Koba--Nielsen factor of an open string integral (see eq. \eqref{eqn:PT}), and its logarithmic derivative $\omega=d\log (u)$ can be used to define the flat connection $\nabla_\omega\defeq d+\omega$. Similarly to the open string integrals from the previous section, open Selberg integrals can be seen as twisted periods arising from the pairing of elements of the (algebraic) twisted de Rham cohomology group $H^{L-3}_{\mathrm dR}(\mathrm{Conf}_{L-3}(\mathbb P^1_{\mathbb C}\smallsetminus \{0,1,x_3,\infty\}),\nabla_{\omega})$, which can be represented by global holomorphic $(L{-}3,0)$ forms, with elements of the corresponding twisted Betti homology group $H_{L-3}^{\mathrm B}(\mathrm{Conf}_{L-3}(\mathbb P^1_{\mathbb C}\smallsetminus \{0,1,x_3,\infty\}),\mathcal L_{\omega})$, i.e.~the homology group with coefficients in the local system $\mathcal L_{\omega}$ induced by $\nabla_{-\omega}$. Notice that a differential form~$f$ is exact in the twisted complex if $f=\nabla_\omega(g)$, and this is equivalent to $uf=d(ug)$.

It follows from \cite{Aomoto87} that $H^{L-3}_{\mathrm dR}(\mathrm{Conf}_{L-3}(\mathbb P^1_{\mathbb C}\smallsetminus \{0,1,x_3,\infty\}),\nabla_{\omega})$ has dimension $(L{-}2)!$, and various bases of this space can be found in the literature \cite{Aomoto87, Terasoma, Brown:2019wna, Mizera:2019gea, Kaderli}. In particular, there are several linear relations among the $(L{-}1)^{L-3}$ open Selberg integrals \eqref{defn:classicselb}.
\end{rmk}

\subsection{Limit behavior of open Selberg integrals.}
\label{sec:LimitopenSelberg}
In the following, we are going to investigate the regularized limit behavior of the open Selberg integrals when the auxiliary point $x_3$ approaches $x_1=0$ or $x_2=1$. The limit behavior will be exploited in the calculation of the boundary values of the KZ equation involved in the recursion for open-string integrals in \secref{sec:openrec}.

\begin{proposition}[Terasoma, \cite{Terasoma}]\label{prop:openlimits}
Let $L\geq4$, then we have the following limits:
    \begin{enumerate}[label=(\roman*)]
        \item  For $x_3\to0$ the regularized limit is only non-vanishing if $i_k\,{\neq}\,2$ for all $k\in\{4,\ldots,L\}$, and then we find
                \begin{align*}
                    \qquad\lim_{x_3\to0}x_3^{-\smax}S[i_4,\ldots,i_L](x_3)=\int_{0\leq w_L\leq\ldots\leq w_4\leq 1}\prod_{k=4}^{L}\frac{dw_k}{w_{ki_k}}\,w_k^{s_{1k}}(1\,{-}\,w_k)^{s_{3k}}\prod_{\substack{i,j=4\\i<j}}^Lw_{ij}^{s_{ij}}.
                \end{align*}
        \item  For the limit $x_3\to1$ the regularized behavior of the open Selberg integrals is
        		\begin{align*}
        			\qquad\lim_{x_3\to1}(1\,{-}\,x_3)^{-s_{23}}S[i_4,\ldots,i_L](x_3)=\int_{0\leq x_L\leq\ldots\leq x_4\leq 1}\prod_{\substack{i,j=1\\i<j;\,i,j\neq3}}^{L}x_{ij}^{s'_{ij}}\prod_{k=4}^{L}\left.\frac{dx_{k}}{x_{ki_k}}\right|_{x_3=1},
        		\end{align*}
                where $s'_{ij}\coloneq s_{ij}$ if $i\neq2$ and $s'_{2j}\coloneq s_{2j}+s_{3j}$.
    \end{enumerate}
\end{proposition}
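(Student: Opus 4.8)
The idea is to work from the rewriting \eqref{eqn:opensub} of the open Selberg integral: after the substitution $w_k=x_k/x_3$ the integration domain $\{0\leq w_L\leq\cdots\leq w_4\leq1\}$ no longer depends on $x_3$, so the two regularized limits can be computed by passing to the limit \emph{inside} the integral via dominated convergence. It suffices to treat the case $\mathrm{Re}(s_{ij})>0$ for all $i<j$, where every integral in sight (including those on the right-hand sides) converges absolutely; the general case, within the domain of absolute convergence of $S[i_4,\dots,i_L](x_3)$, then follows by analytic continuation in the $s_{ij}$.

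\emph{Part (i).} Multiplying \eqref{eqn:opensub} by $x_3^{-\smax}$ yields
\[
x_3^{-\smax}S[i_4,\dots,i_L](x_3)=(1-x_3)^{s_{23}}\int_{0\leq w_L\leq\cdots\leq w_4\leq1}\prod_{4\leq i<j\leq L}w_{ij}^{s_{ij}}\prod_{k=4}^{L}w_k^{s_{1k}}(1-x_3w_k)^{s_{2k}}(1-w_k)^{s_{3k}}\frac{dw_k}{w_{ki_k}},
\]
with $w_2=1/x_3$. The key observation is that $w_2=1/x_3\to\infty$: for every $k$ with $i_k=2$ the factor $w_{ki_k}^{-1}$ equals $\frac1{w_k-1/x_3}=\frac{-x_3}{1-x_3w_k}$, of absolute value at most $2x_3$ on the domain once $x_3<\tfrac12$. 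Hence, if some $i_k=2$, pulling out these factors shows that the right-hand side is $O(x_3)$ and the limit vanishes; if instead $i_k\neq2$ for all $k$, then $(1-x_3)^{s_{23}}\to1$ and each $(1-x_3w_k)^{s_{2k}}\to1$ (pointwise, and uniformly bounded near $w_k=1$), while all remaining factors are independent of $x_3$ and form exactly the integrand of the claimed limit, which is absolutely integrable under $\mathrm{Re}(s_{ij})>0$. Dominated convergence then gives~(i).

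\emph{Part (ii).} Multiplying \eqref{eqn:opensub} by $(1-x_3)^{-s_{23}}$ instead gives the same integral prefactored by $x_3^{\smax}$. As $x_3\to1$ one has $x_3^{\smax}\to1$; pointwise on the open simplex $(1-x_3w_k)^{s_{2k}}(1-w_k)^{s_{3k}}\to(1-w_k)^{s_{2k}+s_{3k}}$; and for each $k$ with $i_k=2$ one has $w_2=1/x_3\to1$, so $\frac1{w_k-1/x_3}\to\frac1{w_k-1}$, which is $\left.w_{ki_k}^{-1}\right|_{x_3=1}$ (just as it already is when $i_k=3$). The integrand therefore converges pointwise to
\[
\prod_{4\leq i<j\leq L}w_{ij}^{s_{ij}}\prod_{k=4}^{L}w_k^{s_{1k}}(1-w_k)^{s_{2k}+s_{3k}}\left.\frac{dw_k}{w_{ki_k}}\right|_{x_3=1};
\]
relabelling $w_k$ as $x_k$ and reading all differences with the ordering convention that makes them positive, the exponent shift $s_{2k}\mapsto s_{2k}+s_{3k}$ is exactly the merging of the collapsing factor $|x_{3k}|^{s_{3k}}$ into $|x_{2k}|^{s_{2k}}$, so this is precisely the integrand on the right-hand side of~(ii). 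It remains to exchange limit and integral.

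\textbf{The main difficulty.} Both parts reduce to a dominated-convergence estimate, and the only genuinely delicate point is the choice of a dominating function as $x_3\to1$ in part~(ii): there the factor $\frac1{w_k-1/x_3}$ (present when $i_k=2$) is \emph{not} uniformly bounded near $w_k=1$, its absolute value being $\frac1{1/x_3-w_k}$. This is resolved using the elementary inequalities $1-x_3w_k\geq1-w_k\geq0$ and $1/x_3-w_k>1-w_k$, valid for $x_3\in(0,1)$ and $w_k\in[0,1]$: they dominate the offending factor by $(1-w_k)^{-1}$, and, combined with the remaining factors (bounded by $w_{ij}^{\mathrm{Re}(s_{ij})}$, $w_k^{\mathrm{Re}(s_{1k})}$, $(1-w_k)^{\mathrm{Re}(s_{3k})}$ and the $x_3$-free denominators), one finds the whole integrand dominated, uniformly for $x_3\in(\tfrac12,1)$, by the absolutely convergent (under $\mathrm{Re}(s_{ij})>0$) integrand on the right-hand side of~(ii), with exponents and denominators replaced by their absolute values. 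For part~(i) the analogous bound is immediate: all $x_3$-dependent factors are uniformly bounded once $x_3<\tfrac12$, and the rest is the integrand of the convergent limiting integral. With these dominating functions in place, dominated convergence yields both limits, and the dichotomy in~(i) falls out of the explicit factor $\tfrac{-x_3}{1-x_3w_k}$.
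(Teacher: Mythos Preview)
Your proof is correct and follows exactly the route the paper itself uses: the paper does not prove this proposition (it only cites \cite[Lemmas~5.5 and~5.7]{Terasoma}), but for the closed-string analogue (\propref{prop:limitsclosed}) it argues precisely as you do --- substitute $w_k=x_k/x_3$ via \eqref{eqn:opensub}, cancel the prefactor, and take the limit under the integral sign, noting that the $i_k=2$ denominators scale like $x_3$ as $x_3\to0$. Your version is more careful in that you make the dominated-convergence justification explicit, in particular the bound $1/x_3-w_k>1-w_k$ that handles the $i_k=2$ pole as $x_3\to1$; the paper just asserts that ``all other terms in the integral neither go to zero, nor diverge''.
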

\begin{proof}
For the proof, see ref.~\cite[Lemmas 5.5 and 5.7]{Terasoma} or in a slightly different notation in ref.~\cite[Sec.~2.4]{BK}.
\end{proof}

\begin{rmk}\label{rmk:nos12}
	We find that the result \textit{(i)} of the above proposition is independent of any of the $s_{2\bullet}$, due to the limit $\lim_{x_3\to0}(1\,{-}\,x_3w_k)^{s_{2k}}=1$ after the substitution. Note further, that terms $1\,{-}\,w_k$ now have the exponent $s_{3k}$. The physical interpretation of this fact will be discussed in~\secref{sec:StringSelberg}.
\end{rmk}

We now furthermore consider the regularized limits of the open Selberg integrals where additionally the variables $s_{3\bullet}$ related to the auxiliary point $x_3$ tend to zero. This step is important, as it renders one of the $L\,{-}\,3$ integrations trivial in the case $x_3\to0$. This limit will return physically meaningful objects, as discussed later.

\begin{proposition}\label{prop:openphysical}
    One finds the following limits:
    \begin{enumerate}[label=(\roman*)]
        \item  For arbitrary labels $i_k\in\{1,\ldots,L\}\smallsetminus\{k\}$ we find
                \begin{align*}
                     \lim_{s_{3\bullet}\to0}\lim_{x_3\to0}x_3^{-\smax}S[i_4,\ldots,i_L](x_3)=\int_{0\leq w_L\leq\ldots\leq w_4\leq 1}\prod_{\substack{i,j=4\\i<j}}^Lw_{ij}^{s_{ij}}\prod_{k=4}^{L}w_k^{s_{1k}}\frac{dw_k}{w_{ki_k}}
                \end{align*}
                if $i_k\,{\neq}\,2$ for all $k\in\{4,\ldots,L\}$ and vanishing otherwise.\\
                If now $i_k\,{\neq}\,2$ for all $k\in\{4,\ldots,L\}$ and additionally $i_k\neq3$ for all $k\in\{4,\ldots,L\}$, the above result can be simplified to read
                \begin{align*}
                \lim_{s_{3\bullet}\to0}\lim_{x_3\to0}&x_3^{-\smax}S[i_4,\ldots,i_L](x_3)\notag\\
                &=\frac{1}{s_{145\cdots L}}\int_{0\leq y_L\leq\ldots\leq y_5\leq 1}\frac{1}{1\,{-}\,y_{i_4}} \prod_{\substack{i,j=5\\i<j}}^Ly_{ij}^{s_{ij}}\prod_{k=5}^{L}y_k^{s_{1k}}(1\,{-}\,y_k)^{s_{4k}}\frac{dy_k}{y_{ki_k}}
                \end{align*}
               where $s_{145\cdots L}\coloneq\sum_{k=4}^Ls_{1k}+\sum_{4\leq i<j\leq L}s_{ij}$ and $y_1\coloneq0$, $y_4\coloneq1$.
        \item  For the other limit one has
                    \begin{align*}
        			\ \ \qquad\lim_{s_{3\bullet}\to0}\lim_{x_3\to1}(1\,{-}\,x_3)^{-s_{23}}S[&i_4,\ldots,i_L](z_3)\notag\\
			&=\int_{0\leq x_L\leq\ldots\leq x_4\leq 1}\prod_{\substack{i,j=4\\i<j}}^{L}x_{ij}^{s_{ij}}\prod_{k=4}^{L}x_k^{s_{1k}}(1\,{-}\,x_k)^{s_{2k}}\left.\frac{dx_{k}}{x_{ki_k}}\right|_{x_3=1}.
        		\end{align*}
    \end{enumerate}
\end{proposition}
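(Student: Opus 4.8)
The plan is to start from the rewritten form \eqref{eqn:opensub} of the open Selberg integral and feed it into the two regularized limits already computed in \propref{prop:openlimits}, then commute the two limiting operations $\lim_{s_{3\bullet}\to0}$ and $\lim_{x_3\to0}$ (resp.\ $\lim_{x_3\to1}$). The commutation is harmless because, after extracting the divergent prefactors $x_3^{\smax}$ resp.\ $(1-x_3)^{s_{23}}$, the remaining integrand depends real-analytically on the $s_{ij}$ in a neighbourhood of $s_{3\bullet}=0$ and the integral converges absolutely and uniformly there (a sufficient condition is $\mathrm{Re}(s_{ij})>0$, as noted after \defref{defn:classicselb}); hence one may pass the $s_{3\bullet}\to0$ limit inside the integral sign. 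So I would treat each part in turn.

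\textbf{Part (ii).} Here there is essentially nothing to do: in the right-hand side of \propref{prop:openlimits}\,(ii) one sets $s_{3\bullet}\to0$. Then $s'_{2j}=s_{2j}+s_{3j}\to s_{2j}=s_{2j}$, so the exponents $s'_{ij}$ all collapse to $s_{ij}$, and any factor with a $3$-index, i.e.\ $x_{i3}^{s_{i3}}$ or $x_{3j}^{s_{3j}}$, tends to $1$; the products $\prod_{i<j;\,i,j\neq3}x_{ij}^{s'_{ij}}$ become $\prod_{4\leq i<j\leq L}x_{ij}^{s_{ij}}\cdot\prod_{k=4}^L x_k^{s_{1k}}(1-x_k)^{s_{2k}}$ once one recalls $x_1=0$, $x_2=1$ (so $x_{k1}=x_k$, $x_{2k}=1-x_k$). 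This is exactly the claimed expression.

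\textbf{Part (i).} Start from \propref{prop:openlimits}\,(i), whose right-hand side is $\int_{0\le w_L\le\cdots\le w_4\le1}\prod_k \frac{dw_k}{w_{ki_k}}w_k^{s_{1k}}(1-w_k)^{s_{3k}}\prod_{i<j}w_{ij}^{s_{ij}}$, valid when $i_k\neq2$ for all $k$ (and $0$ otherwise). Sending $s_{3\bullet}\to0$ kills the factors $(1-w_k)^{s_{3k}}\to1$, giving the first displayed formula. For the simplified form one additionally assumes $i_k\neq3$ for all $k$, so $w_3=1$ never appears in a denominator; the only place $s_{3\bullet}$ or the point $w_3=1$ entered was through $(1-w_k)^{s_{3k}}=(w_3-w_k)^{s_{3k}}$, now gone, and through $w_{43}^{s_{43}}$ etc., also gone. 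The remaining integrand is independent of $w_3$, so the innermost integration variable — here it is $w_4$, the largest one, which ranges over $[w_5,1]$ — can be performed explicitly. After setting $s_{3\bullet}=0$, the $w_4$-dependence is $w_4^{\,s_{145\cdots L}-1}\,\frac{1}{w_{4i_4}}\,dw_4$ up to the factors $w_{j4}^{s_{j4}}$ with $j\ge5$; but wait — those latter factors do depend on $w_4$. The correct bookkeeping is: the exponent of $w_4$ coming from $w_4^{s_{14}}\prod_{j>4}w_{4j}^{s_{4j}}\cdot(\text{from }w_{41_{\text{?}}})$ combined with the measure is what produces $s_{145\cdots L}$. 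I would make the substitution $y_k=w_k$ for $k\ge5$, $y_4=1$, so that $w_{4j}\to$ something; more cleanly, since the integrand no longer involves $w_3$ and the $w_4$-integral is over a polylogarithmic-type kernel, one uses the elementary identity $\int_{w_5}^{1}\frac{w_4^{a-1}}{w_4-w_{i_4}}\,dw_4$-type manipulation, or better, the standard ``peeling'' lemma for Selberg-type integrals that integrating out the outermost variable against $\frac{dw_4}{w_{4i_4}}$ (with $i_4\neq3,4$) while all other exponents touching $w_4$ have been switched off produces the factor $\frac{1}{s_{145\cdots L}}\cdot\frac{1}{1-y_{i_4}}$, with $s_{145\cdots L}=\sum_{k=4}^L s_{1k}+\sum_{4\le i<j\le L}s_{ij}$ the total exponent seen by $w_4$. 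This matches the claim after renaming the remaining $w_5,\dots,w_L$ to $y_5,\dots,y_L$ and using $y_4=1$ (so $1-y_{i_4}=y_4-y_{i_4}$ for $i_4\ge5$, and $1-y_{i_4}=1$ for $i_4=1$, consistently with the surviving cases $i_4\in\{1,5,\dots,L\}$).

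\textbf{Main obstacle.} The genuinely non-routine point is the explicit one-variable integration in part (i) that yields the $\frac{1}{s_{145\cdots L}}\cdot\frac{1}{1-y_{i_4}}$ prefactor: one must carefully track \emph{which} Mandelstam variables enter the exponent of the variable $w_4$ being integrated out, show that once $s_{3\bullet}=0$ the $w_4$-kernel is of the form $w_4^{\,s_{145\cdots L}-1}(1-w_4)^{0}\cdots$ near $w_4=0$ so the integral $\int_0^{(\cdots)}$ converges and evaluates to the stated partial-fraction expression, and verify the combinatorial identity $\sum_{k=4}^L s_{1k}+\sum_{4\le i<j\le L}s_{ij}=s_{145\cdots L}$ is exactly the residue governing the $w_4\to0$ behaviour. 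A secondary, milder issue is justifying the interchange $\lim_{s_{3\bullet}\to0}\lim_{x_3\to0/1}=\lim_{x_3\to0/1}\lim_{s_{3\bullet}\to0}$ and the passage of $\lim_{s_{3\bullet}\to0}$ under the integral sign; this follows from dominated convergence in the absolutely convergent region together with analyticity of the regularized integrands in $s_{3\bullet}$, so I would state it as a one-line remark rather than belabour it. Everything else is substitution and relabelling already set up in \eqref{eqn:opensub}–\eqref{eqn:subw} and \propref{prop:openlimits}.
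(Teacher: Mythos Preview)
Your treatment of part \textit{(ii)} and of the first displayed formula in part \textit{(i)} is fine and matches the paper. The gap is in the simplification step of part \textit{(i)}: you correctly notice that after sending $s_{3\bullet}\to 0$ the factors $w_{4j}^{\,s_{4j}}$ for $j\geq 5$ still depend on $w_4$, but you do not resolve this. Simply renaming $y_k=w_k$ for $k\geq 5$ and declaring $y_4=1$ is not a change of variables and does not decouple the $w_4$-integration; and the ``peeling lemma'' you invoke is not something that exists in the form you need without further input.

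The missing idea is a \emph{rescaling} substitution, not a relabelling: set $y_k=w_k/w_4$ for $k=5,\ldots,L$ (and $y_1=0$, $y_4=1$). Then $w_k^{\,s_{1k}}=w_4^{\,s_{1k}}y_k^{\,s_{1k}}$, $w_{ij}^{\,s_{ij}}=w_4^{\,s_{ij}}y_{ij}^{\,s_{ij}}$ for $4\leq i<j\leq L$, $w_{4k}^{\,s_{4k}}=w_4^{\,s_{4k}}(1-y_k)^{\,s_{4k}}$, each $\frac{dw_k}{w_{ki_k}}=\frac{dy_k}{y_{ki_k}}$ is scale-invariant (here one uses $i_k\neq 3$), and $\frac{dw_4}{w_{4i_4}}=\frac{dw_4}{w_4(1-y_{i_4})}$. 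The simplex $0\leq w_L\leq\cdots\leq w_4\leq 1$ factorises into $\{0\leq y_L\leq\cdots\leq y_5\leq 1\}\times\{0\leq w_4\leq 1\}$, and all $w_4$-dependence collapses to the single factor $w_4^{\,s_{145\cdots L}-1}$, whose integral over $[0,1]$ gives $1/s_{145\cdots L}$. This is precisely how the paper proceeds, and it is the one non-obvious manipulation in the whole proof; once you see it, the claimed formula drops out in one line.
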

\begin{proof}
    Statement \textit{(ii)} follows immediately from the result \textit{(ii)} of \propref{prop:openlimits} when taking the limit $s_{3\bullet}\to 0$ and rewriting the products as 
    \begin{equation*}
        \prod_{\substack{i,j=1\\i<j;\,i,j\neq3}}^{L}|x_{ij}|^{s_{ij}}=\prod_{k=4}^L x_k^{s_{1k}}(1\,{-}\,x_k)^{s_{2k}}\prod_{\substack{i,j=4\\i<j}}^{L}x_{ij}^{s_{ij}}\,.
    \end{equation*}
    For statement \textit{(i)} the limit $s_{3\bullet}\to 0$ can also be taken immediately using the result \textit{(i)} of \propref{prop:openlimits}. If at least one index $i_k$ is equal to 2, the expression vanishes and otherwise the limit results in 
    \begin{equation*}
        \lim_{s_{3\bullet}\to0}\lim_{x_3\to0}x_3^{-\smax}S[i_4,\ldots,i_L](x_3)=\int_{0\leq w_L\leq\ldots\leq w_4\leq 1}\prod_{\substack{i,j=4\\i<j}}^Lw_{ij}^{s_{ij}}\prod_{k=4}^{L}w_k^{s_{1k}}\frac{dw_k}{w_{ki_k}}\,.  
    \end{equation*}
    If we have $i_k\neq3$ for all $k$, we will be able to perform the outermost integration in $w_4$: we employ the substitution $y_k=w_k/w_4,\ k=5,\ldots,L$, i.e.~rescaling all but the outermost integration variable by $w_4$ (and define $y_1\coloneq0$ and $y_4\coloneq1$). This leaves us with
    \begin{align*}
        \lim_{s_{3\bullet}\to0}&\lim_{x_3\to0}x_3^{-\smax}S[i_4,\ldots,i_L](x_3)\\&=\underbrace{\int_0^1 w_4^{\sum_{k=4}^Ls_{1k}+\sum_{4\leq i<j\leq L}s_{ij}-1}dw_4}_{=\frac{1}{\sum_{k=4}^Ls_{1k}+\sum_{4\leq i<j\leq L}s_{ij}}}\int_{0\leq y_L\leq\ldots\leq y_5\leq 1}\frac{1}{1\,{-}\,y_{i_4}}\prod_{\substack{i,j=5\\i<j}}^Ly_{ij}^{s_{ij}}\prod_{k=5}^{L}y_k^{s_{1k}}(1\,{-}\,y_k)^{s_{4k}}\frac{dy_k}{y_{ki_k}}\,,
    \end{align*}
    which proves the statement.
\end{proof}

\begin{rmk}
    Choosing $i_k\leq k$ and writing
    \begin{equation*}
    	I^{i_4\cdots i_L}=\int_{0\leq x_L\leq\ldots\leq x_4\leq1}\prod_{1\leq i<j\leq L}|x_{ij}|^{s_{ij}}\,\omega,
    \end{equation*}
    where
    \begin{equation*}
        \omega=\frac{dx_4\cdots dx_L}{\prod_{k=4}^Lx_{ki_k}}\,,
    \end{equation*}
    the results of \propref{prop:openphysical} define bases for (i) $(L{-}1)$-point and (ii) $L$-point open-string tree-level scattering amplitudes according to the conventions of \cite{Brown:2019wna}. Note that $\omega$ is not precisely of the Parke--Taylor form (see \secref{sec:stringamps}), but for example Thms.~6.9 and 6.10 of \rcite{Brown:2019wna} provide the translation between the above (so-called Aomoto basis) and the Parke--Taylor basis.
\end{rmk}

\subsection{The class of closed Selberg integrals}\label{sec:closedSelberg}
Let us now define ``complex analogues'' of the open Selberg integrals introduced in the previous section, in the sense that they will be defined by integrals over the complex projective line, rather than on the real line. These integrals will constitute the building blocks for the recursion of closed-string integrals in \secref{sec:closedrec}.
\begin{defn}\label{defn:closedSelberg}
    Let $L\geq4$ and $z_1=0$, $z_2=1$. We call \emph{closed Selberg integral} the holomorphic (single-valued) function of $z_3\in\mathbb C\smallsetminus\{0,1\}$ defined, for indices $i_k\in\{1,\ldots,L\}\smallsetminus\{k\}$, by
	\begin{equation}
		\label{eqn:complexSel}
		\SC[i_4,\ldots,i_L](z_3)\coloneq\frac{1}{(-2\pi i)^{L-3}}\int_{(\PC)^{L-3}}\frac{\zb_3}{\zb_L}\prod_{\substack{i,j=1\\i<j}}^L|z_{ij}|^{2s_{ij}}\prod_{k=4}^{L}\frac{dz_kd\zb_k}{z_{ki_k}\zb_{k(k-1)}}\,,
	\end{equation}
    provided that the exponents $s_{ij}\in\zC$ are such that the integral on the right-hand side is absolutely convergent\footnote{This region of convergence can be explicitly deduced using for instance the result in the Appendix of \cite{VanZerb}, but here we content ourselves to mention that this is true for sufficiently small $\mathrm{Re}(s_{ij})>0$.}.
\end{defn}

As for the open Selberg integrals from \secref{sec:openSelberg}, it will be useful for later applications to rewrite the closed Selberg integrals using the substitution $w_i=z_i/z_3$ for $i=4,\ldots,L$, which yields
\begin{align}\label{eqn:closedsub}
    &\SC[i_4,\ldots,i_L](z_3)=\frac{1}{(-2\pi i)^{L-3}}|z_3|^{2\smax}|1\,{-}\,z_3|^{2s_{23}}\notag\\
    &\qquad\times\int_{(\PC)^{L-3}}\frac{1}{\wb_L}\prod_{\substack{i,j=4\\i<j}}^{L}|w_{ij}|^{2s_{mn}}\prod_{k=4}^L|w_k|^{2s_{1k}}|1\,{-}\,z_3w_k|^{2s_{2k}}|1\,{-}\,w_k|^{2s_{3k}}\frac{dw_kd\wb_k}{w_{ki_k}\wb_{k(k-1)}}\,,
\end{align}
where we defined $w_1\coloneq0,\,w_2\coloneq1/z_3,\,w_3\coloneq1$, analogous to \eqn{eqn:subw}. Just as in the case of \eqn{eqn:opensub} for open Selberg integrals, we find the exponent of $|z_3|^2$ to be $\smax$ from \eqn{eqn:smax}.
\begin{rmk}
    In the same way as open Selberg integrals can be associated with open string amplitudes at tree-level, the closed Selberg integrals defined in \eqref{eqn:complexSel} are related to the closed-string amplitudes from \eqn{eqn:closedamp}: keeping one of the points in the integrals \eqn{eqn:closedamp} unintegrated, we find the formula \eqref{eqn:complexSel} for closed Selberg integrals.
\end{rmk}

\begin{rmk}\label{rmk:svSelberg}
	The holomorphic part of the integrand of the closed Selberg integrals is exactly the same as for the open Selberg integrals defined in \eqn{eqn:Selbergzero}, while the antiholomorphic part comes from the sv-map formulated in refs.~\cite{Brown:2018omk,Brown:2019wna} and explained in \secref{sec:stringamps}: one way to see this is to start with the open Selberg integral and switch to cubical coordinates, i.e.~one substitutes $y_i=x_i/x_{i-1}$, $i=4,\ldots,L$, so that the integration in the $y_i$s is over the hypercube $\gamma=[0,1]^{L-3}$. From here we can use \cite[Example 4.5.2]{Brown:2018omk} to yield us the form $\nu_\gamma$ corresponding to the hypercube. Thus, we can apply the sv-map and reversing the substitution done before brings us to the closed Selberg integral of \eqn{eqn:complexSel}, showing that the closed Selberg integrals are the images of the open Selberg integrals under the sv-map, $\mathrm{sv}\big(S[i_4,\ldots,i_L](z_3)\big)=\SC[i_4,\ldots,i_L](z_3)$.
\end{rmk}

\begin{rmk}\label{rmk:svMPLscplxSelberg}
    By the methods of~\cite{Brown:2019wna} or~\cite{VanZerb}, one can prove that any closed Selberg integral has a Laurent expansion at the origin in the variables $s_{ij}$. By repeatedly applying Thm.~6.7 of~\cite{VanZerb}, it can also be proven that the coefficients of this Laurent expansion are linear combinations, with coefficients in the $\mathbb Q$-algebra generated by single-valued multiple zeta values, of single-valued multiple polylogarithms. This generalizes the proof of Thm.~7.1 of~\cite{VanZerb}. 
\end{rmk}

\subsection{Limit behavior of closed Selberg integrals.} 
\label{sec:LimitclosedSelberg}
As for the case of open Selberg integrals shown before, we can make statements about the regularized limits of the closed Selberg integrals when the auxiliary variable $z_3$ tends to $z_1=0$ or $z_2=1$.
\begin{proposition}\label{prop:limitsclosed}
    For $L\geq4$ consider the closed Selberg integral $\SC[i_4,\ldots,i_L](z_3)$. Then we have the following limits:
    \begin{enumerate}[label=(\roman*)]
        \item  If $i_k\,{\neq}\,2$ for all $k\in\{4,\ldots,L\}$, the $z_3\to0$ limit is
       		 \begin{align}
                \label{eqn:limit0firstcomps}
                    \qquad\lim_{z_3\to0}&|z_3|^{-2\smax}\SC[i_4,\ldots,i_L](z_3)\notag\\
		            &=\frac{1}{(-2\pi i)^{L-3}}\int_{(\PC)^{L-3}}\frac{1}{\wb_L}\prod_{\substack{i,j=4\\i<j}}^L|w_{ij}|^{2s_{ij}}\prod_{k=4}^{L}|w_k|^{2s_{1k}}|1\,{-}\,w_k|^{2s_{3k}}\frac{dw_kd\wb_k}{w_{ki_k}\wb_{k(k-1)}},
                \end{align}
                and it vanishes otherwise.
        \item For the limit $z_3\to1$ the result is
        		\begin{align}
		              \label{eqn:limit1no23}
        			\qquad\quad\lim_{z_3\to1}|1\,{-}\,z_3|^{-2s_{23}}&\SC[i_4,\ldots,i_L](z_3)\notag\\
			&=\frac{1}{(-2\pi i)^{L-3}}\int_{(\PC)^{L-3}}\frac{1}{\zb_L}\prod_{\substack{i,j=1\\i<j;\,i,j\neq 3}}^{L}|z_{ij}|^{2s'_{ij}}\prod_{k=4}^{L}\frac{dz_kd\zb_k}{z_{ki_k}\zb_{k(k-1)}}\bigg|_{z_3=1}\,,
        		\end{align}
                where $s'_{ij}\coloneq s_{ij}$ if $i,j\neq2$ and $s'_{2j}\coloneq s_{2j}+s_{3j}$.
    \end{enumerate}
\end{proposition}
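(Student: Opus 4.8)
The plan is to mimic the proof of the open-string statement \propref{prop:openlimits}, adapted to the complex integral: in both cases one isolates the dependence on $z_3$, takes the naive pointwise limit of the integrand, and then justifies exchanging that limit with the integration over $(\PC)^{L-3}$ (whose domain, in contrast to the open case, does not itself degenerate).

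For \textit{(i)} I would start from the rescaled form \eqref{eqn:closedsub} and multiply by $|z_3|^{-2\smax}$. The antiholomorphic data $1/\wb_L$ and $\prod_k d\wb_k/\wb_{k(k-1)}$ carry no $z_3$ after the substitution $w_i=z_i/z_3$, so the only remaining $z_3$-dependence sits in the prefactor $|1-z_3|^{2s_{23}}$, in the factors $|1-z_3w_k|^{2s_{2k}}$, and --- precisely when some index $i_k$ equals $2$ --- in the denominator $w_{ki_k}=w_k-1/z_3=(z_3w_k-1)/z_3$, i.e.\ in a factor $z_3/(z_3w_k-1)$. As $z_3\to0$ one has $|1-z_3|^{2s_{23}}\to1$ and $|1-z_3w_k|^{2s_{2k}}\to1$, which is the closed-string incarnation of \rmkref{rmk:nos12}: the variables $s_{2\bullet}$ drop out while the factors $|1-w_k|^{2s_{3k}}$ survive with exponent $s_{3k}$. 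On the other hand each factor $z_3/(z_3w_k-1)$ tends to $0$, so the regularized limit vanishes as soon as one $i_k=2$, and otherwise converges to the integrand on the right-hand side of \eqref{eqn:limit0firstcomps}.

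For \textit{(ii)} I would instead work directly with the unsubstituted integral \eqref{eqn:complexSel}: its Koba--Nielsen factor contains $|z_{23}|^{2s_{23}}=|1-z_3|^{2s_{23}}$, which is cancelled exactly by the prefactor $|1-z_3|^{-2s_{23}}$. As $z_3\to1$ one has $\zb_3\to1$, $|z_{13}|^{2s_{13}}\to1$, every denominator $z_{ki_k}$ (including the case $i_k=3$, where $z_{k3}\to z_k-1$) and $\zb_{k(k-1)}$ (in particular $\zb_{43}=\zb_4-\zb_3\to\zb_4-1$) tends to its value at $z_3=1$, and for each $k$ the product $|z_{2k}|^{2s_{2k}}|z_{3k}|^{2s_{3k}}$ merges into $|1-z_k|^{2(s_{2k}+s_{3k})}=|1-z_k|^{2s'_{2k}}$; collecting everything gives exactly the right-hand side of \eqref{eqn:limit1no23}. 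No index can force vanishing here, since $z_{k2}=z_k-1$ carries no factor of $z_3$.

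The step that really has to be argued --- and the main obstacle --- is the exchange of the limit and the integral in both parts. Although the domain $(\PC)^{L-3}$ is fixed, the singularity structure of the integrand degenerates in the limit (in \textit{(i)}, in the rescaled picture, the pole $w_k=1/z_3$ of $|1-z_3w_k|^{2s_{2k}}$ and of $z_3/(z_3w_k-1)$ runs off towards $\infty$, altering the local behaviour at $w_k=\infty$; in \textit{(ii)} the loci $z_k=1$ and $z_k=z_3$ merge). I would control this by a dominated-convergence argument: using that the $\Re(s_{ij})$ are small and positive, the exponents attached to each degenerating configuration add up to a number of small positive real part, so the limiting integrand is locally integrable and its behaviour at $\infty$ is governed by the same inequalities that guarantee the absolute convergence of $\SC[i_4,\ldots,i_L](z_3)$ in the first place; this furnishes a single integrable majorant valid for all $z_3$ in a punctured neighbourhood of $0$ (resp.\ $1$). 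Alternatively, in the spirit of \rmkref{rmk:svSelberg}, one can avoid the analytic estimate: coefficient by coefficient in the Laurent expansion in the $s_{ij}$ one has $\SC[i_4,\ldots,i_L](z_3)=\mathrm{sv}\big(S[i_4,\ldots,i_L](z_3)\big)$, with both the open Selberg integral and its regularized limits expressed through (single-valued) multiple polylogarithms, so the identities follow from the open-string limits \propref{prop:openlimits} together with the compatibility of the sv-map with the regularized limits at $0$ and $1$.
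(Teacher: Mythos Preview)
Your proposal is correct and follows essentially the same approach as the paper's proof: use the rescaled form \eqref{eqn:closedsub} for part \textit{(i)} and the unsubstituted integral \eqref{eqn:complexSel} for part \textit{(ii)}, then take the pointwise limit of the integrand. In fact you are more careful than the paper, which simply asserts that ``one can take the limit'' without discussing the exchange of limit and integration; your dominated-convergence sketch (and the alternative via the sv-map and \propref{prop:openlimits}) fills a gap that the paper's proof leaves implicit.
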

\begin{proof}
	\textit{(i)}: Note that after the substitution $w_i=z_i/z_3$ for $i=4,\ldots,L$ (see \eqn{eqn:closedsub}) we have the factor $|z_3|^{2\smax}$ in front of the closed Selberg integrals which cancels with the regularization factor $|z_3|^{-2\smax}$. Then one can take the limit $z_3\to0$ and ends up (using $w_1=0$) with \eqn{eqn:limit0firstcomps}, where -- as before in the case of open Selberg integrals -- again if there is at least one index $i_k\,{=}\,2$, the corresponding term scales as
    \begin{equation*}
        \lim_{x_3\to0}\frac{1}{w_k-\frac{1}{x_3}}=0.
    \end{equation*}
    \textit{(ii)}: The regulator $|1\,{-}\,z_3|^{-s_{23}}$ cancels the $|1\,{-}\,z_3|^{s_{23}}$ in the integral and all other terms in the integral neither go to zero, nor diverge, so that we can then take the limit and arrive at \eqn{eqn:limit1no23}, where the exponents $s_{2k}$ of terms $|1\,{-}\,z_k|$ combine with exponents $s_{3k}$ of $|z_{3k}|$ in the limit to give $s'_{2k}=s_{2k}\,{+}\,s_{3k}$, while the other Mandelstam variables remain unchanged.
\end{proof}

As before for the open Selberg integrals, we also want to investigate the physical limit ${s_{3\bullet}\to0}$ of the auxiliary Mandelstam variables vanishing, which will allow us to make a connection to closed-string tree-level scattering amplitudes.

\begin{proposition}\label{prop:limitsclosedphysical}
	In the setting of \propref{prop:limitsclosed} one finds the following physical limits:
    \begin{enumerate}[label=(\roman*)]
        \item   For the limit $z_3\to0$, one finds
                \begin{align*}
                    \qquad\lim_{s_{3\bullet}\to0}\lim_{z_3\to0}&|z_3|^{-2\smax}\SC[i_4,\ldots,i_L](z_3)\notag\\
		            &\qquad=\frac{1}{(-2\pi i)^{L-3}}\int_{(\PC)^{L-3}}\frac{1}{\wb_L}\prod_{\substack{i,j=4\\i<j}}^L|w_{ij}|^{2s_{ij}}\prod_{k=4}^{L}|w_k|^{2s_{1k}}\frac{dw_kd\wb_k}{w_{ki_k}\wb_{k(k-1)}}
                \end{align*}
                if $i_k\,{\neq}\,2$ for all $k\in\{4,\ldots,L\}$ and the expression vanishes otherwise.\\
                If additionally $i_k\neq3\ \forall k\in\{4,\ldots,L\}$, the above expression simplifies to read
                \begin{align}
                \label{eqn:limit0firstcompsphysical}
                    \qquad&\lim_{s_{3\bullet}\to0}\lim_{z_3\to0}|z_3|^{-2\smax}\SC[i_4,\ldots,i_L](z_3)\notag\\
                    &=\frac{1}{s_{145\cdots L}}\frac{1}{(-2\pi i)^{L-4}}\int_{(\PC)^{L-4}}\frac{1}{(1\,{-}\,y_{i_4})\yb_L}\prod_{\substack{i,j=5\\i<j}}^L|y_{ij}|^{2s_{ij}}\prod_{k=5}^{L}|y_k|^{2s_{1k}}|1\,{-}\,y_k|^{2s_{4k}}\frac{dy_kd\yb_k}{y_{ki_k}\yb_{k(k-1)}},
                \end{align}
                with $y_1\coloneq0,\,y_4\coloneq1$ and again $s_{145\cdots L}=\sum_{k=4}^Ls_{1k}+\sum_{4\leq i<j\leq L}s_{ij}$.
        \item   The limit $z_3\to1$ yields
                \begin{align}
		              \label{eqn:limit1no23physical}
        			\qquad&\lim_{s_{3\bullet}\to0}\lim_{z_3\to1}|1\,{-}\,z_3|^{-2s_{23}}\SC[i_4,\ldots,i_L](z_3)\notag\\
                    &\qquad=\frac{1}{(-2\pi i)^{L-3}}\int_{(\PC)^{L-3}}\frac{1}{\zb_L}\prod_{\substack{i,j=4\\i<j}}^{L}|z_{ij}|^{2s_{ij}}\prod_{k=4}^{L}|z_k|^{2s_{1k}}|1\,{-}\,z_k|^{2s_{2k}}\frac{dz_kd\zb_k}{z_{ki_k}\zb_{k(k-1)}}\Bigg|_{z_3=1}.
        		\end{align}
    \end{enumerate}
\end{proposition}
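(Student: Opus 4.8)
The plan is to follow the proof of the analogous statement for open Selberg integrals, \propref{prop:openphysical}, taking as input the regularized limits already established in \propref{prop:limitsclosed}, exactly as \propref{prop:openphysical} was deduced from \propref{prop:openlimits}. Part \textit{(ii)} and the first (unsimplified) formula in part \textit{(i)} require essentially nothing new: one takes the limit $s_{3\bullet}\to 0$ inside the integrals on the right-hand sides of \eqref{eqn:limit1no23} and \eqref{eqn:limit0firstcomps}. In the $z_3\to1$ case the only effect is $s'_{2j}=s_{2j}+s_{3j}\to s_{2j}$, after which one rewrites $\prod_{i<j,\, i,j\neq 3}|z_{ij}|^{2s_{ij}}=\prod_{k=4}^L|z_k|^{2s_{1k}}|1-z_k|^{2s_{2k}}\prod_{4\leq i<j\leq L}|z_{ij}|^{2s_{ij}}$ (using $z_1=0$, $z_2=1$, and $|z_{12}|=1$), which gives \eqref{eqn:limit1no23physical}. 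In the $z_3\to0$ case the factors $|1-w_k|^{2s_{3k}}$ tend to $1$ and the vanishing when some $i_k=2$ is already part of \propref{prop:limitsclosed}\textit{(i)}. The legitimacy of exchanging the $s_{3\bullet}\to0$ limit with the integral follows from the absolute convergence and the joint analyticity in the $s_{ij}$ of these integrals near the relevant configuration of Mandelstam variables, which underlies the Laurent-expandability recorded in \rmkref{rmk:svMPLscplxSelberg} (via the methods of \rcites{VanZerb,Brown:2019wna}).

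The substance of the statement is the simplified formula in part \textit{(i)}. Starting from the right-hand side of \eqref{eqn:limit0firstcomps} with $s_{3\bullet}=0$ and the standing hypotheses $i_k\neq2,3$ for all $k$, I would perform the change of variables $y_k\defeq w_k/w_4$ for $k=5,\dots,L$, keeping $w_4$ as an integration variable and setting $y_1\defeq0$, $y_4\defeq1$ (this mirrors the substitution used in the proof of \propref{prop:openphysical}). Under it every holomorphic denominator factors as $w_{ki_k}=w_4\,y_{ki_k}$ --- for $k=4$ this uses $i_4\in\{1\}\cup\{5,\dots,L\}$ and produces the $w_4$-independent factor $(1-y_{i_4})^{-1}$ --- and every antiholomorphic denominator in the Parke--Taylor chain factors as $\bar w_{k(k-1)}=\bar w_4\,\bar y_{k(k-1)}$ for $k\geq5$, the single exception being $\bar w_{43}=\bar w_4-1$ (recall $w_3=1$); moreover $\bar w_L^{-1}=\bar y_L^{-1}\bar w_4^{-1}$ and the Jacobian contributes $|w_4|^{2(L-4)}$, while the Koba--Nielsen factor contributes $|w_4|^{2s_{145\cdots L}}$ times the Koba--Nielsen factor in the variables $y_k$ (the exponent is computed exactly as in \propref{prop:openphysical}). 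Collecting powers, the $|w_4|$-contributions of the Jacobian and of the denominators combine to $|w_4|^{-2}$, so the $w_4$-integration decouples from the $(L-4)$-dimensional integral in the $y_k$ and reduces to the one-variable complex integral
\begin{equation*}
\frac{1}{-2\pi i}\int_{\PC}|w_4|^{2s_{145\cdots L}-2}\,\frac{dw_4\,d\bar w_4}{\bar w_4-1}\,,
\end{equation*}
which is a degenerate instance of the complex (Virasoro--Shapiro) beta integral and, up to the sign and normalization bookkeeping inherent in the $(-2\pi i)$-convention, evaluates to $1/s_{145\cdots L}$ --- the single-valued counterpart of the elementary identity $\int_0^1 w_4^{\,s_{145\cdots L}-1}\,dw_4=1/s_{145\cdots L}$ used in the open-string case. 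Reassembling the pieces reproduces \eqref{eqn:limit0firstcompsphysical}. As an alternative, since by \rmkref{rmk:svSelberg} closed Selberg integrals are the images of open Selberg integrals under the sv-map, one could instead deduce the closed physical limits by applying the sv-map to \propref{prop:openphysical}, provided one first checks that the sv-map is compatible with the regularized $z_3\to0,1$ limits, with the limit $s_{3\bullet}\to0$, and with the peeling off of one integration variable; the direct computation above seems more self-contained.

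The main obstacle is the evaluation of the decoupled one-dimensional complex integral with the correct overall constant: unlike its open-string analogue it is not elementary, its integrand being mildly singular at $0$, $1$ and $\infty$, and the cleanest route is either a short Stokes argument (writing $|w|^{2s-2}(\bar w-1)^{-1}dw\wedge d\bar w=\tfrac1s\,d\bigl(|w|^{2s-2}w\,d\log(\bar w-1)\bigr)$ and collecting the residue contribution at $w=1$) or recognizing it as the chirally split limit of the complex beta function $\tfrac1\pi\int_{\mathbb C}|w|^{2a-2}|1-w|^{2b-2}\,d^2w=\gamma(a)\gamma(b)/\gamma(a+b)$ with $\gamma(x)=\Gamma(x)/\Gamma(1-x)$; in either case one must then track carefully the relation between the paper's $dw\,d\bar w$ and $d^2w$ together with the $-2\pi i$ prefactor. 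A secondary point, already mentioned, is justifying the interchange of the $s_{3\bullet}\to0$ limit with the multiple integral, which rests on the convergence and analyticity properties underlying \rmkref{rmk:svMPLscplxSelberg}.
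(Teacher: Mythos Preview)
Your proposal is correct and follows essentially the same route as the paper: take $s_{3\bullet}\to0$ in the results of \propref{prop:limitsclosed}, then for the simplified formula in \textit{(i)} rescale $y_k=w_k/w_4$ for $k\geq5$ and evaluate the decoupled one-variable integral $\frac{1}{-2\pi i}\int_{\PC}|w_4|^{2s_{145\cdots L}-2}(\bar w_4-1)^{-1}\,dw_4\,d\bar w_4=1/s_{145\cdots L}$. The paper dispatches this last integral simply by invoking the complex beta identity \eqref{eqn:complexBeta} (or \cite[Lemma~4.23]{Brown:2019wna}), which is one of the options you flagged; it does not discuss the interchange of limit and integral or the sv-map alternative.
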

\begin{proof}
	Statement \textit{(ii)} follows immediately from taking the limit $s_{3\bullet}\to0$ for the result \textit{(ii)} of \propref{prop:limitsclosed} and splitting up the product as done in the proof of \propref{prop:openphysical}.\\
    To proof statement \textit{(i)}, we proceed as in the proof of \propref{prop:openphysical}, and start by taking the limit $s_{3\bullet}\to0$ of the result \textit{(i)} of \propref{prop:limitsclosed}: it is trivially 0 if one index $i_k$ is equal to 2 and otherwise we obtain
    \begin{align*}
        \lim_{s_{3\bullet}\to0}\lim_{z_3\to0}|z_3|^{-2\smax}&\SC[i_4,\ldots,i_L](z_3)\notag\\
        &=\frac{1}{(-2\pi i)^{L-3}}\int_{(\PC)^{L-3}}\frac{1}{\wb_L}\prod_{\substack{i,j=4\\i<j}}^L|w_{ij}|^{2s_{ij}}\prod_{k=4}^{L}|w_k|^{2s_{1k}}\frac{dw_kd\wb_k}{w_{ki_k}\wb_{k(k-1)}}\,.
    \end{align*}
    Next, we consider the case where $i_k\,{\neq}\,3$ for all indices and substitute $y_k=w_k/w_4,\ k=5,\ldots,L$, i.e.~we rescale all integration variables except for $w_4$, resulting in (again $y_0\coloneq0,\, y_4\coloneq1$)
    \begin{align*}
        \lim_{s_{3\bullet}\to0}&\lim_{z_3\to0}|z_3|^{-2\smax}\SC[i_4,\ldots,i_L](z_3)
        =\underbrace{\frac{1}{-2\pi i}\int_{\PC} \frac{dw_4 d\wb_4}{\wb_4\,{-}\,1}|w_4|^{2\sum_{k=4}^Ls_{1k}+2\sum_{4\leq i<j\leq L}s_{ij}-2}}_{=\frac{1}{\sum_{k=4}^Ls_{1k}+\sum_{4\leq i<j\leq L}s_{ij}}}\\
        &\hspace{5ex}\times\frac{1}{(-2\pi i)^{L-4}}\int_{(\PC)^{L-4}}\frac{1}{(1\,{-}\,y_{i_4})\yb_L}\prod_{\substack{i,j=5\\i<j}}^L|y_{ij}|^{2s_{ij}}\prod_{k=5}^{L}|y_k|^{2s_{1k}}|1\,{-}\,y_k|^{2s_{4k}}\frac{dy_kd\yb_k}{y_{ki_k}\yb_{k(k-1)}}\,,
    \end{align*}
    where for the calculation of the $w_4$-integral, the identity (for assumptions on the parameters, see the cited article) \cite{Mimachi:2018} 
    \begin{equation}\label{eqn:complexBeta}
	   \frac{1}{2\pi i}\int_{\PC}du\,d\ub\,u^a\ub^{a'}(1\,{-}\,u)^{b}(1\,{-}\,\ub)^{b'}=\frac{\Gamma(1+a)\Gamma(1+b)\Gamma(-a'-b'-1)}{\Gamma(-a')\Gamma(-b')\Gamma(2+a+b)}
    \end{equation}
    or \cite[Lemma 4.23]{Brown:2019wna} can be used. This shows the statement \textit{(i)}.
\end{proof}

\begin{rmk}
	The results \eqref{eqn:limit0firstcompsphysical} and \eqref{eqn:limit1no23physical} are precisely $(L{-}1)$- and $L$-point closed-string tree-level amplitudes, respectively, cf.~\eqn{eqn:closedamp}. Furthermore, using the single-valued map defined in \cite{Brown:2019wna}, the closed-string amplitudes here in \propref{prop:limitsclosedphysical} are the single-valued image of the open-string amplitudes found in \propref{prop:openphysical}.
\end{rmk}

\subsection{Momentum configuration for Selberg parameter integrals in regularized limits} \label{sec:StringSelberg}
In \secref{sec:openSelberg} and \secref{sec:closedSelberg}, we have defined the classes of one-parameter Selberg integrals $S(x_3)$ and~$\SC(z_3)$, whose limits for $x_3\to0$ and $x_3\to 1$ (and $z_3\to 0$ and $z_3\to 1$) have been considered in~\secref{sec:LimitopenSelberg} and~\secref{sec:LimitclosedSelberg}. 

Along with taking the regularized limits, the parameters $s_{ij}$ had to be degenerated as well, to arrive at expressions for physical string integrals, as seen in \propref{prop:openphysical} and \propref{prop:limitsclosedphysical}. In terms of momenta $k_i$ of a scattering amplitude, there is a physics interpretation, which supports and explains the role of the points $x_3$ and $z_3$ as interpolating between an $(L{-}1)$-point and an $L$-point string amplitude. We are going to picture the limiting process for parameters interpreted as products of momenta of string scattering amplitudes for the class of open Selberg integrals in the following. 

Closed Selberg integrands are obtained by combining two open Selberg integrands, each of which contains a Parke--Taylor factor as given in the second term of \eqn{eqn:PT}. The Parke--Taylor factor for the holomorphic variables in eqns.~\eqref{eqn:closedamp} and \eqref{eqn:complexSel} implicitly fixes a ordering of insertion points in the particular representation considered. The processes of taking a limit, squeezing, substitution etc.~should be understood as shifts of holomorphic or antiholomorphic positions on a path keeping the ordering set by the respective Parke--Taylor factor intact.
\begin{equation*}
\begin{tikzpicture}
\begin{scope}[scale=0.8]
\draw[thick, ->] (-1.0, 0) -- (7.2, 0);
\foreach \x/\label in {0/{\color{red}$\strut x_1{=}0$}, 1/$\strut x_L$, 1.8/$\strut\ldots$, 2.6/$\strut x_4$, 3.4/{\color{blue}$\strut x_3$}, 4.5/{\color{red}$\strut x_2{=}1$}, 6.5/$\strut x_{L+1}{=}\infty$} {
    \draw[thick] (\x, -0.1) node[below] {\label} -- (\x, 0.1);
}
\draw[very thick, blue] (3.4,-0.15)--(3.4,0.15);
\draw[very thick, red] (0,-0.15)--(0,0.15);
\draw[very thick, red] (4.5,-0.15)--(4.5,0.15);
\draw[<->, thick, blue] (3.0,0.4) -- (3.8,0.4);
\draw[dashed] (-0.5, 0) arc[start angle=180, end angle=0, radius=3.5];
\end{scope}
\end{tikzpicture}
\end{equation*}
%%%%%%%%%%%%%%%%%%%%%%%%%%%%%%%%%%%%%%%%%%%%%%%%%%%%%%%%%%%%%%%%%%%%%%%%%%%%%%%%
Consider the limit $x_3\to 0$: due to the ordering of the iterated integration cycle, performing this limit will squeeze all points $x_4,\ldots x_l$ into a tiny interval close to zero:  
\begin{equation*}
\begin{tikzpicture}
\begin{scope}[scale=0.5]
  \draw[thick, ->] (-1.0, 0) -- (7.2, 0);
  \draw[very thick, red] (0,-0.25) node[below, yshift=7]{$\strut x_1$} -- (0,0.25);
  \draw[thick] (0.2,-0.15) node[below, yshift=-6, xshift=4]{$\strut \overbrace{x_{\!L}\!\cdots\!x_4}$} -- (0.2,0.15);
  \draw[thick] (0.7,-0.15) -- (0.7,0.15);
  \draw[very thick, blue] (1,-0.25) node[below, yshift=7]{$\strut x_3$} -- (1,0.25);
  \draw[<-, thick, blue] (0.1,0.5) -- (1.0,0.5);
  \draw[very thick, red] (4.5,-0.25) node[below, yshift=7]{$\strut x_2{=}1$} -- (4.5,0.25);
  \draw[thick] (6.5,-0.15) node[below, yshift=7]{$\strut x_{\!L+1}$} -- (6.5,0.15);
  \draw[dashed] (-0.5, 0) arc[start angle=180, end angle=0, radius=3.5];
  \draw[->, thick] (8,1.6) arc[start angle=150, end angle=30, radius=0.8cm];
  \draw (8.6,1.6) node[above,yshift=7,align=center] {$w_i=x_i/x_3$};
  \begin{scope}[shift={(11,0)}]
  \draw[thick, ->] (-1.0, 0) -- (7.2, 0);
  \draw[very thick, red] (0,-0.25) node[below, yshift=7]{$\strut w_1$} -- (0,0.25);
  \draw[thick] (1,-0.15) node[below, yshift=-6]{$\strut w_{\!L}$} -- (1,0.15);
  \draw[thick] (2.3,-0.15) node[below, yshift=-6]{$\strut \cdots$}-- (2.3,0.15);
  \draw[thick] (3.6,-0.15) node[below, yshift=-6]{$\strut w_4$}-- (3.6,0.15);
  \draw[very thick, blue] (4.5,-0.25) node[below, yshift=7]{$\strut w_3{=}1$} -- (4.5,0.25);
  \draw[thick,red] (6.5,-0.15) node[below, yshift=7, align=center]{$\strut w_2$\\$=w_{\!L+1}$\\$=\infty$} -- (6.5,0.15);
  \draw[dashed] (-0.5, 0) arc[start angle=180, end angle=0, radius=3.5];
  \end{scope}
  \draw[->, thick] (19,1.6) arc[start angle=150, end angle=30, radius=0.8cm];
  \draw (19.6,1.6) node[above,yshift=7,align=center] {$y_i=w_i/w_4$};
  \begin{scope}[shift={(22,0)}]
  \draw[thick, ->] (-1.0, 0) -- (7.2, 0);
  \draw[very thick, red] (0,-0.25) node[below, yshift=7]{$\strut y_1$} -- (0,0.25);
  \draw[thick] (1,-0.15) node[below, yshift=-6]{$\strut y_{\!L}$} -- (1,0.15);
  \draw[thick] (2.3,-0.15) node[below, yshift=-6]{$\strut \cdots$}-- (2.3,0.15);
  \draw[thick] (3.6,-0.15) node[below, yshift=-6]{$\strut y_5$}-- (3.6,0.15);
  \draw[very thick, blue] (4.5,-0.25) node[below, yshift=7, align=center]{$\strut y_4=1$} -- (4.5,0.25);
  \draw[thick,red] (6.5,-0.15) node[below, yshift=7, align=center]{$\strut y_2$\\$=y_{\!L+1}$\\$=\infty$} -- (6.5,0.15);
  \draw[dashed] (-0.5, 0) arc[start angle=180, end angle=0, radius=3.5];
  \end{scope}
\end{scope}
\end{tikzpicture}
\end{equation*}
The substitution in \eqn{eqn:opensub} has the following implications for the configuration: where $w_3=1$, $w_4=x_4/w_3$ etc.~, the points $w_1=x_1$ and $w_{L+1}=x_{L+1}$ are left inert. The point $w_2=x_2/w_3$ will be sent to infinity and thus merge with $w_{L+1}$. No Mandelstam variables of the form $s_{2i}$ makes an explicit appearance in our representation of the Selberg integral (cf.~\rmkref{rmk:nos12}). Whenever two insertion points merge, the two corresponding momenta should be added. In order to maintain the on-shell property for the new merged point at $\infty$, one can set $k_2=0$.

The resulting integral is not yet the desired $(L{-}1)$-point amplitude: we still have the auxiliary point $w_3=1$ with its associated momentum $k_3$ floating around. This is addressed by performing yet another scaling, however, this time only for all points except for $w_3$: substituting $y_i=w_i/w_4$ for $i\in\lbrace1,4,\ldots,L\rbrace$ as in the proof of \propref{prop:openphysical} will not change the succession of insertion points, but this time merge $y_4$ with $w_3$. Using the same argument as for the above merger, one will have to take the limit $k_3\to 0$, which corresponds to $s_{3\bullet}\to 0$. 

In summary, the process of taking the regularized limit $x_3\to 0$ of an open Selberg integral $S_L(x_3)$ involves two mergers of points and taking the limits $s_{2\bullet}\to 0$ and $s_{3\bullet}\to 0$ in order to reach the $(L{-}1)$-point open-string scattering amplitude. 

%%%%%%%%%%%%%%%%%%%%%%%%%%%%%%%%%%%%%%%%%%%%%%%%%%%%%%%%%%%%%%%%%%%%%%%%%%%%%%%%

In the limit $x_3\to 1$, the situation is simpler: here the auxiliary point merges with the point at $x_2\,{=}\,1$ right away. As with the above mergers, the momenta corresponding to the merged insertions points should be added: $k'_2=k_2+k_3$, which corresponds to $s'_{ij}\coloneq s_{ij}$ if $i\neq2$ and $s'_{2j}\coloneq s_{2j}+s_{3j}$. Maintaining the on-shell property also for the new merged point leads to setting $k_3\to 0$ right away, corresponding to $s_{3\bullet}\to 0$.

\begin{equation*}
\begin{tikzpicture}
\begin{scope}[scale=0.8]
\draw[thick, ->] (-1.0, 0) -- (7.2, 0);
\foreach \x/\label in {0/{\color{red}$\strut x_1{=}0$}, 1/$\strut x_L$, 1.8/$\strut\ldots$, 2.6/$\strut x_4$, 3.6/{\color{blue}$\strut x_3$}, 4.5/{\color{red}$\strut x_2{=}1$}, 6.5/$\strut x_{L+1}{=}\infty$} {
    \draw[thick] (\x, -0.1) node[below] {\label} -- (\x, 0.1);
}
\draw[very thick, blue] (3.6,-0.15)--(3.6,0.15);
\draw[very thick, red] (0,-0.15)--(0,0.15);
\draw[very thick, red] (4.5,-0.15)--(4.5,0.15);
\draw[->, thick, blue] (3.6,0.4) -- (4.4,0.4);
\draw[dashed] (-0.5, 0) arc[start angle=180, end angle=0, radius=3.5];
\end{scope}
\end{tikzpicture}
\end{equation*}

%%%%%%%%%%%%%%%%%%%%%%%%%%%%%%%%%%%%%%%%%%%%%%%%%%%%%%%%%%%%%%%%%%%%%%%%%%%%%%
%%%%%%%%%%%%%%%%%%%%%%%%%%%%%%%%%%%%%%%%%%%%%%%%%%%%%%%%%%%%%%%%%%%%%%%%%%%%%%
%%%%%%%%%%%%%%%%%%%%%%%%%%%%%%%%%%%%%%%%%%%%%%%%%%%%%%%%%%%%%%%%%%%%%%%%%%%%%%
%%%%%%%%%%%%%%%%%%%%%%%%%%%%%%%%%%%%%%%%%%%%%%%%%%%%%%%%%%%%%%%%%%%%%%%%%%%%%%
\section{Recursion for open-string genus-zero amplitudes}\label{sec:openrec}
We will state the general formalism of the recursion for open-string tree-level amplitudes in~\secref{sec:opengeneral} and discuss boundary values in \secref{sec:openboundaryvalues}. Explicit calculations for $L{=}4$ and $L{=}5$ are provided in \secref{sec:open4} and \secref{sec:open5}, interpolating between the three- and four-point and the four- and five-point amplitudes, respectively. The main results from this section are contained in the literature~\cite{Terasoma,Broedel:2013aza,BK,Kaderli}, but we spell them out in detail as they will serve as a guideline for the corresponding construction in the closed-string case.

\subsection{A holomorphic solution of the KZ equation}\label{sec:opengeneral}
In order to find functions that are deformations of open-string amplitudes like in \eqn{eqn:SL2fixed}, we need the possibility to let permutations act on the integrand of the Selberg integrals.
\begin{defn}\label{lem:permSel}
	Let $L\geq 4$, and let $\sigma\in \mathfrak{S}_{L}$ be a permutation of the indices $1,\ldots ,L$ which belongs to the subgroup $\mathfrak{S}^{1,2,3}_{L}$ of permutations that keep the three indices $1,2,3$ fixed.\\
    We denote by $S^\sigma[i_4,\ldots,i_L](x_3)$ the open Selberg integral obtained by letting a permutation $\sigma\in\mathfrak{S}_{L}^{1,2,3}$ act on the labels of the variables $x_{ij}$ and $s_{ij}$ inside the integrand of a Selberg integral $S[i_4,i_5,\ldots,i_L](x_3)$. 
\end{defn}
\begin{rmk}\label{rmk:permSel}
    For the open Selberg integral with permuted indices we can write
    \begin{equation*}
        S^\sigma[i_4,\ldots,i_L](x_3)=S[\sigma(i_{\sigma^{-1}(4)}),\ldots,\sigma(i_{\sigma^{-1}(L)})](x_3),\quad\sigma\in\mathfrak{S}_L^{1,2,3}.
    \end{equation*}
    This follows, as also explained in \cite{Kaderli}, from the fact that $\prod_{\substack{i,j=1\\i<j}}^L|x_{ij}|^{s_{ij}}$ is invariant under permutation, so we only need to consider the action of $\sigma$ on the term $\prod_{k=4}^L\frac{dx_k}{x_{ki_k}}$: for the permutation acting on a single term from this product $\sigma\left[\frac{dx_{\sigma(k)}}{x_{\sigma(k)\sigma(i_k)}}\right]$, we see that the label $\sigma(k)$ will now have the corresponding index $\sigma(i_k)$. Conversely, this means that the label $k=\sigma(\sigma^{-1}(k))$ of the Selberg integral has the corresponding index $\sigma(i_{\sigma^{-1}(k)})$.
\end{rmk}
For $L\geq 4$, $\sigma\in\mathfrak{S}_L^{1,2,3}$ and  $0<x_3<1$, $i\in\{4,\ldots,L+1\}$ and $|s_{ij}|$ small enough\footnote{See the explanation following \eqn{eq:combSelbopen}. A precise statement on the best possible bound on $|s_{ij}|$ can be deduced by combining straightforward generalizations of Prop.~3.5 and Thm.~4.20 of \cite{Brown:2019wna} to the case of one unintegrated variable.}, we define
\begin{subequations}
\begin{align}\label{eqn:basisopen}
    \hF_{L,i}^\sigma(x_3)\!&\coloneq\!\int_{0\leq x_L\leq\cdots\leq x_4\leq x_3}\prod_{\substack{k,m=1\\k<m}}^L\!|x_{km}|^{s_{km}}\sigma\!\left[\prod_{k=i}^L\left(\frac{s_{1k}}{x_k}+\!\sum_{j=k+1}^L\frac{s_{kj}}{x_{kj}}\right)\!\prod_{m=4}^{i-1}\left(\sum_{\substack{n=2\\n\neq 3}}^{m-1}\frac{s_{nm}}{x_{nm}}\right)\right]\!\prod_{k=4}^L\!dx_k\\
    &=(-1)^i\sum_{(i_4,\ldots,i_L)\in I_i}\left(\prod_{k=4}^Ls_{\sigma(k)\sigma(i_k)}\right)S^\sigma[i_4,\ldots,i_L](x_3)\,,\label{eq:combSelbopen}
\end{align}
\end{subequations}
with the set $I_i\coloneq\{(i_4,\ldots,i_L)\,|\,i_\ell\,{\in}\,\{2,4,5,\ldots,\ell{-}1\}\text{ if }\ell{<}i\text{ and }i_\ell\,{\in}\,\{1,\ell{+}1,\ell{+}2,\ldots,L\}\text{ if }\ell{\geq} i\}$ (using \rmkref{rmk:permSel} or \cite[Eq.~(3.18)]{Kaderli} to find the second line) and where $\sigma$ permutes the indices $4,\ldots,L$ of both the positions $x_k$ as well as the Mandelstam variables $s_{mn}$ inside the parenthesis. The notation $\hF_{L,i}^\sigma$ is essentially borrowed from \cite{Broedel:2013aza, Kaderli} and suggests that we are deforming the color-ordered open string amplitudes $F^\sigma$ introduced in \eqn{eqn:F}. 

As explicitly displayed by \eqn{eq:combSelbopen}, the integrals~$\hF_{L,i}^\sigma(x_3)$ are linear combination of Selberg integrals $S[i_4,\dots,i_L](x_3)$, which a priori diverge at $s_{ij}=0$ (c.f.~\defref{defn:classicselb}). However, for any $x_3\!\!\in\,\, ]0,1[$ Selberg integrals can be analytically continued to meromorphic functions on the whole (Mandelstam-variable) complex space, with possible poles at certain hyperplanes with integer coordinates (see \cite{Brown:2019wna, VanZerb}), and it can be checked with the methods of \cite{Brown:2019wna}, Thm.~4.20, that the polar contributions at the origin cancel in each summand of $\hF_{L,i}^\sigma(x_3)$, due to the monomial in the Mandelstam variables which appear as prefactor. Each summand is therefore well-defined (by analytic continuation) in a neighborhood of $s_{ij}=0$, and in this region~$\hF_{L,i}^\sigma$ defines a multivalued function of $x_3\in\mathbb C\smallsetminus \{0,1\}$, single-valued on any simply connected domain $U\subset \mathbb C\smallsetminus\{0,1\}$ containing $]0,1[$.

Each of the $L{-}2$ vectors $\hF_{L,i}(x_3)\coloneqq(\hF_{L,i}^\sigma(x_3))_{\sigma\in \mathfrak{S}^{1,2,3}_{L}}$, with $i\in\{4,\ldots,L{+}1\}$, is an $(L{-}3)!$-dimensional vector. We combine them to obtain an $(L{-}2)!$-dimensional vector $\hF_L(x_3)$, defined as
\begin{equation}\label{eqn:basisBSST}
    \hF_L(x_3)\coloneq\left(\hF_{L,4}(x_3),\hF_{L,5}(x_3),\ldots,\hF_{L,L+1}(x_3)\right)^T\,.
\end{equation}

\begin{rmk}\label{rmk:ind23}
Notice that none of the involved Selberg integrals has an index $i_k$ equal to $3$, and the first subvector $\hF_{L,4}$ has furthermore no index $i_k$ equal to 2, since the second product in the brackets in \eqref{eqn:basisopen} is empty for $i=4$. This will play a role in the recursive mechanism explained later.
\end{rmk}

The following proposition was first stated in \cite{Broedel:2013aza}. A proof was given in \cite{Kaderli}, but it is useful, in view of the proof of our main result in \secref{sec:closedrec}, to recall the strategy.

\begin{theorem}\label{propKZhol}
	Let $V=(\zC[[s_{ij}]])^{(L-2)!}$ and $U$ any simply connected domain containing $]0,1[$. Then the vector $\hF_L:U\to V$ from \eqn{eqn:basisBSST} satisfies the differential equation
    \begin{align}
	   \label{eqn:KZequation}
        \frac{\partial}{\partial x_3}\hF_L(x_3)=\left(\frac{e_0}{x_3}+\frac{e_1}{x_3-1}\right)\hF_L(x_3),
    \end{align}
    with matrices $e_0,e_1\in\Mat_{(L-2)!\times(L-2)!}(\langle s_{ij}\rangle_\zC)$. Moreover, the regularized boundary values
    \begin{equation}\label{eqn:openboundval}
        \hF_L(0)=\lim_{x_3\to0}x_3^{-e_0}\hF_L(x_3),\quad \hF_L(1)=\lim_{x_3\to1}(1-x_3)^{-e_1}\hF_L(x_3)
    \end{equation}
    are related by
    \begin{equation}\label{eqn:openDrinfeld}
        \hF_L(1)=\Phi \,\hF_L(0),    
    \end{equation}
    where $\Phi$ is the Drinfeld associator from \defref{defn:Drinfeld}.
\end{theorem}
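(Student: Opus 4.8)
The plan is to prove~\eqref{eqn:KZequation} by differentiating the open Selberg integrals under the integral sign, and then to read off~\eqref{eqn:openDrinfeld} from part~(ii) of \thmref{thmKZvectorspace}. For the first step I would work in the region where $\Re(s_{ij})>0$ for all $i,j$, so that every integral converges absolutely and depends holomorphically on the Mandelstam variables; the resulting identity then extends to a neighbourhood of $s_{ij}=0$ by analytic continuation, both sides being meromorphic there (cf.~the discussion preceding the statement). Differentiating a component $S^\sigma[i_4,\ldots,i_L](x_3)$ with respect to $x_3$ yields a boundary contribution from the upper endpoint $x_4=x_3$ of the integration simplex and the $x_3$-derivative of the integrand. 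The boundary term vanishes in this region: the permutation-invariant Koba--Nielsen factor carries a factor $|x_{34}|^{s_{34}}$ which tends to $0$ as $x_4\to x_3$ and which, crucially, is not offset by a pole, because no index $i_k$ equals $3$ (\rmkref{rmk:ind23}). The $x_3$-derivative of the integrand acts only through the factors $|x_3|^{s_{13}}$, $|1{-}x_3|^{s_{23}}$ and $|x_3{-}x_k|^{s_{3k}}$, $k\geq4$, so that schematically
\begin{equation*}
\partial_{x_3}S^\sigma[i_4,\ldots,i_L](x_3)=\frac{s_{13}}{x_3}S^\sigma[\ldots](x_3)+\frac{s_{23}}{x_3-1}S^\sigma[\ldots](x_3)+\sum_{k=4}^{L}\int_{0\leq x_L\leq\cdots\leq x_4\leq x_3}\frac{s_{3k}}{x_3-x_k}\prod_{1\leq i<j\leq L}|x_{ij}|^{s_{ij}}\prod_{m=4}^{L}\frac{dx_m}{x_{mi_m}}\,,
\end{equation*}
and the same structure holds term by term for the combinations $\hF^\sigma_{L,i}(x_3)$.

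The terms carrying the spurious poles $\frac{1}{x_3-x_k}$ must then be rewritten inside the $\zC$-span of the $(L{-}2)!$ components of $\hF_L(x_3)$. The conceptual reason this is possible is that these components are the periods obtained by pairing the simplex cycle $\gamma=\{0\leq x_L\leq\cdots\leq x_4\leq x_3\}$ against a basis of the twisted de Rham cohomology $H^{L-3}_{\mathrm{dR}}(\mathrm{Conf}_{L-3}(\PC\smallsetminus\{0,1,x_3,\infty\}),\nabla_\omega)$ (of dimension $(L{-}2)!$ by \rmkref{rmk:twistedderham}, with the explicit bases of the cited references), so that $\hF_L$ trivialises the cohomology bundle over $x_3\in\PC\smallsetminus\{0,1,\infty\}$; the $x_3$-derivative above is then the Gauss--Manin connection of this family, which --- since the fibre configuration degenerates only when $x_3$ collides with $0$, $1$ or $\infty$ --- extends to a logarithmic connection on all of $\PC$ with poles confined to $\{0,1,\infty\}$, forcing it to have the KZ shape $\frac{e_0}{x_3}+\frac{e_1}{x_3-1}$ (with residue $-e_0-e_1$ at $\infty$). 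In practice the rewriting is carried out by integration by parts: adding to the integrand total $x_k$-derivatives $d(u\,R)$, $R$ rational, contributes only boundary terms (vanishing in the region of convergence) and trades each pole $\frac{1}{x_3-x_k}$ for a $\zC$-linear combination of the basis forms with coefficients rational in $x_3$, regular on $\PC\smallsetminus\{0,1,\infty\}$, and linear in the $s_{ij}$. This is the computation of Terasoma, cast in the present notation by Kaderli. Collecting the contributions gives~\eqref{eqn:KZequation} with $e_0,e_1\in\Mat_{(L-2)!\times(L-2)!}(\langle s_{ij}\rangle_\zC)$.

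For~\eqref{eqn:openDrinfeld} I would then appeal to part~(ii) of \thmref{thmKZvectorspace}. Its hypotheses are satisfied: we are in the setting of \rmkref{rmk:Terasoma} with $R=\zC[[s_{ij}]]$ graded by total degree, so $R_1=\langle s_{ij}\rangle_\zC$ and $e_0,e_1\in\Mat_{(L-2)!\times(L-2)!}(R_1)$ define a homogeneous rational representation, whence $L(z)=\sum_w L_w(z)\,w$ and $\Phi=\sum_w\zeta_w\,w$ define elements of $\End(V)$ for $V=(\zC[[s_{ij}]])^{(L-2)!}$, the series converging because only finitely many words contribute in each fixed $s$-degree. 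Since $\hF_L$ is a holomorphic solution of~\eqref{KZeq} on the simply connected $U\supset\,]0,1[$ --- multivaluedness on $\zC\smallsetminus\{0,1\}$ and single-valuedness on $U$ having been established before the statement --- \thmref{thmKZvectorspace}(ii) applies and gives that the regularized values $\hF_L(0)=\lim_{x_3\to0}x_3^{-e_0}\hF_L(x_3)$ and $\hF_L(1)=\lim_{x_3\to1}(1{-}x_3)^{-e_1}\hF_L(x_3)$ from~\eqref{eqn:openboundval} (note that $x_3^{-e_0}=\exp(-e_0\log x_3)$, etc.) are finite and satisfy $\hF_L(1)=\Phi\,\hF_L(0)$.

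I expect the genuine obstacle to be the middle step --- proving that $\partial_{x_3}$ closes on the $(L{-}2)!$-dimensional span of the components of $\hF_L(x_3)$ and that the resulting connection matrix is \emph{exactly} $\frac{e_0}{x_3}+\frac{e_1}{x_3-1}$ with entries linear in the $s_{ij}$, i.e.~explicitly pinning down the residue matrices $e_0,e_1$. This depends on the specific combination chosen in~\eqref{eqn:basisopen}, on the absence of the index $3$ among the $i_k$ (\rmkref{rmk:ind23}), and on exhibiting a twisted-cohomology basis among the relevant forms; the cancellation of the would-be poles at $x_3=x_k$ in the interior and of a polynomial part at $x_3=\infty$ is precisely what makes the KZ form emerge, and is the delicate bookkeeping of Terasoma (reformulated by Kaderli) that one has to reproduce.
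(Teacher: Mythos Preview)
Your proposal is correct and follows essentially the same route as the paper's own sketch: differentiate under the integral sign (justifying boundary vanishing via $\Re(s_{ij})>0$ and the absence of index~$3$), invoke that the forms $\hf_{L,i}^\sigma$ give a basis of the twisted de Rham cohomology $H^{L-3}_{\mathrm dR}(\mathrm{Conf}_{L-3}(\PC\smallsetminus\{0,1,x_3,\infty\}),\nabla_{d\log u})$ to rewrite the spurious $\frac{1}{x_3-x_k}$ terms modulo $\nabla$-exact forms, dispose of the exact part by Stokes, cite Kaderli/Terasoma for the explicit $e_0,e_1$, and conclude via \thmref{thmKZvectorspace}(ii). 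Your Gauss--Manin/logarithmic-connection framing is a pleasant conceptual gloss the paper does not spell out, but the underlying mechanism is identical.
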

\begin{sketch}
Let us recall from \rmkref{rmk:twistedderham} the shorthand notation for the Koba--Nielsen factor
\begin{equation*}
    u\defeq \prod_{k=4}^L x_k^{s_{1k}}(1\,{-}\,x_k)^{s_{2k}}\,\prod_{3\leq i<j\leq L}^L x_{ij}^{s_{ij}}\,,
\end{equation*}
which we now consider, for fixed $x_3\in\mathbb C \smallsetminus\{0,1\}$ and $s_{ij}\in\mathbb C\smallsetminus \mathbb Z$, as the branch of a multivalued function on $\mathrm{Conf}_{L-3}(\mathbb P^1_{\mathbb C}\smallsetminus \{0,1,x_3,\infty\})$ which takes the principal value on $\Delta\defeq 0\leq x_L\leq\cdots\leq x_4\leq x_3$.

For fixed $x_3\in\mathbb C \smallsetminus\{0,1\}$, the integrand of any $\hF_{L,i}^\sigma(x_3)$ can be written (up to a constant depending on $x_3$ and $s_{ij}$) as the product of $u$ with 
\begin{equation}\label{eq:integrandSelbvec}
\hf_{L,i}^\sigma\coloneq \sigma\!\left[\prod_{k=i}^L\left(\frac{s_{1k}}{x_k}+\!\sum_{j=k+1}^L\frac{s_{kj}}{x_{kj}}\right)\!\prod_{m=4}^{i-1}\left(\sum_{\substack{n=2\\n\neq 3}}^{m-1}\frac{s_{nm}}{x_{nm}}\right)\right]\!\prod_{k=4}^L\!dx_k\,,
\end{equation}
a global holomorphic $(L{-}3,0)$-form (i.e.~algebraic) on $\mathrm{Conf}_{L-3}(\mathbb P^1_{\mathbb C}\smallsetminus \{0,1,x_3,\infty\})$.

In order to prove the first statement, a key observation (see \cite{Kaderli}, \S3.2) is that, for any $x_3\in\mathbb C \smallsetminus\{0,1\}$, the $(L{-}2)!$ differential forms $\hf_{L,i}^\sigma$ constitute (representatives for) a $\mathbb C$-vector-space basis of the twisted de Rham cohomology $H^{L-3}_{\mathrm dR}(\mathrm{Conf}_{L-3}(\mathbb P^1_{\mathbb C}\smallsetminus \{0,1,x_3,\infty\}),\nabla_{d\log (u)})$ (see \rmkref{rmk:twistedderham}).

Notice also that one can interchange derivation with respect to $x_3$ with integration because, by the fact that $\mathrm{Re}(s_{ij})>0$ for any $i,j$ and that all $\hf_{L,i}^\sigma$ are combinations of Selberg integrands with all indices $\neq 3$ (see \rmkref{rmk:ind23})), $\frac{\partial}{\partial x_3}(u\hf_{L,i}^\sigma)$ is integrable on the integration domain~$\Delta$ and $u\hf_{L,i}^\sigma$ vanishes when restricted to $x_3=x_4$. One can write $\frac{\partial}{\partial x_3}(u\hf_{L,i}^\sigma)=u\phi_{L,i}^\sigma$, with 
\begin{equation*}
\phi_{L,i}^\sigma\defeq \hf_{L,i}^\sigma\,\sum_{k=4}^{L}\frac{s_{3 k}}{x_{3 k}}+\frac{\partial \hf_{L,i}^\sigma}{\partial x_3}
\end{equation*}
a representative of a cohomology class in $H^{L-3}_{\mathrm dR}(\mathrm{Conf}_{L-3}(\mathbb P^1_{\mathbb C}\smallsetminus \{0,1,x_3,\infty\}),\nabla_{d\log (u)})$.

Combining these two facts it follows that, for every $i=4,\ldots ,L+1$ and $\sigma\in\mathfrak{S}^{1,2,3}_{L}$, and for fixed values of $x_3$ and $s_{ij}$, one has
\begin{equation*}
\phi_{L,i}^\sigma
=\sum_{\substack{i=4,\ldots,L+1\\\sigma\in\mathfrak{S}^{1,2,3}_{L}}}\lambda_{i,\sigma}\hf_{L,i}^\sigma+\nabla_{d\log (u)}(g)\,,
\end{equation*}
where $\lambda_{i,\sigma}\,{\in}\,\mathbb C$, $g\,{=}\,\sum_{j=4}^L g_j\prod_{k\neq j}dx_k$ with $g_j\,{\in}\,\mathbb C[\{x_k,\frac{1}{x_{k}},\frac{1}{x_{k}-1},\frac{1}{x_{k}-x_3}\}_{4\leq k\leq L},\{\frac{1}{x_{km}}\}_{4\leq k<m\leq L}]$, which in turn implies (see \rmkref{rmk:twistedderham}) that
\begin{equation*}
    \frac{\partial}{\partial x_3}(u\hf_{L,i}^\sigma)
    =\sum_{\substack{i=4,\ldots ,L+1\\\sigma\in\mathfrak{S}^{1,2,3}_{L}}}\lambda_{i,\sigma}(u\hf_{L,i}^\sigma)+d(ug)\,.
\end{equation*}
Notice that both the numbers $\lambda_{i,\sigma}$ and the coefficients of the rational functions $g_j$ depend on $x_3$ and~$s_{ij}$.

Integrability of $\frac{\partial}{\partial x_3}(u\hf_{L,i}^\sigma)$ and of all $u\hf_{L,i}^\sigma$ implies integrability of the exact differential $d(ug)$. One can check that this implies that the rational functions~$g_j$ can have at most simple poles, and must be regular along the divisors $x_{i}=x_{i-1}$ whose union forms the boundary $\partial\Delta$ of the integration domain $\Delta$. Combined with the fact that the twist factor~$u$ vanishes along these divisors for $\mathrm{Re}(s_{ij})>0$, it follows by Stokes that 
$\int_{\Delta}d(ug)=\int_{\partial\Delta}ug=0$, which implies that
\begin{equation}
    \frac{\partial}{\partial x_3}\hF_{L,i}^\sigma=\sum_{\substack{i=4,\ldots ,L+1\\\sigma\in\mathfrak{S}^{1,2,3}_{L}}}\lambda_{i,\sigma}\hF_{L,i}^\sigma.
\end{equation}
By careful inspection of the $x_3$-dependence and $s_{ij}$-dependence of the derivatives $\frac{\partial}{\partial x_3}\hf_{L,i}^\sigma$, one can prove that the above linear combination takes indeed the form \eqref{eqn:KZequation}, with matrices $e_0,e_1\,{\in}\,\Mat_{(L-2)!\times(L-2)!}(\langle s_{ij}\rangle_\zC)$ \cite{Kaderli}. 

The second statement of the proposition is then proved by combining the first statement with \thmref{thmKZvectorspace}~\textit{(ii)}. 
\end{sketch}

\subsection{Computing the boundary values}
\label{sec:openboundaryvalues}
In order to compute the regularized boundary values $\hF_L(0)$ and $\hF_L(1)$ we begin by investigating properties of the matrices $e_0$ and $e_1$, which are involved in the regularization procedure:
\begin{proposition}\label{prop:matrices}
	The matrices $e_0$ and $e_1$ from the KZ equation (\ref{eqn:KZequation}) are of the block forms
    \begin{equation*}
		e_0=\left(\begin{array}{cc}
			\smax\,I_{(L-3)!} & a\\
			0 & b
		\end{array}\right),
        \qquad
        e_1=\left(\begin{array}{cc}
			s_{23}\,I_{(L-3)!} & 0\\
			c & d
		\end{array}\right),
	\end{equation*}
    where $a$ and $c$ are matrices of dimension $(L{-}3)!\times(L{-}3)!(L{-}3)$, and $b$ and $c$ are of dimensions $(L{-}3)!(L{-}3)\times(L{-}3)!(L{-}3)$ and $(L{-}3)!(L{-}3)\times(L{-}3)!(L{-}3)$, respectively.
\end{proposition}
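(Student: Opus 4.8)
The plan is to identify the matrices $e_0,e_1$ as the residues at $x_3=0$ and $x_3=1$ of the connection matrix $\frac{e_0}{x_3}+\frac{e_1}{x_3-1}$ of \eqref{eqn:KZequation}, by combining the leading asymptotics of the solution $\hF_L(x_3)$ at $x_3=0$ and $x_3=1$ (established in \propref{prop:openlimits}) with the cohomological description of the KZ matrices from the proof of \thmref{propKZhol}.

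First I would extract a common prefactor. By \eqref{eqn:opensub}, and since both $\smax$ and $s_{23}$ are $\mathfrak{S}^{1,2,3}_L$-invariant, every constituent Selberg integral $S^\sigma[i_4,\ldots,i_L](x_3)$ of $\hF_L(x_3)$ factors as $x_3^{\smax}(1-x_3)^{s_{23}}$ times a reduced $w$-integral, so that $\hF_L(x_3)=x_3^{\smax}(1-x_3)^{s_{23}}\,G(x_3)$ for the vector $G$ of reduced integrals. Substituting into \eqref{eqn:KZequation} shows that $\partial_{x_3}G=\big(\frac{e_0-\smax I}{x_3}+\frac{e_1-s_{23}I}{x_3-1}\big)G$. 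By \rmkref{rmk:ind23} the sub-block $G_{L,4}=(G^\sigma_{L,4})_\sigma$ comes from Selberg integrals with all indices different from $2$ and $3$, hence (by \eqref{eqn:opensub} and \propref{prop:openlimits}(i)) is holomorphic and generically nonzero at $x_3=0$; on the other hand, for $i\geq 5$ the second product in the bracket of \eqref{eqn:basisopen} contains the term $m=4$, for which the inner sum reduces to the single summand with $n=2$, so every Selberg integral occurring in $\hF^\sigma_{L,i}$ carries an index equal to $2$, hence a factor $\frac{1}{w_k-1/x_3}$ after the substitution \eqref{eqn:opensub}, whence $G_{L,i}$ vanishes at $x_3=0$. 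Thus $G(0)=(G_{L,4}(0),0,\ldots,0)^{T}$, and holomorphy of $G$ at $0$ forces $(e_0-\smax I)G(0)=0$; a symmetric analysis at $x_3=1$ (using \propref{prop:openlimits}(ii) and the absence of logarithms there) shows that $(e_1-s_{23}I)$ annihilates the regularised limit $\lim_{x_3\to1}(1-x_3)^{-s_{23}}\hF_L(x_3)$. This already fixes the eigenvalues $\smax$ and $s_{23}$ attached to the top-left blocks and motivates the splitting into the $i=4$ part, of dimension $|\mathfrak{S}^{1,2,3}_L|=(L-3)!$, and the $i\geq 5$ part, of dimension $(L-3)(L-3)!$, which gives the stated block sizes.

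To upgrade these eigenvalue statements to the full block decomposition I would return to the cohomological computation inside the proof of \thmref{propKZhol}: there $e_0$ and $e_1$ are the residues at $x_3=0$ and $x_3=1$ of the matrix expressing the classes of $\phi_{L,i}^\sigma=\hf_{L,i}^\sigma\sum_{k=4}^L\frac{s_{3k}}{x_{3k}}+\partial_{x_3}\hf_{L,i}^\sigma$ in the basis $\{\hf_{L,j}^\tau\}$ of $H^{L-3}_{\mathrm{dR}}$. Two structural inputs do the job: (a) $\hf_{L,4}^\sigma$ is independent of $x_3$, so $\phi_{L,4}^\sigma=\hf_{L,4}^\sigma\sum_k\frac{s_{3k}}{x_3-x_k}$, and the partial-fraction and $\nabla_{d\log(u)}$-reduction of this class has $\frac{1}{x_3}$-residue equal to $\smax[\hf_{L,4}^\sigma]$ (diagonal in $\sigma$) and contains no $[\hf_{L,i}^\tau]$ with $i\geq 5$ in its $\frac{1}{x_3-1}$-residue, which yields $(e_0)_{11}=\smax I_{(L-3)!}$ and $(e_1)_{12}=0$; and (b) the reduction respects the \emph{index-$2$ filtration}, so that for $i\geq 5$ no $\frac{1}{x_3}$-pole in front of any $[\hf_{L,4}^\tau]$ is generated from $\phi_{L,i}^\sigma$, which yields $(e_0)_{21}=0$. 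The remaining blocks $a,b,c,d$ are then unconstrained and have the claimed dimensions; facts (a) and (b) are in substance \cite[Lemmas 5.5 and 5.7]{Terasoma} and \cite[\S3.2]{Kaderli}.

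The main obstacle is precisely step (a)--(b). The $x_3$-dependence of the connection is produced only when poles such as $\frac{1}{x_k(x_3-x_k)}=-\frac{1}{x_3}\big(\frac{1}{x_k-x_3}-\frac{1}{x_k}\big)$ are rewritten during the cohomological reduction, so the residue at $x_3=0$ receives contributions from the whole logarithmic derivative $d\log(u)$, which carries all the $s_{ij}$; verifying that these combine to give exactly $\smax$ (and not merely $\sum_k s_{3k}$), and that the reduction never transports a class from the $i\geq 5$ block back into the $i=4$ block along the $\frac{1}{x_3}$-singularity, nor the other way along the $\frac{1}{x_3-1}$-singularity, is the genuine (if, once organised, routine) bookkeeping, and is exactly what the cited references carry out. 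I would therefore either invoke those computations directly or re-run them with the filtration kept explicit throughout.
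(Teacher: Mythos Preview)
Your proposal is correct and arrives at the same conclusion by essentially the same route as the paper, but the paper's organisation is more direct on the one point you flag as the ``main obstacle''. The paper also cites \cite{Kaderli} for $e_1$ and \cite{Terasoma} for the eigenvalue $\smax$, and then establishes the $e_0$ block form by computing at the level of individual Selberg integrals rather than of the cohomology classes $\hf_{L,i}^\sigma$: it takes a Selberg integral with all $i_k\neq 2,3$, writes it in the $w$-substituted form \eqref{eqn:opensub} so that the prefactor $x_3^{\smax}(1-x_3)^{s_{23}}$ is already manifest, and differentiates. The scalar derivative of the prefactor immediately produces $\frac{\smax}{x_3}+\frac{s_{23}}{x_3-1}$, so your worry that ``these combine to give exactly $\smax$'' evaporates; the only leftover is $\sum_\ell \frac{s_{2\ell}}{x_3}\frac{x_\ell}{x_\ell-1}$, which after partial fractioning lands in Selberg integrals with one index equal to $2$, i.e.\ in the block $a$. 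For $(e_0)_{21}=0$ the paper differentiates a Selberg integral with at least one $i_k=2$ and checks that the $\frac{1}{x_3}$-pole generated by partial-fractioning the $\sum_k\frac{s_{3k}}{x_{3k}}$ terms never removes that index-$2$ label---your ``index-$2$ filtration'' statement (b), but carried out explicitly.

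Your asymptotic first paragraph is a pleasant addition that the paper does not include; note, though, that to upgrade ``$(e_0-\smax I)G(0)=0$'' into the full first block-column statement $(e_0)_{11}=\smax I$, $(e_0)_{21}=0$, you would need the $(L{-}3)!$ components $G_{L,4}^\sigma(0)$ to be linearly independent over $\mathbb C(s_{ij})$, which is true but is itself a nontrivial input. As written, your actual proof lives in the second paragraph, and there it coincides with the paper's modulo the cosmetic choice of working with $[\hf_{L,i}^\sigma]$ versus $S[i_4,\ldots,i_L]$.
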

\begin{proof}
For $e_1$, the result with derivation can be found in \cite[App.~A.2]{Kaderli}. Note, that in \rcite{Kaderli} a different basis is chosen, which agrees with our basis in the first $(L{-}3)!$ components but differs in the other components. However, this does not change the statement, since one can apply a basis transformation matrix of the block form
\begin{equation*}
    \left(\begin{array}{cc}I_{(L-3)!} & 0\\ Q_1 & Q_2\end{array}\right)
\end{equation*}
to translate between the two bases. Using this to transform the matrix $e_1$, one finds that the structure remains the same, just with different matrices $\tilde{c},\tilde{d}$, which are not specified further anyway.

    For $e_0$ a similar statement was proven in \rcite{Terasoma}, showing that $\smax$ is indeed the maximal eigenvalue. The block matrix form was not explicitly shown there though, which we will prove now:
    To derive  the block form of $e_0$, consider a Selberg integral $S[i_4,\ldots,i_L](x_3)$. For the first block of the vector $\hF_L(x_3)$, i.e.~$\hF_{L,4}(x_3)$, we have for all indices $i_k\neq2,3$ (as can be seen from \eqn{eq:combSelbopen} for $i=4$). Thus, using the formula for a Selberg integral after the substitution $w_k=x_k/x_3$ as in \eqn{eqn:opensub}, direct computation of the derivative yields
    \begin{align*}
        \frac{d}{dx_3}&S[i_4,\ldots,i_L](x_3)=\left(\frac{\smax}{x_3}+\frac{s_{23}}{x_3-1}\right)S[i_4,\ldots,i_L](x_3)\notag\\
        &\quad+\sum_{\ell=4}^Lx_3^{\smax}(1\,{-}\,x_3)^{s_{23}}\int_{0\leq w_L\leq\cdots\leq w_4\leq1}\frac{s_{2\ell}w_\ell}{x_3w_\ell\,{-}\,1}\prod_{\substack{i,j=4\\i<j}}^Lw_{ij}^{s_{ij}}\prod_{k=4}^Lw_k^{s_{1k}}(1\,{-}\,x_3w_k)^{s_{2k}}(1\,{-}\,w_k)^{s_{3k}}\frac{dw_k}{w_{ki_k}}\\
        &=\left(\frac{\smax}{x_3}+\frac{s_{23}}{x_3-1}\right)S[i_4,\ldots,i_L](x_3)+\sum_{\ell=4}^L\frac{s_{2\ell}}{x_3}\int_{0\leq x_L\leq\cdots\leq x_4\leq x_3}\frac{x_\ell}{x_\ell\,{-}\,1}\prod_{\substack{i,j=1\\i<j}}^L|x_{ij}|^{s_{ij}}\prod_{k=4}^L\frac{dx_k}{x_{ki_k}},
    \end{align*}
    where in the second step, we substituted back. Now, partial fractioning can be used to reduce the remaining integrals to the ones without any index equal to 3. Since we have the term $1/(x_\ell\,{-}\,1)$, and all $i_k\neq2$, we will end up with Selberg integrals that have one index equal to 2, so that these terms will contribute to the second block of the matrix $e_0$. This already shows
    \begin{equation*}
		e_0=\left(\begin{array}{cc}
			\smax\,I_{(L-3)!} & a\\
			? & ?
		\end{array}\right),
	\end{equation*}
    for a $(L{-}3)!\times(L{-}3)!(L{-}3)$ dimensional matrix $a$. To find the remaining parts of the matrices, we need to consider the derivative of a Selberg integral where at least one index is equal to 2:
    \begin{align*}
        \frac{d}{dx_3}&S[i_4,\ldots,i_L](x_3)=\frac{d}{dx_3}x_3^{s_{13}}(1\,{-}\,x_3)^{s_{23}}\int_{0\leq x_L\leq\cdots\leq x_4\leq x_3}\prod_{\substack{i,j=1\\i<j}}^L|x_{ij}|^{s_{ij}}\prod_{k=4}^L\frac{dx_k}{x_{ki_k}}\notag\\
        &=\left(\frac{s_{13}}{x_3}+\frac{s_{23}}{x_3-1}\right)S[i_4,\ldots,i_L](x_3)\notag\\
        &\hspace{15ex}+\sum_{k=4}^L s_{3k}x_3^{s_{13}}(1\,{-}\,x_3)^{s_{23}}\int_{0\leq x_L\leq\cdots\leq x_4\leq x_3}\prod_{\substack{i,j=1\\i<j}}^L|x_{ij}|^{s_{ij}}\frac{1}{x_{3k}}\prod_{j=4}^L\frac{dx_j}{x_{ji_j}}.
    \end{align*}
    Again, partial fractioning can be used to reduce the integral to a linear combination of Selberg integrals, producing a factor of $1/x_3$ or $1/(x_3\,{-}\,1)$ in the process (depending on the exact indices $i_k$). Since at least one index was equal to 2, if we result in a factor $1/x_3$ the resulting Selberg integrals after reducing to the basis will have at least one index equal to 2, implying that the third block of $e_0$ must be 0, thus proving the statement \textit{(i)}.
\end{proof}

\subsection*{\boldmath Limit $x_3\,{\to}\,0$} Let us compute the regularized boundary value $\hF_L(0)=\lim_{x_3\to0}x_3^{-e_0}\hF_L(x_3)$:
\begin{proposition}\label{prop:limits0open}
    The vector $\hF_L(0)=(\hF_{L,4}(0),\ldots,\hF_{L,L+1}(0))^T$ is such that
    \begin{align*}
        \hF_{L,4}(0)&=\left(\!\int_{0\leq x_L\leq\ldots\leq x_4\leq 1}\prod_{\substack{i,j=4\\i<j}}^Lx_{ij}^{s_{ij}}\,\sigma\!\left[\prod_{k=4}^L\left(\frac{s_{1k}}{x_k}+\sum_{j=k+1}^L\frac{s_{kj}}{x_{kj}}\right)\right]\prod_{k=4}^L\, x_k^{s_{1k}}(1\,{-}\,x_k)^{s_{3k}}dx_k\!\right)_{\sigma\in \mathfrak{S}^{1,2,3}_{L}}\\
        \hF_{L,i}(0)&=(0,\ldots,0),\quad\text{for }i\in\{5,\ldots,L+1\}.
    \end{align*}
\end{proposition}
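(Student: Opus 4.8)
The plan is to combine the block structure of $e_0$ from \propref{prop:matrices} with Terasoma's limit formula \propref{prop:openlimits}\,\textit{(i)}. Write $e_0=\smax\,I+N$ with $N=\smat{0&a\\0&b-\smax I}$, so that $x_3^{-e_0}=x_3^{-\smax}\,x_3^{-N}$, where $x_3^{-N}$ is block upper triangular with diagonal blocks $I$ and $x_3^{\,\smax I-b}$ and off-diagonal $(1,2)$-block $M(x_3)=a\int_0^{-\log x_3}e^{(b-\smax I)t}\,dt$. Splitting $\hF_L=(\hF_{L,4},\hF_{L,\ge 5})^{T}$ with $\hF_{L,\ge 5}\coloneq(\hF_{L,5},\dots,\hF_{L,L+1})^{T}$, the first block of $x_3^{-e_0}\hF_L(x_3)$ equals $x_3^{-\smax}\hF_{L,4}(x_3)+M(x_3)\,\hF_{L,\ge 5}(x_3)$, while the remaining blocks form $x_3^{\,\smax I-b}\bigl(x_3^{-\smax}\hF_{L,\ge 5}(x_3)\bigr)$. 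Since $\hF_L$ solves the KZ equation, \thmref{propKZhol} already guarantees that the regularized limit $\hF_L(0)=\lim_{x_3\to 0}x_3^{-e_0}\hF_L(x_3)$ exists, so it is enough to identify it block by block.

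The two inputs I would feed in are the following. First, every Selberg integral occurring in $\hF_{L,i}^\sigma$ with $i\ge 5$ carries the index $i_4=2$, as one reads off \eqref{eq:combSelbopen} together with the definition of the index sets $I_i$ (cf.\ also \rmkref{rmk:ind23}); since any $\sigma\in\mathfrak{S}^{1,2,3}_{L}$ fixes the label $2$, the permuted integral $S^\sigma[i_4,\dots,i_L]$ also carries an index equal to $2$, and therefore \propref{prop:openlimits}\,\textit{(i)} forces $x_3^{-\smax}\hF_{L,\ge 5}(x_3)\to 0$ as $x_3\to 0$. Second, $\smax$ is the maximal eigenvalue of $e_0$ (\propref{prop:matrices}), so every eigenvalue $\mu$ of $b$ satisfies $\Re(\smax-\mu)>0$ once $\Re(s_{ij})>0$; hence $x_3^{\,\smax I-b}\to 0$ as $x_3\to 0^+$, and the integral defining $M(x_3)$ converges, so $M(x_3)$ stays bounded. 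Combining the two: the lower blocks are $x_3^{\,\smax I-b}\cdot\bigl(x_3^{-\smax}\hF_{L,\ge 5}(x_3)\bigr)\to 0$, which gives $\hF_{L,i}(0)=(0,\dots,0)$ for $i\in\{5,\dots,L+1\}$; and in the first block $M(x_3)\,\hF_{L,\ge 5}(x_3)=M(x_3)\cdot\bigl(x_3^{-\smax}\hF_{L,\ge 5}(x_3)\bigr)\to 0$ as well, so that $\hF_{L,4}(0)=\lim_{x_3\to 0}x_3^{-\smax}\hF_{L,4}(x_3)$.

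It then remains to evaluate this last limit componentwise, which is the routine part. Performing the substitution $w_k=x_k/x_3$ in the integral \eqref{eqn:basisopen} exactly as in \eqref{eqn:opensub} --- each factor $s_{1k}/x_k$ and $s_{kj}/x_{kj}$ producing a $x_3^{-1}$ that is cancelled by $dx_k=x_3\,dw_k$ --- extracts the prefactor $x_3^{\smax}(1-x_3)^{s_{23}}$, the exponent of $x_3$ being $\smax$ by \eqref{eqn:smax}. After dividing by $x_3^{\smax}$, the integrand depends on $x_3$ only through $(1-x_3)^{s_{23}}$ and the factors $(1-x_3 w_k)^{s_{2k}}$, which are bounded uniformly on the simplex $0\le w_L\le\cdots\le w_4\le 1$ for $0\le x_3\le\tfrac12$; so dominated convergence (valid for $\Re(s_{ij})>0$, then extended to a neighbourhood of $s_{ij}=0$ by analytic continuation) lets me pass the limit $x_3\to 0$ inside the integral, giving precisely the asserted formula for $\hF_{L,4}(0)$. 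Equivalently, one may apply \propref{prop:openlimits}\,\textit{(i)} summand by summand to the expansion \eqref{eq:combSelbopen}, all of whose Selberg integrals have indices $\ne 2$, and reassemble. I expect the only genuinely delicate point to be the middle step: one has to verify that neither the off-diagonal mixing encoded by $a$ nor the subdominant block $b$ contaminates the naive limit, and this is exactly where the spectral input of \propref{prop:matrices} (maximality of $\smax$) and the vanishing of the ``index-$2$'' Selberg integrals from \propref{prop:openlimits} must be used at the same time.
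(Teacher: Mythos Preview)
Your proposal is correct and follows essentially the same approach as the paper: both combine the block structure of $e_0$ from \propref{prop:matrices} with \propref{prop:openlimits}\,\textit{(i)}, using that $\smax$ is the strictly maximal eigenvalue of $e_0$ and that all Selberg integrals in $\hF_{L,\ge 5}$ carry an index equal to $2$. The paper compresses your matrix analysis into the single statement $\lim_{x_3\to 0}x_3^{\smax}x_3^{-e_0}=\smat{I & A\\0&0}$ and then multiplies by $\lim_{x_3\to 0}x_3^{-\smax}\hF_L(x_3)$; you unpack this into the diagonal and off-diagonal pieces of $x_3^{-N}$, which is the same computation written out.

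One notational slip: the first block of $x_3^{-e_0}\hF_L(x_3)$ is $x_3^{-\smax}\hF_{L,4}(x_3)+x_3^{-\smax}M(x_3)\,\hF_{L,\ge 5}(x_3)$, not $x_3^{-\smax}\hF_{L,4}(x_3)+M(x_3)\,\hF_{L,\ge 5}(x_3)$ as you wrote; your subsequent rewriting $M(x_3)\cdot\bigl(x_3^{-\smax}\hF_{L,\ge 5}(x_3)\bigr)$ is the correct expression (the scalar $x_3^{-\smax}$ commutes through), so the argument goes through unchanged once this factor is restored.
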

\begin{proof}
    Using the results from \propref{prop:matrices} for $e_0$, in the limit $x_3\to0$ the regularized behavior of the matrix exponential is
		\begin{align*}
			\lim_{x_3\to0}x_3^{\smax}x_3^{-e_0}=\left(
			\begin{array}{cc}
				I_{(L-3)!} & A\\
				0 & 0
			\end{array}
			\right)\,,
		\end{align*}
  for a $(L{-}3)!\times (L{-}3)!(L{-}3)$ matrix $A$. Combining this with the decomposition into Selberg integrals \eqref{eq:combSelbopen} and the result \textit{(i)} of \propref{prop:openlimits} the statement of this proposition follows.
\end{proof}

\noindent Taking additionally the auxiliary Mandelstam variables $s_{3\bullet}\to0$ yields the following result:
\begin{proposition}\label{prop:limits0openphys}
    Let $\hF_{L,4}(0)=\left(\hF_{L,4}^{(1)}(0),\hF_{L,4}^{(2)}(0)\right)^T$, where $\hF_{L,4}^{(1)}(0)$ contains the first $(L{-}4)!$ components $\hF_{L,4}^{\sigma}(0)$, where $\sigma\in\mathfrak{S}_L^{1,2,3,4}$ additionally fixes the index 4, while $\hF_{L,4}^{(2)}(0)$ contains the remaining components. Then one finds
    \begin{align*}
        \lim_{s_{3\bullet}\to0}\hF_{L,4}^{(1)}(0)&=\left(\int_{\Delta_{L-4}}\prod_{\substack{i,j=5\\i<j}}^Lw_{ij}^{s_{ij}}\,\sigma\!\!\left[\prod_{k=5}^L\left(\frac{s_{1k}}{w_k}+\sum_{j=k+1}^L\frac{s_{kj}}{w_{kj}}\right)\right]\prod_{k=5}^Lw_k^{s_{1k}}(1\,{-}\,w_k)^{s_{4k}}dw_k\right)_{\sigma\in \mathfrak{S}^{1,2,3,4}_{L}}\\
        \lim_{s_{3\bullet}\to0}\hF_{L,4}^{(2)}(0)&=(0,\ldots,0),
    \end{align*}
    where $\lim_{s_{3\bullet}\to0}$ refers to the limit where all Mandelstam variables $s_{3i}$, for $i=1,\ldots,L$, tend to~0 and $\Delta_{L-4}$ denotes the simplex $\{0\leq w_L\leq\ldots\leq w_5\leq 1\}$.
\end{proposition}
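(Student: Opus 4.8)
The plan is to start from the explicit integral formula for $\hF_{L,4}(0)$ obtained in \propref{prop:limits0open}, carry out the limit $s_{3\bullet}\to 0$ under the integral sign, and then integrate out a single variable by recognizing part of the integrand as a total derivative. First I would note that as $s_{3\bullet}\to 0$ every factor $(1-x_k)^{s_{3k}}$ in the integrand tends to $1$, and that this limit may be exchanged with the integration by dominated convergence: since $|(1-x_k)^{s_{3k}}|\le 1$ on $[0,1]$, the integrand is dominated by its value at $s_{3\bullet}=0$, which is absolutely integrable in the assumed region $\mathrm{Re}(s_{ij})>0$ (the standard Selberg estimate; cf.\ \cite{VanZerb,Brown:2019wna}). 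Hence, for every $\sigma\in\mathfrak{S}^{1,2,3}_L$,
\begin{equation*}
\lim_{s_{3\bullet}\to 0}\hF_{L,4}^\sigma(0)=\int_{0\le x_L\le\cdots\le x_4\le 1}\prod_{\substack{i,j=4\\i<j}}^L x_{ij}^{s_{ij}}\;\sigma\!\left[\prod_{k=4}^L\Big(\frac{s_{1k}}{x_k}+\sum_{j=k+1}^L\frac{s_{kj}}{x_{kj}}\Big)\right]\prod_{k=4}^L x_k^{s_{1k}}\,dx_k\,.
\end{equation*}

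The heart of the argument is a total-derivative manipulation in the variable $x_p$ with $p\coloneq\sigma(4)$. Using \rmkref{rmk:permSel} I would first check that inside $\sigma[\prod_{k=4}^L(\cdots)]$ the letter $x_p$ occurs in exactly one of the $L-3$ factors — the one labelled $k=4$ — because $x_p$ can enter a factor only through $\sigma(k)=p$ (which forces $k=4$) or through $\sigma(j)=p$ with $j>k\ge 4$ (impossible), and that this factor equals $\tfrac{s_{1p}}{x_p}+\sum_{m\in\{4,\dots,L\}\setminus\{p\}}\tfrac{s_{pm}}{x_{pm}}$. On the other hand the $x_p$-dependent part of the Koba--Nielsen factor $\prod_{4\le i<j\le L}x_{ij}^{s_{ij}}\prod_{k}x_k^{s_{1k}}$ is $\Xi_p\coloneq x_p^{s_{1p}}\prod_{4\le m<p}x_{mp}^{s_{mp}}\prod_{p<m\le L}x_{pm}^{s_{pm}}$, and a short computation gives $\partial_{x_p}\Xi_p=\Xi_p\cdot\big(\tfrac{s_{1p}}{x_p}+\sum_{m\ne p}\tfrac{s_{pm}}{x_{pm}}\big)$. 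Thus the integrand equals $\partial_{x_p}\Xi_p$ times an $x_p$-independent remainder, and by Fubini and the fundamental theorem of calculus the $x_p$-integration — over $[x_5,1]$ if $p=4$, over $[x_{p+1},x_{p-1}]$ if $5\le p\le L-1$, and over $[0,x_{L-1}]$ if $p=L$ — replaces it by the difference of $\Xi_p$ at the two endpoints.

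For $\sigma\in\mathfrak{S}^{1,2,3,4}_L$, i.e.\ $p=4$, the lower endpoint $x_4=x_5$ makes the factor $(x_4-x_5)^{s_{45}}$ of $\Xi_4$ vanish (as $\mathrm{Re}(s_{45})>0$), while at $x_4=1$ one has $\Xi_4|_{x_4=1}=\prod_{m=5}^L(1-x_m)^{s_{4m}}$; the remaining Koba--Nielsen part is $\prod_{5\le i<j\le L}x_{ij}^{s_{ij}}\prod_{k=5}^L x_k^{s_{1k}}$, the leftover factors of $\sigma[\prod_{k=4}^L(\cdots)]$ are exactly $\sigma[\prod_{k=5}^L(\cdots)]$, and renaming $x_k\to w_k$ yields the asserted formula for $\lim_{s_{3\bullet}\to 0}\hF_{L,4}^{(1)}(0)$. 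For $\sigma\notin\mathfrak{S}^{1,2,3,4}_L$, i.e.\ $p\ge 5$, each endpoint of the $x_p$-integration forces two consecutive ordered points to coincide (or $x_L=0$ when $p=L$), so that one of the factors $(x_{p-1}-x_p)^{s_{p-1,p}}$, $(x_p-x_{p+1})^{s_{p,p+1}}$, or $x_L^{s_{1L}}$ occurring in $\Xi_p$ vanishes there, again because $\mathrm{Re}(s_{ij})>0$; both boundary contributions are therefore zero, which gives $\lim_{s_{3\bullet}\to 0}\hF_{L,4}^{(2)}(0)=0$.

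I expect the main obstacle to be the combinatorial bookkeeping in the second step: unwinding the action of $\sigma$ on the labels carefully enough to confirm that $x_p$ sits in a single factor of $\sigma[\prod_{k=4}^L(\cdots)]$ and that, with the correct signs in $x_{pm}=x_p-x_m$, the combination with the Koba--Nielsen factor really is an exact $x_p$-derivative; after that, the endpoint analysis is routine. A longer but more structural alternative would be to expand $\hF_{L,4}^\sigma(0)$ into open Selberg integrals via \eqref{eq:combSelbopen}, apply \propref{prop:openphysical}(i) term by term, and recombine, but the total-derivative argument above is more direct and makes the dichotomy $p=4$ vs.\ $p\ge 5$ transparent.
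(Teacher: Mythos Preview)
Your proof is correct and follows essentially the same strategy as the paper: recognize the integrand (after $s_{3\bullet}\to 0$) as a total derivative in one of the integration variables, then evaluate the boundary terms. Your identification of the distinguished variable as $p=\sigma(4)$ is in fact more direct than the paper's, which reaches the same $\ell=\sigma(4)$ via a decomposition of $\sigma$ into transpositions; both arguments then conclude by the identical endpoint analysis.
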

\begin{proof}
    From \propref{prop:limits0open} it follows immediately that the bottom $(L{-}3)!(L{-}1)$ components of $\hF_L(0)$ are zero also in this limit, so we only need to look at the first $(L{-}3)!$ components of the vector which can be written in the notation of \eqn{eqn:basisBSST} as $\hF_{L,4}(0)=(\hF_{L,4}^\sigma(0))_{\sigma\in \mathfrak{S}^{1,2,3}_{L}}$. We will consider the integral after substitution $w_k=x_k/x_3$ as discussed in \eqn{eqn:opensub} and with the regularized limit $x_3\,{\to}\,0$ already taken, as shown in \propref{prop:limits0open}.
    
    In the limit, the components $\hF_{L,4}^\sigma(0)$ have the form
    \begin{align*}
    \lim_{s_{3\bullet}\to0}\hF_{L,4}^\sigma(0)&=\int_{0\leq w_L\leq \ldots\leq w_4\leq 1}\prod_{\substack{i,j=4\\i<j}}^Lw_{ij}^{s_{ij}}\,\sigma\!\!\left[\prod_{k=4}^L\left(\frac{s_{1k}}{w_k}+\sum_{j=k+1}^L\frac{s_{kj}}{w_{kj}}\right)\right]\prod_{k=4}^L w_k^{s_{1k}}dw_k\\
    &=\int_{0\leq w_L\leq\ldots\leq w_4\leq 1}\prod_{\substack{i,j=4\\i<j}}^Lw_{ij}^{s_{ij}}\sigma\!\!\left[\rule{0cm}{0.8cm}\right.\!\left(\frac{s_{14}}{w_4}+\sum_{j=5}^L\frac{s_{4j}}{w_{4j}}\right)\underbrace{\prod_{k=5}^L\left(\frac{s_{1k}}{w_k}+\sum_{j=k+1}^L\frac{s_{jk}}{w_{kj}}\right)}_{\text{independent of }w_4}\left.\rule{0cm}{0.8cm}\!\right]\prod_{k=4}^L w_k^{s_{1k}}dw_k,
    \end{align*}
    where for the first term in the parentheses we have
    \begin{equation*}
        \frac{s_{14}}{w_4}+\sum_{j=5}^L\frac{s_{4j}}{w_{4j}}=\frac{1}{\prod_{k=4}^Lw_k^{s_{1k}}\prod_{\substack{i,j=4\\i<j}}^Lw_{ij}^{s_{ij}}}\frac{d}{dw_4}\left(\prod_{k=4}^Lw_k^{s_{1k}}\prod_{\substack{i,j=4\\i<j}}^Lw_{ij}^{s_{ij}}\right).
    \end{equation*}
    Now we distinguish the case where $\sigma$ leaves the index $4$ invariant and where $\sigma$ includes the index $4$ in its permutation:
    \begin{itemize}
        \item If $\sigma\in\mathfrak{S}_L^{1,2,3,4}$, i.e.~$\sigma(4)=4$, the above structure of the integrand remains intact: the part independent that is indicated to be independent of $x_4$ remains independent after the permutation and the other part $\frac{s_{14}}{w_4}+\sum_{j=5}^L\frac{s_{4j}}{w_{4j}}$ is even invariant under the permutation. Thus, the outermost $w_4$ integration can be performed immediately, resulting in
        \begin{align*}
            \lim_{s_{3\bullet}\to0}\hF_{L,4}^\sigma(0)&=\int_{0\leq w_L\leq \ldots\leq w_4\leq 1}\prod_{k=4}^Ldw_k \,\frac{d}{dw_4}\left(\prod_{k=4}^Lw_k^{s_{1k}}\prod_{\substack{i,j=4\\i<j}}^Lw_{ij}^{s_{ij}}\,\sigma\!\!\left[\prod_{k=5}^L\left(\frac{s_{1k}}{w_k}+\sum_{j=k+1}^L\frac{s_{kj}}{w_{kj}}\right)\right]\right)\\
            &=\int_0^1dw_4 \frac{d}{dw_4} \int_{0\leq w_L\leq\ldots\leq w_5\leq w_4}w_4^{s_{14}}\prod_{k=5}^Ldw_k \,w_k^{s_{1k}}\prod_{\substack{i,j=4\\i<j}}^Lw_{ij}^{s_{ij}}\,\sigma\!\!\left[\prod_{k=5}^L\left(\frac{s_{1k}}{w_k}+\sum_{j=k+1}^L\frac{s_{kj}}{w_{kj}}\right)\right]\\
            &=\int_{0\leq w_L\leq\ldots\leq w_5\leq w_4}w_4^{s_{14}}\prod_{k=5}^Ldw_k \,w_k^{s_{1k}}\prod_{\substack{i,j=4\\i<j}}^Lw_{ij}^{s_{ij}}\,\sigma\!\!\left.\left[\prod_{k=5}^L\left(\frac{s_{1k}}{w_k}+\sum_{j=k+1}^L\frac{s_{kj}}{w_{kj}}\right)\right]\right|_{w_4=0}^{w_4=1}\\
            &=\int_{0\leq w_L\leq\ldots\leq w_5\leq 1}\prod_{k=5}^Ldw_k \,w_k^{s_{1k}}(1\,{-}\,w_k)^{s_{4k}}\prod_{\substack{i,j=5\\i<j}}^Lw_{ij}^{s_{ij}}\,\sigma\!\!\left[\prod_{k=5}^L\left(\frac{s_{1k}}{w_k}+\sum_{j=k+1}^L\frac{s_{kj}}{w_{kj}}\right)\right].
        \end{align*}
        A different representation of this result can also be found by using the expansion \eqref{eq:combSelbopen} into Selberg integrals and then part \textit{(i)} of \propref{prop:openphysical}.
\item If on the other hand the permutation $\sigma$ changes the index $4$ as well, we decompose $\sigma$ into two-cycles, i.e.~$\sigma=(j_1\,j_2)(j_3\,j_4)\cdots(j_{2n-1}\,j_{2n})$ for some integer $n$. The first of these two-cycles which includes the index $4$, say $(4\,j_k)$, will make the previously $w_4$-independent term now $w_4$-dependent but $w_{j_k}$-independent, while the second term will now be
        \begin{equation*}
            \frac{1}{\prod_{k=4}^Lw_k^{s_{1k}}\prod_{\substack{i,j=4\\i<j}}^Lw_{ij}^{s_{ij}}}\frac{d}{dw_{j_k}}\left(\prod_{k=4}^Lw_k^{s_{1k}}\prod_{\substack{i,j=4\\i<j}}^Lw_{ij}^{s_{ij}}\right).
        \end{equation*}
	The next cycle that includes $j_k$ will act also in this way, and so on. This means that after applying $\sigma$ completely, there is an index $\ell\in\{5,\ldots,L\}$, so that we end up with the first term in the brackets being independent of a $w_\ell$, while the second term will be 
            \begin{equation*}
        \frac{s_{1\ell}}{w_\ell}+\sum_{j=5}^L\frac{s_{\ell j}}{w_{\ell j}}=\frac{1}{\prod_{k=4}^Lw_k^{s_{1k}}\prod_{\substack{i,j=4\\i<j}}^Lw_{ij}^{s_{ij}}}\frac{d}{dw_\ell}\left(\prod_{k=4}^Lw_k^{s_{1k}}\prod_{\substack{i,j=4\\i<j}}^Lw_{ij}^{s_{ij}}\right).
    \end{equation*}
    Thus, we can perform the $w_\ell$-integration right away:
        \begin{align*}
            \lim_{s_{3\bullet}\to0}\hF_{L,4}^\sigma(0)&=\int_{0\leq w_L\leq\ldots\leq w_4\leq 1}\prod_{k=4}^Ldw_k\,w_k^{s_{1k}}\prod_{\substack{i,j=4\\i<j}}^Lw_{ij}^{s_{ij}}\\
            &\hspace{5ex}\times\left(\frac{s_{1\ell}}{w_\ell}+\sum_{j=5}^L\frac{s_{\ell j}}{w_{\ell j}}\right)\underbrace{\prod_{k=5}^L\left(\frac{s_{1\sigma(k)}}{w_{\sigma(k)}}+\sum_{j=k+1}^L\frac{s_{\sigma(k)\sigma(j)}}{w_{\sigma(k)\sigma(j)}}\right)}_{\text{independent of }w_\ell}\\
            &=\int_{0\leq w_{\ell-1}\leq\ldots\leq w_{4}\leq1}\prod_{k=4}^{\ell-1}dw_k\, w_k^{s_{1k}} \int_0^{w_{\ell-1}}dw_\ell\\
            &\hspace{5ex}\times\frac{d}{dw_\ell}\left[\int_{0\leq w_L\leq\ldots\leq w_\ell}\prod_{k=\ell+1}^Ldw_k\,w_k^{s_{1k}}\prod_{\substack{i,j=4\\i<j}}^Lw_{ij}^{s_{ij}}\prod_{k=5}^L\left(\frac{s_{1\sigma(k)}}{w_{\sigma(k)}}+\sum_{j=k+1}^L\frac{s_{\sigma(k)\sigma(j)}}{w_{\sigma(k)\sigma(j)}}\right)\right]\\
            &=\int_{0\leq w_{\ell-1}\leq\ldots\leq w_{4}\leq 1}\prod_{k=4}^{\ell-1}dw_k\, w_k^{s_{1k}}\\
            &\hspace{5ex}\times\underbrace{\left[\int_{0\leq w_L\leq\ldots\leq w_\ell}\prod_{k=\ell+1}^Ldw_k\,w_k^{s_{1k}}\prod_{\substack{i,j=4\\i<j}}^Lw_{ij}^{s_{ij}}\prod_{k=5}^L\left(\frac{s_{1\sigma(k)}}{w_{\sigma(k)}}+\sum_{j=k+1}^L\frac{s_{\sigma(k)\sigma(j)}}{w_{\sigma(k)\sigma(j)}}\right)\right]_{w_\ell=0}^{w_\ell=x_{\ell-1}}}_{=0}\\
            &=0.
        \end{align*}
        Thus, all of these components vanish.
    \end{itemize}
    This shows that only those components yield a non-vanishing contribution, where $\sigma$ leaves the index $4$ invariant, meaning that $\sigma\in \mathfrak{S}^{1,2,3,4}_{L}\subset \mathfrak{S}^{1,2,3}_{L}$, and the number of non-vanishing components is $(L{-}4)!$.\phantom{.....}
\end{proof}

\begin{rmk}\label{rmk:openamps0}
    Note, that the resulting integrals from the above Proposition are the open-string integrals $F_{L-1}^\sigma$ from \eqn{eqn:F}, thus yielding a basis of $(L{-}1)$-point open-string amplitudes in this limit.
\end{rmk}

\subsection*{\boldmath Limit $x_3\,{\to}\,1$}
We will now combine \propref{prop:matrices} \textit{(ii)} for the matrix $e_1$ and \propref{prop:openlimits} \textit{(ii)},  in order to calculate the regularized boundary value $\hF_L(1)$:
\begin{proposition}\label{prop:limit1F}
    Let $\hF_L(1)=(\hF_{L,4}(1),\ldots, \hF_{L,L+1}(1))^T$ be the regularized boundary value as defined in \eqn{eqn:openboundval}. Then the first subvector is
    \begin{equation*}
        \hF_{L,4}(1)=\left(\int_{0\leq x_L\leq\ldots\leq x_4\leq 1}\prod_{k=4}^Ldx_k \,\prod_{\substack{i,j=1\\i<j;i,j\neq3}}^Lx_{ij}^{s_{ij}'}\ \sigma\left[\prod_{k=i}^L\left(\frac{s_{1k}}{x_k}+\sum_{j=k+1}^L\frac{s_{kj}}{x_{kj}}\right)\right]\right)_{\sigma\in\mathfrak{S}^{1,2,3}_{L}}\,,
    \end{equation*}
    where $s_{ij}'=s_{ij}$ if $i\neq2$ and $s_{2j}'=s_{2j}+s_{3j}$.
\end{proposition}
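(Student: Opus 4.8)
The plan is to combine the block structure of $e_1$ established in \propref{prop:matrices} with the $x_3\to1$ limits of individual Selberg integrals from \propref{prop:openlimits}~\textit{(ii)}.

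\emph{Step 1: reduce the regularization to a scalar prefactor.} Since by \propref{prop:matrices} the matrix $e_1$ is block lower-triangular with top-left block $s_{23}\,I_{(L-3)!}$ and vanishing top-right block, every power $e_1^n$ has the same shape, hence so does $(1-x_3)^{-e_1}=\exp\!\bigl(-e_1\log(1-x_3)\bigr)$: its top-left block is $(1-x_3)^{-s_{23}}I_{(L-3)!}$ and its top-right block is zero. Therefore the first $(L-3)!$ components of $\hF_L(1)=\lim_{x_3\to1}(1-x_3)^{-e_1}\hF_L(x_3)$ (see \eqref{eqn:openboundval}) reduce to
\[
\hF_{L,4}(1)=\lim_{x_3\to1}(1-x_3)^{-s_{23}}\,\hF_{L,4}(x_3),
\]
a limit which exists and is finite by \thmref{thmKZvectorspace}~\textit{(ii)} (or \thmref{propKZhol}).

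\emph{Step 2: take the limit componentwise.} For each $\sigma\in\mathfrak{S}^{1,2,3}_L$ I would expand $\hF_{L,4}^\sigma(x_3)$ as the finite $\mathbb{C}[[s_{ij}]]$-linear combination of permuted Selberg integrals given by \eqref{eq:combSelbopen} with $i=4$, so that the limit may be taken termwise, and then apply \propref{prop:openlimits}~\textit{(ii)} to each $S^\sigma[i_4,\ldots,i_L](x_3)$; since $\sigma$ fixes the labels $1,2,3$ it commutes with the regulator and with the substitution $s'_{ij}=s_{ij}$ ($i\ne2$), $s'_{2j}=s_{2j}+s_{3j}$, and resumming the limits reconstitutes the claimed integral (the algebraic identity \eqref{eq:combSelbopen} holds verbatim with the Koba--Nielsen factor $\prod|x_{km}|^{s_{km}}$ replaced by $\prod_{i<j,\,i,j\neq3}x_{ij}^{s'_{ij}}$). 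Equivalently — and conceptually cleaner — one may pass to the limit directly inside \eqref{eqn:basisopen}: on the domain $0\le x_L\le\cdots\le x_4\le x_3$ the only $x_3$-dependent part of the integrand is $x_3^{s_{13}}(1-x_3)^{s_{23}}\prod_{k=4}^L(x_3-x_k)^{s_{3k}}$, which after multiplication by $(1-x_3)^{-s_{23}}$ tends, as $x_3\to1$, to $\prod_{k=4}^L(1-x_k)^{s_{3k}}=\prod_{k=4}^L|x_{2k}|^{s_{3k}}$; this merges with the existing $\prod_{k=4}^L|x_{2k}|^{s_{2k}}$ to produce the exponent $s'_{2k}$, while the bracketed factor $\sigma[\cdots]$ is independent of both $x_3$ and the index $3$ (cf.~\rmkref{rmk:ind23}) and thus passes to the limit unchanged, and the integration domain tends to $0\le x_L\le\cdots\le x_4\le1$.

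\emph{Main obstacle.} The only delicate point in the direct route is justifying the interchange of $\lim_{x_3\to1}$ with the $(L-3)$-fold integration: since $\mathrm{Re}(s_{ij})>0$, the factor $x_3^{s_{13}}\prod_{k=4}^L(x_3-x_k)^{s_{3k}}$ stays uniformly bounded for $x_3$ near $1$, and the remaining, $x_3$-independent integrand is integrable on $0\le x_L\le\cdots\le x_4\le1$ by the standard convergence argument for $F$-integrals (the simple poles of $\sigma[\cdots]$ at $x_k=0$ and $x_k=x_j$ being killed by the Koba--Nielsen factor), so dominated convergence applies; this subtlety is avoided altogether on the Selberg-expansion route, where the limits are already packaged in \propref{prop:openlimits}. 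Collecting the contributions over $\sigma\in\mathfrak{S}^{1,2,3}_L$ then yields the stated formula for $\hF_{L,4}(1)$.
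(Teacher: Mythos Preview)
Your proposal is correct and follows essentially the same route as the paper: reduce the regularization to the scalar $(1-x_3)^{-s_{23}}$ via the block-lower-triangular form of $e_1$ from \propref{prop:matrices}, then invoke \propref{prop:openlimits}~\textit{(ii)} to evaluate the limit. The paper's proof is more terse and does not spell out the dominated-convergence justification or the alternative direct route through \eqref{eqn:basisopen} that you discuss, but the two arguments are the same in substance.
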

\begin{proof}
    Using the result \textit{(ii)} for the matrix $e_1$ from \propref{prop:matrices}, we can calculate the matrix exponential
    \begin{equation*}
        (1-x_3)^{-e_1}=\left(\begin{array}{cc}
            (1-x_3)^{-s_{23}}I_{(L-3)!} & 0 \\
            \ast & \ast
        \end{array}\right).
    \end{equation*}
    Thus, with the definition $\hF_{L}(1)=\lim_{x_3\to1}(1\,{-}\,x_3)^{-e_1}\hF_L(x_3)$, we have $\hF_{L,4}(1)=\lim_{x_3\to1}(1\,{-}\,x_3)^{-s_{23}}\hF_{L,4}(x_3)$ and using additionally result \textit{(ii)} of \propref{prop:openlimits} completes the proof.\phantom{.....}
\end{proof}

As for the other limit, we will again proceed to take the physical limit $s_{3\bullet}\to0$ of this result.
\begin{proposition}\label{prop:openphysical1}
    Taking the limit $s_{3\bullet}\to0$ of $\hF_{L,4}(1)$ yields
    \begin{equation*}
        \lim_{s_{3\bullet}\to0}\hF_{L,4}(1)=\left(\int_{0\leq x_L\leq\ldots\leq x_4\leq 1}\prod_{k=4}^Ldx_k \,\prod_{\substack{i,j=1\\i<j;i,j\neq3}}^Lx_{ij}^{s_{ij}}\ \sigma\left[\prod_{k=i}^L\left(\frac{s_{1k}}{x_k}+\sum_{j=k+1}^L\frac{s_{kj}}{x_{kj}}\right)\right]\right)_{\sigma\in\mathfrak{S}^{1,2,3}_{L}}\,,
    \end{equation*}
    where $\lim_{s_{3\bullet}\to0}$ refers to the limit the Mandelstam variables $s_{3i}$, for $i=1,\ldots,L$, tend to $0$.
\end{proposition}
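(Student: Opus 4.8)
The plan is to deduce the statement directly from \propref{prop:limit1F} by a short limiting argument. The key observation is that, once the regularized limit $x_3\to1$ has been taken, the only place where the auxiliary Mandelstam variables $s_{3j}$ enter the formula for $\hF_{L,4}(1)$ is through the exponents $s'_{2j}=s_{2j}+s_{3j}$ of the Koba--Nielsen factor $\prod_{i<j;\,i,j\neq3}x_{ij}^{s'_{ij}}$. Indeed, the rational bracket $\sigma\!\left[\prod_{k=4}^L\left(\frac{s_{1k}}{x_k}+\sum_{j=k+1}^L\frac{s_{kj}}{x_{kj}}\right)\right]$ contains no variable $s_{3\bullet}$: the permutation $\sigma\in\mathfrak{S}^{1,2,3}_{L}$ fixes the index $3$, and for the first subvector (case $i=4$ in \eqref{eqn:basisopen}) only indices of the form $1k$ and $kj$ with $4\le k<j\le L$ occur inside the bracket. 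Hence taking $s_{3\bullet}\to0$ should amount simply to replacing each $s'_{2j}$ by $s_{2j}$ inside the integral, which is exactly the claimed identity.

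The only point needing (a little) care is the interchange of the limit $s_{3\bullet}\to0$ with the integration over the simplex $\{0\le x_L\le\cdots\le x_4\le1\}$. This is harmless: as $s_{3j}\to0$ one has $\mathrm{Re}(s'_{2j})=\mathrm{Re}(s_{2j})+\mathrm{Re}(s_{3j})\to\mathrm{Re}(s_{2j})>0$, so the integral remains inside its region of absolute convergence, and the integrand is bounded, uniformly for $s_{3j}$ in a small neighbourhood of $0$, by an absolutely integrable function (obtained by slightly shrinking the relevant real parts); dominated convergence then applies. Alternatively, one may invoke analyticity: by \thmref{propKZhol} together with the discussion of the $F$-integrals in \secref{sec:stringamps}, $\hF_{L,4}(1)$ is holomorphic in the $s_{ij}$ near the origin, so the limit $s_{3\bullet}\to0$ is just the specialization to $s_{3j}=0$, whose value agrees with the one obtained by continuity from the convergence domain.

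As a cross-check, and to make contact with the rest of the section, one can instead expand $\hF_{L,4}(1)$ via \eqref{eq:combSelbopen} as a finite $\mathbb{C}$-linear combination --- with coefficients that are monomials in the $s_{ij}$ --- of regularized Selberg integrals $\lim_{x_3\to1}(1-x_3)^{-s_{23}}S^\sigma[i_4,\ldots,i_L](x_3)$, and apply part \textit{(ii)} of \propref{prop:openphysical} to each summand. Rewriting the surviving Koba--Nielsen factor as $\prod_{i<j;\,i,j\neq3}x_{ij}^{s_{ij}}=\prod_{k=4}^Lx_k^{s_{1k}}(1-x_k)^{s_{2k}}\prod_{4\le i<j\le L}x_{ij}^{s_{ij}}$ then identifies the components of $\lim_{s_{3\bullet}\to0}\hF_{L,4}(1)$ with the $L$-point open-string $F$-integrals $F_L^\sigma$ of \eqref{eqn:F}, in parallel with \rmkref{rmk:openamps0} for the other boundary value. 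I do not expect any genuine obstacle: the whole argument is the $s_{3\bullet}\to0$ analogue of the already-established passage from \propref{prop:openlimits} to \propref{prop:openphysical}.
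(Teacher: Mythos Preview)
Your proposal is correct and follows exactly the same route as the paper: start from \propref{prop:limit1F} and observe that the only $s_{3\bullet}$-dependence in $\hF_{L,4}(1)$ sits in the exponents $s'_{2j}=s_{2j}+s_{3j}$, so sending $s_{3\bullet}\to0$ simply replaces $s'_{2j}$ by $s_{2j}$. The paper's proof is a single sentence to this effect; your additional remarks on dominated convergence/analyticity and the cross-check via \eqref{eq:combSelbopen} and \propref{prop:openphysical}\textit{(ii)} are correct elaborations but not needed for the argument.
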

\begin{proof}
    Starting with the result of \propref{prop:limit1F}, the limit $\lim_{s_{3\bullet}\to0}$ reduces the Mandelstam variables $s_{2j}'$ to $s_{2j}$, and the result follows.
\end{proof}

\begin{rmk}\label{rmk:openlastcomps}
	The first $(L{-}3)!$ components $\lim_{s_{3\bullet}\to0}\hF_{L,4}(1)$ of $\lim_{s_{3\bullet}\to0}\hF_L(1)$ calculated above yield exactly the color-ordered $L$-point open-string amplitudes $F_L^\sigma$ from \eqn{eqn:F}.
    The remaining lower components of $\hF_L(1)$ are thus not needed to span this physically relevant space and can be interpreted as an artifact of the mathematical formalism and are tedious to calculate in general. In the examples in \secref{sec:open4} and \secref{sec:open5} we compute them explicitly, and we notice an interesting structure: they either also have the structure of $L$-point amplitudes (but with different Mandelstam variables) or they are of the schematic form ``$(L$-point amplitude$)-((L{-}1)$-point amplitude$)$''.
\end{rmk}

\subsection*{The open-string amplitude recursion}
The open-string recursion is summarized in the following Theorem, bringing together the results from \secref{sec:opengeneral} and \secref{sec:openboundaryvalues}.
\begin{theorem}\label{thmOpenRec}
    For any $L\geq4$ there exist $(L{-}2)!$-dimensional vectors~$A_L$ and~$B_L$ whose entries are functions of the Mandelstam variables $s_{ij}$, holomorphic in a neighborhood of the origin, such that: 
    \begin{enumerate}[label=(\roman*)]
        \item the first $(L{-}4)!$ components of $A_L$ coincide with the open-string $(L{-}1)$-point integrals $F_{L-1}^\sigma$, $\sigma\in\mathfrak{S}_{L-4}$, from \eqn{eqn:F}, and the remaining components are identically zero;
        \item the first $(L{-}3)!$ components of $B_L$ are the open-string $L$-point integrals~$F_L^\sigma$, $\sigma\in\mathfrak{S}_{L-3}$, from \eqn{eqn:F};
        \item there exist $(L{-}2)!$-dimensional square matrices $e_0$, $e_1$, whose entries are integer linear combinations of Mandelstam variables $s_{ij}$, such that the Taylor expansions of $A_L$ and $B_L$ are related by the Drinfeld associator $\Phi(e_0,e_1)$ via
        \begin{equation*}
            B_L=\Phi(e_0,e_1)A_{L}.
        \end{equation*}
    \end{enumerate}
    In particular, the Taylor expansion of any open-string $L$-point integral~$F_L^\sigma$ can be obtained from the Taylor expansion of the open-string $(L{-}1)$-point integrals $F_{L-1}^\sigma$ and the Drinfeld associator $\Phi(e_0,e_1)$, and therefore the coefficients of these Taylor expansions belong to the ring $\mathcal Z$ of multiple zeta values.
\end{theorem}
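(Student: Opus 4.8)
The proof assembles results already established in Sections~\ref{sec:opengeneral} and~\ref{sec:openboundaryvalues}. The plan is to set
\begin{equation*}
  A_L\coloneq\lim_{s_{3\bullet}\to0}\hF_L(0),\qquad
  B_L\coloneq\lim_{s_{3\bullet}\to0}\hF_L(1),
\end{equation*}
i.e.\ the regularized boundary values of \thmref{propKZhol} with the auxiliary Mandelstam variables $s_{3i}$ sent to zero. Both are $(L{-}2)!$-dimensional vectors whose entries are holomorphic in a neighbourhood of $s_{ij}=0$: holomorphy of $\hF_L(0)$ and $\hF_L(1)$ is inherited from the Selberg-integral expansion \eqref{eq:combSelbopen} together with the pole-cancellation argument used to define $\hF_L$, and passing to the $s_{3\bullet}\to0$ limit — which merely sets those variables to zero in the Taylor expansion — preserves it.

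For \emph{(i)} I would combine \propref{prop:limits0open}, which already forces $\hF_{L,i}(0)=0$ for $i\in\{5,\ldots,L{+}1\}$, with \propref{prop:limits0openphys}, which after $s_{3\bullet}\to0$ additionally forces $\hF_{L,4}^{(2)}(0)=0$ and identifies $\hF_{L,4}^{(1)}(0)$ with the $(L{-}4)!$ integrals indexed by $\sigma\in\mathfrak{S}^{1,2,3,4}_L\cong\mathfrak{S}_{L-4}$; by \rmkref{rmk:openamps0} these are exactly the color-ordered $(L{-}1)$-point integrals $F_{L-1}^\sigma$ of \eqref{eqn:F}. Hence the first $(L{-}4)!$ entries of $A_L$ are the $F_{L-1}^\sigma$ and all remaining entries vanish. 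For \emph{(ii)} I would use \propref{prop:limit1F} (which rests on the block form of $e_1$ from \propref{prop:matrices}\,(ii) and on \propref{prop:openlimits}\,(ii)) together with \propref{prop:openphysical1} and \rmkref{rmk:openlastcomps}: these say that the first $(L{-}3)!$ entries of $\lim_{s_{3\bullet}\to0}\hF_{L,4}(1)$, hence of $B_L$, are precisely the $F_L^\sigma$ with $\sigma\in\mathfrak{S}^{1,2,3}_L\cong\mathfrak{S}_{L-3}$.

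For \emph{(iii)} I would start from the identity $\hF_L(1)=\Phi(e_0,e_1)\,\hF_L(0)$ in $(\zC[[s_{ij}]])^{(L-2)!}$ provided by \thmref{propKZhol}, with $e_0,e_1\in\Mat_{(L-2)!\times(L-2)!}(\langle s_{ij}\rangle_\zC)$ having integer-linear entries (made explicit in \cite{Kaderli}). The specialization $s_{3\bullet}\to0$ is a ring homomorphism of $\zC[[s_{ij}]]$ compatible with matrix multiplication and, because in the Terasoma setting of \rmkref{rmk:Terasoma} every fixed Taylor coefficient of an entry of $\Phi(e_0,e_1)$ is a \emph{finite} $\mathbb Q$-linear combination of products of MZVs with monomials in the $s_{ij}$, it also commutes with the formal series $\Phi$; applying it to both sides gives $B_L=\Phi(\bar e_0,\bar e_1)\,A_L$ with $\bar e_0,\bar e_1$ still integer-linear in the surviving Mandelstam variables, and renaming these $e_0,e_1$ proves \emph{(iii)}. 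The ``in particular'' statement then follows by induction on $L$: the base case is $F_3=1$ (there are no integrations for $L{=}3$, so $F_3\in\mathbb Q\subset\mathcal Z$), and for the inductive step reading off the first $(L{-}3)!$ rows of $B_L=\Phi(e_0,e_1)A_L$ expresses each $F_L^\sigma$ as an explicit $\Phi(e_0,e_1)$-matrix combination of the $F_{L-1}^\sigma$; since, again by \rmkref{rmk:Terasoma}, the entries of $\Phi(e_0,e_1)$ lie in $\hat\oplus_d\mathcal Z_d\otimes_{\mathbb Q}R_d$ and $\mathcal Z$ is a ring, the Taylor coefficients of $F_L^\sigma$ stay in $\mathcal Z$.

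The step I expect to be the main obstacle — and the one deserving the most care — is verifying that the physical limit $s_{3\bullet}\to0$ commutes with all operations in sight: with the matrix exponentials $x_3^{-e_0}$ and $(1{-}x_3)^{-e_1}$ defining the regularized boundary values, and with the action of the infinite series $\Phi(e_0,e_1)$. This is precisely where the homogeneity/grading of the Terasoma framework (\rmkref{rmk:Terasoma}) is indispensable: it organizes everything into graded pieces each of which is a finite expression, so the limit may be taken coefficient by coefficient and no convergence issue arises. One must also keep careful track of the block forms of $e_0$ and $e_1$ (\propref{prop:matrices}) so that the regularized limits land in the correct coordinate subspaces — which is exactly what Propositions~\ref{prop:limits0open}, \ref{prop:limits0openphys}, \ref{prop:limit1F} and~\ref{prop:openphysical1} accomplish.
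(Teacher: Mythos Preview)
Your proposal is correct and follows essentially the same approach as the paper: define $A_L$ and $B_L$ as the physical limits $s_{3\bullet}\to0$ of the regularized boundary values $\hF_L(0)$, $\hF_L(1)$, invoke Propositions~\ref{prop:limits0open}--\ref{prop:openphysical1} together with Remarks~\ref{rmk:openamps0} and~\ref{rmk:openlastcomps} for (i) and (ii), specialize \eqref{eqn:openDrinfeld} for (iii), and conclude by induction with base case $A_4=(1,0)^T$ (equivalently $F_3=1$). Your discussion of why the specialization $s_{3\bullet}\to0$ commutes with $\Phi(e_0,e_1)$ via the graded/Terasoma picture is in fact more explicit than the paper's own proof, which simply states that the limit is taken without further justification.
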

\begin{proof}
    We choose $A_{L}=\lim_{s_{3\bullet}\to0}\hF_L(0)$ and $B_L=\lim_{s_{3\bullet}\to0}\hF_L(1)$. Then \propref{prop:limits0openphys} together with \rmkref{rmk:openamps0} tell us that the first $(L{-}4)!$ components of $A_{L}$ are the $(L{-}1)$-point disk amplitudes from \eqn{eqn:F} and correspondingly \propref{prop:openphysical1} together with \rmkref{rmk:openlastcomps} show that the first $(L{-}3)!$ components of $B_L$ yield the $L$-point disk amplitudes from \eqn{eqn:F}. This proves \emph{(i)} and \emph{(ii)}. Finally, taking the physical limit $s_{3\bullet}\to0$ of \eqn{eqn:openDrinfeld} of \thmref{propKZhol} we find the Drinfeld associator relation \emph{(iii)} with the matrices $e_0$ and $e_1$ as in \eqn{eqn:KZequation} after taking the limit $s_{3\bullet}\to0$. The fact that the coefficients of the Taylor expansion of any~$F_L^\sigma$ belong to~$\mathcal Z$ follows inductively by combining the fact that $A_{4}=(1,0)^T$ (see \secref{sec:open4} below) and that, for any~$L$, the coefficients of the Taylor expansion in the variables $s_{ij}$ of the Drinfeld associator $\Phi(e_0,e_1)$ belong to $\mathcal Z$. This last assertion is a consequence of the fact that the coefficients of $\Phi(e_0,e_1)$ as a series in $e_0,e_1$ belong to $\mathcal Z$, combined with the fact that the entries of $e_0,e_1$ are $\mathbb Z$-linear combinations of the variables $s_{ij}$.
\end{proof}

%%%%%%%%%%%%%%%%%%%%%%%%%%%%%%%%%%%%%%%%%%%%%%%%%%%%%%%%%%%%%%%%%%%%%%%%%%%%%%
%%%%%%%%%%%%%%%%%%%%%%%%%%%%%%%%%%%%%%%%%%%%%%%%%%%%%%%%%%%%%%%%%%%%%%%%%%%%%%
\subsection{From three- to four-point amplitude}\label{sec:open4}
Let us illustrate the above described recursion with the simplest case, which lets us relate the three- and four-point open-string tree-level amplitudes. We fix a pair $(x_1,x_2)=(0,1)$ of real points, use $x_3\in ]0,1[$, and $s_{13},s_{14},s_{23},s_{24},s_{34}\in\mathbb{C}$ with $\Re(s_{14}),\Re(s_{24}),\Re(s_{34})>0$. For $L{=}4$, the vector for the KZ equation is\footnote{Each component of the vector converges in a strictly larger region, e.g.~the first component converges for $\Re(s_{14})\geq 0$, $\Re(s_{34})>-1$ and any $s_{24}$.} (cf.~\eqn{eqn:basisopen}):
\begin{align*}
	\hF_4(x_3)=\mat{\hF_{4,4}(x_3)\\ \hF_{4,5}^{(2)}(x_3)}=\mat{s_{14}\,x_3^{s_{13}}(1\,{-}\,x_3)^{s_{23}}\int_{0}^{x_3}x_4^{s_{14}}(1\,{-}\,x_4)^{s_{24}} (x_3\,{-}\,x_4)^{s_{34}} \frac{dx_4}{x_4}\\
		       s_{24}\,x_3^{s_{13}}(1\,{-}\,x_3)^{s_{23}}\int_{0}^{x_3}x_4^{s_{14}}(1\,{-}\,x_4)^{s_{24}} (x_3\,{-}\,x_4)^{s_{34}} \frac{dx_4}{1-x_4}},
\end{align*}
where we have omitted the superscript for the permutation on the components, since they are just the identity for both components.
Note, that $\hF_4$ can be written in terms of the open Selberg integrals, in agreement with \eqn{eq:combSelbopen}, as $\hF_{4,4}(x_3)=s_{14}S[1](x_3)$ and $\hF_{4,5}(x_3)=-s_{24}S[2](x_3)$. 

For fixed values of the parameters $s_{ij}$, $\hF_4$ can be analytically extended to a multivalued holomorphic function $\hF_4:\mathbb{C}\smallsetminus \{0,1\}\to \mathbb{C}^2$, which (after computing the derivative w.r.t.~$x_3$) fulfills the KZ equation
\begin{equation}\label{eq:230526n1}
	\frac{\pd}{\pd x_3}\hF_4(x_3)=\Bigg[\frac{1}{x_3}\underbrace{\mat{s_{134} & -s_{14} \\ 0 & s_{13}}}_{\eqcolon e_0}+\frac{1}{x_3\,{-}\,1}\underbrace{\mat{s_{23} & 0 \\ -s_{24} & s_{234}}}_{\eqcolon e_1}\Bigg]\hF_4(x_3)\,,
\end{equation}
where we used the shorthand $s_{ijk}\,{=}\,s_{ij}\,{+}\,s_{ik}\,{+}\,s_{jk}$. Notice that $\smax\,{=}\,s_{134}$ is the maximal eigenvalue of $e_0$ as stated in \eqn{eqn:smax}. To realize the above calculation, one can make use of the linear relation
\begin{align}
	0&=\int_0^{x_3}dx_4\frac{d}{dx_4}\big(x_4^{s_{14}}(1\,{-}\,x_4)^{s_{24}}(x_3\,{-}\,x_4)^{s_{34}}\big)\notag\\
	&=\int_0^{x_3}x_4^{s_{14}}(1\,{-}\,x_4)^{s_{24}}(x_3\,{-}\,x_4)^{s_{34}}\left(s_{14} \frac{dx_4}{x_4}+s_{24} \frac{dx_4}{x_4\,{-}\,1}+s_{34}\frac{dx_4}{x_4\,{-}\,x_3}\right)\notag\\
	&=s_{14}S[1](x_3)+s_{24}S[2](x_3)+s_{34}S[3](x_3).\label{eqn:open4relation}
\end{align}
As a function of the Mandelstam variables~$s_{ij}$, for any $x_3\,{\in}\, U$, $\hF_4(x_3)$ has a convergent Taylor expansion at the origin\footnote{It is crucial for this to include the prefactors~$s_{14}$ and~$s_{24}$.}, and can therefore be seen as a function valued in the complex vector space $V\coloneq(\mathbb{C}[[s_{ij}]])^{\oplus 2}$. The matrices~$e_0$ and~$e_1$ from \eqn{eq:230526n1} obviously act on~$V$. The generating series $L(x_3)=\sum_wL_w(x_3)\,w$ and $\Phi=\sum_w\zeta_w\,w$, where~$w$ runs over all words in the endomorphisms~$e_0$ and~$e_1$, also act on~$V$, because they define (for any fixed $x_3\,{\in}\, U$) $2\times 2$ matrices whose entries belong to $\mathbb{C}[[s_{ij}]]$. Combining these observations with \eqn{eq:230526n1}, we conclude that the assumptions of \thmref{thmKZvectorspace} are fulfilled, and therefore that the regularized limits $\hF_4(0)=\lim_{x_3\to 0}x_3^{-e_0}\hF_4(x_3)$ and $\hF_4(1)=\lim_{x_3\to 1}(1\,{-}\,x_3)^{-e_1}\hF_4(x_3)$ are related by the equation 
\begin{equation}\label{eq:lim0to1}
\hF_4(1)=\Phi \cdot \hF_4(0)\,.
\end{equation}

Using the general results discussed in \secref{sec:opengeneral}, let us compute the limit vectors $\hF_4(0)$ and $\hF_4(1)$:
With the matrix exponential of $e_0$, we find as in \propref{prop:limits0open}
\begin{align}\label{eq:230526n2}
\hF_4(0)&=\lim_{x_3\to0}x_3^{-e_0}\hF_4(x_3)=\lim_{x_3\to0}\mat{x_3^{-s_{134}}s_{14}S[1](x_3)-\tfrac{s_{14}}{s_{14}+s_{34}}(x_3^{-s_{13}}\,{-}\,x_3^{-s_{134}})s_{24}S[2](x_3)\\x_3^{-s_{13}}s_{14}S[1](x_3)}\notag\\
&=\mat{\frac{\Gamma(1+s_{14})\Gamma(1+s_{34})}{\Gamma(1+s_{14}+s_{34})}\\0}.
\end{align}
Similarly, for the first component of $\hF_4(1)$ \propref{prop:limit1F} (or direct calculation) yields
\begin{align*}
	\hF_{4,4}(1)&=\lim_{x_3\to 1}(1-x_3)^{-s_{23}}\hF_{4,4}(x_3)\,=\,s_{14}\int_0^1x_4^{s_{14}}(1-x_4)^{s_{24}+s_{34}}\,\frac{dx_4}{x_4}\\
	&=\frac{\Gamma(1+s_{14})\Gamma(1+s_{24}+s_{34})}{\Gamma(1+s_{14}+s_{24}+s_{34})}.
\end{align*}
where in the last equality we made use of a well-known formula for the Euler beta function.

In this simple example it is instructive to also compute the second component $\hF_{4,5}(1)$, which is not predicted through the results of \secref{sec:openboundaryvalues} and which was not computed in the existing literature. Although this component can be regarded as merely an ``artifact'' of the mathematical procedure, we will later see that in the physical limit it is also related to open-string integrals. Using the matrix exponential of $e_1$, we obtain (the lengthy derivation of this expression is spelled out in \appref{app:4pt2ndcomp})
\begin{equation*}
	\hF_{4,5}(1)=\frac{s_{24}}{s_{24}+s_{34}}\left(\frac{\Gamma(1+s_{14})\Gamma(1+s_{24}+s_{34})}{\Gamma(1+s_{14}+s_{24}+s_{34})}-\frac{\Gamma(1+s_{34})\Gamma(1-s_{24}-s_{34})}{\Gamma(1-s_{24})}\right).
\end{equation*}
Thus, the full vector $\hF_4(1)$ is
\begin{equation}
\label{eq:F2}
    \hF_4(1)\,=\,\mat{\frac{\Gamma(1+s_{14})\Gamma(1+s_{24}+s_{34})}{\Gamma(1+s_{14}+s_{24}+s_{34})}\\\frac{s_{24}}{s_{24}+s_{34}}\left(\frac{\Gamma(1+s_{14})\Gamma(1+s_{24}+s_{34})}{\Gamma(1+s_{14}+s_{24}+s_{34})}-\frac{\Gamma(1+s_{34})\Gamma(1-s_{24}-s_{34})}{\Gamma(1-s_{24})}\right)}.
\end{equation}

\subsection*{Relating the limits}
Combining eqs.~\eqref{eq:lim0to1},~\eqref{eq:230526n2} and~\eqref{eq:F2}, we can relate the two limits using the Drinfeld associator $\Phi=\Phi(e_0,e_1)$, to find
\begin{equation*}\label{eqn:4ptopenlimitsrel}
\mat{\frac{\Gamma(1+s_{14})\Gamma(1+s_{24}+s_{34})}{\Gamma(1+s_{14}+s_{24}+s_{34})}\\ \frac{s_{24}}{s_{24}+s_{34}}\left(\frac{\Gamma(1+s_{14})\Gamma(1+s_{24}+s_{34})}{\Gamma(1+s_{14}+s_{24}+s_{34})}-\frac{\Gamma(1+s_{34})\Gamma(1-s_{24}-s_{34})}{\Gamma(1-s_{24})}\right)}\,=\,\Phi\cdot \mat{\frac{\Gamma(1+s_{14})\Gamma(1+s_{34})}{\Gamma(1+s_{14}+s_{34})}\\0}\,.
\end{equation*}

In particular, taking the physical limit $s_{3\bullet}\to 0$, where $\bullet=1,2,4$, we find the above result reducing to the string amplitude relation
\begin{equation*}
	\mat{\frac{\Gamma(1+s_{14})\Gamma(1+s_{24})}{\Gamma(1+s_{14}+s_{24})}\\ \frac{\Gamma(1+s_{14})\Gamma(1+s_{24})}{\Gamma(1+s_{14}+s_{24})}-1}\,=\,\Phi\big|_{s_{3\bullet}=0}\cdot \mat{1\\0}\,,
\end{equation*}
where the first component of the vector on the l.h.s.~is the four-point open-string amplitude $F_{4}$ (cf.~\eqn{eqn:F}) and the first component of the vector on the r.h.s.~the three-point amplitude $F_3$. The second component of the l.h.s.~can be understood as a four-point amplitude (ratio of Gamma functions) minus a three-point amplitude (given by the 1).

\subsection{From four- to five-point amplitude}\label{sec:open5}
Let $L\,{=}\,5$, then the vector $\hF_5(x_3)$ from 
\eqn{eqn:basisBSST} fulfills the KZ equation \eqn{eqn:KZequation} with the two matrices
\begin{subequations}\label{eqn:5e}
{\small
\begin{align}
	&e_0=\left(\begin{array}{cccccc}
		s_{1345} & 0 & -s_{14}-s_{45} & -s_{15} & -s_{15} & s_{15}\\
		0 & s_{1345} & -s_{14} & -s_{15}-s_{45} & s_{14} & -s_{14}\\
		0 & 0 & s_{135} & 0 & -s_{15} & 0\\
		0 & 0 & 0 & s_{134} & 0 & -s_{14}\\
		0 & 0 & 0 & 0 & s_{13} & 0\\
		0 & 0 & 0 & 0 & 0 & s_{13}
	\end{array}\right),
    \\
    &e_1=\left(\begin{array}{cccccc}
		s_{23} & 0 & 0 & 0 & 0 & 0\\
		0 & s_{23} & 0 & 0 & 0 & 0\\
		-s_{24} & 0 & s_{234} & 0 & 0 & 0\\
		0 & -s_{25} & 0 & s_{235} & 0 & 0\\
		-s_{24} & s_{24} & -s_{25}-s_{45} & -s_{24} & s_{2345} & 0\\
		s_{25} & -s_{25} & -s_{25} & -s_{24}-s_{45} & 0 & s_{2345}
	\end{array}\right),
\end{align}
}%
\end{subequations}
where we used the shorthands $s_{ijk}\,{=}\,s_{ij}\,{+}\,s_{ik}\,{+}\,s_{jk}$ and $s_{ijk\ell}\,{=}\,s_{ij}\,{+}\,s_{ik}\,{+}\,s_{i\ell}\,{+}\,s_{jk\ell}$.

Applying the results from \secref{sec:opengeneral} and \secref{sec:openboundaryvalues} yields a recursion bringing us from the four-point amplitude to a basis of five-point amplitudes via the Drinfeld associator, namely
\begin{align*}
	\Phi(\Tilde{e}_0,\Tilde{e}_1)\left(\!\!\begin{array}{c}F_4^{\mathrm{id}}\\0\\0\\0\\0\\0\end{array}\!\!\right)
    &=\left(1+\zeta_2[\Tilde{e}_0,\Tilde{e}_1]+\zeta_3[\Tilde{e}_0+\Tilde{e}_1,[\Tilde{e}_0,\Tilde{e}_1]]+\ldots\right)\left(\!\!\begin{array}{c}\frac{\Gamma (1+s_{15}) \Gamma (1+s_{45})}{\Gamma (1+s_{15}+s_{45})}\\0\\0\\0\\0\\0\end{array}\!\!\right)=\left(\!\!\begin{array}{c}F_{5}^\mathrm{id}\\ F_{5}^{(45)}\\ \ast \\ \ast \\ \ast \\ \ast\end{array}\!\!\right).
\end{align*}
where we have defined the matrices in the physical limit by $\Tilde{e}_i=\lim_{s_{3\bullet}\to0}e_i$ and $F_N^\sigma$ are the genus-zero open-string amplitudes defined in \eqn{eqn:F}. The full calculation of this result, which also includes the lower components of the vector on the r.h.s.~of the above equation, is spelled out in \appref{app:5open}.

%%%%%%%%%%%%%%%%%%%%%%%%%%%%%%%%%%%%%%%%%%%%%%%%%%%%%%%%%%%%%%%%%%%%%%%%%%%%%%
%%%%%%%%%%%%%%%%%%%%%%%%%%%%%%%%%%%%%%%%%%%%%%%%%%%%%%%%%%%%%%%%%%%%%%%%%%%%%%
%%%%%%%%%%%%%%%%%%%%%%%%%%%%%%%%%%%%%%%%%%%%%%%%%%%%%%%%%%%%%%%%%%%%%%%%%%%%%%
%%%%%%%%%%%%%%%%%%%%%%%%%%%%%%%%%%%%%%%%%%%%%%%%%%%%%%%%%%%%%%%%%%%%%%%%%%%%%%

\section{Recursion for closed-string genus-zero amplitudes}\label{sec:closedrec}

In this section we will prove our main result, a recursive relation between closed string amplitudes, by adapting to this setting the strategy used for open strings in \cite{Broedel:2013aza} and explained in~\secref{sec:openrec}. We will introduce an appropriate vector of closed Selberg integrals satisfying the KZ equation in \secref{sec:closedgeneral}. The recursion between closed string integrals is then obtained in~\secref{ssec:complexlimits} by studying the regularized limits of the KZ solution and by applying results from~\secref{ssec:svmpls}. We will illustrate the method by working out in more details the special cases $L\,{=}\,4$ and $L\,{=}\,5$ in~\secref{sec:closed4} and~\secref{sec:closed5}, respectively.

\subsection{A single-valued solution of the KZ equation}\label{sec:closedgeneral}
\begin{defn}\label{lem:permSelC}
    For $L\geq4$ we denote by $\SC^\sigma[i_4,\ldots,i_L](z_3)$ the closed Selberg integral obtained by letting a permutation $\sigma\in\mathfrak{S}_L^{1,2,3}$ act on the labels of the holomorphic variables $z_{ij}$ and $s_{ij}$ inside the integrand of a closed Selberg integral $\SC[i_4,i_5,\ldots,i_L](z_3)$. 
\end{defn}
\begin{rmk}
    As described in \rmkref{rmk:permSel}, we find the identity
    \begin{equation*}
        \SC^\sigma[i_4,\ldots,i_L](z_3)=\SC[\sigma(i_{\sigma^{-1}(4)}),\ldots,\sigma(i_{\sigma^{-1}(L)})](z_3),\quad\sigma\in\mathfrak{S}_L^{1,2,3},
    \end{equation*}
    for the closed Selberg integral with permuted indices.
\end{rmk}

For $z_3\in\zC\smallsetminus\{0,1\}$, $i\in\{4,\ldots,L+1\}$ and $|s_{ij}|$ small enough\footnote{As in the definition of $\hF_{L,i}^\sigma(x_3)$, a precise statement on the best possible bound on $|s_{ij}|$ can be deduced by combining straightforward generalizations of Prop.~3.6 and Thm.~4.24 of \cite{Brown:2019wna} to the case of one unintegrated variable.} we define
\begin{subequations}
    \begin{align}\label{eqn:basisclosed}
    \hcF_{L,i}^\sigma(z_3)&\coloneqq\frac{1}{(-2\pi i)^{L-3}}\int_{(\PC)^{L-3}}\prod_{\substack{k,m=1\\k<m}}^L\!|z_{km}|^{2s_{km}}\notag\\
    &\qquad\qquad\qquad\times\sigma\!\left[\prod_{k=i}^L\left(\frac{s_{1k}}{z_k}+\!\sum_{j=k+1}^L\frac{s_{kj}}{z_{kj}}\right)\!\prod_{m=4}^{i-1}\left(\sum_{\substack{n=2\\n\neq 3}}^{m-1}\frac{s_{nm}}{z_{nm}}\right)\right]\frac{\zb_3}{\zb_L}\prod_{k=4}^{L}\frac{dz_kd\zb_k}{\zb_{k(k-1)}}\\
    &=(-1)^i\sum_{(i_4,\ldots,i_L)\in I_i}\left(\prod_{k=4}^Ls_{\sigma(k)\sigma(i_k)}\right)\SC^\sigma[i_{4},\ldots,i_{L}](z_3)\,,\label{eq:combSelbclosed}
\end{align}
\end{subequations}
with the index set $I_i$ as defined below \eqn{eq:combSelbopen} and where the permutation $\sigma\in\mathfrak{S}_L^{1,2,3}$ acts on the indices $4,\ldots,L$ of the positions $z_k$ and Mandelstam variables $s_{mn}$ inside the parenthesis. Similarly to the integrals $\hF^{\sigma}_{L,i}(x_3)$ defined in \secref{sec:opengeneral}, the integrals $\hcF_{L,i}^\sigma(z_3)$ are defined for small~$|s_{ij}|$ by analytically continuing the closed Selberg integrals which appear in \eqref{eq:combSelbclosed}. We refer the reader to the discussion which follows \eqn{eq:combSelbopen}. The functions $\hcF_{L,i}^\sigma$ are real analytic single-valued functions of $z_3$, as can be seen from the decomposition into closed Selberg integrals, which themselves are real analytic single-valued functions. From a physics point of view the functions $\hcF^\sigma_{L,i}$ are essentially deformations of the closed-string integrals $\cF^\sigma$ from \eqn{eqn:cF} with the deformation parameter $z_3$.

As in the open string case, each of the $L{-}2$ vectors $\hcF_{L,i}(z_3)\coloneqq(\hcF_{L,i}^\sigma(z_3))_{\sigma\in \mathfrak{S}^{1,2,3}_{L}}$, with $i\in\{4,\ldots,L{+}1\}$, has $(L{-}3)!$ components. We combine these vectors into an $(L{-}2)!$-dimensional vector $\hcF_L(z_3)$, defined as
\begin{equation}\label{eqn:basisBSSTclosed}
    \hcF_L(z_3)\coloneq\left(\hcF_{L,4}(z_3),\hcF_{L,5}(z_3),\ldots,\hcF_{L,L+1}(z_3)\right)^T\,.
\end{equation}

\begin{rmk}
Since -- as explained in \rmkref{rmk:svSelberg} -- the closed Selberg integrals are the image of the open Selberg integrals under the sv-map, it follows from expansions \eqref{eq:combSelbopen} and \eqref{eq:combSelbclosed} that $\hcF(z_3)$ is the image of $\hF(x_3)$ under the sv-map, i.e.~$\mathrm{sv}\big(\hF(z_3)\big)=\hcF(z_3)$.
\end{rmk}

\begin{theorem}\label{prop:equationclosed}
    Let $V=(\zC[[s_{ij}]])^{(L-2)!}$. Then the vector-valued function $\hcF_L:\mathbb{C}\smallsetminus\{0,1\}\to V$ from \eqn{eqn:basisBSSTclosed} satisfies the differential equation
    \begin{align}
	   \label{eqn:KZequationclosed}
        \frac{\partial}{\partial z_3}\hcF_L(z_3)=\left(\frac{e_0}{z_3}+\frac{e_1}{z_3-1}\right)\hcF_L(z_3)
    \end{align}
    with the same matrices $e_0,e_1\in\Mat_{(L-2)!\times(L-2)!}(\langle s_{ij}\rangle_\zC)$ of \thmref{propKZhol}. Moreover, the boundary values
    \begin{equation}\label{eqn:closedboundval}
        \hcF_L(0)=\lim_{z_3\to0}|z_3|^{-2e_0}\hcF_L(z_3),\quad \hcF_L(1)=\lim_{z_3\to1}|1\,{-}\,z_3|^{-2e_1}\hcF_L(z_3)
    \end{equation}
    are related by
    \begin{equation}\label{eqn:closedDeligne}
        \hcF_L(1)=\Phi^\mathrm{sv} \,\hcF_L(0),    
    \end{equation}
    where $\Phi^\mathrm{sv}$ is the Deligne associator from \defref{defn:Deligne}.
\end{theorem}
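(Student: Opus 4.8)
The plan is to mirror the proof of \thmref{propKZhol} on the open-string side and then apply \thmref{thmKZvectorspaceSV} instead of \thmref{thmKZvectorspace}. First I would establish that $\hcF_L$ satisfies the KZ equation \eqref{eqn:KZequationclosed} with \emph{exactly} the same matrices $e_0,e_1$ as in the holomorphic case. For this, observe that differentiation with respect to $z_3$ can be interchanged with integration (the relevant boundary terms vanish because the Koba--Nielsen factor $|u|^2$ vanishes along the degeneration divisors for $\mathrm{Re}(s_{ij})>0$, and the integrands are combinations of closed Selberg integrands with no index equal to $3$). The key algebraic input is that the holomorphic part of the closed Selberg integrand is \emph{identical} to the open Selberg integrand (see \rmkref{rmk:svSelberg}), and the $z_3$-derivative acts only on this holomorphic part together with the common factor $\prod|z_{km}|^{2s_{km}}$; the cohomological reduction of $\phi_{L,i}^\sigma$ to the basis $\{\hf_{L,i}^\sigma\}$ in twisted de Rham cohomology is therefore governed by the same rational-function relations, hence the same constants $\lambda_{i,\sigma}$, and thus the same $e_0,e_1$. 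Equivalently, one can invoke that $\hcF_L=\mathrm{sv}(\hF_L)$ and that the sv-map commutes with $\partial/\partial z_3$ applied to the holomorphic multiple polylogarithm part, so that applying $\mathrm{sv}$ coefficient-wise to \eqref{eqn:KZequation} directly yields \eqref{eqn:KZequationclosed} with unchanged matrices.

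Once the KZ equation is established, the second statement follows by verifying the hypotheses of \thmref{thmKZvectorspaceSV}~\textit{(ii)}. Here $V=(\mathbb C[[s_{ij}]])^{(L-2)!}$, so I take $A=\mathbb C[[s_{ij}]]$, a complete graded $\mathbb C$-algebra (grading by total degree in the $s_{ij}$). I must check that each entry of $\hcF_L(z_3)$ lies in $A\hat\otimes_{\mathbb C}\mathcal H\overline{\mathcal H}$, i.e.\ that the coefficient of each graded piece is a linear combination of products $L_u(z_3)\overline{L_v(z_3)}$ of holomorphic and antiholomorphic multiple polylogarithms. This is precisely the content of \rmkref{rmk:svMPLscplxSelberg}: the Laurent expansion at the origin of each closed Selberg integral has coefficients that are linear combinations, over the $\mathbb Q$-algebra generated by single-valued MZVs, of single-valued multiple polylogarithms; since (as in the open case, cf.\ the discussion after \eqref{eq:combSelbopen}) the monomial prefactors in $\eqref{eq:combSelbclosed}$ cancel the potential poles at $s_{ij}=0$, each $\hcF_{L,i}^\sigma(z_3)$ is holomorphic in the $s_{ij}$ near the origin, and its Taylor coefficients lie in $\mathcal H\overline{\mathcal H}$. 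Then \thmref{thmKZvectorspaceSV}~\textit{(ii)} applies and gives that the regularized limits $\hcF_L(0)$ and $\hcF_L(1)$ exist, are finite, and are related by $\hcF_L(1)=\Phi^{\rm sv}\cdot\hcF_L(0)$, which is \eqref{eqn:closedDeligne}.

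The main obstacle I anticipate is the first step: justifying carefully that the matrices $e_0,e_1$ are literally unchanged. One must be careful that the cohomological reduction used in the holomorphic case takes place fibrewise for fixed $z_3$ in the twisted de Rham cohomology $H^{L-3}_{\mathrm dR}(\mathrm{Conf}_{L-3}(\mathbb P^1_{\mathbb C}\smallsetminus\{0,1,z_3,\infty\}),\nabla_{d\log u})$, and that the presence of the antiholomorphic factor $\frac{\bar z_3}{\bar z_L}\prod_k (d\bar z_k/\bar z_{k(k-1)})$ is inert under $\partial/\partial z_3$ and under the holomorphic exact-form manipulations; this is where appealing to the sv-map functoriality (the closed integrands being $\mathrm{sv}$-images of the open ones, and $\mathrm{sv}$ being a ring morphism compatible with the relevant differential operators, cf.\ \rmkref{rmk:230803} and \propref{BrownsvHyperlog}) gives the cleanest argument and avoids redoing the Stokes-theorem bookkeeping. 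A secondary technical point is the uniform control of the region of absolute convergence in the $s_{ij}$ needed to interchange $\partial/\partial z_3$ with the integral and to differentiate the analytic continuations; this is handled exactly as in the open case by the generalizations of the relevant propositions of \cite{Brown:2019wna} referenced in the footnotes, so I would simply cite that.
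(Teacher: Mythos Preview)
Your overall strategy is right and matches the paper's: establish the KZ equation with the same $e_0,e_1$, then apply \thmref{thmKZvectorspaceSV}~\textit{(ii)}. Part~2 of your plan is essentially what the paper does (the paper cites Thm.~5.1 of \cite{VanZerb} rather than just \rmkref{rmk:svMPLscplxSelberg}, but the conclusion is the same).

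There is, however, a genuine gap in Part~1. You write that ``the relevant boundary terms vanish because the Koba--Nielsen factor $|u|^2$ vanishes along the degeneration divisors''. This is the mechanism from the open case, where the integration domain is a simplex $\Delta$ with boundary and Stokes gives $\int_\Delta d(ug)=\int_{\partial\Delta}ug=0$. It does \emph{not} transfer to the closed case: the domain $(\PC)^{L-3}$ has no boundary, and the exact differential $d(|u|^2 g\,\overline{h})$ is not smooth there (the antiholomorphic factor $\overline{h}=\frac{1}{\zb_L}\prod_k\frac{d\zb_k}{\zb_{k(k-1)}}$ has poles). The paper's proof removes small balls around the singular points, applies Stokes on the complement, and reduces to a sum of \emph{antiholomorphic residues} of $\omega_\ell=\frac{|u|^2 g_\ell\,d\zb_\ell}{\zb_{(\ell+1)\ell}\,\zb_{\ell(\ell-1)}}$. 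The key computation is that the only nonzero residues, at $\zb_\ell=\zb_{\ell-1}$ and $\zb_\ell=\zb_{\ell+1}$, are equal and opposite because of the specific chain structure of $\overline{h}$, so they cancel. This residue cancellation is the actual content of the step, and it depends on the particular choice of antiholomorphic form; your argument does not see it.

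Your fallback of invoking sv-map functoriality is precisely what the paper wants to avoid (the stated goal is to work ``on the closed-string side'' independently of \cite{Brown:2019wna}), and in any case the commutation of $\mathrm{sv}$ with $\partial/\partial z_3$ at the level of Selberg integrals is not something you can quote from the definitions given; it would require essentially the same analytic work.
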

\begin{proof}
Let us prove the first statement. As discussed below \eqn{eq:combSelbclosed}, each component of the vector $\hcF_L(z_3)$ defines, for fixed $z_3$, a holomorphic function in the Mandelstam variables at the origin $s_{ij}=0$. However, the closed Selberg integrals appearing as summands in the expression~\eqref{eq:combSelbclosed} may diverge at the origin; in order to ensure their absolute convergence, we suppose that $\mathrm{Re}(s_{ij})>0$ and that $|s_{ij}|$ is small enough. If we show the result in this region, it will hold also in a neighborhood of the origin, by holomorphicity in the Mandelstam variables, and will therefore hold when viewing the components of $\hcF_L(z_3)$ as power series in $s_{ij}$. 

Under the above assumptions on $s_{ij}$, we can interchange derivation and integration. Notice that the integrand of each component~$\hcF_{L,i}^\sigma$ is given (up to a constant depending on $x_3$ and $s_{ij}$) by the product $|u|^2\hf_{L,i}^\sigma \overline{h}$ between the (holomorphic) integrand $u\hf_{L,i}^\sigma$ of $\hF_{L,i}^\sigma$, with $u$ and $\hf_{L,i}^\sigma$ defined in eqs.~\eqref{eq:twistKNfactor} and \eqref{eq:integrandSelbvec}, respectively, and the (antiholomorphic) factor $\overline{u}\overline{h}$, with
\begin{equation*}
    h\defeq \frac{1}{z_L}\prod_{k=4}^{L}\frac{dz_k}{z_{k(k-1)}}\,.
\end{equation*}
Obviously $\frac{\partial}{\partial z_3}(|u|^2\hf_{L,i}^\sigma \overline{h})=\big(\frac{\partial}{\partial z_3}(u\hf_{L,i}^\sigma) \big)\,\overline{u}\overline{h}$. By the same reasoning explained in the proof of \thmref{propKZhol}, it follows that each $\frac{\partial}{\partial z_3}\hcF_{L,i}^\sigma$ can be written as a linear combination of the functions $\hcF_{L,i}^\sigma$, the coefficients being predicted precisely by \eqn{eqn:KZequationclosed}, plus the integral of $d(|u|^2 g\overline{h})$, with~$g$ as in the proof of \thmref{propKZhol}. In order to prove that $\hcF_L:\mathbb{C}\smallsetminus\{0,1\}\to V$ satisfies the KZ equation, with the same matrices $e_0,e_1\in\Mat_{(L-2)!\times(L-2)!}(\langle s_{ij}\rangle_\zC)$ of \thmref{propKZhol}, it is therefore sufficient to prove that $\int_{(\PC)^{L-3}}d(|u|^2 g\overline{h})=0$.

Recall that $g=\sum_{j=4}^L g_j\prod_{k\neq j}z_k$, and we have already seen in the proof of \thmref{propKZhol} that the functions $g_j$ can have at most simple poles, and are regular at the divisors $z_{k1},z_{k2},z_{k(k-1)}=0$, $k=4,\ldots ,L$. One can check that this implies that the exact differential $d(|u|^2 g\overline{h})$ is integrable on $(\PC)^{L-3}$. Notice that $d(|u|^2 g\overline{h})=\sum_{\ell=4}^Ld_{z_\ell}(|u|^2 g\overline{h})$, where $d_{z_\ell}$ is the differential w.r.t. the variable $z_\ell$. It is therefore enough to show that $\int_{(\PC)^{L-3}}d_{z_\ell}(|u|^2 g\overline{h})=0$ for every~$\ell=4,\ldots ,L$. 

Notice that 
\begin{equation*}
\int_{(\PC)^{L-3}}d_{z_\ell}(|u|^2 g\overline{h})=\int_{(\PC)^{L-4}}\frac{\zb_{(\ell+1)\ell}}{\zb_L}\prod_{\substack{k=4\\k\neq \ell}}^{L}\frac{dz_kd\zb_k}{\zb_{k(k-1)}}\int_{\PC}d_{z_\ell}\frac{|u|^2g_\ell\, d\zb_\ell}{\zb_{(\ell+1)\ell}\,\zb_{\ell(\ell-1)}}\,.
\end{equation*}
Let us introduce the notation
\begin{equation*}
    \omega_\ell\defeq \frac{|u|^2g_\ell\, d\zb_\ell}{\zb_{(\ell+1)\ell}\,\zb_{\ell(\ell-1)}}\,,
\end{equation*}
and let us prove that $\int_{\PC}d_{z_\ell}\omega_\ell=0$, which implies the desired statement.

Let $B^\varepsilon_{z}$ be the ball with radius $\varepsilon>0$ around $z$ and define
\begin{align*}
    U_\varepsilon\defeq\PC\smallsetminus\bigcup_{\substack{k=1\\k\neq\ell}}^{L+1}B^\varepsilon_{z_k},
\end{align*}
with $z_1=0,\ z_2=1,\ z_{L+1}=\infty$. The differential form $d_{z_\ell}\omega_\ell$ is smooth over $U_\varepsilon$, hence one can apply Stokes theorem and obtain
\begin{equation*}
    \int_{U_\varepsilon}d_{z_\ell}\omega_\ell=\int_{\partial U_\varepsilon}\omega_\ell.
\end{equation*}
For the computation of the r.h.s.~of this equation we need to sum the antiholomorphic residues of~$\omega_\ell$ at the points $\zb_\ell=\zb_k,\ k=1,\ldots,L+1,\ k\neq\ell$ (this can be seen by passing to polar coordinates centered at these points and integrating term-by-term the Taylor expansion, see \cite{Schnetz:2013hqa} or \cite{VanZerb} for details). The residues are all zero except at the points $\zb_{\ell+1}$ and $\zb_{\ell-1}$. At these points $\omega_\ell$ has the same residue but with opposite signs, so that the sum of all residues vanishes, i.e.
\begin{equation*}
    \int_{U_\varepsilon}d_{z_\ell}\omega_\ell=\int_{\partial U_\varepsilon}\omega_\ell=0.
\end{equation*}
Since $\int_{\mathbb P^1_{\mathbb C}}d_{z_\ell}\omega_\ell$ is absolutely convergent, $\int_{\mathbb P^1_{\mathbb C}}d\omega=\lim_{\varepsilon\to 0}\int_{U_\varepsilon}d\omega=0$.

Let us now prove the second statement of the proposition. To do so, we want to apply \thmref{thmKZvectorspaceSV} \emph{(ii)}. Let us us check that its assumptions are satisfied. Notice that $V=\Mat_{(L-2)!\times 1}(A)$, and $A=\zC[[s_{ij}]]$ is a complete graded $\mathbb C$-algebra, as required. The fact that $e_0$ and $e_1$ are elements of $\End(V)$ is obvious. To conclude, we just need to check that each entry of~$\hcF_L$, as a function of $z_3$, belongs to $A\hat\otimes_{\zC} \cH\overline{\cH}=\cH\overline{\cH}[[s_{ij}]]$, where $\cH$ is the $\mathbb C$-algebra generated by multiple polylogarithms. This follows from results of \cite{VanZerb}. More precisely, one should first notice that the entries of~$\hcF_L$ are special cases of the ``correlation functions'' $\mathcal G_N$ of eq.~(1.4) in \cite{VanZerb}. Therefore, by applying Thm.~5.1 of \cite{VanZerb} we deduce that they are linear combinations with coefficients in $\mathbb C[[s_{ij}]]$ of products $F_{\Delta_1}(z_3)\overline{F_{\Delta_2}(z_3)}$ of holomorphic and antiholomorphic ``Aomoto-Gelfand hypergeometric functions'' $F_{\Delta}$ (see eq.~(1.6) of \cite{VanZerb}), and these combinations are holomorphic at the origin $s_{ij}=0$ because so are the entries of~$\hcF_L$. As explained in \S 4.4 and \S 5.3 of \cite{VanZerb}, it follows from the methods\footnote{In fact, one can deduce this result also by a minor adaptation of the arguments developed by Terasoma in \cite{Terasoma} and presented in this article: up to multiplication by an element of $\zC[s_{ij}]$, every Aomoto-Gelfand hypergeometric function can be written as a linear combination with coefficients in $\langle s_{ij}\rangle_{\mathbb C}$ of admissible Selberg integrals which is holomorphic at the origin $s_{ij}=0$. This can always be seen as an entry of a vector which satisfies the KZ equation for some matrices $e_0, e_1$ with entries in $\langle s_{ij}\rangle_\zC$ \cite{Kaderli}. One concludes from \thmref{thmKZvectorspace} \textit{(i)} that this vector is equal to $L(z)\cdot C$, with $C$ a vector with entries in $\zC[[s_{ij}]]$, whence the claimed statement.} of \cite{Brown:0606} and \cite{Brown:2019wna} that their Taylor expansion at $s_{ij}=0$ must have coefficients in $\cH\overline{\cH}$, as wanted (and, in fact, even in the $\mathbb C$-algebra $\mathcal H^{\rm sv}$ generated by single-valued multiple polylogarithms, either by the argument in \rmkref{rmk:svMPLscplxSelberg} or by the argument of \cite[\S 5.3]{VanZerb}).
\end{proof}

\subsection{Computing the boundary values}\label{ssec:complexlimits} Having shown that the vector $\hcF_L$ satisfies the KZ equation (with the same matrices $e_0$ and $e_1$ as in the open-string case in \secref{sec:openrec}), we will proceed to calculate the (regularized) boundary values $\hcF_L(0)$ and $\hcF_L(1)$, which are finite by \thmref{thmKZvectorspaceSV}. Similarly to the open string case, we will recover closed-string amplitudes by taking the physical limit $s_{3\bullet}\to0$ of these boundary values.

\subsection*{\boldmath Limit $z_3\,{\to}\,0$} Let us compute the vector $\hcF_L(0)\coloneq\lim_{z_3\to 0}|z_3|^{-2e_0}\hcF_L(z_3)$:
\begin{proposition}\label{prop:limits0closed}
    The limit vector $\hcF_L(0)=(\hcF_{L,4}(0),\ldots,\hcF_{L,L+1}(0))^T$ is such that
    \begin{align*}
        &\hcF_{L,4}(0)=\frac{1}{(-2\pi i)^{L-3}}\notag\\
        &\times\!\!\left(\!\int_{(\PC)^{L-3}}\! \frac{1}{\wb_L}\prod_{\substack{i,j=4\\i<j}}^L|w_{ij}|^{2s_{ij}}\prod_{k=4}^L|w_k|^{2s_{1k}}|1\,{-}\,w_k|^{2s_{3k}}\frac{dw_kd\wb_k}{\wb_{k(k-1)}}  \sigma\!\left[\prod_{k=4}^L\!\left(\frac{s_{1k}}{w_k}{+}\!\!\sum_{j=k+1}^L\frac{s_{kj}}{w_{kj}}\right)\!\right]\!\right)_{\!\!\sigma\in \mathfrak{S}^{1,2,3}_{L}}\\
        &\hcF_{L,i}(0)=(0,\ldots,0),\quad\text{for }i\in\{5,\ldots,L+1\},
    \end{align*}
    with $\wb_3\coloneq1$.
\end{proposition}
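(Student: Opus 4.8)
The plan is to follow the proof of \propref{prop:limits0open} almost verbatim, replacing the holomorphic regularization $x_3^{-e_0}$ by its single-valued analogue $|z_3|^{-2e_0}=\exp(-e_0\log|z_3|^2)$ and the open Selberg limits of \propref{prop:openlimits}~\textit{(i)} by the closed ones of \propref{prop:limitsclosed}~\textit{(i)}. By \thmref{prop:equationclosed}, $\hcF_L$ satisfies the KZ equation with the \emph{same} matrices $e_0,e_1$ as in the open case, and by \thmref{thmKZvectorspaceSV}~\textit{(ii)} the regularized limit $\hcF_L(0)=\lim_{z_3\to0}|z_3|^{-2e_0}\hcF_L(z_3)$ exists; only its value has to be identified. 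I would write
\begin{equation*}
|z_3|^{-2e_0}\hcF_L(z_3)=\bigl(|z_3|^{2\smax}\,|z_3|^{-2e_0}\bigr)\bigl(|z_3|^{-2\smax}\,\hcF_L(z_3)\bigr),
\end{equation*}
compute the limit of each factor as $z_3\to0$, and multiply (legitimate since both limits exist).

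For the matrix factor I would invoke the block form of $e_0$ from \propref{prop:matrices} together with the fact, recorded there and in \cite{Terasoma}, that $\smax$ is the maximal eigenvalue of $e_0$: in the region $\mathrm{Re}(s_{ij})>0$ of absolute convergence this forces every eigenvalue of the lower-right block $b$ to have real part strictly below $\mathrm{Re}(\smax)$. The standard formula for the exponential of a block-triangular matrix then gives
\begin{equation*}
|z_3|^{-2e_0}=\begin{pmatrix}|z_3|^{-2\smax}\,I_{(L-3)!} & X(z_3)\\ 0 & |z_3|^{-2b}\end{pmatrix},\qquad X(z_3)=-\log|z_3|^2\!\int_0^1\!|z_3|^{-2s\smax}\,a\,|z_3|^{-2(1-s)b}\,ds,
\end{equation*}
and, after the substitution $\tau=-(1-s)\log|z_3|^2$, one checks that $|z_3|^{2\smax}X(z_3)\to a(\smax I-b)^{-1}$ and $|z_3|^{2\smax}|z_3|^{-2b}\to0$ as $z_3\to0$, hence
\begin{equation*}
\lim_{z_3\to0}|z_3|^{2\smax}\,|z_3|^{-2e_0}=\begin{pmatrix}I_{(L-3)!} & A\\ 0 & 0\end{pmatrix},\qquad A\defeq a(\smax I-b)^{-1}.
\end{equation*}

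For the vector factor I would expand via the Selberg decomposition \eqref{eq:combSelbclosed} and apply \propref{prop:limitsclosed}~\textit{(i)} term by term. In the first block $\hcF_{L,4}$ all indices satisfy $i_k\neq2,3$, so each closed Selberg integral contributes its nonzero limit, and resumming as in \eqref{eqn:basisclosed} reproduces exactly the integral claimed for $\hcF_{L,4}(0)$. In the blocks $\hcF_{L,i}$ with $i\geq5$, the index set $I_i$ forces $i_4=2$ in every constituent closed Selberg integral, so by \propref{prop:limitsclosed}~\textit{(i)} each $|z_3|^{-2\smax}\SC^\sigma[2,i_5,\ldots,i_L](z_3)$ tends to $0$. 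Thus $\lim_{z_3\to0}|z_3|^{-2\smax}\hcF_L(z_3)=(\hcF_{L,4}(0),0,\ldots,0)^T$, and multiplying on the left by the matrix limit above --- whose off-diagonal block $A$ therefore acts on the zero subvector --- yields $\hcF_L(0)=(\hcF_{L,4}(0),0,\ldots,0)^T$, which is the assertion.

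No step presents a genuine obstacle: the argument is structurally identical to the open-string case, the only new ingredient being \propref{prop:limitsclosed}, whose proof already isolates the mechanism by which an index equal to $2$ makes a closed Selberg integral subleading as $z_3\to0$. The one mildly delicate point is the asymptotics of $X(z_3)$, which hinges on $\smax$ being the \emph{strictly} maximal eigenvalue of $e_0$ in the relevant range of Mandelstam variables.
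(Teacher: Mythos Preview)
Your proposal is correct and follows essentially the same approach as the paper: invoke the block-triangular form of $e_0$ from \propref{prop:matrices}, compute $\lim_{z_3\to0}|z_3|^{2\smax}|z_3|^{-2e_0}$, and combine with \propref{prop:limitsclosed}~\textit{(i)} applied to the Selberg decomposition \eqref{eq:combSelbclosed}. You supply more detail than the paper on the matrix-exponential asymptotics (the explicit integral formula for the off-diagonal block and the substitution argument), whereas the paper simply asserts the limiting block form; the two arguments are otherwise identical in structure.
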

\begin{proof}
    As argued in the proof of \thmref{prop:equationclosed} the derivative $\partial/\partial z_3$ acting on $\hcF_L(z_3)$ will yield the same relations as found in the open-string case for the vector $\hF_L(x_3)$ when acted upon with the derivative $\partial/\partial x_3$. Thus, the matrix $e_0$ in the KZ equation \eqref{eqn:KZequationclosed} is the same as in the open-string case and therefore of the block form
    \begin{equation*}
        e_0=\left(\begin{array}{cc}\smax\,I_{(L-3)!} & A\\ 0 & B \end{array}\right).
    \end{equation*}
    Using this, we find for the matrix exponential the regulated the scaling relation
    \begin{align*}
		\lim_{z_3\to0}|z_3|^{2\smax}|z_3|^{-2e_0}=\left(
		\begin{array}{cc}
			I_{(L-3)!} & A\\
			0 & 0
		\end{array}
		\right)\,.
	\end{align*}
    Combining this with the result \textit{(i)} of \propref{prop:limitsclosed} concludes the proof.
\end{proof}

The physical limit of the auxiliary Mandelstam variables $s_{3\bullet}\to0$ readily follows:
\begin{proposition}\label{prop:limits0closedphys}
    Let $\hcF_{L,4}(0)=(\hcF_{L,4}^{(1)}(0),\hcF_{L,4}^{(2)}(0))^T$, where $\hcF_{L,4}^{(1)}(0)$ contains the first $(L\,{-}\,4)!$ components $\hcF_{L,4}^{\sigma}(0)$ where $\sigma\in\mathfrak{S}_L^{1,2,3,4}$ fixes the index 4, while $\hcF_{L,4}^{(2)}(0)$ contains the remaining components. Then
    \begin{align*}
        &\lim_{s_{3\bullet}\to0}\hcF_{L,4}^{(1)}(0)=\Bigg(\frac{1}{(-2\pi i)^{L-4}}\int_{(\PC)^{L-4}}\frac{1}{\wb_L}\prod_{\substack{i,j=5\\i<j}}^L|w_{ij}|^{2s_{ij}}\,\notag\\
        &\hspace{20ex}\times\sigma\!\left[\prod_{k=5}^L\left(\frac{s_{1k}}{w_k}+\sum_{j=k+1}^L\frac{s_{kj}}{w_{kj}}\right)\right]\prod_{k=5}^{L}|w_k|^{2s_{1k}}|1\,{-}\,w_k|^{2s_{4k}}\frac{dw_kd\wb_k}{\wb_{k(k-1)}}\Bigg)_{\sigma\in \mathfrak{S}^{1,2,3,4}_{L}}\\
        &\lim_{s_{3\bullet}\to0}\hcF_{L,4}^{(2)}(0)=(0,\ldots,0),
    \end{align*}
    with $\wb_4\coloneq1$ and where $\lim_{s_{3\bullet}\to0}$ refers to the limit where all Mandelstam variables $s_{3i}$, for $i=1,\ldots,L$, tend to~0.
\end{proposition}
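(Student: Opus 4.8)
The plan is to mirror, essentially verbatim, the argument used for the open-string analogue in \propref{prop:limits0openphys}, exploiting the fact (noted in \rmkref{rmk:svSelberg} and throughout \secref{sec:closedrec}) that the closed quantities are the single-valued images of the open ones, so that the combinatorial bookkeeping of permutations and the identification of which summands survive the $s_{3\bullet}\to 0$ limit is identical. First I would observe that \propref{prop:limits0closed} already gives that the bottom $(L-3)!(L-1)$ components of $\hcF_L(0)$ vanish identically, so these remain zero in the further limit $s_{3\bullet}\to 0$ and it suffices to treat the first $(L-3)!$ components $\hcF_{L,4}(0)=(\hcF_{L,4}^\sigma(0))_{\sigma\in\mathfrak{S}^{1,2,3}_L}$. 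Taking $s_{3\bullet}\to 0$ in the integral representation of \propref{prop:limits0closed} is harmless (the factors $|1-w_k|^{2s_{3k}}$ tend to $1$), leaving
\begin{equation*}
\lim_{s_{3\bullet}\to0}\hcF_{L,4}^\sigma(0)=\frac{1}{(-2\pi i)^{L-3}}\int_{(\PC)^{L-3}}\frac{1}{\wb_L}\prod_{\substack{i,j=4\\i<j}}^L|w_{ij}|^{2s_{ij}}\,\sigma\!\left[\prod_{k=4}^L\left(\frac{s_{1k}}{w_k}+\sum_{j=k+1}^L\frac{s_{kj}}{w_{kj}}\right)\right]\prod_{k=4}^L|w_k|^{2s_{1k}}\frac{dw_kd\wb_k}{\wb_{k(k-1)}}\,.
\end{equation*}

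Next I would split according to whether $\sigma$ fixes the index $4$ or not, exactly as in the open case. If $\sigma\in\mathfrak{S}^{1,2,3,4}_L$, the factor $\tfrac{s_{14}}{w_4}+\sum_{j\ge 5}\tfrac{s_{4j}}{w_{4j}}$ in the bracket is unchanged by $\sigma$ and the remaining product in the bracket is independent of $w_4$; the key identity
\begin{equation*}
\frac{s_{14}}{w_4}+\sum_{j=5}^L\frac{s_{4j}}{w_{4j}}=\frac{1}{\prod_{k=4}^Lw_k^{s_{1k}}\prod_{\substack{i,j=4\\i<j}}^Lw_{ij}^{s_{ij}}}\frac{\partial}{\partial w_4}\!\left(\prod_{k=4}^Lw_k^{s_{1k}}\prod_{\substack{i,j=4\\i<j}}^Lw_{ij}^{s_{ij}}\right)
\end{equation*}
lets me recognise the holomorphic part of the $w_4$-integrand as a total $w_4$-derivative. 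The complex analogue of the ``boundary evaluation'' step is the one-variable complex integral
\begin{equation*}
\frac{1}{-2\pi i}\int_{\PC}\frac{dw_4\,d\wb_4}{\wb_4-1}\,\frac{\partial}{\partial w_4}\!\Big(|w_4|^{2a}\,G(w_4)\Big)\,,
\end{equation*}
which I would evaluate by the same residue/Stokes argument already used in the proof of \thmref{prop:equationclosed} and in the proof of \propref{prop:limitsclosedphysical}: the antiholomorphic factor $1/(\wb_4-1)$ contributes a residue only at $\wb_4=1$ (equivalently $\wb_3=1$, the innermost $\wb$-pole from the Parke--Taylor-type denominator chain), producing the boundary value $w_4=1$ of the holomorphic primitive, i.e. the factors $|w_k|^{2s_{1k}}|1-w_k|^{2s_{4k}}$ for $k\ge 5$. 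This collapses one complex integration and yields precisely the claimed $(L-4)$-fold integral with $\wb_4\coloneq 1$.

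Finally, if $\sigma$ moves the index $4$, I would decompose $\sigma$ into transpositions and track, exactly as in \propref{prop:limits0openphys}, how the ``total-derivative'' structure migrates: there is some index $\ell\in\{5,\ldots,L\}$ such that after applying all of $\sigma$ the bracket contains a factor $\tfrac{s_{1\ell}}{w_\ell}+\sum_{j\ge 5}\tfrac{s_{\ell j}}{w_{\ell j}}$, which is again a total $w_\ell$-derivative of the Koba--Nielsen product, while the rest of the bracket is $w_\ell$-independent. Performing the $w_\ell$-integration by the same residue argument now gives a boundary term that vanishes, because $w_\ell$ ranges over a $\PC$ whose relevant antiholomorphic pole forces evaluation at $w_\ell=w_{\ell-1}$ and at $w_\ell=0$ of an expression that coincides at those two endpoints (the primitive vanishes at the lower limit and the two contributions cancel), so these components are zero. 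Hence only the $(L-4)!$ components with $\sigma(4)=4$ survive, which is the assertion. The main obstacle is purely technical rather than conceptual: one must be careful that the residue computation for the single complex integral in $w_4$ (respectively $w_\ell$) genuinely reproduces the holomorphic boundary-evaluation step — i.e. that the antiholomorphic denominator chain $\prod\wb_{k(k-1)}$ together with the single extra $1/\wb_L$ has its only pole in the $\wb_4$-variable at $\wb_4=1$, and that the Stokes/regularisation argument of \thmref{prop:equationclosed} applies with the total-derivative integrand. Once that single-variable lemma is in place (it is the $p=1$ case of \eqn{eqn:complexBeta}, or \cite[Lemma 4.23]{Brown:2019wna}), the rest is the same permutation combinatorics as in the open-string proof.
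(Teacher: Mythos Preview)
Your strategy is the paper's strategy, and the combinatorics (splitting on whether $\sigma$ fixes $4$, recognising the relevant factor as a logarithmic $w$-derivative of the Koba--Nielsen product) is reproduced correctly. The weak point is your description of the antiholomorphic residue structure, which is where the closed case genuinely differs from the open one.

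In the case $\sigma\in\mathfrak{S}_L^{1,2,3,4}$ you assert that the antiholomorphic denominator chain has its only $\wb_4$-pole at $\wb_4=1$. This is not true: the factor $1/\wb_{54}$ from the $k=5$ term of $\prod_k 1/\wb_{k(k-1)}$ gives a second pole at $\wb_4=\wb_5$. The paper deals with this by observing that the residue there vanishes because the integrand carries the Koba--Nielsen factor $|w_{45}|^{2s_{45}}$, which kills the contribution for $\Re(s_{45})>0$. You should include this check; otherwise the reduction to the boundary value at $w_4=1$ is not justified.

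In the case $\sigma\notin\mathfrak{S}_L^{1,2,3,4}$ your residue bookkeeping is off: you speak of evaluation at $w_\ell=w_{\ell-1}$ and $w_\ell=0$, which is the \emph{open} simplex endpoint picture and has no analogue here. For $\ell\in\{5,\ldots,L\}$ the $\wb_\ell$-dependence of the Parke--Taylor chain sits in the two factors $1/\wb_{\ell(\ell-1)}$ and $1/\wb_{(\ell+1)\ell}$, so the poles are at $\wb_\ell=\wb_{\ell-1}$ and $\wb_\ell=\wb_{\ell+1}$ (and, if $\ell=L$, also $\wb_\ell=0$ from the extra $1/\wb_L$). The paper's mechanism for vanishing is that these two residues either cancel with opposite signs (as in the proof of \thmref{prop:equationclosed}) or are individually killed by the factors $|w_{\ell(\ell\pm1)}|^{2s_{\ell(\ell\pm1)}}$. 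Your sentence ``the primitive vanishes at the lower limit and the two contributions cancel'' conflates the open boundary-term argument with the closed residue argument; replace it by the correct pole locations and the Koba--Nielsen vanishing, and the proof goes through.
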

\begin{proof}
We proceed as in the open-string case in \propref{prop:limits0openphys}: first, we note that due to the results of \propref{prop:limits0closed} it directly follows that the bottom $(L{-}3)!(L{-}1)$ components of $\hcF_L(0)$ vanish also here. Thus let us consider the first $(L{-}3)!$ components $\hcF_{L,4}(0)=(\hcF_{L,4}^\sigma(0))_{\sigma\in \mathfrak{S}^{1,2,3}_{L}}$. After substitution $w_k=z_k/z_3$ (c.f.~\eqn{eqn:closedsub}) and using the result of \propref{prop:limits0closed}, we can write the integral as
    \begin{align*}
    &\lim_{s_{3\bullet}\to0}\hcF_{L,4}^\sigma(0)=\bigg(\!\!\frac{-1}{2\pi i}\!\!\bigg)^{L{-}3}\!\!\!\int_{(\PC)^{L-3}}\frac{1}{\wb_L}\prod_{\substack{i,j=4\\i<j}}^L|w_{ij}|^{2s_{ij}}\,\sigma\!\!\left[\prod_{k=4}^L\left(\frac{s_{1k}}{w_k}+\sum_{j=k+1}^L\frac{s_{kj}}{w_{kj}}\right)\right]\prod_{k=4}^L |w_k|^{2s_{1k}}\frac{dw_kd\wb_k}{\wb_{k(k-1)}}\\
    &=\bigg(\!\!\frac{-1}{2\pi i}\!\!\bigg)^{L{-}3}\!\!\!\int_{(\PC)^{L-3}}\frac{1}{\wb_L}\prod_{\substack{i,j=4\\i<j}}^L|w_{ij}|^{2s_{ij}}\sigma\!\!\left[\!\!\rule{0cm}{0.8cm}\right.\!\left(\frac{s_{14}}{w_4}\!+\!\sum_{j=5}^L\frac{s_{4j}}{w_{4j}}\right)\!\underbrace{\prod_{k=5}^L\left(\frac{s_{1k}}{w_k}\!+\!\!\!\sum_{j=k+1}^L\frac{s_{jk}}{w_{kj}}\right)\!\!}_{\text{independent of }w_4}\left.\rule{0cm}{0.8cm}\!\right]\!\prod_{k=4}^L |w_k|^{2s_{1k}}\frac{dw_kd\wb_k}{\wb_{k(k-1)}}\!,
    \end{align*}
    where as in the open-string case, the first term in the parentheses can be recast as
    \begin{equation*}
        \frac{s_{14}}{w_4}+\sum_{j=5}^L\frac{s_{4j}}{w_{4j}}=\frac{1}{\prod_{k=4}^Lw_k^{s_{1k}}\prod_{\substack{i,j=4\\i<j}}^Lw_{ij}^{s_{ij}}}\frac{d}{dw_4}\left(\prod_{k=4}^Lw_k^{s_{1k}}\prod_{\substack{i,j=4\\i<j}}^Lw_{ij}^{s_{ij}}\right),
    \end{equation*}
    which will allow us to perform one integration trivially in the following.

    As in the proof of \propref{prop:limits0openphys} we distinguish between $\sigma\in\mathfrak{S}_L^{1,2,3,4}$ and $\sigma\notin\mathfrak{S}_L^{1,2,3,4}$:
    \begin{itemize}
        \item For $\sigma\in\mathfrak{S}_L^{1,2,3,4}$, so $\sigma(4)=4$, the term $\frac{s_{14}}{w_4}+\sum_{j=5}^L\frac{s_{4j}}{w_{4j}}$ is even invariant under $\sigma$ and we can write the integrand as a total derivative w.r.t.~$w_4$:
        \begin{align*}
            &\lim_{s_{3\bullet}\to0}\hcF_{L,4}^\sigma(0)\\
            &=\bigg(\!\!\frac{-1}{2\pi i}\!\!\bigg)^{L{-}3}\!\!\!\int_{(\PC)^{L-3}}d_{w_4}\!\left(\!\frac{|w_4|^{2s_{14}}d\wb_4}{\wb_L(\wb_4\,{-}\,1)}\prod_{\substack{i,j=4\\i<j}}^L|w_{ij}|^{2s_{ij}}\,\sigma\!\!\left[\prod_{k=5}^L\left(\frac{s_{1k}}{w_k}+\sum_{j=k+1}^L\frac{s_{kj}}{w_{kj}}\right)\right]\prod_{k=5}^L|w_k|^{s_{1k}}\frac{dw_kd\wb_k}{\wb_{k(k-1)}}\right)\\
            &=\bigg(\!\!\frac{-1}{2\pi i}\!\!\bigg)^{L{-}4}\!\!\!\int_{(\PC)^{L-4}}\!\frac{1}{\wb_L}\prod_{\substack{i,j=5\\i<j}}^L\!|w_{ij}|^{2s_{ij}}\,\sigma\!\!\left[\prod_{k=5}^L\!\left(\!\frac{s_{1k}}{w_k}{+}\sum_{j=k+1}^L\frac{s_{kj}}{w_{kj}}\!\right)\!\right]\prod_{k=5}^L|w_k|^{s_{1k}}|1\,{-}\,w_k|^{2s_{4k}}\frac{dw_kd\wb_k}{\wb_{k(k-1)}}\Big|_{\wb_4=1},
        \end{align*}
        where analogous to the proof of \thmref{prop:equationclosed} we applied Stoke's theorem and have to take the residues from the poles in $\wb_4$ at 1 and $\wb_5$, where the latter vanishes due to the term $|w_{45}|^{2s_{45}}$.
        A similar representation of this can be derived using \propref{prop:limitsclosedphysical} \textit{(i)}.
        \item Conversely, if $\sigma(4)\neq4$, so $\sigma\notin\mathfrak{S}_L^{1,2,3,4}$, with the same arguments as in \propref{prop:limits0openphys}, we obtain that there is an $\ell\neq4$ so that the second term in the brackets is independent of $w_\ell$, while the first term is
            \begin{equation*}
        \frac{s_{1\ell}}{w_\ell}+\sum_{j=5}^L\frac{s_{\ell j}}{w_{\ell j}}=\frac{1}{\prod_{k=4}^Lw_k^{s_{1k}}\prod_{\substack{i,j=4\\i<j}}^Lw_{ij}^{s_{ij}}}\frac{d}{dw_\ell}\left(\prod_{k=4}^Lw_k^{s_{1k}}\prod_{\substack{i,j=4\\i<j}}^Lw_{ij}^{s_{ij}}\right).
    \end{equation*}
    Now, we can again write the integrand as a total derivative -- but now w.r.t.~$w_\ell$ for $\ell\neq4$ -- and the residues from poles in $\wb_\ell$ at $\wb_{\ell-1}$ and $\wb_{\ell+1}$ will cancel exactly:
        \begin{align*}
            \quad\qquad\lim_{s_{3\bullet}\to0}\hcF_{L,4}^\sigma(0)&=\frac{1}{(-2\pi i)^{L-3}}\int_{(\PC)^{L-3}}\frac{1}{\wb_L}\prod_{\substack{i,j=4\\i<j}}^L|w_{ij}|^{2s_{ij}}\\
            &\quad\times\left(\frac{s_{1\ell}}{w_\ell}+\sum_{j=5}^L\frac{s_{\ell j}}{w_{\ell j}}\right)\underbrace{\prod_{k=5}^L\left(\frac{s_{1\sigma(k)}}{w_{\sigma(k)}}+\sum_{j=k+1}^L\frac{s_{\sigma(k)\sigma(j)}}{w_{\sigma(k)\sigma(j)}}\right)}_{\text{independent of }w_\ell}\prod_{k=4}^L|w_k|^{2s_{1k}}\frac{dw_kd\wb_k}{\wb_{k(k-1)}}\\
            &=\frac{1}{(-2\pi i)^{L-3}}\int_{(\PC)^{L-3}}d_{w_\ell}\Bigg(\frac{d\wb_\ell}{\wb_L(\wb_\ell-\wb_{\ell-1})}
            \prod_{\substack{i,j=4\\i<j}}^L|w_{ij}|^{2s_{ij}}\\
            &\quad\times\prod_{k=5}^L\left(\frac{s_{1\sigma(k)}}{w_{\sigma(k)}}+\sum_{j=k+1}^L\frac{s_{\sigma(k)\sigma(j)}}{w_{\sigma(k)\sigma(j)}}\right)\prod_{\substack{k=4\\k\neq \ell}}|w_k|^{2s_{1k}}\frac{dw_kd\wb_k}{\wb_{k(k-1)}}\Bigg)\\
            &=0.
        \end{align*}
        With that, all of these components vanish.
    \end{itemize}
    Thus -- just as in the open-string case -- only the components corresponding to $\sigma\in\mathfrak{S}_L^{1,2,3,4}$ yield a non-vanishing result in the limit, and the number of non-vanishing components is $(L\,{-}\,4)!$.
\end{proof}

\subsection*{\boldmath Limit $z_3\,{\to}\,1$} We compute now the vector $\hcF_L(1)\coloneq\lim_{z_3\to 1}|1\,{-}\,z_3|^{-2e_1}\hcF_L(z_3)$:
\begin{proposition}\label{prop:limit1cF}
    Let $\hcF_L(1)=(\hcF_{L,4}(1),\ldots, \hcF_{L,L+1}(1))^T$ be the regularized boundary value as defined in \eqn{eqn:closedboundval}. Then the first subvector $\hcF_{L,4}(1)$ reads
    \begin{align}\label{eqn:F1closed}
        \frac{1}{(-2\pi i)^{L-3}}\left(\!\int_{(\PC)^{L-3}}\frac{1}{\wb_L}\prod_{\substack{i,j=1\\i<j;i,j\neq3}}^L\!|w_{ij}|^{2s_{ij}'}\prod_{k=4}^L\frac{dw_kd\wb_k}{\wb_{k(k-1)}}\, \sigma\!\left[\prod_{k=i}^L\left(\frac{s_{1k}}{w_k}+\sum_{j=k+1}^L\frac{s_{kj}}{w_{kj}}\right)\right]\right)_{\sigma\in\mathfrak{S}^{1,2,3}_{L}},
    \end{align}
    where $s_{ij}'\coloneq s_{ij}$ if $i\neq2$ and $s_{2j}'\coloneq s_{2j}+s_{3j}$, and also $\wb_3\coloneq1$.
\end{proposition}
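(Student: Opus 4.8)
The plan is to mirror the proof of \propref{prop:limit1F} from the open-string setting, exploiting the fact, established in \thmref{prop:equationclosed}, that $\hcF_L$ satisfies the KZ equation with \emph{exactly the same} matrices $e_0,e_1$ as those appearing in \thmref{propKZhol}. In particular $e_1$ has the block-triangular shape recorded in \propref{prop:matrices}, with upper-left block $s_{23}I_{(L-3)!}$ and vanishing upper-right block. First I would compute the matrix ``exponential'' $|1-z_3|^{-2e_1}=\exp(-2\log|1-z_3|\,e_1)$: since $e_1$ is block-lower-triangular, so is this matrix, its upper-left block is $|1-z_3|^{-2s_{23}}I_{(L-3)!}$ and its upper-right block is $0$ (because powers, hence the exponential, of a block-lower-triangular matrix keep a zero upper-right block). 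Substituting into $\hcF_L(1)=\lim_{z_3\to1}|1-z_3|^{-2e_1}\hcF_L(z_3)$ from \eqref{eqn:closedboundval} and reading off the top $(L-3)!$ entries gives $\hcF_{L,4}(1)=\lim_{z_3\to1}|1-z_3|^{-2s_{23}}\hcF_{L,4}(z_3)$, just as in the open case.

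It then remains to evaluate this regularized componentwise limit. For each $\sigma\in\mathfrak{S}^{1,2,3}_L$ I would expand $\hcF_{L,4}^\sigma(z_3)$, via \eqref{eq:combSelbclosed}, as a finite $\langle s_{ij}\rangle_{\mathbb C}$-linear combination of closed Selberg integrals $\SC^\sigma[i_4,\dots,i_L](z_3)$; for this first block all occurring indices satisfy $i_k\notin\{2,3\}$ (cf.~\rmkref{rmk:ind23}), which incidentally means no ``index-$3$'' contribution of the form $\tfrac{1}{z_{k3}}\big|_{z_3=1}$ can arise. Applying \propref{prop:limitsclosed}\,\textit{(ii)} to each summand — with the regulator $|1-z_3|^{-2s_{23}}$ being precisely the one appearing there — and using that $\sigma$ commutes with the limit (it only relabels variables inside the integrand), the termwise limits reassemble, by running \eqref{eq:combSelbclosed} backwards, into a single integral over $(\PC)^{L-3}$ whose integrand is the $z_3\to1$ limit of the integrand of $\hcF_{L,4}^\sigma(z_3)$. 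Concretely, for each $k\in\{4,\dots,L\}$ the two Koba--Nielsen factors $|z_{2k}|^{2s_{2k}}$ and $|z_{3k}|^{2s_{3k}}$ both tend to a power of $|1-z_k|$ and merge into $|1-z_k|^{2(s_{2k}+s_{3k})}$, all other exponents being unaffected; writing $s'_{2k}=s_{2k}+s_{3k}$ and $s'_{ij}=s_{ij}$ otherwise, renaming the integration variables $z_k\to w_k$ and setting $w_1=0$, $w_2=1$, $w_3=1$, one lands on the asserted formula \eqref{eqn:F1closed}.

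The only step requiring care is the interchange of the regularized limit $z_3\to1$ with the $(L-3)$-fold complex integration, including the cancellation of the singularity carried by the factor $|1-z_3|^{2s_{23}}$ in the integrand against the $|1-z_3|^{-2s_{23}}$ regulator. I expect this to be the main — though not severe — obstacle, and I would handle it exactly as suggested above: by passing to the expansion \eqref{eq:combSelbclosed} and invoking \propref{prop:limitsclosed}\,\textit{(ii)} summand by summand, so that the analytic justification is already done at the level of individual closed Selberg integrals. (One could instead pull the limit under the integral sign directly, using convergence estimates near $z_3=1$ of the type developed in \cite{VanZerb} or \cite{Brown:2019wna}, but the route through \propref{prop:limitsclosed} keeps the argument self-contained.) Beyond this, the proof is pure bookkeeping, entirely parallel to \propref{prop:limit1F}.
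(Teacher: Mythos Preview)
Your proposal is correct and follows essentially the same route as the paper: use the block-lower-triangular form of $e_1$ from \propref{prop:matrices} (which carries over verbatim since \thmref{prop:equationclosed} gives the same matrices) to reduce the regularized limit of the first block to $\lim_{z_3\to1}|1-z_3|^{-2s_{23}}\hcF_{L,4}(z_3)$, then invoke \propref{prop:limitsclosed}\,\textit{(ii)}. The paper's proof is terser but identical in substance; your added remarks on expanding via \eqref{eq:combSelbclosed} and the absence of index-$3$ terms are correct elaborations of what the paper leaves implicit.
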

\begin{proof}
    As argued before, the matrices of the KZ equation \eqref{eqn:KZequationclosed} are the same as in the open-string case from \eqn{eqn:KZequation}. Thus, the matrix $e_1$ is also here of the form
    \begin{equation*}
        e_1=\left(\begin{array}{cc}
			s_{23}\,I_{(L-3)!} & 0\\
			c & d
        \end{array}\right),
	\end{equation*}
    which leads to the matrix exponential
    \begin{equation*}
        |1-z_3|^{-2e_1}=\left(\begin{array}{cc}
            |1-z_3|^{-2s_{23}}I_{(L-3)!} & 0 \\
            \ast & \ast
        \end{array}\right).
    \end{equation*}
    This result combined with \propref{prop:limitsclosed} \textit{(ii)} yields \eqn{eqn:F1closed}.
\end{proof}

As for the other limit, we will again proceed to take the physical limit $s_{3\bullet}\to0$ of this result.
\begin{proposition}\label{prop:closedF1phys}
    Taking the physical limit of $\hcF_{L,4}(1)$ yields
    \begin{align*}
        &\lim_{s_{3\bullet}\to0}\hcF_{L,4}(1)\notag\\
        &=\left(\frac{1}{(-2\pi i)^{L-3}}\int_{(\PC)^{L-3}}\frac{1}{\zb_L}\prod_{\substack{i,j=1\\i<j;i,j\neq3}}^L|z_{ij}|^{2s_{ij}}\,\prod_{k=4}^L\frac{dz_kd\zb_k}{\zb_{k(k-1)}}\ \sigma\left[\prod_{k=i}^L\left(\frac{s_{1k}}{z_k}+\sum_{j=k+1}^L\frac{s_{kj}}{z_{kj}}\right)\right]\right)_{\sigma\in\mathfrak{S}^{1,2,3}_{L}},
    \end{align*}
    where we denote $\zb_{43}=\zb_4-1$ in this limit and $\lim_{s_{3\bullet}\to0}$ refers to the limit the Mandelstam variables $s_{3i}$, for $i=1,\ldots,L$, tend to 0.
\end{proposition}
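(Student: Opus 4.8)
The plan is to mimic the one-line proof of the open-string counterpart \propref{prop:openphysical1}: start from the expression for $\hcF_{L,4}(1)$ established in \propref{prop:limit1cF} and take the limit $s_{3\bullet}\to 0$ directly under the integral sign. In that expression the Mandelstam variables enter only through the shifted exponents $s'_{ij}$, which coincide with $s_{ij}$ except for $s'_{2j}=s_{2j}+s_{3j}$; hence sending $s_{3\bullet}\to 0$ replaces every $s'_{ij}$ by $s_{ij}$ while leaving the rational Parke--Taylor-type factor $\frac{1}{\wb_L}\,\sigma[\,\cdot\,]\prod_{k=4}^L \wb_{k(k-1)}^{-1}$ untouched. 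Relabelling the dummy integration variables $w_k$ as $z_k$ and recalling that $w_3=1$ was fixed when passing to the $z_3\to 1$ regularised limit — so that $\wb_{43}=\wb_4-1=\zb_4-1$, matching the notation in the statement — then yields exactly the claimed formula.

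The only point requiring a word of justification is the exchange of the limit $s_{3\bullet}\to 0$ with the integration. The cleanest route is to recall from \thmref{prop:equationclosed} (via \thmref{thmKZvectorspaceSV}) that every entry of $\hcF_L(1)$, and in particular every entry of $\hcF_{L,4}(1)$, is already \emph{holomorphic in the $s_{ij}$ in a neighbourhood of the origin}; specialising $s_{3\bullet}=0$ is therefore unproblematic and commutes with everything in sight. Alternatively, one can argue directly in the region $\mathrm{Re}(s_{ij})>0$ with $|s_{ij}|$ sufficiently small, where the integral of \propref{prop:limit1cF} is absolutely convergent: for $s_{3\bullet}$ in a small real neighbourhood of $0$ the integrand is dominated, uniformly in $s_{3\bullet}$, by an integrable majorant of the shape $\prod_{i<j}|z_{ij}|^{2\mathrm{Re}(s_{ij})-\varepsilon}$ times the fixed rational factor, so dominated convergence applies; the resulting identity then propagates to a full neighbourhood of $s_{ij}=0$ by holomorphicity.

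I do not expect a genuine obstacle: this proposition is a direct corollary of \propref{prop:limit1cF}, precisely as \propref{prop:openphysical1} is a corollary of \propref{prop:limit1F}. The only thing that needs care is the bookkeeping of the antiholomorphic variables — checking that the notation $\zb_{43}=\zb_4-1$ is consistent with the $w_3=1$ specialisation inherited from the $z_3\to 1$ boundary value, and noting, as in \rmkref{rmk:nos12} for the open case, that no $s_{2\bullet}$ survives in a form that could secretly reintroduce a $z_3$-dependence. The physical content of the result — that this limit vector reproduces a basis of $L$-point closed-string tree-level amplitudes — is then exactly what the discussion in \secref{sec:StringSelberg} predicts, and is the closed-string analogue of \rmkref{rmk:openlastcomps}.
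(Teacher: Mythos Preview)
Your approach is correct and matches the paper's own proof, which is a single sentence: ``Using the result from the previous \propref{prop:limit1cF}, we can immediately take the limit for the variables $s_{3\bullet}$ and end up with the given result.'' You have supplied more justification (for interchanging limit and integral, and for the $\zb_{43}=\zb_4-1$ bookkeeping) than the paper bothers with, but the argument is the same.
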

\begin{proof}
    Using the result from the previous \propref{prop:limit1cF}, we can immediately take the limit for the variables $s_{3\bullet}$ and end up with the given result.
\end{proof}

\begin{rmk}
    While the explicit calculation of the lower components of the vector $\hcF_L(1)$ is very involved, they can be calculated in the physical limit $s_{3\bullet}\to0$ explicitly which we show in the examples in \secref{sec:closed4} and \secref{sec:closed5}. The structure that we see for these components in the physical limit is analogous to the one-string case discussed in \rmkref{rmk:openlastcomps}: they are either also of the form of $L$-point amplitudes or of the schematic form ``($L$-point amplitude) $-$ (($L{-}1$)-point amplitude)''.
\end{rmk}

\subsection*{The closed-string amplitude recursion}
Collecting the results from \secref{sec:closedgeneral} and \secref{ssec:complexlimits}, we can formulate the genus-zero closed-string amplitude recursion in the following Theorem.
\begin{theorem}\label{thm:closedrec}
For any $L\geq4$ there exist $(L{-}2)!$-dimensional vectors~$\mathscr{A}_L$ and~$\mathscr{B}_L$ whose entries are functions of the Mandelstam variables $s_{ij}$, holomorphic in a neighborhood of the origin, such that: 
    \begin{enumerate}[label=(\roman*)]
        \item the first $(L{-}4)!$ components of $\mathscr{A}_L$ coincide with the closed-string $(L{-}1)$-point integrals $\cF_{L-1}^\sigma$, $\sigma\in\mathfrak{S}_{L-4}$, from \eqn{eqn:cF}, and the remaining components are identically zero;
        \item the first $(L{-}3)!$ components of $\mathscr{B}_L$ are the closed-string $L$-point integrals~$\cF_L^\sigma$, $\sigma\in\mathfrak{S}_{L-3}$, from \eqn{eqn:cF};
        \item there exist $(L{-}2)!$-dimensional square matrices $e_0$, $e_1$, whose entries are integer linear combinations of Mandelstam variables $s_{ij}$, such that the Taylor expansions of $\mathscr{A}_L$ and $\mathscr{B}_L$ are related by the Deligne associator $\Phi^{\rm sv}(e_0,e_1)$ via
        \begin{equation*}
            \mathscr{B}_L=\Phi(e_0,e_1)\mathscr{A}_{L}.
        \end{equation*}
    \end{enumerate}
    In particular, the Taylor expansion of any closed-string $L$-point integral~$\cF_L^\sigma$ can be obtained from the Taylor expansion of the closed-string $(L{-}1)$-point integrals $\cF_{L-1}^\sigma$ and the Deligne associator $\Phi^{\rm sv}(e_0,e_1)$, and therefore the coefficients of these Taylor expansions belong to the ring $\mathcal Z^{\rm sv}$ of single-valued multiple zeta values.
\end{theorem}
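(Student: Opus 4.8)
The plan is to reproduce, almost verbatim, the proof of \thmref{thmOpenRec}, replacing the holomorphic solution $\hF_L$ of the KZ equation by its single-valued analogue $\hcF_L$ and the Drinfeld associator by the Deligne associator. Concretely, set $\mathscr A_L\coloneq\lim_{s_{3\bullet}\to0}\hcF_L(0)$ and $\mathscr B_L\coloneq\lim_{s_{3\bullet}\to0}\hcF_L(1)$, where $\hcF_L(0),\hcF_L(1)$ are the regularized boundary values of \eqn{eqn:closedboundval}; these exist and are finite by \thmref{prop:equationclosed}, which itself rests on \thmref{thmKZvectorspaceSV}. Both vectors have entries holomorphic in the $s_{ij}$ near the origin: this holds for $\hcF_L(z_3)$ by construction, and is preserved both by the regularization (the matrix exponentials $|z_3|^{-2e_0}$ and $|1-z_3|^{-2e_1}$ act as power series with polynomial coefficients in the $s_{ij}$) and by the subsequent specialization $s_{3\bullet}=0$.

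For \emph{(i)}, \propref{prop:limits0closedphys} shows that after the limit $s_{3\bullet}\to0$ exactly the first $(L-4)!$ components of $\hcF_L(0)$ --- those indexed by $\sigma\in\mathfrak S_L^{1,2,3,4}$ --- survive, while all remaining components vanish identically; by the remark following \propref{prop:limitsclosedphysical} (see \eqn{eqn:limit0firstcompsphysical}) these surviving integrals are the closed-string $(L-1)$-point integrals $\cF_{L-1}^\sigma$ of \eqn{eqn:cF}, after the change of basis between the Aomoto and Parke--Taylor presentations used in \cite{Brown:2019wna}. For \emph{(ii)}, \propref{prop:closedF1phys} (together with \eqn{eqn:limit1no23physical} and the same change of basis) identifies the first $(L-3)!$ components of $\mathscr B_L$ with the closed-string $L$-point integrals $\cF_L^\sigma$ of \eqn{eqn:cF}. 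For \emph{(iii)}, apply $\lim_{s_{3\bullet}\to0}$ to the identity \eqn{eqn:closedDeligne}: since the entries of $e_0,e_1$ are linear in the $s_{ij}$, the specialized matrices $\widetilde e_i\coloneq\lim_{s_{3\bullet}\to0}e_i$ are well defined, and $\Phi^{\mathrm{sv}}(\widetilde e_0,\widetilde e_1)$ is the coefficient-wise limit of $\Phi^{\mathrm{sv}}(e_0,e_1)$, whence $\mathscr B_L=\Phi^{\mathrm{sv}}(\widetilde e_0,\widetilde e_1)\,\mathscr A_L$.

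For the final assertion I would induct on $L$. The base case is $\mathscr A_4=(1,0)^T$ (equivalently $\cF_3=1$), computed explicitly in \secref{sec:closed4}, whose entries lie trivially in $\mathcal Z^{\mathrm{sv}}$. For the inductive step: the coefficients of $\Phi^{\mathrm{sv}}(e_0,e_1)$, expanded as a series in the non-commuting variables $e_0,e_1$, are the single-valued MZVs $\zeta^{\mathrm{sv}}_w$ of \defref{defn:Deligne}, while the entries of $e_0,e_1$ are $\mathbb Z$-linear combinations of the $s_{ij}$; hence the coefficients of the Taylor expansion of $\Phi^{\mathrm{sv}}(e_0,e_1)$ in the $s_{ij}$ lie in $\mathcal Z^{\mathrm{sv}}$, and therefore so do those of $\mathscr B_L=\Phi^{\mathrm{sv}}\mathscr A_L$ provided those of $\mathscr A_L$ do. Combining this with \emph{(i)} and \emph{(ii)} closes the induction and shows that the Taylor coefficients of every $\cF_L^\sigma$ belong to $\mathcal Z^{\mathrm{sv}}$, which also yields the corollary on the weak form of Stieberger's conjecture announced in the introduction.

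The genuine content has already been discharged in \thmref{prop:equationclosed} (the KZ equation for $\hcF_L$ and the Deligne-associator boundary relation) and in Propositions \ref{prop:limits0closedphys} and \ref{prop:closedF1phys}; assembling the theorem is then bookkeeping. The one point that still requires care is the consistency of the physical limit $s_{3\bullet}\to0$ across all of these statements at once: each of the underlying propositions is proven under convergence hypotheses such as $\mathrm{Re}(s_{ij})>0$, and one must check that, when the relevant integrals are regarded as their (holomorphic-at-the-origin) Taylor expansions, the boundary regularizations, the identity \eqn{eqn:closedDeligne}, and the identifications with $\cF_{L-1}^\sigma$ and $\cF_L^\sigma$ all remain simultaneously valid after the specialization $s_{3\bullet}=0$ --- and that the basis of integrals so produced indeed coincides, after the Aomoto-to-Parke--Taylor change of basis of \cite{Brown:2019wna}, with the standard closed-string integrals \eqn{eqn:cF} entering closed-string tree amplitudes.
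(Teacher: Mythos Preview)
Your proposal is correct and follows essentially the same approach as the paper: you define $\mathscr A_L$ and $\mathscr B_L$ as the physical limits of the regularized boundary values $\hcF_L(0)$ and $\hcF_L(1)$, invoke \propref{prop:limits0closedphys} and \propref{prop:closedF1phys} for \emph{(i)} and \emph{(ii)}, take the limit $s_{3\bullet}\to 0$ of \eqn{eqn:closedDeligne} for \emph{(iii)}, and close with the same induction on $L$ starting from $\mathscr A_4=(1,0)^T$. Your extra remarks on the consistency of the physical limit and a possible Aomoto-to-Parke--Taylor basis change are cautious elaborations, but the paper treats the identification of the limit integrals with the $\cF_{L-1}^\sigma$ and $\cF_L^\sigma$ as a direct comparison with \eqn{eqn:cF} (up to relabeling), without invoking an additional change of basis.
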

\begin{proof}
    Let $\mathscr{A}_{L}=\lim_{s_{3\bullet}\to0}\hcF_L(0)$ and $\mathscr{B}_{L}=\lim_{s_{3\bullet}\to0}\hcF_L(1)$. Then by comparing the result of \propref{prop:limits0closedphys} with \eqn{eqn:cF} we find that the first $(L{-}4)!$ components of $\mathscr{A}_{L}$ are exactly the $(L{-}1)$-point closed-string amplitudes, proving \emph{(i)}. Accordingly, \propref{prop:closedF1phys} provides that the first $(L{-}3)!$ components of $\mathscr{B}_{L}$ yield the $L$-point closed-string amplitudes from \eqn{eqn:cF}, which shows \emph{(ii)}. To prove \emph{(iii)}, we take the physical limit $s_{3\bullet}\to0$ of \eqn{eqn:closedDeligne} of \thmref{prop:equationclosed} to find the Deligne associator relation with the matrices $e_0$ and $e_1$ from \eqn{eqn:KZequation} after taking the limit $s_{3\bullet}\to0$. The fact that the coefficients of the Taylor expansion of any~$\cF_L^\sigma$ belong to~$\mathcal Z^{\rm sv}$ follows inductively by combining the fact that $\mathscr{A}_{4}=(1,0)^T$ (see Section \ref{sec:closed4} below) and that, for any~$L$, the coefficients of the Taylor expansion in the variables $s_{ij}$ of the Deligne associator $\Phi^{\rm sv}(e_0,e_1)$ belong to $\mathcal Z^{\rm sv}$. This last assertion is a consequence of the fact that the coefficients of $\Phi^{\rm sv}(e_0,e_1)$ as a series in $e_0,e_1$ belong to $\mathcal Z^{\rm sv}$, combined with the fact that the entries of $e_0,e_1$ are $\mathbb Z$-linear combinations of the variables $s_{ij}$.
\end{proof}

\subsection{From three- to four-point amplitude}\label{sec:closed4}

As in the open-string case, we want to illustrate the general results derived in \secref{sec:closedgeneral} with the simplest examples, starting with the case $L\,{=}\,4$ which lets us translate between the three- and four-point closed-string amplitudes. In contrast to the open-string example in \secref{sec:open4}, here in the closed-string case, the real points $x_1\,{=}\,0,\,x_2\,{=}\,1,\,x_3,\,x_4$ are replaced by complex points $z_1\,{=}\,0,\,z_2\,{=}\,1,\,z_3,\,z_4$, while the Mandelstam variables $s_{13},s_{14},s_{23},s_{24},s_{34}\in\mathbb{C}$ are still required to satisfy the condition $\Re(s_{14})$, $\Re(s_{24})$, $\Re(s_{34})\,{>}\,0$, as well as the new condition $\Re(s_{14}+s_{24}+s_{34})<1/2$, so that the integrals in the components of the following vector are absolutely convergent (see \cite[App.~A]{VanZerb}):
\begin{align*}
	\hcF_4(z_3)&=\mat{\hcF_{4,4}(z_3)\\\hcF_{4,5}(z_3)}\\
    &=\mat{s_{14}\,|z_3|^{2s_{13}}|1\,{-}\,z_3|^{2s_{23}}\int_{\mathbb{P}^1_{\mathbb{C}}}|z_4|^{2s_{14}}|1\,{-}\,z_4|^{2s_{24}} |z_3\,{-}\,z_4|^{2s_{34}} \frac{\zb_3dz_4d\zb_4}{2\pi i(\zb_3\,{-}\,\zb_4)|z_4|^2}\\
    s_{24}\,|z_3|^{2s_{13}}|1\,{-}\,z_3|^{2s_{23}}\int_{\mathbb{P}^1_{\mathbb{C}}}|z_4|^{2s_{14}}|1\,{-}\,z_4|^{2s_{24}} |z_3\,{-}\,z_4|^{2s_{34}} \frac{\zb_3dz_4d\zb_4}{2\pi i(\zb_3\,{-}\,\zb_4)(1\,{-}\,z_4)\zb_4}}.
\end{align*}
Comparing to \eqn{eq:combSelbclosed}, we find that $\hcF_{4,4}(z_3)=s_{14}\,\SC[1](z_3)$ and $\hcF_{4,5}(z_3)=-s_{24}\,\SC[2](z_3)$.
For fixed values of the parameters $s_{ij}$, the function $\hcF_4:\mathbb{C}\smallsetminus \{0,1\}\to \mathbb{C}^2$ is real-analytic and single-valued.

Computing the derivative w.r.t.~$z_3$ of the vector $\hcF_4(z_3)$ yields the KZ equation
\begin{equation}\label{eq:010823n3}
	\frac{\pd}{\pd z_3}\hcF_4(z_3)=\left(\frac{e_0}{z_3}+\frac{e_1}{z_3-1}\right)\hcF_4(z_3)\,,
\end{equation}
where $e_0$ and $e_1$ are the same matrices found in the open-string case (see \eqn{eq:230526n1}). One crucial ingredient for performing this computation explicitly is a complex version of the relation \eqref{eqn:open4relation}, which can be derived using the same methods as in the proof of \thmref{prop:equationclosed} to read
\begin{equation*}
	0=s_{14}\SC[1](z_3)+s_{24}\SC[2](z_3)+s_{34}\SC[3](z_3).
\end{equation*}

Let us now calculate the limit vectors $\hcF_4(0)$ and $\hcF_4(0)$ explicitly:
For the limit $z_3\to0$ we calculate
\begin{align*}
	\hcF_4(0)&=\lim_{z_3\to0}|z_3|^{-2e_0}\hcF_4(z_3)\\
	&=\lim_{z_3\to0}\mat{|z_3|^{-2(s_{13}+s_{14}+s_{34})}\hcF_{4,4}(z_3)+\tfrac{s_{14}}{s_{14}+s_{34}}(|z_3|^{-2s_{13}}-|z_3|^{-2(s_{13}+s_{14}+s_{34})})\hcF_{4,5}(z_3)\\|z_3|^{-2s_{13}}\hcF_{4,5}(z_3)},
\end{align*}
with the first component being (in agreement with \propref{prop:limits0closed})
\begin{align*}
\hcF_{4,4}(0)&=\lim_{z_3\to 0}|z_3|^{-2(s_{13}+s_{14}+s_{34})}\hcF_{4,4}(z_3)\\
&=\lim_{z_3\to 0}s_{14}\,|z_3|^{-2(s_{14}+s_{34})}\int_{\mathbb{P}^1_{\mathbb{C}}}|z_4|^{2s_{14}}|1\,{-}\,z_4|^{2s_{24}}|z_3\,{-}\,z_4|^{2s_{34}}\,\frac{\zb_3dz_4d\zb_4}{2\pi i(\zb_3\,{-}\,\zb_4)|z_4|^2}\\
&=\lim_{z_3\to 0}s_{14}\,|z_3|^{-2(s_{14}+s_{34})}\int_{\mathbb{P}^1_{\mathbb{C}}}|w|^{2s_{14}}|z_3|^{2s_{14}}|1\,{-}\,wz_3|^{2s_{24}}|z_3|^{2s_{34}}|1\,{-}\,w|^{2s_{34}}\,\frac{dwd\overline{w}}{2\pi i(1-\overline{w})|w|^2}\\
&=s_{14}\int_{\mathbb{P}^1_{\mathbb{C}}}|w|^{2s_{14}-2}|1\,{-}\,w|^{2s_{34}}\frac{dwd\overline{w}}{2\pi i(1\,{-}\,\overline{w})}=\frac{\Gamma(1+s_{14})\Gamma(1+s_{34})\Gamma(1-s_{14}-s_{34})}{\Gamma(1-s_{14})\Gamma(1-s_{34})\Gamma(1+s_{14}+s_{34})},
\end{align*}
where to pass from the first to the second line we substituted\footnote{We can assume that $z_3\neq 0$.} $w=z_4/z_3$ and the last equality can be shown using \eqn{eqn:complexBeta}.
\propref{prop:limits0closed} (or direct calculation) also tells us that $\hcF_{4,5}(0)=0$.
We have therefore found that 
\begin{equation*}
\hcF_4(0)=\mat{\frac{\Gamma(1+s_{14})\Gamma(1+s_{34})\Gamma(1-s_{14}-s_{34})}{\Gamma(1-s_{14})\Gamma(1-s_{34})\Gamma(1+s_{14}+s_{34})}\\0}.
\end{equation*}

For the limit $z_3\to1$ we calculate the first component of $\hcF_4(1)$ to be
\begin{align*}\label{eq:010823n6}
\hcF_{4,4}(1)&=\lim_{z_3\to 1}\left[|z_3|^{-2e_1}\hcF_4(z_3)\right]^{(1)}=\lim_{z_3\to 1}|1\,{-}\,z_3|^{-2s_{23}}\hcF_{4,4}(z_3)\\
	&=s_{14}\int_{\mathbb{P}^1_{\mathbb{C}}}|z_4|^{2s_{14}}|1\,{-}\,z_4|^{2(s_{24}+s_{34})}\,\frac{dz_4d\zb_4}{2\pi i(1\,{-}\,\zb_4)|z_4|^2}\\
	&=\frac{\Gamma(1+s_{14})\Gamma(1+s_{24}+s_{34})\Gamma(1-s_{14}-s_{24}-s_{34})}{\Gamma(1-s_{14})\Gamma(1-s_{24}-s_{34})\Gamma(1+s_{14}+s_{24}+s_{34})}\,,
\end{align*}
where the last equality follows as before using \eqn{eqn:complexBeta}.

Just as seen for the open-string case in \secref{sec:open4}, the calculation of the second component of the limit vector is more involved. It again relies on identities of hypergeometric functions and their complex analogues. The details of this calculation are shown in \appref{app:4pt2ndcompclosed} and the result is
\begin{align*}
	\hcF_{4,5}(1)&=s_{24}\frac{\Gamma(1+s_{14})\Gamma(s_{24}+s_{34})\Gamma(1-s_{14}-s_{24}-s_{34})}{\Gamma(1-s_{14})\Gamma(1-s_{24}-s_{34})\Gamma(1+s_{14}+s_{24}+s_{34})}\notag\\
 &\quad+s_{24}\frac{\Gamma(1+s_{34})\Gamma(-s_{24}-s_{34})\Gamma(1+s_{24})}{\Gamma(1-s_{34})\Gamma(1+s_{24}+s_{34})\Gamma(1-s_{24})}.
\end{align*}

\subsection*{Relating the limits}
Combining the above results, we can relate the two limit vectors using the Deligne associator $\Phi^{\rm sv}=\Phi^{\rm sv}(e_0,e_1)$, according to \thmref{prop:equationclosed} as follows:
\begin{align*}
    &\mat{\frac{\Gamma(1+s_{14})\Gamma(1+s_{24}+s_{34})\Gamma(1-s_{14}-s_{24}-s_{34})}{\Gamma(1-s_{14})\Gamma(1-s_{24}-s_{34})\Gamma(1+s_{14}+s_{24}+s_{34})}\\ s_{24}\frac{\Gamma(1+s_{14})\Gamma(s_{24}+s_{34})\Gamma(1-s_{14}-s_{24}-s_{34})}{\Gamma(1-s_{14})\Gamma(1-s_{24}-s_{34})\Gamma(1+s_{14}+s_{24}+s_{34})}+s_{24}\frac{\Gamma(1+s_{34})\Gamma(-s_{24}-s_{34})\Gamma(1+s_{24})}{\Gamma(1-s_{34})\Gamma(1+s_{24}+s_{34})\Gamma(1-s_{24})}}\notag\\
    &\hspace{55ex}=\Phi^{\rm sv}\cdot \mat{\frac{\Gamma(1+s_{14})\Gamma(1+s_{34})\Gamma(1-s_{14}-s_{34})}{\Gamma(1-s_{14})\Gamma(1-s_{34})\Gamma(1+s_{14}+s_{34})}\\0}.
\end{align*}
Note that this equation is the single-valued projection of the respective equation for the open-string amplitudes (cf.~\eqn{eqn:4ptopenlimitsrel}).

In particular, taking the physical limit $s_{3\bullet}\to 0$, where $\bullet=1,2,4$, we find that
\begin{equation*}
\mat{\frac{\Gamma(1+s_{14})\Gamma(1+s_{24})\Gamma(1-s_{14}-s_{24})}{\Gamma(1-s_{14})\Gamma(1-s_{24})\Gamma(1+s_{14}+s_{24})}\\ \frac{\Gamma(1+s_{14})\Gamma(1+s_{24})\Gamma(1-s_{14}-s_{24})}{\Gamma(1-s_{14})\Gamma(1-s_{24})\Gamma(1+s_{14}+s_{24})}-1}\,=\,\Phi^{\rm sv}\big|_{s_{3\bullet}= 0}\cdot \mat{1\\0}\,,
\end{equation*}
which relates the closed-string four-point tree amplitude (first component of the l.h.s., called Virasoro--Shapiro amplitude) with the closed-string three-point tree level amplitude (in our normalization just 1 as seen in the first component of the r.h.s.). The second component of the l.h.s.~can be interpreted as a closed-string four-point amplitude minus a three-point amplitude.

\subsection{From four- to five-point amplitude}\label{sec:closed5}
Let $L\,{=}\,5$, then the vector $\hcF_5(z_3)$ from \eqn{eqn:basisclosed} obeys the KZ equation \eqref{eqn:KZequationclosed} with the matrices the same as in the open-string case in \secref{sec:open5}, see \eqn{eqn:5e}.

Applying the results from \secref{sec:closedgeneral} and \secref{ssec:complexlimits} we find a recursion bringing us from the four-point to the five-point closed-string amplitude using the Deligne associator:
\begin{align*}
	\Phi^{\mathrm{sv}}(\Tilde{e}_0,\Tilde{e}_1)\left(\!\!\begin{array}{c}\cF_4^\mathrm{id}\\0\\0\\0\\0\\0\end{array}\!\!\right)&=\left(1{+}2\zeta_3[\Tilde{e}_0{+}\Tilde{e}_1,[\Tilde{e}_0,\Tilde{e}_1]]+\ldots\right)\left(\!\!\begin{array}{c}\frac{\Gamma(1+s_{15})\Gamma(1+s_{45})\Gamma(1-s_{15}-s_{45})}{\Gamma(1-s_{15})\Gamma(1-s_{45})\Gamma(1+s_{15}+s_{45})}\\0\\0\\0\\0\\0\end{array}\!\!\right)=\left(\!\!\!\!\begin{array}{c}\cF_{5}^{\mathrm{id}}\\ \cF_{5}^{(45)}\\ \ast \\ \ast \\ \ast \\ \ast\end{array}\!\!\!\!\right)\!,
\end{align*}
with $\Tilde{e}_i=\lim_{s_{3\bullet}\to0}e_i$ and $\cF_N^\sigma$ the closed-string amplitudes from \eqn{eqn:cF}. The full calculation of this result including the lower components of the vector on the r.h.s.~can be found in \appref{app:closed5}.

\begin{rmk}
    Note that the above relation is the single-valued image of the result from \secref{sec:open5} through applying the single-valued map from ref.~\cite{Brown:2019wna}.
\end{rmk}

%%%%%%%%%%%%%%%%%%%%%%%%%%%%%%%%%%%%%%%%%%%%%%%%%%%%%%%%%%%%%%%%%%%%%%%%%%%%%%
%%%%%%%%%%%%%%%%%%%%%%%%%%%%%%%%%%%%%%%%%%%%%%%%%%%%%%%%%%%%%%%%%%%%%%%%%%%%%%
%%%%%%%%%%%%%%%%%%%%%%%%%%%%%%%%%%%%%%%%%%%%%%%%%%%%%%%%%%%%%%%%%%%%%%%%%%%%%%
%%%%%%%%%%%%%%%%%%%%%%%%%%%%%%%%%%%%%%%%%%%%%%%%%%%%%%%%%%%%%%%%%%%%%%%%%%%%%%

\section*{Appendix}
\appendix
\addtocontents{toc}{\protect\setcounter{tocdepth}{0}}
%%%%%%%%%%%%%%%%%%%%%%%%%%%%%%%%%%%%%%%%%%%%%%%%%%%%%%%%%%%%%%%%%%%%%%%%%%%%%%
%%%%%%%%%%%%%%%%%%%%%%%%%%%%%%%%%%%%%%%%%%%%%%%%%%%%%%%%%%%%%%%%%%%%%%%%%%%%%%
%%%%%%%%%%%%%%%%%%%%%%%%%%%%%%%%%%%%%%%%%%%%%%%%%%%%%%%%%%%%%%%%%%%%%%%%%%%%%%
%%%%%%%%%%%%%%%%%%%%%%%%%%%%%%%%%%%%%%%%%%%%%%%%%%%%%%%%%%%%%%%%%%%%%%%%%%%%%%

\section{Second components of the limit vectors for \texorpdfstring{$L=4$}{L=4}}\label{app:secondcomp}
In this appendix we show the full computations the second components of the limit vectors for $x_3\to1$ (open-string case) and $z_3\to1$ (closed-string case) for $L\,{=}\,4$. Both calculations work along the same lines and involve identities of hypergeometric functions.

\subsection{Open-string: \texorpdfstring{$x_3\to1$}{x3->1}}\label{app:4pt2ndcomp}
As described in \secref{sec:open4}, for the calculation of the second component of the limit vector $\hF_4(1)$, we use the matrix exponential of $e_1$ and find
{\small
\begin{align*}
	&\hF_{4,5}(1)=\lim_{x_3\to1}\frac{s_{24}}{s_{24}+s_{34}}\left((1-x_3)^{-s_{23}}-(1-x_3)^{-s_{234}}\right)\hF_{4,4}(x_3)+(1-x_3)^{-s_{234}}\hF_{4,5}(x_3)\\
	&=\lim_{x_3\to1}x_3^{s_{134}+1}\int_{0}^{1}dw\,w^{s_{14}}(1-x_3w)^{s_{24}}(1-w)^{s_{34}}\notag\\
	&\hspace{15ex}\times\left[\frac{s_{24}}{s_{24}+s_{34}}\frac{s_{14}}{x_3w}(1-(1-x_3)^{-s_{24}-s_{34}})+\frac{s_{24}}{1-x_3w}(1-x_3)^{-s_{24}-s_{34}}\right]\\
	&=\lim_{x_3\to1}s_{24}x_3^{1+s_{134}}\left(\frac{s_{14}}{s_{24}+s_{34}}\frac{1}{x_3}\frac{\Gamma(s_{14})\Gamma(1+s_{34})}{\Gamma(1+s_{14}+s_{34})}\Fhyper{-s_{24}}{s_{14}}{1+s_{14}+s_{34}}{x_3}\left(1-(1-x_3)^{-s_{24}-s_{34}}\right)\right.\notag\\
	&\left.\hspace{15ex}+(1-x_3)^{-s_{24}-s_{34}}\frac{\Gamma(1+s_{14})\Gamma(1+s_{34})}{\Gamma(2+s_{14}+s_{34})}\Fhyper{1-s_{24}}{1+s_{14}}{2+s_{14}+s_{34}}{x_3}\right),
\end{align*}
}%
where we used the integral formula for the hypergeometric function
{\small
\begin{equation*}
	\label{eqn:Fhyper}
	\Fhyper{a}{b}{c}{z}=\frac{\Gamma(c)}{\Gamma(b)\Gamma(c-b)}\int_0^1dt\,t^{b-1}(1-t)^{c-b-1}(1-tz)^{-a}.
\end{equation*}
}%
The next step is to find the relevant expansions of the divergent terms (i.e.~the terms with the $(1-x_3)^{-s_{24}-s_{34}}$ factor) near $x_3\,{\to}\,1$. This is done by using a functional identity for $_2F_1$ which relates its value at $x$ with its value at $1\,{-}\,x$, namely (see e.g.~\cite[Chapter XIV]{Whittaker_Watson_1996})
{\small
\begin{align}\label{eqn:2F1id}
	\Fhyper{a}{b}{c}{x}&=\frac{\Gamma(c)\Gamma(c-a-b)}{\Gamma(c-a)\Gamma(c-b)}\Fhyper{a}{b}{a+b-c+1}{1-x}\notag\\
	&\quad+\frac{\Gamma(c)\Gamma(a+b-c)}{\Gamma(a)\Gamma(b)}(1-x)^{c-a-b}\Fhyper{c-a}{c-b}{1+c-a-b}{1-x}.
\end{align}
}%
Using this identity for the diverging terms we find (the first term can just be evaluated at $x_3\,{=}\,1$ immediately and higher terms be written as $\cO(x_3\,{-}\,1)$)
\begin{small}
\begin{align*}
	&\hF_{4,5}(1)=\lim_{x_3\to1}\left(\frac{s_{24}}{s_{24}\,{+}\,s_{34}}x_3^{s_{134}}\frac{\Gamma(1\,{+}\,s_{14})\Gamma(1\,{+}\,s_{24}\,{+}\,s_{34})}{\Gamma(1\,{+}\,s_{14}\,{+}\,s_{24}\,{+}\,s_{34})}+\cO(x_3\,{-}\,1)\right)\notag\\
	&+\lim_{x_3\to1}s_{24}x_3^{s_{134}+1}(1\,{-}\,x_3)^{-s_{24}-s_{34}}\notag\\
	&\times\left[-\frac{s_{14}}{s_{24}\,{+}\,s_{34}}\frac{1}{x_3}\frac{\Gamma(s_{14})\Gamma(1\,{+}\,s_{34})}{\Gamma(1\,{+}\,s_{14}\,{+}\,s_{34})}\left(\frac{\Gamma(1\,{+}\,s_{14}\,{+}\,s_{34})\Gamma(1\,{+}\,s_{24}\,{+}\,s_{34})}{\Gamma(1\,{+}\,s_{14}\,{+}\,s_{24}\,{+}\,s_{34})\Gamma(1\,{+}\,s_{34})}\Fhyper{-s_{24}}{s_{14}}{-s_{24}-s_{34}}{1\,{-}\,x_3}\right.\right.\notag\\
	&\left.\quad+\frac{\Gamma(1\,{+}\,s_{14}\,{+}\,s_{34})\Gamma(\,{-}\,s_{24}\,{-}\,s_{34}\,{-}\,1)}{\Gamma(\,{-}\,s_{24})\Gamma(s_{14})}(1\,{-}\,x_3)^{s_{24}+s_{34}+1}\Fhyper{1+s_{14}+s_{24}+s_{34}}{1+s_{34}}{2+s_{24}+s_{34}}{1\,{-}\,x_3}\right)\notag\\
	&\quad+\frac{\Gamma(1\,{+}\,s_{14})\Gamma(1\,{+}\,s_{34})}{\Gamma(2\,{+}\,s_{14}\,{+}\,s_{34})}\left(\frac{\Gamma(2\,{+}\,s_{14}\,{+}\,s_{34})\Gamma(s_{24}\,{+}\,s_{34})}{\Gamma(1\,{+}\,s_{14}\,{+}\,s_{24}\,{+}\,s_{34})\Gamma(1\,{+}\,s_{34})}\Fhyper{1-s_{24}}{1+s_{14}}{1-s_{24}-s_{34}}{1\,{-}\,x_3}\right.\notag\\
	&\quad\left.\left.+\frac{\Gamma(2\,{+}\,s_{14}\,{+}\,s_{34})\Gamma(\,{-}\,s_{24}\,{-}\,s_{34})}{\Gamma(1\,{-}\,s_{24})\Gamma(1\,{+}\,s_{14})}(1\,{-}\,x_3)^{s_{24}+s_{34}}\Fhyper{1+s_{14}+s_{24}+s_{34}}{1+s_{34}}{1+s_{24}+s_{34}}{1\,{-}\,x_3}\right)\right]\\
	\label{eqn:limit_no_expans}
	&=\frac{s_{24}}{s_{24}\,{+}\,s_{34}}\frac{\Gamma(1\,{+}\,s_{14})\Gamma(1\,{+}\,s_{24}\,{+}\,s_{34})}{\Gamma(1\,{+}\,s_{14}\,{+}\,s_{24}\,{+}\,s_{34})}\notag\\
	&+\lim_{x_3\to1}s_{24}x_3^{s_{134}+1}(1\,{-}\,x_3)^{-s_{24}-s_{34}}\notag\\
	&\quad\times\left(-\frac{s_{14}}{s_{24}\,{+}\,s_{34}}\frac{1}{x_3}\frac{\Gamma(s_{14})\Gamma(1\,{+}\,s_{24}\,{+}\,s_{34})}{\Gamma(1\,{+}\,s_{14}\,{+}\,s_{24}\,{+}\,s_{34})}\Fhyper{-s_{24}}{s_{14}}{-s_{24}-s_{34}}{1\,{-}\,x_3}\right.\notag\\
	&\quad\quad\quad\left.+\frac{\Gamma(1\,{+}\,s_{14})\Gamma(s_{24}\,{+}\,s_{34})}{\Gamma(1\,{+}\,s_{14}\,{+}\,s_{24}\,{+}\,s_{34})}\Fhyper{1-s_{24}}{1+s_{14}}{1-s_{24}-s_{34}}{1\,{-}\,x_3}\right)\notag\\
	&\quad+s_{24}x_3^{s_{134}+1}\frac{\Gamma(1\,{+}\,s_{34})\Gamma(\,{-}\,s_{24}\,{-}\,s_{34})}{\Gamma(1\,{-}\,s_{24})}\Fhyper{1+s_{14}+s_{24}+s_{34}}{1+s_{34}}{1+s_{24}+s_{34}}{1\,{-}\,x_3}\notag
	\\&\quad+\cO(x_3\,{-}\,1)\\
	&=\lim_{x_3\to1}\frac{s_{24}}{s_{24}\,{+}\,s_{34}}\frac{\Gamma(1\,{+}\,s_{14})\Gamma(1\,{+}\,s_{24}\,{+}\,s_{34})}{\Gamma(1\,{+}\,s_{14}\,{+}\,s_{24}\,{+}\,s_{34})}\notag\\
	&\quad+s_{24}(1\,{-}\,x_3)^{-s_{24}-s_{34}}\notag\\
	&\quad\times\left(\underbrace{-\frac{\Gamma(1\,{+}\,s_{14})\Gamma(s_{24}\,{+}\,s_{34})}{\Gamma(1\,{+}\,s_{14}\,{+}\,s_{24}\,{+}\,s_{34})}+\frac{\Gamma(1\,{+}\,s_{14})\Gamma(s_{24}\,{+}\,s_{34})}{\Gamma(1\,{+}\,s_{14}\,{+}\,s_{24}\,{+}\,s_{34})}}_{=0}+\cO(x_3\,{-}\,1)\right)\notag\\
	&\quad-\frac{s_{24}}{s_{24}\,{+}\,s_{34}}\frac{\Gamma(1\,{+}\,s_{34})\Gamma(1\,{-}\,s_{24}\,{-}\,s_{34})}{\Gamma(1\,{-}\,s_{24})}\notag
	\\&\quad+\cO(x_3\,{-}\,1)\\
	&=\lim_{x_3\to1}\frac{s_{24}}{s_{24}\,{+}\,s_{34}}\left(\frac{\Gamma(1\,{+}\,s_{14})\Gamma(1\,{+}\,s_{24}\,{+}\,s_{34})}{\Gamma(1\,{+}\,s_{14}\,{+}\,s_{24}\,{+}\,s_{34})}-\frac{\Gamma(1\,{+}\,s_{34})\Gamma(1\,{-}\,s_{24}\,{-}\,s_{34})}{\Gamma(1\,{-}\,s_{24})}\right)\notag\\
	&\quad+\cO\left((1\,{-}\,x_3)^{1-s_{24}-s_{34}}\right)+\cO(x_3\,{-}\,1).
\end{align*}
\end{small}
In the above equations, we used the Gamma function equation $x\,\Gamma(x)=\Gamma(x+1)$, the expansion $x_3^a=1+\cO(x_3-1),$ for $a\neq0,$ and the series expansion of the hypergeometric function
{\small
\begin{align*}
	\Fhyper{a}{b}{c}{1-x}&=\sum_{k=0}^\infty\frac{\Gamma(a+k)\Gamma(b+k)\Gamma(c)}{k!\,\Gamma(a)\Gamma(b)\Gamma(c+k)}(1-x)^k\\
	&=1+\cO(x-1),
\end{align*}
}%
and by assuming $\Re(s_{24})$ and $\Re(s_{34})$ to be sufficiently small we only needed to keep the zeroth order term (i.e.~the 1 in the above expansion) when going to the limit $x_3\to1$. 
	
This provides the final result (the limit $x_3\,{\to}\,1$ now removes the higher order terms ${\sim}(x_3-1)^{1-s_{24}-s_{34}}$ and ${\sim}(x_3-1)$)
{\small
\begin{equation*}
	\hF_{4,5}(1)=\frac{s_{24}}{s_{24}+s_{34}}\left(\frac{\Gamma(1+s_{14})\Gamma(1+s_{24}+s_{34})}{\Gamma(1+s_{14}+s_{24}+s_{34})}-\frac{\Gamma(1+s_{34})\Gamma(1-s_{24}-s_{34})}{\Gamma(1-s_{24})}\right).
\end{equation*}
}%

\subsection{Closed string: \texorpdfstring{$z_3\to1$}{z3->1}} \label{app:4pt2ndcompclosed}
In order to calculate the second component of the limit vector $\hcF_4(1)$ one can use hypergeometric functions, similar to the open-string case. For this purpose we use the definition of the complex analogue of $_2F_1$ (see e.g.~\cite{Molchanov_2021})
{\small
\begin{equation*}
	{_2F}_1^{\mC}\left[\!\!\begin{array}{c}{\bf{a}},\ {\bf{b}}\\ {\bf{c}}\end{array}\!\!\Big|z\right]=\frac{1}{2\pi i}\frac{\Gamma^{\mC}({\bf{c}})}{\Gamma^{\mC}({\bf{b}})\Gamma^{\mC}({\bf{c}}-{\bf{b}})}\int_{\PC}t^{{\bf{b}}-{\bf{1}}}(1-t)^{{\bf{c}}-{\bf{b}}-{\bf{1}}}(1-zt)^{-{\bf{a}}}dtd\tb,
\end{equation*}
}%
as well as the complex Gamma function
{\small
\begin{equation}
\label{eqn:GammaC}
	\Gamma^{\mC}({\bf{a}})=\Gamma^{\mC}(a|a')=\frac{1}{2\pi i}\int_{\PC}z^{{\bf{a}}-1}e^{2i\mathrm{Re}(z)}dzd\zb=i^{a-a'}\frac{\Gamma(a)}{\Gamma(1-a')}=i^{a'-a}\frac{\Gamma(a')}{\Gamma(1-a)}.
\end{equation}
}%
These definitions use the notation
{\small
\begin{equation*}
	z^{\bf{a}}=z^{a|a'}=z^a\,\zb^{a'}.
\end{equation*}
}%
Ref.~\cite[Thm.~3.9 b)]{Molchanov_2021} shows the identity (for $|1-z|<1$)
{\small
\begin{align}
\label{eqn:2F1idC}
	&{_2F}_1^{\mC}\left[\!\!\begin{array}{c}{\bf{a}},\ {\bf{b}}\\ {\bf{c}}\end{array}\!\!\Big|z\right]=\frac{\Gamma^{\mC}({\bf{c}})\Gamma^{\mC}({\bf{c}}-{\bf{a}}-{\bf{b}})}{\Gamma^{\mC}({\bf{c}}-{\bf{a}})\Gamma^{\mC}({\bf{c}}-{\bf{b}})}\Fhyper{a}{b}{a+b+1-c}{1-z}\Fhyper{a'}{b'}{a'+b'+1-c'}{1-\zb}\notag\\
	&\qquad+\frac{\Gamma^{\mC}({\bf{c}})\Gamma^{\mC}({\bf{a}}+{\bf{b}}-{\bf{c}})}{\Gamma^{\mC}({\bf{a}})\Gamma^{\mC}({\bf{b}})}(1-z)^{{\bf{c}}-{\bf{a}}-{\bf{b}}}\Fhyper{c-a}{c-b}{c+1-a-b}{1-z}\Fhyper{c'-a'}{c'-b'}{c'+1-a'-b'}{1-\zb},
\end{align}
}%
which is the complex analogue of the $_2F_1$ identity \eqref{eqn:2F1id} used in the open-string case for the second component of the four-point $x_3\to1$ limit.

Now, using the matrix exponential
{\small
\begin{equation*}
	|1-z_3|^{-2e_1}=\left(\!\!\begin{array}{cc}
		|1-z_3|^{-2s_{23}} & 0 \\ \frac{s_{24}}{s_{24}+s_{34}}\left(|1-z_3|^{-2s_{23}}-|1-z_3|^{-2s_{234}}\right) & |1-z_3|^{-2s_{234}}
	\end{array}\!\!\right)
\end{equation*}
}%
and the vector $\hcF_4$ after the substitution $w_4=z_4/z_3$
{\small
\begin{equation*}
	\hcF_4(z_3)=|z_3|^{2(s_{134}+1)}|1-z_3|^{2s_{23}}\int_{\PC}|w_4|^{2s_{14}}|1-z_3w_4|^{2s_{24}}|1-w_4|^{2s_{34}}\frac{dw_4 d\wb_4}{2\pi i(1-\wb_4)\zb_3\wb_4}
	\mat{
	\frac{s_{14}}{z_3w_4}\\
	\frac{s_{24}}{1-z_3w_4}
	}\!,
\end{equation*}
}%
we calculate the second component of the limit:
{\small
\begin{align*}
	\hcF_{4,5}(1)&\coloneq\lim_{z_3\to1}\left[|1-z_3|^{-2e_1}\hcF_4(z_3)\right]^{(2)}\\
	&=\lim_{z_3\to1}\frac{s_{24}}{s_{24}+s_{34}}\left(|1-z_3|^{-2s_{23}}-|1-z_3|^{-2s_{234}}\right)\hcF_{4,4}(z_3)+|1-z_3|^{-2s_{234}}\hcF_{4,5}(z_3)\\
	&=\lim_{z_3\to1}\frac{s_{24}}{s_{24}+s_{34}}|z_3|^{2(s_{134}+1)}\int_{\PC}|w_4|^{2s_{14}}|1-z_3w_4|^{2s_{24}}|1-w_4|^{2s_{34}}\frac{dw_4 d\wb_4}{2\pi i(1-\wb_4)\zb_3\wb_4}\frac{s_{14}}{z_3w_4}\notag\\
	&\quad+|1-z_3|^{-2s_{24}-2s_{34}}|z_3|^{2(s_{134}+1)}\int_{\PC}|w_4|^{2s_{14}}|1-z_3w_4|^{2s_{24}}|1-w_4|^{2s_{34}}\frac{dw_4 d\wb_4}{2\pi i(1-\wb_4)\zb_3\wb_4}\notag\\
	&\hspace{45ex}\times\left(\frac{s_{24}}{1-z_3w_4}-\frac{s_{24}}{s_{24}+s_{34}}\frac{s_{14}}{z_3w_4}\right)\\
	&=|z_3|^{2s_{134}}\frac{s_{14}s_{24}}{s_{24}+s_{34}}\frac{\Gamma^{\mC}(s_{14}|s_{14})\Gamma^{\mC}(1+s_{34}|s_{34})}{\Gamma^{\mC}(1+s_{14}+s_{34}|s_{14}+s_{34})} {_2F}_1^{\mC}\left[\!\!\begin{array}{c}-s_{24}|-s_{24},\ s_{14}|s_{14}\\ 1+s_{14}+s_{34}|s_{14}+s_{34}\end{array}\!\!\Big|z\right]\notag\\
	&\quad+s_{24}|1-z_3|^{-2s_{24}-2s_{34}}|z_3|^{2(s_{134}+1)}\frac{1}{\zb_3}\notag\\
	&\hspace{5ex}\times\left(\frac{\Gamma^{\mC}(1+s_{14}|s_{14})\Gamma^{\mC}(1+s_{34}|s_{34})}{\Gamma^{\mC}(2+s_{14}+s_{34}|s_{14}+s_{34})} {_2F}_1^{\mC}\left[\!\!\begin{array}{c}1-s_{24}|-s_{24},\ 1+s_{14}|s_{14}\\ 2+s_{14}+s_{34}|s_{14}+s_{34}\end{array}\!\!\Big|z\right]\right.\notag\\
	&\hspace{10ex}\left.-\frac{s_{14}}{s_{24}+s_{34}}\frac{1}{z_3}\frac{\Gamma^{\mC}(s_{14}|s_{14})\Gamma^{\mC}(1+s_{34}|s_{34})}{\Gamma^{\mC}(1+s_{14}+s_{34}|s_{14}+s_{34})} {_2F}_1^{\mC}\left[\!\!\begin{array}{c}-s_{24}|-s_{24},\ s_{14}|s_{14}\\ 1+s_{14}+s_{34}|s_{14}+s_{34}\end{array}\!\!\Big|z\right]\right)\!.
\end{align*}
}%
Now we can use \eqn{eqn:2F1idC}, then expand around $z_3=1$ using
{\small
\begin{equation*}
	\Fhyper{a}{b}{c}{1-x}=1+\cO(x-1),
\end{equation*}
}%
and finally use the complex Gamma function from \eqn{eqn:GammaC} in terms of the normal Gamma functions:
\begin{small}
\begin{align*}
    &\hcF_{4,5}(1)=\lim_{z_3\to1}|z_3|^{2s_{134}}\frac{s_{14}s_{24}}{s_{24}{+}s_{34}}\frac{\Gamma^{\mC}(s_{14}|s_{14})\Gamma^{\mC}(1{+}s_{34}|s_{34})}{\Gamma^{\mC}(1{+}s_{14}{+}s_{34}|s_{14}{+}s_{34})}\notag\\
	&\quad\times\left(\frac{\Gamma^{\mC}(1{+}s_{14}{+}s_{34}|s_{14}{+}s_{34})\Gamma^{\mC}(1{+}s_{24}{+}s_{34}|s_{24}{+}s_{34})}{\Gamma^{\mC}(1{+}s_{14}{+}s_{24}{+}s_{34}|s_{14}{+}s_{24}{+}s_{34})\Gamma^{\mC}(1{+}s_{34}|s_{34})}\Fhyper{{-}s_{24}}{s_{14}}{{-}s_{24}{-}s_{34}}{1\,{-}\,z_3}\Fhyper{{-}s_{24}}{s_{14}}{1{-}s_{24}{-}s_{34}}{1\,{-}\,\zb_3}\right.\notag\\
	&\quad\left.+\frac{\Gamma^{\mC}(1{+}s_{14}{+}s_{34}|s_{14}{+}s_{34})\Gamma^{\mC}({-}s_{24}{-}s_{34}{-}1|{-}s_{24}{-}s_{34})}{\Gamma^{\mC}({-}s_{24}|{-}s_{24})\Gamma^{\mC}(s_{14}|s_{14})}\right.\notag\\
	&\qquad\left.\times(1\,{-}\,z_3)^{s_{24}{+}s_{34}{+}1|s_{24}{+}s_{34}}\Fhyper{1{+}s_{14}{+}s_{24}{+}s_{34}}{1{+}s_{34}}{s_{24}{+}s_{34}{+}2}{1{-}z_3}\Fhyper{s_{14}{+}s_{24}{+}s_{34}}{s_{34}}{1{+}s_{24}{+}s_{34}}{1{-}\zb_3}\right)\notag\\
	&\quad+s_{24}|1\,{-}\,z_3|^{-2s_{24}-2s_{34}}|z_3|^{2(s_{134}+1)}\frac{1}{\zb_3}\notag\\
	&\hspace{10ex}\times\left(\frac{\Gamma^{\mC}(1{+}s_{14}|s_{14})\Gamma^{\mC}(1{+}s_{34}|s_{34})}{\Gamma^{\mC}(2{+}s_{14}{+}s_{34}|s_{14}{+}s_{34})}\right.\notag\\
	&\quad\left\{\frac{\Gamma^{\mC}(2{+}s_{14}{+}s_{34}|s_{14}{+}s_{34})\Gamma^{\mC}(s_{24}{+}s_{34}|s_{24}{+}s_{34})}{\Gamma^{\mC}(1{+}s_{14}{+}s_{24}{+}s_{34}|s_{14}{+}s_{24}{+}s_{34})\Gamma^{\mC}(1{+}s_{34}|s_{34})}\right.\notag\\
	&\qquad\Fhyper{1{-}s_{24}}{1{+}s_{14}}{1{-}s_{24}{-}s_{34}}{1{-}z_3}\Fhyper{{-}s_{24}}{s_{14}}{{-}s_{24}{-}s_{34}{-}1}{1{-}\zb_3}\notag\\
	&\qquad+\frac{\Gamma^{\mC}(2{+}s_{14}{+}s_{34}|s_{14}{+}s_{34})\Gamma^{\mC}({-}s_{24}{-}s_{34}|{-}s_{24}{-}s_{34})}{\Gamma^{\mC}(1{-}s_{24}|{-}s_{24})\Gamma^{\mC}(1{+}s_{14}|s_{14})}|1\,{-}\,z_3|^{2s_{34}+2s_{24}}\notag\\
	&\qquad\left.\Fhyper{1{+}s_{14}{+}s_{34}{+}s_{24}}{1{+}s_{34}}{1{+}s_{34}{+}s_{24}}{1{-}z}\Fhyper{s_{14}{+}s_{34}{+}s_{24}}{s_{34}}{1{+}s_{34}{+}s_{24}}{1{-}\zb}\right\}\notag\\
	&\hspace{20ex}\left.-\frac{s_{14}}{s_{24}{+}s_{34}}\frac{1}{z_3}\frac{\Gamma^{\mC}(s_{14}|s_{14})\Gamma^{\mC}(1{+}s_{34}|s_{34})}{\Gamma^{\mC}(1{+}s_{14}{+}s_{34}|s_{14}{+}s_{34})} \right.\notag\\
	&\quad\left\{\frac{\Gamma^{\mC}(1{+}s_{14}{+}s_{34}|s_{14}{+}s_{34})\Gamma^{\mC}(1{+}s_{24}{+}s_{34}|s_{24}{+}s_{34})}{\Gamma^{\mC}(1{+}s_{14}{+}s_{24}{+}s_{34}|s_{14}{+}s_{24}{+}s_{34})\Gamma^{\mC}(1{+}s_{34}|s_{34})}\Fhyper{{-}s_{24}}{s_{14}}{{-}s_{24}{-}s_{34}}{1{-}z_3}\Fhyper{{-}s_{24}}{s_{14}}{1{-}s_{24}{-}s_{34}}{1{-}\zb_3}\right.\notag\\
	&\quad\left.+\frac{\Gamma^{\mC}(1{+}s_{14}{+}s_{34}|s_{14}{+}s_{34})\Gamma^{\mC}({-}s_{24}{-}s_{34}{-}1|{-}s_{24}{-}s_{34})}{\Gamma^{\mC}({-}s_{24}|{-}s_{24})\Gamma^{\mC}(s_{14}|s_{14})}\right.\notag\\
	&\qquad\left.\left.\times(1\,{-}\,z_3)^{1+s_{24}+s_{34}|s_{24}+s_{34}}\Fhyper{1{+}s_{14}{+}s_{24}{+}s_{34}}{1{+}s_{34}}{2{+}s_{24}{+}s_{34}}{1{-}z_3}\Fhyper{s_{14}{+}s_{24}{+}s_{34}}{s_{34}}{1{+}s_{24}{+}s_{34}}{1{-}\zb_3}\right\}\right)\\
	&=\lim_{z_3\to1}\frac{s_{14}s_{24}}{s_{24}{+}s_{34}}\frac{\Gamma^{\mC}(s_{14}|s_{14})\Gamma^{\mC}(1{+}s_{24}{+}s_{34}|s_{24}{+}s_{34})}{\Gamma^{\mC}(1{+}s_{14}{+}s_{24}{+}s_{34}|s_{14}{+}s_{24}{+}s_{34})}+\cO(z_3\,{-}\,1)+\cO(\zb_3\,{-}\,1)\notag\\
	&\quad+s_{24}|1\,{-}\,z_3|^{-2s_{24}-2s_{34}}|z_3|^{2(s_{134}+1)}\frac{1}{\zb_3}\notag\\
	&\quad\times\left(\frac{\Gamma^{\mC}(1{+}s_{14}|s_{14})\Gamma^{\mC}(1{+}s_{34}|s_{34})}{\Gamma^{\mC}(2{+}s_{14}{+}s_{34}|s_{14}{+}s_{34})}\right.\notag\\
	&\quad\left\{\frac{\Gamma^{\mC}(2{+}s_{14}{+}s_{34}|s_{14}{+}s_{34})\Gamma^{\mC}(s_{24}{+}s_{34}|s_{24}{+}s_{34})}{\Gamma^{\mC}(1{+}s_{14}{+}s_{24}{+}s_{34}|s_{14}{+}s_{24}{+}s_{34})\Gamma^{\mC}(1{+}s_{34}|s_{34})}+\cO(z_3\,{-}\,1)+\cO(\zb_3\,{-}\,1)\right.\notag\\
	&\qquad\left.+\frac{\Gamma^{\mC}(2{+}s_{14}{+}s_{34}|s_{14}{+}s_{34})\Gamma^{\mC}({-}s_{24}{-}s_{34}|{-}s_{24}{-}s_{34})}{\Gamma^{\mC}(1{-}s_{24}|{-}s_{24})\Gamma^{\mC}(1{+}s_{14}|s_{14})}|1\,{-}\,z_3|^{2s_{34}+2s_{24}}(1{+}\cO(z_3\,{-}\,1)+\cO(\zb_3\,{-}\,1))\right\}\notag\\
	&\quad\left.-\frac{s_{14}}{s_{24}{+}s_{34}}\frac{\Gamma^{\mC}(s_{14}|s_{14})\Gamma^{\mC}(1{+}s_{34}|s_{34})}{\Gamma^{\mC}(1{+}s_{14}{+}s_{34}|s_{14}{+}s_{34})} \right.\notag\\
	&\quad\left\{\frac{\Gamma^{\mC}(1{+}s_{14}{+}s_{34}|s_{14}{+}s_{34})\Gamma^{\mC}(1{+}s_{24}{+}s_{34}|s_{24}{+}s_{34})}{\Gamma^{\mC}(1{+}s_{14}{+}s_{24}{+}s_{34}|s_{14}{+}s_{24}{+}s_{34})\Gamma^{\mC}(1{+}s_{34}|s_{34})}+\cO(z_3\,{-}\,1)+\cO(\zb_3\,{-}\,1)\right.\notag\\
	&\hspace{60ex}\left.\left.+\cO((1\,{-}\,z_3)^{s_{24}+s_{34}+1|s_{24}+s_{34}})\right\}\right)\\
	&=\lim_{z_3\to1}\frac{s_{14}s_{24}}{s_{24}{+}s_{34}}\frac{\Gamma^{\mC}(s_{14}|s_{14})\Gamma^{\mC}(1{+}s_{24}{+}s_{34}|s_{24}{+}s_{34})}{\Gamma^{\mC}(1{+}s_{14}{+}s_{24}{+}s_{34}|s_{14}{+}s_{24}{+}s_{34})}\notag\\
	&\quad+s_{24}\frac{\Gamma^{\mC}(1{+}s_{34}|s_{34})\Gamma^{\mC}({-}s_{24}{-}s_{34}|{-}s_{24}{-}s_{34})}{\Gamma^{\mC}(1{-}s_{24}|{-}s_{24})}\notag\\
	&\quad+s_{24}|1\,{-}\,z_3|^{-2s_{24}-2s_{34}}\notag\\
	&\hspace{5ex}\times\left(\frac{\Gamma^{\mC}(1{+}s_{14}|s_{14})\Gamma^{\mC}(s_{24}{+}s_{34}|s_{24}{+}s_{34})}{\Gamma^{\mC}(1{+}s_{14}{+}s_{24}{+}s_{34}|s_{14}{+}s_{24}{+}s_{34})}\right.\notag\\
	&\hspace{10ex}\left.-\frac{s_{14}}{s_{24}{+}s_{34}}\frac{\Gamma^{\mC}(s_{14}|s_{14})\Gamma^{\mC}(1{+}s_{24}{+}s_{34}|s_{24}{+}s_{34})}{\Gamma^{\mC}(1{+}s_{14}{+}s_{24}{+}s_{34}|s_{14}{+}s_{24}{+}s_{34})}+\cO(z_3\,{-}\,1)+\cO(\zb_3\,{-}\,1)\right)\notag\\
	&\quad+\cO(z_3\,{-}\,1)+\cO(\zb_3\,{-}\,1)\\
	&=\lim_{z_3\to1}s_{24}\frac{\Gamma(1{+}s_{14})\Gamma(s_{24}{+}s_{34})\Gamma(1{-}s_{14}{-}s_{24}{-}s_{34})}{\Gamma(1{-}s_{14})\Gamma(1{-}s_{24}{-}s_{34})\Gamma(1{+}s_{14}{+}s_{24}{+}s_{34})}\notag\\
	&\quad+s_{24}\frac{\Gamma(1{+}s_{34})\Gamma({-}s_{24}{-}s_{34})\Gamma(1{+}s_{24})}{\Gamma(1{-}s_{34})\Gamma(1{+}s_{24}{+}s_{34})\Gamma(1{-}s_{24})}\notag\\
	&\quad+s_{24}|1\,{-}\,z_3|^{-2s_{24}-2s_{34}}\notag\\
	&\hspace{5ex}\times\left(\frac{\Gamma(1{+}s_{14})\Gamma(s_{24}{+}s_{34})\Gamma(1{-}s_{14}{-}s_{24}{-}s_{34})}{\Gamma(1{-}s_{14})\Gamma(1{-}s_{24}{-}s_{34})\Gamma(1{+}s_{14}{+}s_{24}{+}s_{34})}\right.\notag\\
	&\hspace{10ex}\left.-\frac{s_{14}}{s_{24}{+}s_{34}}\frac{\Gamma(s_{14})\Gamma(1{+}s_{24}{+}s_{34})\Gamma(1{-}s_{14}{-}s_{24}{-}s_{34})}{\Gamma(1{-}s_{14})\Gamma(1{-}s_{24}{-}s_{34})\Gamma(1{+}s_{14}{+}s_{24}{+}s_{34})}+\cO(z_3\,{-}\,1)+\cO(\zb_3\,{-}\,1)\right)\notag\\
	&\quad+\cO(z_3\,{-}\,1)+\cO(\zb_3\,{-}\,1)\notag\\
	&=\lim_{z_3\to1}s_{24}\frac{\Gamma(1{+}s_{14})\Gamma(s_{24}{+}s_{34})\Gamma(1{-}s_{14}{-}s_{24}{-}s_{34})}{\Gamma(1{-}s_{14})\Gamma(1{-}s_{24}{-}s_{34})\Gamma(1{+}s_{14}{+}s_{24}{+}s_{34})}+s_{24}\frac{\Gamma(1{+}s_{34})\Gamma({-}s_{24}{-}s_{34})\Gamma(1{+}s_{24})}{\Gamma(1{-}s_{34})\Gamma(1{+}s_{24}{+}s_{34})\Gamma(1{-}s_{24})}\notag\\
	&\quad+\cO(z_3\,{-}\,1)+\cO(\zb_3\,{-}\,1)+\cO((z_3\,{-}\,1)|1-z_3|^{-2s_{24}-2s_{34}})+\cO((\zb_3-1)|1\,{-}\,z_3|^{-2s_{24}-2s_{34}}).
\end{align*}
\end{small}
We can see that similar to the open-string case, the divergences cancel each other to first order. Thus, using the condition $\mathrm{Re}(s_{24}+s_{34})<1/2$, all divergences cancel in the limit and we arrive at
{\small
\begin{align*}
	\hcF_{4,5}(1)=s_{24}\frac{\Gamma(1{+}s_{14})\Gamma(s_{24}{+}s_{34})\Gamma(1{-}s_{14}{-}s_{24}{-}s_{34})}{\Gamma(1{-}s_{14})\Gamma(1{-}s_{24}{-}s_{34})\Gamma(1{+}s_{14}{+}s_{24}{+}s_{34})}{+}s_{24}\frac{\Gamma(1{+}s_{34})\Gamma({-}s_{24}{-}s_{34})\Gamma(1{+}s_{24})}{\Gamma(1{-}s_{34})\Gamma(1{+}s_{24}{+}s_{34})\Gamma(1{-}s_{24})}.
\end{align*}
}%

%%%%%%%%%%%%%%%%%%%%%%%%%%%%%%%%%%%%%%%%%%%%%%%%%%%%%%%%%%%%%%%%%%%%%%
%%%%%%%%%%%%%%%%%%%%%%%%%%%%%%%%%%%%%%%%%%%%%%%%%%%%%%%%%%%%%%%%%%%%%%
%%%%%%%%%%%%%%%%%%%%%%%%%%%%%%%%%%%%%%%%%%%%%%%%%%%%%%%%%%%%%%%%%%%%%%
%%%%%%%%%%%%%%%%%%%%%%%%%%%%%%%%%%%%%%%%%%%%%%%%%%%%%%%%%%%%%%%%%%%%%%

\section{Details of \texorpdfstring{$L=5$}{L=5} calculations}

\subsection{Open strings}\label{app:5open}
Let $L\,{=}\,5$, then the basis from \cite{Broedel:2013aza} (see \eqn{eqn:basisBSST}), written in terms of open Selberg integrals is
{\small
\begin{equation*}
    \hF_5(x_3)=\left(\begin{array}{c}
		\hF_{5,4}^{\mathrm{id}}(x_3)\\
		\hF_{5,4}^{(45)}(x_3)\\
		\hF_{5,5}^{\mathrm{id}}(x_3)\\
		\hF_{5,5}^{(45)}(x_3)\\
		\hF_{5,6}^{\mathrm{id}}(x_3)\\
        \hF_{5,6}^{(45)}(x_3)
	\end{array}\right)
    =
    \left(\begin{array}{c}
		s_{14}s_{15}S[1,1](x_3)+s_{45}s_{15}S[5,1](x_3)\\
		s_{14}s_{15}S[1,1](x_3)+s_{14}s_{45}S[1,4](x_3)\\
		-s_{24}s_{15}S[2,1](x_3)\\
		-s_{14}s_{25}S[1,2](x_3)\\
		s_{24}s_{25}S[2,2](x_3)+s_{24}s_{45}S[2,4](x_3)\\
        s_{24}s_{25}S[2,2](x_3)+s_{45}s_{25}S[5,2](x_3)
	\end{array}\right).
\end{equation*}
}%
Calculating the derivate of $\hF_5(x_3)$ yields the KZ equation
{\small
\begin{equation*}
    \frac{d}{dx_3}\hF_5(x_3)=\left(\frac{e_0}{x_3}+\frac{e_1}{x_3-1}\right)\hF_5(x_3)\,,
\end{equation*}
}%
with the two matrices from \eqn{eqn:5e}.

Using the Propositions from \secref{sec:opengeneral}, we find the limit vectors
{\small
\begin{align*}
    \hF_5(0)=\lim_{x_3\to0}x_3^{-e_0}\hF_5(x_3)=\left(\begin{array}{c}
		\hF_{5,4}^{\mathrm{id}}(0)\\
		\hF_{5,4}^{(45)}(0)\\
		0\\0\\0\\0
	\end{array}\right),\quad
    \hF_5(1)=\lim_{x_3\to1}(1-x_3)^{-e_1}\hF_5(x_3)=\left(\begin{array}{c}
		\hF_{5,4}^{\mathrm{id}}(1)\\
		\hF_{5,4}^{(45)}(1)\\
		\hF_{5,5}^{\mathrm{id}}(1)\\
        \hF_{5,5}^{(45)}(1)\\
        \hF_{5,6}^{\mathrm{id}}(1)\\
        \hF_{5,6}^{(45)}(1)
	\end{array}\right),
\end{align*}
}%
with
{\small
\begin{align*}
    \hF_{5,4}^\mathrm{id}(0)&=s_{14}s_{15}\int_{0}^{1}dw_4\int_{0}^{w_4}dw_5\,w_4^{s_{14}-1}w_5^{s_{15}-1}(1\,{-}\,w_4)^{s_{34}}(1\,{-}\,w_5)^{s_{35}}(w_4\,{-}\,w_5)^{s_{45}}\notag\notag\\
    &\quad+s_{45}s_{15}\int_{0}^{1}dw_4\int_{0}^{w_4}dw_5\,w_4^{s_{14}}w_5^{s_{15}-1}(1\,{-}\,w_4)^{s_{34}}(1\,{-}\,w_5)^{s_{35}}(w_4\,{-}\,w_5)^{s_{45}-1}\notag\\
	&\overset{s_{3\bullet}\to0}{\longrightarrow}s_{15}\int_{0}^{1}dw_5\,w_5^{s_{15}-1}(1\,{-}\,w_5)^{s_{45}}=s_{14}s_{15}\frac{\Gamma(1+s_{15})\Gamma(1+s_{45})}{\Gamma(1+s_{15}+s_{45})}\\
    &=F_4,\\
    \hF_{5,4}^{(45)}(0)&=s_{14}s_{15}\int_{0}^{1}dw_4\int_{0}^{w_4}dw_5\,w_4^{s_{14}-1}w_5^{s_{15}-1}(1\,{-}\,w_4)^{s_{34}}(1\,{-}\,w_5)^{s_{35}}(w_4\,{-}\,w_5)^{s_{45}}\notag\\
    &\quad-s_{14}s_{45}\int_{0}^{1}dw_4\int_{0}^{w_4}dw_5\,w_4^{s_{14}-1}w_5^{s_{15}}(1\,{-}\,w_4)^{s_{34}}(1\,{-}\,w_5)^{s_{35}}(w_4\,{-}\,w_5)^{s_{45}-1}\notag\\
    &\overset{s_{3\bullet}\to0}{\longrightarrow}s_{14}s_{15}\int_{0}^{1}dw_4\int_{0}^{w_4}dw_5\,w_4^{s_{14}-1}w_5^{s_{15}-1}(w_4\,{-}\,w_5)^{s_{45}}\notag\\
    &\quad-s_{14}s_{45}\int_{0}^{1}dw_4\int_{0}^{w_4}dw_5\,w_4^{s_{14}-1}w_5^{s_{15}}(w_4\,{-}\,w_5)^{s_{45}-1}\notag\\
    &=\frac{s_{14} \Gamma (1+s_{15}) \Gamma(1+s_{45})}{(s_{14}+s_{15}+s_{45}) \Gamma (1+s_{15}+s_{45})}-\frac{s_{14} \Gamma (1+s_{15}) \Gamma(1+s_{45})}{(s_{14}+s_{15}+s_{45}) \Gamma (1+s_{15}+s_{45})}\notag\\
    &=0,\\
    \hF_{5,4}^\mathrm{id}(1)&=s_{14}s_{15}\int_0^1dw_4\int_{0}^{w_4}dw_5\,w_4^{s_{14}-1}w_5^{s_{15}-1}(1\,{-}\,w_4)^{s_{24}+s_{34}}(1\,{-}\,w_5)^{s_{25}+s_{35}}(w_4\,{-}\,w_5)^{s_{45}}\notag\\
    &\quad+s_{45}s_{15}\int_0^1dw_4\int_{0}^{w_4}dw_5\,w_4^{s_{14}}w_5^{s_{15}-1}(1\,{-}\,w_4)^{s_{24}+s_{34}}(1\,{-}\,w_5)^{s_{25}+s_{35}}(w_4\,{-}\,w_5)^{s_{45}-1}\notag\\
    &\overset{s_{3\bullet}\to0}{\longrightarrow}s_{14}s_{15}\int_0^1dw_4\int_{0}^{w_4}dw_5\,w_4^{s_{14}-1}w_5^{s_{15}-1}(1\,{-}\,w_4)^{s_{24}}(1\,{-}\,w_5)^{s_{25}}(w_4\,{-}\,w_5)^{s_{45}}\notag\\
    &\quad+s_{45}s_{15}\int_0^1dw_4\int_{0}^{w_4}dw_5\,w_4^{s_{14}}w_5^{s_{15}-1}(1\,{-}\,w_4)^{s_{24}}(1\,{-}\,w_5)^{s_{25}}(w_4\,{-}\,w_5)^{s_{45}-1}\notag\\
    &=\frac{\Gamma(1+s_{145})\Gamma(1+s_{15})\Gamma(1+s_{24})\Gamma(1+s_{45})}{\Gamma(1+s_{145}+s_{24})\Gamma(1+s_{15}+s_{45})}{_3F_2}\left[\!\!\begin{array}{c}1+s_{145},\ s_{15},\ -s_{25}\\1+s_{145}+s_{24},\ 1+s_{15}+s_{45}\end{array}\!\!\Big|1\right]\\
    &=F_5^\mathrm{id},\\
    \hF_{5,4}^{(45)}(1)&=s_{14}s_{15}\int_0^1dw_4\int_{0}^{w_4}dw_5\,w_4^{s_{14}-1}w_5^{s_{15}-1}(1\,{-}\,w_4)^{s_{24}+s_{34}}(1\,{-}\,w_5)^{s_{25}+s_{35}}(w_4\,{-}\,w_5)^{s_{45}}\notag\\
    &\quad-s_{14}s_{45}\int_0^1dw_4\int_{0}^{w_4}dw_5\,w_4^{s_{14}-1}w_5^{s_{15}}(1\,{-}\,w_4)^{s_{24}+s_{34}}(1\,{-}\,w_5)^{s_{25}+s_{35}}(w_4\,{-}\,w_5)^{s_{45}-1}\notag\\
    &\overset{s_{3\bullet}\to0}{\longrightarrow}s_{14}\frac{\Gamma(s_{145})\Gamma(1+s_{15})\Gamma(1+s_{24})\Gamma(1+s_{45})}{\Gamma(1+s_{145}+s_{24})\Gamma(1+s_{15}+s_{45})}{_3F_2}\left[\!\!\begin{array}{c}s_{145},\ s_{15},\ -s_{25}\\1+s_{145}+s_{24},\ 1+s_{15}+s_{45}\end{array}\!\!\Big|1\right]\notag\\
	&\quad-s_{14}\frac{\Gamma(s_{145})\Gamma(1+s_{15})\Gamma(1+s_{24})\Gamma(1+s_{45})}{\Gamma(1+s_{145}+s_{24})\Gamma(1+s_{15}+s_{45})} {_3F_2}\left[\!\!\begin{array}{c}s_{145},\ 1+s_{15},\ -s_{25}\\1+s_{145}+s_{24},\ 1+s_{15}+s_{45}\end{array}\!\!\Big|1\right]\notag\\
	&=s_{14}s_{25}\frac{\Gamma(1+s_{145})\Gamma(1+s_{15})\Gamma(1+s_{24})\Gamma(1+s_{45})}{\Gamma(2+s_{145}+s_{24})\Gamma(2+s_{15}+s_{45})}{_3F_2}\left[\!\!\begin{array}{c}1\,{+}\,s_{145},\ 1\,{+}\,s_{15},\ 1\,{-}\,s_{25}\\2\,{+}\,s_{145}\,{+}\,s_{24},\ 2\,{+}\,s_{15}\,{+}\,s_{45}\end{array}\!\!\Big|1\right]\!\\
    &=F_5^{(45)},
\end{align*}
}%
where the limit $s_{3\bullet}\to0$ refers again to the physical limit where the Mandelstam variables related to the auxiliary point $x_3$ vanish and the functions $F$ without the hat denote the disk string amplitudes defined in \eqn{eqn:F}. We furthermore used the integral formula for the hypergeometric~$_3F_2$ function\footnote{In the calculation for the limit vector $\hF(1)$, we furthermore used the $_3F_2$ identity
{\small
\begin{equation*}
	a_1\,\cdot {_3F_2}\left[\!\!\begin{array}{c}a_1+1,\ a_2,\ a_3\\b_1,\ b_2\end{array}\!\!\Big|z\right]-a_2\,\cdot {_3F_2}\left[\!\!\begin{array}{c}a_1,\ a_2+1,\ a_3\\b_1,\ b_2\end{array}\!\!\Big|z\right]+(a_2-a_1)\,\cdot {_3F_2}\left[\!\!\begin{array}{c}a_1,\ a_2,\ a_3\\b_1,\ b_2\end{array}\!\!\Big|z\right]=0,
\end{equation*}
}%
which follows from \cite[Eq.~(2.1)]{Bailey_1954}.} (substitution $t_2\mapsto t_2t_1$ in the second line)
{\small
\begin{align*}
	&\frac{\Gamma(a_1)\Gamma(a_2)\Gamma(b_1-a_1)\Gamma(b_2-a_2)}{\Gamma(b_1)\Gamma(b_2)} {_3F_2}\left[\!\!\begin{array}{c}a_1,\ a_2,\ a_3\\b_1,\ b_2\end{array}\!\!\Big|z\right]\\
	&\hspace{15ex}=\int_0^1dt_1\int_0^1dt_2\,t_1^{a_1-1}t_2^{a_2-1}(1-t_1)^{b_1-a_1-1}(1-t_2)^{b_2-a_2-1}(1-zt_1t_2)^{-a_3}\\
	&\hspace{15ex}=\int_0^1dt_1\int_0^{t_1}dt_2\,t_1^{a_1-b_2}t_2^{a_2-1}(1-t_1)^{b_1-a_1-1}(1-zt_2)^{-a_3}(t_1-t_2)^{b_2-a_2-1},
\end{align*}
}%
at $z=1$.

In the physical limit $s_{3\bullet}\to0$ we can also calculate the other components of the limit vector $\hF_5(1)$: in a way, they are just mathematical accessories of the underlying formalism and only the first components are physically relevant. However, an interesting structure arises when looking at these components. We will do so in the limit, where the auxiliary Mandelstam variables are already taken to zero. This requires exchanging the two limits $x_3\to1$ and $s_{3\bullet}\to0$, which is allowed since both individual limits exist and are finite. For the third component of the limit vector we then calculate
{\small
\begin{align*}
    \hF_{5,5}^{\mathrm{id}}&(1)\big|_{s_{3\bullet}\to0}
    =\lim_{x_3\to1}\left[\hF_{5,4}^\mathrm{id}(x_3)\big|_{s_{3\bullet}\to0}-(1\,{-}\,x_3)^{-s_{24}}\left(\hF_{5,4}^\mathrm{id}(x_3)\big|_{s_{3\bullet}\to0}-\hF_{5,5}^{\mathrm{id}}(x_3)\big|_{s_{3\bullet}\to0}\right)\right]\\
    &=\lim_{x_3\to1}\left[\int_0^{x_3}dx_4\int_0^{x_4}dx_5\,x_4^{s_{14}}x_5^{s_{15}}(1\,{-}\,x_4)^{s_{24}}(1\,{-}\,x_5)^{s_{25}}(x_4\,{-}\,x_5)^{s_{45}}\frac{s_{15}}{x_5}\left(\frac{s_{14}}{x_4}+\frac{s_{45}}{x_4\,{-}\,x_5}\right)\right.\\
    &\hspace{8ex}-(1-x_3)^{-s_{24}}\int_0^{x_3}dx_4\int_0^{x_4}dx_5\,x_4^{s_{14}}x_5^{s_{15}}(1\,{-}\,x_4)^{s_{24}}(1\,{-}\,x_5)^{s_{25}}(x_4\,{-}\,x_5)^{s_{45}}\\
    &\hspace{45ex}\left.\times\frac{s_{15}}{x_5}\left(\frac{s_{14}}{x_4}+\frac{s_{45}}{x_4\,{-}\,x_5}+\frac{s_{24}}{x_4\,{-}\,1}\right)\right]\\
    &=\int_0^{1}dx_4\int_0^{x_4}dx_5\,x_4^{s_{14}}x_5^{s_{15}}(1\,{-}\,x_4)^{s_{24}}(1\,{-}\,x_5)^{s_{25}}(x_4\,{-}\,x_5)^{s_{45}}\frac{s_{15}}{x_5}\left(\frac{s_{14}}{x_4}+\frac{s_{45}}{x_4\,{-}\,x_5}\right)\\
    &\quad-\lim_{x_3\to1}(1\,{-}\,x_3)^{-s_{24}}\int_0^{x_3}dx_4\frac{d}{dx_4}\left[\int_0^{x_4}dx_5\,x_4^{s_{14}}x_5^{s_{15}}(1\,{-}\,x_4)^{s_{24}}(1\,{-}\,x_5)^{s_{25}}(x_4\,{-}\,x_5)^{s_{45}}\frac{s_{15}}{x_5}\right]\\
    &=\int_0^{1}dx_4\int_0^{x_4}dx_5\,x_4^{s_{14}}x_5^{s_{15}}(1\,{-}\,x_4)^{s_{24}}(1\,{-}\,x_5)^{s_{25}}(x_4\,{-}\,x_5)^{s_{45}}\frac{s_{15}}{x_5}\left(\frac{s_{14}}{x_4}+\frac{s_{45}}{x_4\,{-}\,x_5}\right)\\
    &\quad-\lim_{x_3\to1}(1\,{-}\,x_3)^{-s_{24}}\int_0^{x_3}dx_5\,x_3^{s_{14}}x_5^{s_{15}}(1\,{-}\,x_3)^{s_{24}}(1\,{-}\,x_5)^{s_{25}}(x_3\,{-}\,x_5)^{s_{45}}\frac{s_{15}}{x_5}\\
    &=\int_0^{1}dx_4\int_0^{x_4}dx_5\,x_4^{s_{14}}x_5^{s_{15}}(1\,{-}\,x_4)^{s_{24}}(1\,{-}\,x_5)^{s_{25}}(x_4\,{-}\,x_5)^{s_{45}}\frac{s_{15}}{x_5}\left(\frac{s_{14}}{x_4}+\frac{s_{45}}{x_4\,{-}\,x_5}\right)\\
    &\quad-s_{15}\int_0^{1}dx_5\,x_5^{s_{15}-1}(1\,{-}\,x_5)^{s_{25}+s_{45}}\,.
\end{align*}
}%
This is precisely equal the first component of the limit vector $\hF_{5,4}^\mathrm{id}(1)\big|_{s_{3\bullet}\to0}$ (i.e.~a five-point open string integral) minus a four-point open-string amplitude (Veneziano amplitude/Euler-Beta function). Similar calculations show (for $\hF_{5,5}^{(45)}(1)\big|_{s_{3\bullet}\to0}$ it is really just different by one permutation of indices 4 and 5 in the integrand):
{\small
\begin{align*}
    \hF_{5,5}^{(45)}(1)\big|_{s_{3\bullet}\to0}
    &=\int_0^{1}dx_4\int_0^{x_4}dx_5\,x_4^{s_{14}}x_5^{s_{15}}(1-x_4)^{s_{24}}(1-x_5)^{s_{25}}(x_4-x_5)^{s_{45}}\frac{s_{14}}{x_4}\left(\frac{s_{15}}{x_5}+\frac{s_{45}}{x_5{-}x_4}\right)\\
    &=F_5^{(45)}\,,\\
    \hF_{5,6}^{\mathrm{id}}(1)\big|_{s_{3\bullet}\to0}&=\hF_{5,5}^{\mathrm{id}}(1)\big|_{s_{3\bullet}\to0}\,,\\
    \hF_{5,6}^{(45)}(1)\big|_{s_{3\bullet}\to0}&=\int_0^1dx_4\int_0^{x_4}dx_5\,x_4^{s_{14}}x_5^{s_{15}}(1-x_4)^{s_{24}}(1-x_5)^{s_{25}}(x_4-x_5)^{s_{45}}\frac{s_{14}}{x_4}\left(\frac{s_{15}}{x_5}+\frac{s_{45}}{x_5{-}x_4}\right)\\
    &\quad-s_{25}\frac{\Gamma(1+s_{15})\Gamma(s_{25}+s_{45})}{\Gamma(1+s_{15}+s_{25}+s_{45})}+\frac{\Gamma(1+s_{45})\Gamma(-s_{25}-s_{45})}{\Gamma(-s_{25})}\,,    
\end{align*}
}%
where for the derivation of the last equation the condition $\Re(s_{25}+s_{45})<1$ is needed (the same identities for hypergeometric functions as for the $L\,{=}\,4$ case were used).

\subsection*{Relating the limits}
\thmref{thmKZvectorspace} lets us relate the limit vectors $\hF_5(0)$, $\hF_5(1)$ using the Drinfeld associator $\Phi(e_0,e_1)$:
{\small
\begin{equation*}
    \hF_5(1)=\Phi(e_0,e_1)\hF_5(0).
\end{equation*}
}%
This equality also holds after taking the physical limit $s_{3\bullet}\to0$, yielding a recursion bringing us from the four-point amplitude to a basis of five-point amplitudes:
{\small
\begin{align*}
	\left[\Phi(e_0,e_1)\hF_5(0)\right]\Big|_{s_{3\bullet\to0}}&=\left(1+\zeta_2[e_0,e_1]+\zeta_3[e_0+e_1,[e_0,e_1]]+\ldots\right)\big|_{s_{3\bullet\to0}}\left(\begin{array}{c}\frac{\Gamma (1+s_{15}) \Gamma (1+s_{45})}{\Gamma (1+s_{15}+s_{45})}\\0\\0\\0\\0\\0\end{array}\right)\\
 &=\left(\begin{array}{c}\hF_{5,4}^\mathrm{id}(1)\big|_{s_{3\bullet\to0}}\\ \hF_{5,4}^{(45)}(1)\big|_{s_{3\bullet\to0}} \\ \hF_{5,5}^{\mathrm{id}}(1)\big|_{s_{3\bullet\to0}}\\\hF_{5,5}^{(45)}(1)\big|_{s_{3\bullet\to0}}\\\hF_{5,6}^{\mathrm{id}}(1)\big|_{s_{3\bullet\to0}}\\ \hF_{5,6}^{(45)}(1)\big|_{s_{3\bullet\to0}}\end{array}\right)=\hF_5(1)\big|_{s_{3\bullet\to0}}.
\end{align*}
}%

\subsection{Closed strings}\label{app:closed5}
For $L\,{=}\,5$, \eqn{eqn:basisclosed} yields the vector
{\small
\begin{equation*}
	\hcF_5(z_3)=\left(\begin{array}{c}
		\hcF_{5,4}^{\mathrm{id}}(z_3)\\
        \hcF_{5,4}^{(45)}(z_3)\\
        \hcF_{5,5}^{\mathrm{id}}(z_3)\\
        \hcF_{5,5}^{(45)}(z_3)\\
        \hcF_{5,6}^{\mathrm{id}}(z_3)\\
        \hcF_{5,6}^{(45)}(z_3)
	\end{array}\right)
    =\left(\begin{array}{c}
		s_{14}s_{15}\SC[1,1](z_3)+s_{45}s_{15}\SC[5,1](z_3)\\
		s_{14}s_{15}\SC[1,1](z_3)+s_{14}s_{45}\SC[1,4](z_3)\\
		-s_{24}s_{15}\SC[2,1](z_3)\\
		-s_{14}s_{25}\SC[1,2](z_3)\\
		s_{24}s_{25}\SC[2,2](z_3)+s_{24}s_{45}\SC[2,4](z_3)\\
		s_{24}s_{25}\SC[2,2](z_3)+s_{45}s_{25}\SC[5,2](z_3)
	\end{array}\right).
\end{equation*}
}%
As the partial derivative w.r.t.~$z_3$ only acts on the holomorphic part of the integrand and this part is the same as in the open-string case, the matrices for the KZ equation are the same as before, see \eqn{eqn:5e}.

Applying the results from \secref{sec:closedgeneral}, the limit vectors are
{\small
\begin{equation*}
    \hcF_5(0)=\lim_{z_3\to0}|z_3|^{-2e_0}\hcF_5(z_3)=\left(\!\!\!\begin{array}{c}\hcF_{5,4}^{\mathrm{id}}(0)\\ \hcF_{5,4}^{(45)}(0)\\0\\0\\0\\0\end{array}\!\!\!\right)\!,\quad \hcF_5(1)=\lim_{z_3\to1}|1-z_3|^{-2e_1}\hcF(z_3)=\left(\!\!\!\begin{array}{c}\hcF_{5,4}^{\mathrm{id}}(1)\\ \hcF_{5,4}^{(45)}(1)\\ \ast\\ \ast\\ \ast\\ \ast\end{array}\!\!\!\right)\!,
\end{equation*}
}%
with components\footnote{To solve the integrals in the physical limit in terms of Gamma functions, one can use again the formula \eqref{eqn:complexBeta}, which is for example proven in \rcite{Mimachi:2018}.}
\begin{small}
\begin{align*}
	\hcF_{5,4}^{\mathrm{id}}(0)&=\frac{s_{14}s_{15}}{(-2\pi i)^2}\int_{\PC}\frac{dw_4 d\wb_4}{w_4(\wb_4{-}1)}\int_{\PC}\frac{dw_5 d\wb_5}{|w_5|^2(\wb_5{-}\wb_4)}|w_4|^{2s_{14}}|w_5|^{2s_{15}}|1{-}w_4|^{2s_{34}}|1{-}w_5|^{2s_{35}}|w_4{-}w_5|^{2s_{45}}\\
	&\quad-\frac{s_{45}s_{15}}{(-2\pi i)^2}\int_{\PC}\frac{dw_4 d\wb_4}{\wb_4{-}1}\int_{\PC}\frac{dw_5 d\wb_5}{|w_5|^2|w_5{-}w_4|^2}|w_4|^{2s_{14}}|w_5|^{2s_{15}}|1{-}w_4|^{2s_{34}}|1{-}w_5|^{2s_{35}}|w_4{-}w_5|^{2s_{45}}\\
    &\overset{s_{3\bullet}\to0}{\longrightarrow}\frac{s_{15}}{-2\pi i}\int_{\PC}\frac{dw_5 d\wb_5}{|w_5|^2(\wb_5\,{-}\,1)}|w_5|^{2s_{15}}|1\,{-}\,w_5|^{2s_{45}}=\frac{\Gamma(1+s_{15})\Gamma(1+s_{45})\Gamma(1-s_{15}-s_{45})}{\Gamma(1-s_{15})\Gamma(1-s_{45})\Gamma(1+s_{15}+s_{45})},\\
	\hcF_{5,4}^{(45)}(0)&=\frac{s_{14}s_{15}}{(-2\pi i)^2}\int_{\PC}\frac{dw_4 d\wb_4}{w_4(\wb_4{-}1)}\int_{\PC}\frac{dw_5 d\wb_5}{|w_5|^2(\wb_5{-}\wb_4)}|w_4|^{2s_{14}}|w_5|^{2s_{15}}|1{-}w_4|^{2s_{34}}|1{-}w_5|^{2s_{35}}|w_4{-}w_5|^{2s_{45}}\\
	&\quad+\frac{s_{14}s_{45}}{(-2\pi i)^2}\int_{\PC}\frac{dw_4 d\wb_4}{w_4(\wb_4{-}1)}\int_{\PC}\frac{dw_5 d\wb_5}{\wb_5|w_5{-}w_4|^2}|w_4|^{2s_{14}}|w_5|^{2s_{15}}|1{-}w_4|^{2s_{34}}|1{-}w_5|^{2s_{35}}|w_4{-}w_5|^{2s_{45}}\\
    &\overset{s_{3\bullet}\to0}{\longrightarrow}0\,,\\
	\hcF_{5,4}^{\mathrm{id}}(1)&=\frac{s_{14}s_{15}}{(-2\pi i)^2}\int_{\PC}\frac{dw_4 d\wb_4}{w_4(\wb_4\,{-}\,1)}\int_{\PC}\frac{dw_5 d\wb_5}{|w_5|^2(\wb_5\,{-}\,\wb_{4})}\notag\\
    &\hspace{20ex}\times|w_4|^{2s_{14}}|w_5|^{2s_{15}}|1\,{-}\,w_4|^{2(s_{24}+s_{34})}|1\,{-}\,w_5|^{2(s_{25}+s_{35})}|w_4\,{-}\,w_5|^{2s_{45}}\notag\\
	&\quad-\frac{s_{15}s_{45}}{(-2\pi i)^2}\int_{\PC}\frac{dw_4 d\wb_4}{\wb_4\,{-}\,1}\int_{\PC}\frac{dw_5 d\wb_5}{|w_5|^2|w_5\,{-}\,w_4|^2}\notag\\
    &\hspace{20ex}\times|w_4|^{2s_{14}}|w_5|^{2s_{15}}|1\,{-}\,w_4|^{2(s_{24}+s_{34})}|1\,{-}\,w_5|^{2(s_{25}+s_{35})}|w_4\,{-}\,w_5|^{2s_{45}}\\
    &\overset{s_{3\bullet}\to0}{\longrightarrow}\frac{s_{14}s_{15}}{(-2\pi i)^2}\int_{\PC}\frac{dw_4 d\wb_4}{w_4(\wb_4\,{-}\,1)}\int_{\PC}\frac{dw_5 d\wb_5}{|w_5|^2(\wb_5\,{-}\,\wb_{4})}\notag\\
    &\hspace{20ex}\times|w_4|^{2s_{14}}|w_5|^{2s_{15}}|1\,{-}\,w_4|^{2s_{24}}|1\,{-}\,w_5|^{2s_{25}}|w_4\,{-}\,w_5|^{2s_{45}}\notag\\
	&\quad-\frac{s_{15}s_{45}}{(-2\pi i)^2}\int_{\PC}\frac{dw_4 d\wb_4}{\wb_4\,{-}\,1}\int_{\PC}\frac{dw_5 d\wb_5}{|w_5|^2|w_5\,{-}\,w_4|^2}\notag\\
    &\hspace{20ex}\times|w_4|^{2s_{14}}|w_5|^{2s_{15}}|1\,{-}\,w_4|^{2s_{24}}|1\,{-}\,w_5|^{2s_{25}}|w_4\,{-}\,w_5|^{2s_{45}},\\
	\hcF_{5,4}^{(45)}(1)&=\frac{s_{14}s_{15}}{(-2\pi i)^2}\int_{\PC}\frac{dw_4 d\wb_4}{w_4(\wb_4\,{-}\,1)}\int_{\PC}\frac{dw_5 d\wb_5}{|w_5|^2(\wb_5\,{-}\,\wb_{4})}\notag\\
    &\hspace{20ex}\times|w_4|^{2s_{14}}|w_5|^{2s_{15}}|1\,{-}\,w_4|^{2(s_{24}+s_{34})}|1\,{-}\,w_5|^{2(s_{25}+s_{35})}|w_4\,{-}\,w_5|^{2s_{45}}\notag\\
	&\quad+\frac{s_{14}s_{45}}{(-2\pi i)^2}\int_{\PC}\frac{dw_4 d\wb_4}{w_4(\wb_4\,{-}\,1)}\int_{\PC}\frac{dw_5 d\wb_5}{\wb_5|w_5\,{-}\,w_4|^2}\notag\\
    &\hspace{20ex}\times|w_4|^{2s_{14}}|w_5|^{2s_{15}}|1\,{-}\,w_4|^{2(s_{24}+s_{34})}|1\,{-}\,w_5|^{2(s_{25}+s_{35})}|w_4\,{-}\,w_5|^{2s_{45}}\\
    &\overset{s_{3\bullet}\to0}{\longrightarrow}\frac{s_{14}s_{15}}{(-2\pi i)^2}\int_{\PC}\frac{dw_4 d\wb_4}{w_4(\wb_4\,{-}\,1)}\int_{\PC}\frac{dw_5 d\wb_5}{|w_5|^2(\wb_5\,{-}\,\wb_{4})}\notag\\
    &\hspace{20ex}\times|w_4|^{2s_{14}}|w_5|^{2s_{15}}|1\,{-}\,w_4|^{2s_{24}}|1\,{-}\,w_5|^{2s_{25}}|w_4\,{-}\,w_5|^{2s_{45}}\notag\\
	&\quad+\frac{s_{14}s_{45}}{(-2\pi i)^2}\int_{\PC}\frac{dw_4 d\wb_4}{w_4(\wb_4\,{-}\,1)}\int_{\PC}\frac{dw_5 d\wb_5}{\wb_5|w_5\,{-}\,w_4|^2}\notag\\
	&\hspace{20ex}\times|w_4|^{2s_{14}}|w_5|^{2s_{15}}|1\,{-}\,w_4|^{2s_{24}}|1\,{-}\,w_5|^{2s_{25}}|w_4\,{-}\,w_5|^{2s_{45}}.
\end{align*}
\end{small}

As in the open-string four- to five-point example in \secref{sec:open5}, we can calculate also the lower the vector $\hcF_5(1)$ in the physical limit by exchanging the limits $z_3\to0$ and $s_{3\bullet}\to0$, which again is allowed since both individual limits exist and are finite. This goes along the same lines as in~\secref{sec:open5}, where we just need to be careful with residues when encountering total derivatives in the integrand (along the same lines as shown in the proof of \thmref{prop:equationclosed}). We find
{\small
\begin{align*}
    \hcF_{5,5}^{\mathrm{id}}&(1)\big|_{s_{3\bullet}\to0}
    =\lim_{z_3\to1}\left[\hcF_{5,4}^{\mathrm{id}}(z_3)\big|_{s_{3\bullet}\to0}-|1\,{-}\,z_3|^{-2s_{24}}\left(\hcF_{5,4}^{\mathrm{id}}(z_3)\big|_{s_{3\bullet}\to0}-\hcF_{5,5}^{\mathrm{id}}(z_3)\big|_{s_{3\bullet}\to0}\right)\right]\\
    &=\lim_{z_3\to1}\Bigg[\frac{1}{(-2\pi i)^2}\int_{(\PC)^2}\frac{\zb_3\,dz_4 d\zb_4\,dz_5 d\zb_5}{\zb_5(\zb_4\,{-}\,\zb_3)(\zb_5\,{-}\,\zb_4)}\\
    &\hspace{20ex}\times|z_4|^{2s_{14}}|z_5|^{2s_{15}}|1\,{-}\,z_4|^{2s_{24}}|1\,{-}\,z_5|^{2s_{25}}|z_4\,{-}\,z_5|^{2s_{45}}\frac{s_{15}}{z_5}\left(\frac{s_{14}}{z_4}+\frac{s_{45}}{z_4\,{-}\,z_5}\right)\\
    &\hspace{8ex}-\frac{|1\,{-}\,z_3|^{-2s_{24}}}{(-2\pi i)^2}\int_{(\PC)^2}\frac{\zb_3\,dz_4 d\zb_4\,dz_5 d\zb_5}{\zb_5(\zb_4\,{-}\,\zb_3)(\zb_5\,{-}\,\zb_4)}\\
    &\hspace{10ex}\times|z_4|^{2s_{14}}|z_5|^{2s_{15}}|1\,{-}\,z_4|^{2s_{24}}|1\,{-}\,z_5|^{2s_{25}}|z_4\,{-}\,z_5|^{2s_{45}}\frac{s_{15}}{z_5}\left(\frac{s_{14}}{z_4}+\frac{s_{45}}{z_4\,{-}\,z_5}+\frac{s_{24}}{z_4\,{-}\,1}\right)\Bigg]\\
    &=\frac{1}{(-2\pi i)^2}\int_{(\PC)^2}\frac{dz_4 d\zb_4\,dz_5 d\zb_5}{\zb_5(\zb_4\,{-}\,1)(\zb_5\,{-}\,\zb_4)}\\
    &\hspace{20ex}\times |z_4|^{2s_{14}}|z_5|^{2s_{15}}|1\,{-}\,z_4|^{2s_{24}}|1\,{-}\,z_5|^{2s_{25}}|z_4\,{-}\,z_5|^{2s_{45}}\frac{s_{15}}{z_5}\left(\frac{s_{14}}{z_4}+\frac{s_{45}}{z_4\,{-}\,z_5}\right)\\
    &\quad-\lim_{z_3\to1}\frac{|1\,{-}\,z_3|^{-2s_{24}}}{(-2\pi i)^2}\int_{\PC}d_{z_4}\Bigg[\int_{\PC}\frac{\zb_3\,d\zb_4\,dz_5 d\zb_5}{\zb_5(\zb_4\,{-}\,\zb_3)(\zb_5\,{-}\,\zb_4)}\\
    &\hspace{37ex}\times|z_4|^{2s_{14}}|z_5|^{2s_{15}}|1\,{-}\,z_4|^{2s_{24}}|1\,{-}\,z_5|^{2s_{25}}|z_4\,{-}\,z_5|^{2s_{45}}\frac{s_{15}}{z_5}\Bigg]\\
    &=\frac{1}{(-2\pi i)^2}\int_{(\PC)^2}\frac{dz_4 d\zb_4\,dz_5 d\zb_5}{\zb_5(\zb_4\,{-}\,1)(\zb_5\,{-}\,\zb_4)} \\
    &\hspace{20ex}\times|z_4|^{2s_{14}}|z_5|^{2s_{15}}|1\,{-}\,z_4|^{2s_{24}}|1\,{-}\,z_5|^{2s_{25}}|z_4\,{-}\,z_5|^{2s_{45}}\frac{s_{15}}{z_5}\left(\frac{s_{14}}{z_4}+\frac{s_{45}}{z_4\,{-}\,z_5}\right)\\
    &\quad-\lim_{z_3\to1}\frac{|1\,{-}\,z_3|^{-2s_{24}}}{(-2\pi i)^2}\int_{\PC}\frac{\zb_3\,dz_5 d\zb_5}{\zb_5(\zb_5\,{-}\,\zb_3)}|z_3|^{2s_{14}}|z_5|^{2s_{15}}|1\,{-}\,z_3|^{2s_{24}}|1\,{-}\,z_5|^{2s_{25}}|z_3\,{-}\,z_5|^{2s_{45}}\frac{s_{15}}{z_5}\\
    &=\frac{1}{(-2\pi i)^2}\int_{(\PC)^2}\frac{dz_4 d\zb_4\,dz_5 d\zb_5}{\zb_5(\zb_4\,{-}\,1)(\zb_5\,{-}\,\zb_4)}\\
    &\hspace{20ex}\times |z_4|^{2s_{14}}|z_5|^{2s_{15}}|1\,{-}\,z_4|^{2s_{24}}|1\,{-}\,z_5|^{2s_{25}}|z_4\,{-}\,z_5|^{2s_{45}}\frac{s_{15}}{z_5}\left(\frac{s_{14}}{z_4}+\frac{s_{45}}{z_4\,{-}\,z_5}\right)\\
    &\quad-\frac{s_{15}}{-2\pi i}\int_{\PC}\frac{dz_5 d\zb_5}{|z_5|^2(\zb_5\,{-}\,1)}|z_5|^{2s_{15}}|1\,{-}\,z_5|^{2(s_{25}+s_{45})}\\
    &=\frac{1}{(-2\pi i)^2}\int_{(\PC)^2}\frac{dz_4 d\zb_4\,dz_5 d\zb_5}{\zb_5(\zb_4\,{-}\,1)(\zb_5\,{-}\,\zb_4)}\\
    &\hspace{20ex}\times |z_4|^{2s_{14}}|z_5|^{2s_{15}}|1\,{-}\,z_4|^{2s_{24}}|1\,{-}\,z_5|^{2s_{25}}|z_4\,{-}\,z_5|^{2s_{45}}\frac{s_{15}}{z_5}\left(\frac{s_{14}}{z_4}+\frac{s_{45}}{z_4\,{-}\,z_5}\right)\\
    &\quad-\frac{\Gamma(1+s_{15})\Gamma(s_{25}+s_{45})\Gamma(1-s_{15}-s_{25}-s_{45})}{\Gamma(1-s_{15})\Gamma(1-s_{25}-s_{45})\Gamma(1+s_{15}+s_{25}+s_{45})}.
\end{align*}
}%
In the second term of this calculation we had to consider the residues in $\zb_4$ at $0,1,\zb_5,\zb_3$, where the first three vanish but the last one contributes the calculated term. Analogous to the open-string case, we find this result to equal $\hcF_{5,4}^{\mathrm{id}}(1)\big|_{s_{3\bullet}\to0}$ (i.e.~a five-point closed-string integral) minus a four-point closed-string amplitude (Virasoro--Shapiro amplitude). In the same way we find the other components to read
{\small
\begin{align*}
    \hcF_{5,5}^{(45)}(1)\big|_{s_{3\bullet}\to0}&=\frac{1}{(-2\pi i)^2}\int_{(\PC)^2}\frac{dz_4 d\zb_4\,dz_5 d\zb_5}{\zb_5(\zb_4-1)(\zb_5-\zb_4)}\\
    &\hspace{12ex}\times|z_4|^{2s_{14}}|z_5|^{2s_{15}}|1\,{-}\,z_4|^{2s_{24}}|1\,{-}\,z_5|^{2s_{25}}|z_4\,{-}\,z_5|^{2s_{45}}\frac{s_{14}}{z_4}\left(\frac{s_{15}}{z_5}{+}\frac{s_{45}}{z_5{-}z_4}\right)\\
    &=\hcF_{5,4}^{(45)}(1)\big|_{s_{3\bullet}\to0}\,,\\
    \hcF_{5,6}^{\mathrm{id}}(1)\big|_{s_{3\bullet}\to0}&=\hcF_{5,5}^{\mathrm{id}}(1)\big|_{s_{3\bullet}\to0}\,,\\
    \hcF_{5,6}^{(45)}(1)\big|_{s_{3\bullet}\to0}&=\frac{1}{(-2\pi i)^2}\int_{(\PC)^2}\frac{dz_4 d\zb_4\,dz_5 d\zb_5}{\zb_5(\zb_4\,{-}\,1)(\zb_5\,{-}\,\zb_4)}\\
    &\hspace{12ex}\times|z_4|^{2s_{14}}|z_5|^{2s_{15}}|1\,{-}\,z_4|^{2s_{24}}|1\,{-}\,z_5|^{2s_{25}}|z_4\,{-}\,z_5|^{2s_{45}}\frac{s_{14}}{z_4}\left(\frac{s_{15}}{z_5}{+}\frac{s_{45}}{z_5{-}z_4}\right)\\
    &\quad-s_{25}\frac{\Gamma(1+s_{15})\Gamma(s_{25}+s_{45})\Gamma(1-s_{15}-s_{25}-s_{45})}{\Gamma(1-s_{15})\Gamma(1-s_{25}-s_{45})\Gamma(1+s_{15}+s_{25}+s_{45})}\notag\\
    &\quad+\frac{\Gamma(1+s_{45})\Gamma(-s_{25}-s_{45})\Gamma(1+s_{25})}{\Gamma(1-s_{45})\Gamma(1+s_{25}+s_{45})\Gamma(-s_{25})}\,,    
\end{align*}
}%
where for the derivation of the last equation the condition $\Re(s_{25}+s_{45})<\frac12$ is needed (the same identities for hypergeometric functions as for the $L\,{=}\,4$ case were used, see \appref{app:4pt2ndcompclosed}).

\begin{rmk}
    All of the above results in this section are the single-valued images of the results from \secref{sec:open5} through applying the single-valued map from ref.~\cite{Brown:2019wna}.
\end{rmk}

\subsection*{Relating the limits}
Again, using \thmref{thmKZvectorspaceSV}, the vectors $\hcF_5(0)$ and $\hcF_5(1)$ are related through application of the Deligne associator $\Phi^{\mathrm{sv}}$, namely
{\small
\begin{equation*}
    \Phi^\mathrm{sv}(e_0,e_1)\hcF_5(0)=\hcF_5(1),
\end{equation*}
}%
and taking the physical limit of this equation, we obtain the relation translating between the four- and five-point closed-string amplitudes:
{\small
\begin{align*}
	\left[\Phi^{\mathrm{sv}}(e_0,e_1)\hcF_5(0)\right]\Big|_{s_{3\bullet}\to0}&=\left(1+2\zeta_3[e_0+e_1,[e_0,e_1]]+\ldots\right)\left(\begin{array}{c}\frac{\Gamma(1+s_{15})\Gamma(1+s_{45})\Gamma(1-s_{15}-s_{45})}{\Gamma(1-s_{15})\Gamma(1-s_{45})\Gamma(1+s_{15}+s_{45})}\\0\\0\\0\\0\\0\end{array}\right)\notag\\
    &=\left(\begin{array}{c}\hcF_{5,4}^{\mathrm{id}}(1)|_{s_{3\bullet}\to0}\\ \hcF_{5,4}^{(45)}(1)|_{s_{3\bullet}\to0}\\ \hcF_{5,5}^{\mathrm{id}}(1)|_{s_{3\bullet}\to0}\\ \hcF_{5,5}^{(45)}(1)|_{s_{3\bullet}\to0}\\ \hcF_{5,6}^{\mathrm{id}}(1)|_{s_{3\bullet}\to0}\\ \hcF_{5,6}^{(45)}(1)|_{s_{3\bullet}\to0}\end{array}\right)=\hcF_5(1)|_{s_{3\bullet}\to0}.
\end{align*}
}%

%%%%%%%%%%%%%%%%%%%%%%%%%%%%%%%%%%%%%%%%%%%%%%%%%%%%%%%%%%%%%%%%%%%%%%%%%%%%%%
%%%%%%%%%%%%%%%%%%%%%%%%%%%%%%%%%%%%%%%%%%%%%%%%%%%%%%%%%%%%%%%%%%%%%%%%%%%%%%
%%%%%%%%%%%%%%%%%%%%%%%%%%%%%%%%%%%%%%%%%%%%%%%%%%%%%%%%%%%%%%%%%%%%%%%%%%%%%%
%%%%%%%%%%%%%%%%%%%%%%%%%%%%%%%%%%%%%%%%%%%%%%%%%%%%%%%%%%%%%%%%%%%%%%%%%%%%%%
\addtocontents{toc}{\protect\setcounter{tocdepth}{1}}
\bibliographystyle{alphaurl}
\bibliography{DeligneRecursion}

\end{document}